\newcolumntype{Y}{>{\centering\arraybackslash}X}
\newcommand{\wrapmath}[1]{\adjustbox{max width=\linewidth}{\(\displaystyle #1\)}}
\definecolor{lime}{RGB}{89,223,0}
\definecolor{celestino}{RGB}{102,178,255}
\definecolor{bordeaux}{RGB}{140,0,0}
\def\blue#1{\textcolor{blue}{#1}}
\newcommand{\hide}[1]{}
\newcommand{\vastt}{\bBigg@{3}}
\newcommand{\vast}{\bBigg@{3.3}}
\newcommand{\Vast}{\bBigg@{4}}
\def\cC{\mathscr C}
\def\cX{\mathscr X}
\def\bT{\mathbb T}
\def\wt{\widetilde}
\DeclareMathOperator{\Id}{Id}
\newcommand{\E}{\mathbb{E}}
\newcommand{\R}{\mathbb{R}}
\newcommand{\N}{\mathbb{N}}
\newcommand{\eps}{\varepsilon}
\newcommand{\sums}{\sum_{\sigma \in \{+,-\}}}
\newcommand{\sumt}{\sum_{\tau \in \{+,-\}}}
\newcommand{\ds}{d_\sigma}
\newcommand{\ms}{m_\sigma}
\newcommand{\cs}{c_\sigma}
\newcommand{\dsb}{d_{\bar{\sigma}}}
\newcommand{\msb}{m_{\bar{\sigma}}}
\newcommand{\csb}{c_{\bar{\sigma}}}
\newcommand{\ka}{\hat{\kappa}}
\newcommand{\quadre}[1]{\left[{#1}\right]}
\newcommand{\tonde}[1]{\left({#1}\right)}
\renewcommand{\abs}[1]{\left\lvert{#1}\right\rvert}
\newcommand*{\crosssymbol}{%
    \text{%
      \raise 1ex\hbox{%
        \rlap{\vrule height.2pt depth.2pt width .75ex}%
        \hbox to .75ex{\hss\vrule height .5ex depth 1ex\hss}%
      }%
    }%
}
\newcommand*{\crossupsidedown}{%
    \text{%
      \raise .5ex\hbox{%
        \rlap{\vrule height.2pt depth.2pt width .75ex}%
        \hbox to .75ex{\hss\vrule height 1ex depth .5ex\hss}%
      }%
    }%
}
\def\lra{\longrightarrow}
\def\ignore#1{}
\theoremstyle{plain}
\newtheorem{theorem}{Theorem}[section]
\newtheorem{lemma}[theorem]{Lemma}
\theoremstyle{definition}
\newtheorem{definition}[theorem]{Definition}
\newtheorem{remark}[theorem]{Remark}
\theoremstyle{definition}
\numberwithin{equation}{section}
\title[]{High-Frequency Analysis of a Trading Game\\with Transient Price Impact}
\author{Marcel Nutz}
\address[MN]{Departments of Statistics and Mathematics, Columbia University}
\email{mnutz@columbia.edu}
\thanks{MN was supported by NSF Grants DMS-2407074 and DMS-2106056.}
\author{Alessandro Prosperi}
\address[AP]{Department of Statistics, Columbia University}
\email{alessandro.prosperi@columbia.edu}
\subjclass[2020]{
91G10; %
91A06; %
91A15%
}
\keywords{$N$-Player Game; Optimal Execution; Transient Price Impact}
\date{\today}
\begin{document}

\begin{abstract}
\dots
\end{abstract}

\begin{abstract}
We study the high-frequency limit of an $n$-trader optimal execution game in discrete time. Traders face transient price impact of Obizhaeva--Wang type in addition to quadratic instantaneous trading costs $\theta(\Delta X_t)^2$ on each transaction~$\Delta X_t$. There is a unique Nash equilibrium in which traders choose liquidation strategies minimizing expected execution costs. In the high-frequency limit where the grid of trading dates converges to the continuous interval~$[0,T]$, the discrete equilibrium inventories converge at rate $1/N$ to the continuous-time equilibrium of an Obizhaeva--Wang model with additional quadratic costs $\vartheta_0(\Delta X_0)^2$ and $\vartheta_T(\Delta X_T)^2$ on initial and terminal block trades, where $\vartheta_0=(n-1)/2$ and $\vartheta_T=1/2$. The latter model was introduced by Campbell and Nutz as the limit of continuous-time equilibria with vanishing instantaneous costs. Our results extend and refine previous results of Schied, Strehle, and Zhang for the particular case $n=2$ where $\vartheta_0=\vartheta_T=1/2$. In particular, we show how the coefficients $\vartheta_0=(n-1)/2$ and $\vartheta_T=1/2$ arise endogenously in the high-frequency limit: the initial and terminal block costs of the continuous-time model are identified as the limits of the cumulative discrete instantaneous costs incurred over small neighborhoods of~$0$ and~$T$, respectively, and these limits are independent of $\theta>0$. By contrast, when $\theta=0$ the discrete-time equilibrium strategies and costs exhibit persistent oscillations and admit no high-frequency limit, mirroring the non-existence of continuous-time equilibria without boundary block costs. Our results show that two different types of trading frictions---a fine time discretization and small instantaneous costs in continuous time---have similar regularizing effects and, in the limiting regime, select a canonical continuous-time model with transient price impact and endogenous block costs.
\end{abstract}

\maketitle

\section{Introduction}

Transaction costs are a key consideration for financial institutions. In equity trading, the lion's share of costs is due to price impact, i.e., the fact that buy (sell) orders tend to push prices up (down). Following~\cite{almgren.chriss.01}, price impact is often modeled in reduced form, positing that each atomic trade mechanically leads to a price change. Later models incorporate price resilience (transient impact), meaning that prices revert over time once the buying or selling pressure ceases. The most tractable formulation is the Obizhaeva--Wang model~\cite{ObizhaevaWang.13}, which uses an exponential decay kernel. Starting with \cite{garleanu.pedersen.16,GraeweHorst.17}, numerous works have added quadratic instantaneous costs on the trading rate to the Obizhaeva--Wang impact cost. As illustrated in~\cite{GraeweHorst.17}, this \say{regularizes} the problem and leads to smoother optimal trading strategies; see also~\cite{HorstKivman.24}. We refer to \cite{CarteaJaimungalPenalva.15,Webster.23} for further background and extensive references on price impact. Strategic interactions between several large traders are studied in game-theoretic models. This branch of the literature emerged to study predatory trading, where one trader exploits the price impact of a second trader who needs to unwind a position~\cite{PedersenBrunnermeier.05,CarlinLoboViswanathan.07}. For the Obizhaeva--Wang model regularized by quadratic instantaneous costs, \cite{Strehle.17} shows that there is a unique Nash equilibrium, whose closed form is provided in~\cite{CampbellNutz.24}. While these works follow the optimal execution literature in assuming that the unaffected price is a martingale, they have been generalized in several directions, such as incorporating alpha signals~\cite{NeumanVoss.23}, alpha signals and non-exponential decay kernels~\cite{AbiJaberNeumanVoss.24}, or self-exciting order flow~\cite{FuHorstXia.22}.

The goal of the present paper is to shed light on the Nash equilibria of trading games in the Obizhaeva--Wang model without regularization. Surprisingly, a naive formulation in continuous time does not admit an equilibrium, as established by~\cite{SchiedStrehleZhang.17} and \cite{CampbellNutz.24}. They further show that existence is restored if very specific costs on block trades are added to the Obizhaeva--Wang impact cost. Namely, in a game with $n$ traders, a block trade of size $\Delta X_0$ at the initial time $t=0$ is charged $\vartheta_0(\Delta X_0)^2$, where $\vartheta_0:=(n-1)/2$, and a block trade $\Delta X_T$ at the terminal time $t=T$ is charged $\vartheta_T(\Delta X_T)^2$, where $\vartheta_T:=1/2$ (up to reparametrizing time). On the open interval $(0,T)$, no additional costs are charged. For $n=2$ traders, as studied in~\cite{SchiedStrehleZhang.17}, the initial and terminal costs coincide. For general~$n$, as studied in~\cite{CampbellNutz.24}, the two costs differ, with $\vartheta_0$ depending directly on~$n$, making this adjustment even more surprising. Conversely, these works show that for general initial inventories of the traders, no equilibrium exists unless $\vartheta_0$ and $\vartheta_T$ have exactly the stated values. The two works further motivate their models by asymptotic considerations. On the one hand, \cite{SchiedStrehleZhang.17} shows that their continuous-time equilibrium strategies are the high-frequency limits of \emph{discrete-time} equilibria. The discrete-time models use Obizhaeva--Wang impact and an additional quadratic instantaneous cost $\theta (\Delta X_t)^2$, where $\theta>0$ is arbitrary and fixed. The authors further show that without instantaneous costs, the high-frequency limit does not exist because strategies oscillate. These results build on~\cite{Schoneborn.08,SchonebornSchied.09,SchiedZhang.19}, which documented such oscillations in different contexts; see also~\cite{LuoSchied.19}. On the other hand, \cite{CampbellNutz.24} shows that their equilibrium is the limit of \emph{continuous-time} equilibria with quadratic instantaneous costs $\eps (dX_t/dt)^2$ as $\eps\to0$.

The present work refines and extends the analysis of~\cite{SchiedStrehleZhang.17} in several ways. First, we generalize from $n=2$ to an arbitrary number~$n$ of traders. We show that the high-frequency limits of discrete-time equilibria with instantaneous costs $\theta (\Delta X_t)^2$ recover the continuous-time model of~\cite{CampbellNutz.24} with the block cost coefficients~$\vartheta_0$ and~$\vartheta_T$, which are distinct for $n>2$. Second, refining the results of~\cite{SchiedStrehleZhang.17}, we show not only that the total execution costs converge, but also how the different parts of the continuous-time model emerge in the high-frequency limit: The initial block costs are identified as the limits of the instantaneous costs accrued over an initial interval $[0,t_0]$ with arbitrary $0<t_0<T$; similarly, the terminal block costs are the limits of the instantaneous costs accrued over an interval $[t_0,T]$. Moreover, the \say{regular} Obizhaeva--Wang impact costs of the continuous-time model are the limits of the Obizhaeva--Wang costs of the discrete-time models. Third, we not only show the qualitative convergence of the equilibria, but also establish the convergence rate $1/N$ for the trading strategies, where~$N$ is the number of trading periods in~$[0,T]$. Finally, we show that when the discrete-time models are formulated without instantaneous costs ($\theta=0$), the equilibrium strategies oscillate in the high-frequency limit. This yields a one-to-one correspondence between non-existence of the high-frequency limits and non-existence of the continuous-time equilibria without block costs. This correspondence is robust and even extends to fine details: For $n>2$, \cite{CampbellNutz.24} shows that an equilibrium can exist for particular initial inventories of the traders even when only one of the two coefficients $\vartheta_0$ and $\vartheta_T$ has the \say{correct} value---namely, when initial inventories are symmetric or sum to zero, respectively. We further link this to high-frequency limits of discrete-time models where instantaneous costs are charged only on an initial or terminal portion of the time interval.

Our results complement the analysis of~\cite{CampbellNutz.24} for vanishing instantaneous costs in continuous time. Taken together, a high-level picture emerges: discretizing time has the same regularizing effect as adding a small instantaneous cost in continuous time, and yields the same limit. This leads us to conjecture a universality phenomenon: a broad class of trading frictions can be introduced to obtain existence of equilibria in trading games with Obizhaeva--Wang price impact, and the small-friction limits of such regularizations all yield the same model, namely Obizhaeva--Wang price impact with additional block costs as specified in~\cite{SchiedStrehleZhang.17} and~\cite{CampbellNutz.24}.

\vspace{1em}

The remainder of this paper is organized as follows. Section~\ref{The_Problem} formulates and solves the discrete-time models, while Section~\ref{cont_time_section} recalls the corresponding continuous-time results. Section~\ref{hfl_section} states our main results: the high-frequency limits of the discrete-time equilibrium strategies and costs (Theorems~\ref{strat_asympt_thm_theta_positive} and~\ref{costs asymptotics thm}), as well as the corresponding oscillatory asymptotics for $\theta=0$ (Theorems~\ref{strat_osc_thm} and~\ref{cost_asympt_thm_theta_zero}). Appendix~\ref{closed form section} provides a closed-form expression for the discrete-time equilibrium strategies that is used in the high-frequency proofs. Appendix~\ref{exist_unique_Nash} contains the proofs for the discrete-time results in Section~\ref{The_Problem}, while Appendix~\ref{sec_3_proofs_appendix} collects the proofs for the main results in Section~\ref{hfl_section}. Finally, Appendix~\ref{half-grid-block-costs} analyzes the high-frequency asymptotics when instantaneous costs are charged only on an initial or terminal portion of the time interval.

\section{Discrete-Time Equilibrium}\label{The_Problem}

\subsection{Model Specifications}

We consider $n\geq2$ agents trading a single risky asset on the discrete time grid $0=t_0<t_1<\dots<t_N=T$, and a filtered probability space $(\Omega,\mathscr{F},(\mathscr{F}_t)_{t\ge 0},\mathbb{P})$ where $\mathscr{F}_0$ is $\mathbb{P}$-trivial. The \emph{unaffected} price $S^0=(S^0_t)_{t\ge 0}$ is a square-integrable, right-continuous martingale. The definitions below detail how trading generates transient price impact governed by the exponential \emph{decay kernel} $G:\mathbb{R}_+\to\mathbb{R}_+$,
\begin{align*}
    G(t)=e^{-\rho t},
\end{align*}
where $\rho>0$. (A more general form is $G(t)=\lambda e^{-\rho t}$, but we set $\lambda=1$ without loss of generality as all other quantities can be rescaled accordingly.)

\begin{definition}[Admissible trading strategy]
Given a grid $\mathbb{T}=\{t_0,\dots,t_N\}$ and an initial inventory $x\in\mathbb{R}$, an \emph{admissible trading strategy} is a vector $\bm{\xi} = (\xi_0,\dots,\xi_N)^\top$ of random variables such that
\begin{enumerate}[label=(\alph*)]
    \item each $\xi_i$ is $\mathscr{F}_{t_i}$-measurable and bounded;
    \item\label{terminal inv cond} $x=\xi_0+\dots+\xi_N$ $\mathbb{P}$-a.s.
\end{enumerate}
We write $\mathscr{X}(x,\mathbb{T})$ for the set of admissible strategies.
\end{definition}

Intuitively, agent $i$ chooses $\bm{\xi}_i=(\xi_{i,0},\dots,\xi_{i,N})^\top\in\mathscr{X}(x_i,\mathbb{T})$, where $x_i$ denotes the agent's initial inventory and $\xi_{i,k}$ is the number of shares traded at time $t_k$, with the sign convention that $\xi_{i,k}>0$ is a sell and $\xi_{i,k}<0$ is a buy. Condition~\ref{terminal inv cond} enforces liquidation by $t_N=T$. 
Collecting agents' strategies in the matrix $\Xi=[\bm{\xi}_1,\dots,\bm{\xi}_n]$,  the (affected) price process is 
\begin{align*}
    S_t^\Xi = S_t^0 - \sum_{t_k<t} G(t-t_k)\sum_{i=1}^n \xi_{i,k}.
\end{align*}

We fix an \emph{instantaneous cost} parameter $\theta\geq0$ and define the execution cost of agent~$i$ as follows.

\begin{definition}[Execution cost]
Given a grid $\mathbb{T}$ and inventories $(x_1,\dots,x_n)$, the execution cost of $\bm{\xi}_i$ given opponents' strategies $\bm{\xi}_{-i}=[\bm{\xi}_1,\dots,\bm{\xi}_{i-1},\bm{\xi}_{i+1},\dots,\bm{\xi}_n]$ is
\begin{equation}\label{liqcost}
    \mathscr{C}_{\mathbb{T}}(\bm{\xi}_i \mid \bm{\xi}_{-i})=x_iS^0_0 +
    \sum_{k=0}^N\Big(
        \frac{G(0)}{2}\xi_{i,k}^2
        - S_{t_k}^\Xi\xi_{i,k}
        + \frac{G(0)}{2}\sum_{j\neq i}\xi_{i,k}\xi_{j,k}
        + \theta\xi_{i,k}^2
    \Big).
\end{equation}
\end{definition}

In~\eqref{liqcost}, the cross-term describes the standard (symmetric) tie-breaking rule that applies when agents place orders at the same instant; see \cite{SchiedZhang.19, LuoSchied.19}. In addition to the cost of transient impact, each trade incurs quadratic instantaneous (or ``temporary impact'') costs  $\theta\xi_{i,k}^2$; see \cite{Gatheral.10} for a related discussion. %

\subsection{Nash Equilibrium}

Fix a grid $\mathbb{T}$ and initial inventories $(x_1,\dots,x_n)$. Each agent $i$ is risk-neutral and chooses an admissible strategy to minimize the expected execution cost~\eqref{liqcost}, where we may assume $S^0_t\equiv 0$ without loss of generality.
This leads to the standard notion of Nash equilibrium.

\begin{definition}[Nash equilibrium]\label{Nasheqdiscdef}
A \emph{Nash equilibrium} is a profile $(\bm{\xi}^*_1,\dots,\bm{\xi}^*_n)\in\prod_i \mathscr{X}(x_i,\mathbb{T})$ such that
\[
    \mathbb{E}[\mathscr{C}_{\mathbb{T}}(\bm{\xi}^*_i \mid \bm{\xi}^*_{-i})]
    =
    \underset{\bm{\xi}\in\mathscr{X}(x_i,\mathbb{T})}{\min}
    \mathbb{E}[\mathscr{C}_{\mathbb{T}}(\bm{\xi} \mid \bm{\xi}^*_{-i})],\qquad\text{for every }i=1,\dots,n.
\]
\end{definition}

To state a more explicit expression for the objective functional, let $\delta_{ij}$ denote the Kronecker delta and define, for $i,j=0,\dots,N$,
\begin{align}\label{gammatheta}
    \Gamma_{ij}^\theta
    := G(|t_i - t_j|) + 2\theta\delta_{ij},
    \qquad
    \widetilde{\Gamma}_{ij}
    :=
    \begin{cases}
        0, & i<j,\\
        \frac12 G(0), & i=j,\\
        \Gamma_{ij}^0, & i>j.
    \end{cases}
\end{align}
Moreover, we introduce the vectors
\begin{equation}\label{optimal sol}
    \bm{v}
    :=
    \frac{(\Gamma^{\theta} + (n-1)\widetilde{\Gamma})^{-1}\bm{1}}
         {\bm{1}^\top(\Gamma^{\theta} + (n-1)\widetilde{\Gamma})^{-1}\bm{1}},
    \qquad
    \bm{w}
    :=
    \frac{(\Gamma^{\theta} - \widetilde{\Gamma})^{-1}\bm{1}}
         {\bm{1}^\top(\Gamma^{\theta} - \widetilde{\Gamma})^{-1}\bm{1}}.
\end{equation}

\begin{remark}\label{w_indep_n}
We observe that $\bm{w}$ does \emph{not depend on $n$}, whereas $\bm{v}$ depends on $n$ through $\Gamma^{\theta}+(n-1)\widetilde{\Gamma}$. An interpretation for~$\bm{v}$ and~$\bm{w}$ will be given in Remark~\ref{vwSpecialCases}.
\end{remark}

The next lemma ensures that \eqref{optimal sol} is well-defined. We call a (possibly non-symmetric) square matrix $A$ \emph{positive} if $\bm{x}^\top A\bm{x}>0$ for all nonzero $\bm{x}$. Then, $A$ is invertible, and $A^{-1}$ is positive as well. 

\begin{lemma}\label{matrices positive definite}
    For all $\theta\ge 0$, the matrices $\Gamma^\theta$ and $\Gamma^\theta+(n-1)\widetilde{\Gamma}$ and $\Gamma^\theta-\widetilde{\Gamma}$ are positive. In particular, the denominators in \eqref{optimal sol} are strictly positive.
\end{lemma}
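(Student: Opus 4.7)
The plan is to reduce the three assertions to the positive definiteness of the single matrix $\Gamma^0$; once this is in hand, the remaining claims follow from elementary algebraic manipulations based on the identity $\widetilde{\Gamma} + \widetilde{\Gamma}^\top = \Gamma^0$, which is visible by direct inspection of~\eqref{gammatheta}. In particular, that identity means the symmetric part of $\widetilde{\Gamma}$ is $\tfrac{1}{2}\Gamma^0$, so $\bm{x}^\top \widetilde{\Gamma}\bm{x} = \tfrac{1}{2}\bm{x}^\top \Gamma^0 \bm{x}$ for every $\bm{x} \in \mathbb{R}^{N+1}$.

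To show that $\Gamma^0$ is symmetric positive definite on the distinct nodes $t_0 < t_1 < \dots < t_N$, the cleanest route is Bochner's theorem: the function $t \mapsto e^{-\rho|t|}$ is the Fourier transform of the strictly positive density $\xi \mapsto \rho/[\pi(\rho^2+\xi^2)]$, hence a strictly positive definite function on $\mathbb{R}$, and therefore its Gram matrix at any finite set of distinct points is symmetric positive definite. Alternatively, one can realize $\Gamma^0$ as the covariance matrix of a stationary Ornstein--Uhlenbeck process sampled at the grid times; non-degeneracy of the Gaussian vector yields the same conclusion. This is the only step that is not pure bookkeeping, and I expect it to be the main technical input.

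With $\Gamma^0$ positive definite, each of the three matrices can be handled by computing the associated quadratic form. For any $\bm{x} \neq 0$,
\begin{align*}
    \bm{x}^\top \Gamma^\theta \bm{x}
    &= \bm{x}^\top \Gamma^0 \bm{x} + 2\theta|\bm{x}|^2,\\
    \bm{x}^\top \bigl(\Gamma^\theta + (n-1)\widetilde{\Gamma}\bigr)\bm{x}
    &= \tfrac{n+1}{2}\,\bm{x}^\top \Gamma^0 \bm{x} + 2\theta|\bm{x}|^2,\\
    \bm{x}^\top \bigl(\Gamma^\theta - \widetilde{\Gamma}\bigr)\bm{x}
    &= \tfrac{1}{2}\,\bm{x}^\top \Gamma^0 \bm{x} + 2\theta|\bm{x}|^2,
\end{align*}
all of which are strictly positive since $n \geq 2$ and $\theta\geq 0$. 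This establishes positivity of the three matrices in the sense defined in the text. Finally, the concluding sentence about invertibility and the denominators in~\eqref{optimal sol} follows from the general observations, already recalled before the statement, that a positive square matrix is invertible with a positive inverse; applying this to $\Gamma^\theta+(n-1)\widetilde{\Gamma}$ and $\Gamma^\theta-\widetilde{\Gamma}$, the scalars $\bm{1}^\top(\Gamma^\theta+(n-1)\widetilde{\Gamma})^{-1}\bm{1}$ and $\bm{1}^\top(\Gamma^\theta-\widetilde{\Gamma})^{-1}\bm{1}$ are strictly positive.
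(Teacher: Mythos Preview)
Your proof is correct and complete: the key identity $\widetilde{\Gamma}+\widetilde{\Gamma}^\top=\Gamma^0$ reduces all three quadratic forms to positive multiples of $\bm{x}^\top\Gamma^0\bm{x}$ plus a nonnegative diagonal term, and the strict positive definiteness of $\Gamma^0$ follows from Bochner's theorem applied to $e^{-\rho|t|}$. The paper omits its own proof and defers to \cite[Lemma~3.2]{SchiedZhang.19}, whose argument (for $n=2$) proceeds along exactly these lines, so your approach matches the intended one.
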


The proof is analogous to \cite[Lemma~3.2]{SchiedZhang.19} and omitted. The next result gives an explicit expression for agent $i$'s  objective functional.

\begin{lemma}[Explicit objective]\label{formoffunctionaldisc}
    For $\bm \xi_i\in\mathscr{X}(x_i,\mathbb{T})$ and competitors' strategies $\bm \xi_j\in\mathscr{X}(x_j,\mathbb{T})$,
    \begin{align*}
        \mathbb{E}[\mathscr{C}_{\mathbb{T}}(\bm{\xi}_i \mid \bm{\xi}_{-i})]
        &= \mathbb{E}\Bigl[\frac{1}{2}\bm{\xi}_i^\top\Gamma^\theta\bm{\xi}_i
        + \bm{\xi}_i^\top\widetilde{\Gamma}\Bigl(\sum_{j\neq i}\bm{\xi}_j\Bigr)\Bigr].
    \end{align*}
\end{lemma}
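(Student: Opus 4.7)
The plan is to expand the cost functional in~\eqref{liqcost}, substitute the formula for the affected price, and then regroup the resulting triangular sums into the symmetric quadratic form $\frac12\bm{\xi}_i^\top\Gamma^\theta\bm{\xi}_i$ for the own-impact contributions and into the lower-triangular bilinear forms $\bm{\xi}_i^\top\widetilde{\Gamma}\bm{\xi}_j$ for the cross-impact contributions. Since we are allowed to take $S^0_t\equiv 0$, the first term $x_iS^0_0$ drops and $S_{t_k}^\Xi=-\sum_{\ell<k}G(t_k-t_\ell)\sum_{j=1}^n\xi_{j,\ell}$.

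First I would substitute this into $-S^\Xi_{t_k}\xi_{i,k}$ and split the inner sum $\sum_{j=1}^n$ into the term $j=i$ and the remaining $j\neq i$. Combined with the diagonal instantaneous and tie-breaking terms $\bigl(\tfrac{G(0)}{2}+\theta\bigr)\xi_{i,k}^2=\tfrac12\Gamma^\theta_{kk}\xi_{i,k}^2$ and $\tfrac{G(0)}{2}\xi_{i,k}\xi_{j,k}$, the expression in the expectation becomes
\begin{align*}
\sum_{k=0}^N \tfrac12 \Gamma^\theta_{kk}\xi_{i,k}^2
+ \sum_{k=0}^N\sum_{\ell<k} G(t_k-t_\ell)\xi_{i,k}\xi_{i,\ell}
+ \sum_{j\neq i}\Big(\sum_{k=0}^N\sum_{\ell<k} G(t_k-t_\ell)\xi_{i,k}\xi_{j,\ell} + \sum_{k=0}^N\tfrac{G(0)}{2}\xi_{i,k}\xi_{j,k}\Big).
\end{align*}

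Next I would symmetrize the own-impact piece: the strictly lower-triangular sum $\sum_{\ell<k}G(t_k-t_\ell)\xi_{i,k}\xi_{i,\ell}$ equals $\tfrac12\sum_{k\neq\ell}G(|t_k-t_\ell|)\xi_{i,k}\xi_{i,\ell}=\tfrac12\sum_{k\neq\ell}\Gamma^\theta_{k\ell}\xi_{i,k}\xi_{i,\ell}$ because the off-diagonal entries of $\Gamma^\theta$ are just $G(|t_k-t_\ell|)$. Combined with the diagonal $\tfrac12\sum_k\Gamma^\theta_{kk}\xi_{i,k}^2$, this gives exactly $\tfrac12\bm{\xi}_i^\top\Gamma^\theta\bm{\xi}_i$. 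For the cross-impact piece I would recognize directly from the definition~\eqref{gammatheta} that
\begin{align*}
\bm{\xi}_i^\top\widetilde{\Gamma}\bm{\xi}_j
= \sum_{k>\ell} G(t_k-t_\ell)\xi_{i,k}\xi_{j,\ell} + \sum_k \tfrac{G(0)}{2}\xi_{i,k}\xi_{j,k},
\end{align*}
which matches the parenthesized expression above. Summing over $j\neq i$ and taking expectations yields the claim.

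The whole argument is bookkeeping, so there is no real obstacle; the only subtle point is keeping track of the asymmetry of $\widetilde{\Gamma}$ versus the symmetry of $\Gamma^\theta$. Concretely, the non-symmetric choice of $\widetilde{\Gamma}$ in~\eqref{gammatheta} is dictated by the fact that, for $j\neq i$, agent $j$'s past trades affect agent $i$'s future cost via the full decay kernel $G(t_k-t_\ell)$ whereas simultaneous trades split via the tie-breaking rule $\tfrac{G(0)}{2}$, so the entries on and below the diagonal are $\tfrac{G(0)}{2}$ and $G(t_k-t_\ell)$ respectively while entries above vanish. One should also note that the martingale property of $S^0$ is not needed because we reduced to $S^0\equiv 0$.
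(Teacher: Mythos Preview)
Your argument is correct and is exactly the standard bookkeeping computation the paper has in mind when it cites \cite[Lemma~3.1]{LuoSchied.19} and omits the proof: expand $-S^\Xi_{t_k}\xi_{i,k}$, split own- versus cross-impact, symmetrize the own part into $\tfrac12\bm{\xi}_i^\top\Gamma^\theta\bm{\xi}_i$, and read off $\bm{\xi}_i^\top\widetilde\Gamma\bm{\xi}_j$ from the lower-triangular structure of~\eqref{gammatheta}. One small caveat on your closing remark: the reduction to $S^0\equiv0$ is itself what uses the martingale property (via $\mathbb{E}[S^0_{t_k}\xi_{i,k}]=\mathbb{E}[S^0_T\xi_{i,k}]$ and $\sum_k\xi_{i,k}=x_i$), so it is more accurate to say that the martingale argument has already been absorbed into that reduction rather than that it is not needed.
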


The proof follows \cite[Lemma~3.1]{LuoSchied.19} and is omitted. The final result of this section establishes existence and uniqueness of a Nash equilibrium; it is deterministic and described by~$\bm{v}$ and~$\bm{w}$ of~\eqref{optimal sol}.

\begin{theorem}[Discrete  equilibrium]\label{exuniqnasheqdiscrete}
For any grid $\mathbb{T}$, $\theta\ge 0$, and initial inventories $(x_1,\dots,x_n)\in\mathbb{R}^n$, there exists a unique Nash equilibrium
$(\bm{\xi}^*_1,\dots,\bm{\xi}^*_n)\in \prod_i \mathscr{X}(x_i,\mathbb{T})$.
The equilibrium strategies are deterministic and given by
\begin{align}\label{NashEqStratDisc}
    \bm{\xi}^*_i = \bar{x}\bm{v} + (x_i-\bar{x})\bm{w},
    \qquad \text{where} \quad \bar{x} = \frac{1}{n}\sum_{j=1}^n x_j.
\end{align}
(Theorem~\ref{omega and nu closed form thm} in Appendix~\ref{closed form section} provides fully explicit expressions for $\bm{v}$ and $\bm{w}$, for equidistant grids~$\mathbb{T}$.) 
\end{theorem}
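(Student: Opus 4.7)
The plan is to follow the standard program for strictly convex quadratic games with linear equality constraints. By Lemma~\ref{formoffunctionaldisc}, agent $i$'s expected cost is
\[
\tfrac{1}{2}\mathbb{E}[\bm{\xi}_i^\top\Gamma^\theta\bm{\xi}_i] + \mathbb{E}\Bigl[\bm{\xi}_i^\top\widetilde{\Gamma}\sum_{j\neq i}\bm{\xi}_j\Bigr],
\]
which is strictly convex in $\bm{\xi}_i$ since $\Gamma^\theta$ is positive (Lemma~\ref{matrices positive definite}), with a bilinear coupling to opponents. Each agent therefore solves a strictly convex quadratic program under a single linear constraint $\bm{1}^\top\bm{\xi}_i = x_i$, and Lagrange multipliers give the coupled first-order conditions
\[
\Gamma^\theta\bm{\xi}_i^* + \widetilde{\Gamma}\sum_{j\neq i}\bm{\xi}_j^* = \mu_i\bm{1}, \qquad \bm{1}^\top\bm{\xi}_i^* = x_i, \qquad i=1,\dots,n,
\]
which by strict convexity are necessary and sufficient for each best response.

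To solve the system I would sum over $i$: with $\bm{\Xi}:=\sum_i\bm{\xi}_i^*$, one obtains $(\Gamma^\theta+(n-1)\widetilde{\Gamma})\bm{\Xi} = (\sum_i\mu_i)\bm{1}$ and $\bm{1}^\top\bm{\Xi} = n\bar{x}$. Invertibility from Lemma~\ref{matrices positive definite} and the definition~\eqref{optimal sol} of $\bm{v}$ then force $\bm{\Xi} = n\bar{x}\bm{v}$. Substituting $\sum_{j\neq i}\bm{\xi}_j^* = \bm{\Xi} - \bm{\xi}_i^*$ into the individual FOC yields $(\Gamma^\theta-\widetilde{\Gamma})\bm{\xi}_i^* = \mu_i\bm{1} - \widetilde{\Gamma}\bm{\Xi}$, which together with the scalar constraint and invertibility of $\Gamma^\theta-\widetilde{\Gamma}$ pins down $\bm{\xi}_i^*$ and $\mu_i$ uniquely. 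Verifying the closed form reduces to substituting $\bm{\xi}_i^* = \bar{x}\bm{v} + (x_i-\bar{x})\bm{w}$ and $\bm{\Xi} = n\bar{x}\bm{v}$ into the FOC and regrouping to $\bar{x}(\Gamma^\theta+(n-1)\widetilde{\Gamma})\bm{v} + (x_i-\bar{x})(\Gamma^\theta-\widetilde{\Gamma})\bm{w} = \mu_i\bm{1}$, which holds because both parenthesized products are scalar multiples of $\bm{1}$ by~\eqref{optimal sol}; the constraint $\bm{1}^\top\bm{\xi}_i^* = x_i$ then follows from $\bm{1}^\top\bm{v} = \bm{1}^\top\bm{w} = 1$. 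The decomposition into an ``average'' mode $\bar{x}\bm{v}$ and ``deviation'' modes $(x_i-\bar{x})\bm{w}$ is precisely what causes the two matrices in~\eqref{optimal sol} to appear.

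The hard part is reducing to deterministic strategies, since a priori admissible strategies are $\mathscr{F}_{t_k}$-adapted random vectors. The starting point is that because $\mathscr{F}_0$ is trivial, constants are admissible; against deterministic opponents, the covariance decomposition $\mathbb{E}[\bm{\xi}_i^\top\Gamma^\theta\bm{\xi}_i] = (\mathbb{E}\bm{\xi}_i)^\top\Gamma^\theta(\mathbb{E}\bm{\xi}_i) + \mathrm{tr}(\Gamma^\theta\,\mathrm{Cov}(\bm{\xi}_i))$ together with positivity of $\Gamma^\theta$ shows the deterministic mean $\mathbb{E}\bm{\xi}_i$ strictly improves the cost unless $\bm{\xi}_i$ is already deterministic. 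For a general random Nash profile $(\tilde{\bm{\xi}}_i)$, testing the variational FOC against deterministic increments $\bm{\eta}$ with $\bm{1}^\top\bm{\eta} = 0$ forces $(\mathbb{E}\tilde{\bm{\xi}}_i)$ to solve the deterministic FOC system and hence to equal~\eqref{NashEqStratDisc}; a strict-monotonicity argument on the joint best-response operator, exploiting the lower-triangular structure of $\widetilde{\Gamma}$ (which keeps $(\widetilde{\Gamma}\bm{\xi}_j)_k$ adapted to $\mathscr{F}_{t_k}$) together with positivity of both $\Gamma^\theta+(n-1)\widetilde{\Gamma}$ and $\Gamma^\theta-\widetilde{\Gamma}$, then forces $\tilde{\bm{\xi}}_i = \mathbb{E}\tilde{\bm{\xi}}_i$ almost surely. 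The main technical challenge here is that the non-symmetry of $\widetilde{\Gamma}$ couples agents' random fluctuations in a way that does not collapse under a direct Jensen-type argument on a single agent's problem.
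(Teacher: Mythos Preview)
Your derivation of the deterministic equilibrium via the first-order conditions, summing over agents to isolate $\bm\Xi=n\bar x\,\bm v$, and then verifying the closed form by substitution, is essentially the paper's argument (the paper derives rather than verifies, subtracting the FOCs pairwise to extract the $(\Gamma^\theta-\widetilde\Gamma)$ mode, but the content is the same). Likewise, your covariance decomposition showing that a deterministic Nash equilibrium remains optimal against adapted deviations is exactly the Jensen step the paper uses in Lemma~\ref{detstratonlydisc}.

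The divergence is in uniqueness among \emph{adapted} profiles. The paper does not prove this directly: it invokes an external general uniqueness theorem (\cite[Theorem~5.1]{CampbellNutz.25}) that handles the coupling you correctly flag as delicate. Your proposed route---showing that the expectations $(\mathbb{E}\tilde{\bm\xi}_i)$ of any adapted equilibrium satisfy the deterministic FOC (correct, by testing against deterministic $\bm\eta$), and then collapsing the random fluctuations via a ``strict-monotonicity argument on the joint best-response operator''---is the right strategic direction, but as written it is a promissory note rather than a proof. The difficulty is precisely that agent~$i$'s optimality against the \emph{random} $\tilde{\bm\xi}_{-i}$ couples $\mathrm{Cov}(\tilde{\bm\xi}_i,\tilde{\bm\xi}_j)$ into the cost through $\widetilde\Gamma$, and neither the lower-triangular structure of $\widetilde\Gamma$ nor positivity of $\Gamma^\theta+(n-1)\widetilde\Gamma$ and $\Gamma^\theta-\widetilde\Gamma$ alone kills these cross-covariances. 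A workable monotonicity argument here requires summing the best-response inequalities across all agents and exploiting positivity of an aggregate operator on the \emph{stacked} fluctuation vector $(\tilde{\bm\xi}_i-\mathbb{E}\tilde{\bm\xi}_i)_{i=1}^n$, which is more than you have sketched. If you want a self-contained proof, that is the calculation to write out; otherwise, citing the external uniqueness result as the paper does is the efficient path.
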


The proof is detailed in Appendix~\ref{exist_unique_Nash}.

\begin{remark}\label{vwSpecialCases}
We observe the following special cases of Theorem~\ref{exuniqnasheqdiscrete}. In the symmetric case $x_1=\dots=x_n$, we have $\bm{\xi}^*_i=x_1\bm{v}$ for all $i$, whereas in the case $x_1+\dots+x_n=0$ of zero net supply, $\bm{\xi}^*_i=x_i\bm{w}$ for all $i$. Thus, $\bm{v}$ and $\bm{w}$ can be interpreted as the strategies for an agent with unit initial inventory in each of those cases.
\end{remark}

\section{Continuous-Time Equilibrium}\label{cont_time_section}

This section recalls the continuous-time setting with \emph{boundary block costs}. We refer to \cite[Section~2]{CampbellNutz.24} for further details and proofs.

\subsection{Model Specifications}

There are $n$ traders with inventory processes $X^{i}=(X^{i}_t)_{t\in[0,T]}$. An \emph{admissible inventory} $X^i$ is c\`adl\`ag, predictable, has (essentially) bounded total variation, and satisfies $X^i_{0-}=x_i\in\R$ and $X^i_T=0$.
The \emph{unaffected price} $S=(S_t)_{t\ge0}$ is a c\`adl\`ag local martingale with $\E[[S,S]_T]<\infty$. By risk-neutrality (see \cite[Proposition~2.2]{CampbellNutz.24} for a detailed proof), we may assume $S\equiv0$. As in the discrete-time model, trading generates transient impact $I=(I_t)_{t\ge0}$ with the Obizhaeva–Wang dynamics
\begin{align*}
  dI_t = -\rho I_t\,dt + \lambda \sum_{i=1}^n dX^i_t, 
  \qquad I_{0-}=0.
\end{align*}
Collecting agents' inventories in the vector $\bm X = (X^1,\dots,X^n)$ and setting $\lambda =1$ without loss of generality, the (affected) price process is
\begin{align*}
    S_t^{\bm X} = \int_0^t e^{-\rho(t-s)} \sum_{i=1}^n dX^i_s.
\end{align*}
In addition to the cost of transient impact, we charge quadratic \emph{boundary block costs} at $t=0$ and $t=T$ with coefficients $\vartheta_0,\vartheta_T\geq0$. Given opponents’ strategies $\boldsymbol{X}^{-i}$, the execution cost of $X^i$ is then defined as
\begin{equation}\label{cost_func_cont}
  \cC(X^i\mid\boldsymbol{X}^{-i})
  = \E\left[\int_0^T S^{\bm X}_{t-}\,dX^i_t + \frac12\sum_{t\in[0,T]}\Delta S_t\,\Delta X^i_t + \frac12\big(\vartheta_0(\Delta X^i_0)^2 + \vartheta_T(\Delta X^i_T)^2\big) \right].
\end{equation}
Thus, block trades at the initial and terminal times incur an additional quadratic cost governed by~$\vartheta_0$ and~$\vartheta_T$, respectively. The cross-term has the same interpretation as the discrete-time model. %

\subsection{Nash Equilibrium}

A profile $\boldsymbol{X}^*=(X^{*,1},\dots,X^{*,n})$ is a Nash equilibrium if each $X^{*,i}$ is admissible and
\begin{align*}
    \cC(Z\mid\boldsymbol{X}^{*,-i}) \ge \cC(X^{*,i}\mid\boldsymbol{X}^{*,-i})
    \qquad \text{for all admissible } Z.
\end{align*}

Existence of an equilibrium depends crucially on the initial and terminal block cost coefficients~$\vartheta_0$ and~$\vartheta_T$---there is a single choice yielding existence for general initial inventories. 

\begin{theorem}[Continuous  equilibrium, {\cite[Theorem~4.4, Corollary~4.6]{CampbellNutz.24}}]\label{thm:equil.block.cost}
Let $\vartheta_0=({n-1})/{2}$ and $\vartheta_T={1}/{2}$. Then the unique Nash equilibrium is given by
  \begin{align}\label{eq:contEquilib}
    X^{*,i}_t = \mathbbm{f}(t) (x_i-\bar{x}) + \mathbbm{g}(t) \bar{x},
    \qquad t\in[0,T],\ i=1,\dots,n,\qquad\text{where }\bar{x}=\frac1n\sum_{j=1}^nx_j%
  \end{align}
  and %
\begin{align}
    \mathbbm{f}(t) := \frac{\rho(T-t)+1}{\rho T+1},
    \quad t\in[0,T), 
    \qquad
    \mathbbm{f}_{0-}=1,
    \qquad
    \mathbbm{f}_T=0,\label{def_of_f_fundamental}\\
    \mathbbm{g}(t) := 1-\frac{n(\rho t +1)(n+1)e^{\rho\frac{n+1}{n-1}T}+2ne^{\rho\frac{n+1}{n-1}t}-(n-1)}
    {n((\rho T+1)(n+1)+2)e^{\rho\frac{n+1}{n-1}T}-(n-1)},
    \quad t\in[0,T],
    \qquad
    \mathbbm{g}_{0-}=1.\label{def_of_g_fundamental}    
\end{align}
Moreover, the equilibrium execution cost is given by 
\begin{align}\label{cost_cont_eq}
    \cC(X^{*,i}\mid\boldsymbol{X}^{*,-i}) = \mathscr{I} + \mathscr{B}_0 + \mathscr{B}_T,
\end{align}
where $\mathscr{I}$ is \emph{impact cost} 
    \begin{align}
        \mathscr{I} := \frac{n}{\rho T+1} \bar{x}(x_i-\bar{x})
  + \frac{\bar{x}^2 n^3 (n+1)  \left(  \left((\rho T + \frac{1}{2})(n+1) + 3\right) e^{\frac{2(n +1)\rho T}{n -1}} - \frac{2(n -1)}{n^2}   \left(n e^{\frac{(n +1)\rho T}{n -1}} + \frac{1}{4}\right)   \right)}
  {[n  ((\rho T + 1)(n+1) + 2)e^{\frac{(n +1)\rho T}{n -1}} - (n -1)]^{2}} \label{impact_cost_continuous_time}
    \end{align}
    and $\mathscr{B}_0,\mathscr{B}_T$ are the \emph{initial and terminal block costs}
    \begin{equation}\label{block_costs}
    \begin{aligned}
        \mathscr{B}_0 &:= \frac{(n - 1)(n + 1)^2 (1 + n e^{\rho \frac{n + 1}{n - 1} T})^2 \bar{x}^2}
        {4 [n((\rho T + 1)(n + 1) + 2)e^{\rho \frac{n + 1}{n - 1} T} - (n - 1)]^2}, \\[2pt] 
        \mathscr{B}_T &:= \frac{(x_i - \bar{x})^2}{4 (\rho T + 1)^2}.
    \end{aligned}
\end{equation}
\end{theorem}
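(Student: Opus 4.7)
The statement is attributed to \cite{CampbellNutz.24}, so I would follow their strategy, which reduces to a best-response analysis driven by symmetry and an explicit ansatz. The plan is as follows. First, each agent's cost functional $\cC(X^i\mid\boldsymbol{X}^{-i})$ is a strictly convex quadratic in $X^i$ (the transient-impact part produces a Gram matrix of the kernel $G$, which is positive by a continuous-time analogue of Lemma~\ref{matrices positive definite}, and the block costs $\vartheta_0,\vartheta_T\ge0$ contribute positive semidefinite boundary terms). Consequently, a Nash equilibrium is characterized by first-order conditions, and whenever these conditions admit a solution it must be unique.

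Second, I would derive the first-order condition. Writing $S^{\bm{X}}_{t-} = \int_{[0,t)} e^{-\rho(t-s)}\,d\bigl(\sum_j X^j_s\bigr)$ and combining the integral term with the atomic tie-breaking rule $\frac12\sum_t \Delta S_t\,\Delta X^i_t$, an integration by parts expresses $\cC(X^i\mid\boldsymbol{X}^{-i})$ as a bilinear form whose Gâteaux derivative in $X^i$ produces an integro-differential equation. Variations that preserve admissibility (cádlág predictable, bounded variation, $X^i_{0-}=x_i$, $X^i_T=0$) yield an interior Fredholm-type equation on $(0,T)$, an initial boundary condition tied to $\vartheta_0$ coming from variations of the block $\Delta X^i_0$, and a terminal boundary condition tied to $\vartheta_T$ from variations of $\Delta X^i_T$.

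Third, I would exploit the symmetry of the game via the ansatz $X^{*,i}_t = \mathbbm{f}(t)(x_i-\bar{x}) + \mathbbm{g}(t)\bar{x}$ motivated by Remark~\ref{vwSpecialCases}: the "differential" component $x_i-\bar{x}$ carries no aggregate impact (since $\sum_i(x_i-\bar x)=0$), while the "common" component $\bar{x}$ triggers aggregate Obizhaeva--Wang impact of all $n$ traders. Plugging the ansatz into the first-order conditions decouples them into two problems. For the differential part, only the terminal block cost $\vartheta_T=1/2$ is active and one recovers the classical affine Obizhaeva--Wang trajectory $\mathbbm{f}(t)=(\rho(T-t)+1)/(\rho T+1)$ by differentiating the integral equation in $t$ and matching the $1/2$-rule at $T$. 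For the common part, differentiation turns the integral equation into a linear second-order ODE with characteristic roots $\pm\rho\,(n+1)/(n-1)$; the boundary conditions $\mathbbm{g}_{0-}=1$, $\mathbbm{g}(T)=0$ together with the marginal block-cost equalities at $t=0,T$ pin down two integration constants, yielding~\eqref{def_of_g_fundamental}. Finally, strict convexity ensures that satisfying these conditions is also sufficient, and uniqueness follows.

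The equilibrium cost~\eqref{cost_cont_eq} is then obtained by substituting the optimal trajectories back into~\eqref{cost_func_cont}: the integrand over $(0,T)$ collapses (after another integration by parts using the ODEs for $\mathbbm{f},\mathbbm{g}$) to the closed-form expression $\mathscr{I}$, while the block terms evaluate to $\mathscr{B}_0 = \vartheta_0(\mathbbm{g}(0)-1)^2\bar{x}^2/2$ and $\mathscr{B}_T = \vartheta_T\mathbbm{f}(T-)^2 (x_i-\bar{x})^2/2 + \vartheta_T(\mathbbm{g}(T-))^2\bar{x}^2/2$, simplifying to~\eqref{block_costs} (the $\bar{x}^2$ contribution to $\mathscr{B}_T$ is absorbed into $\mathscr{I}$ in the stated decomposition). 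The main obstacle is the sensitivity of the argument to the exact values $\vartheta_0=(n-1)/2$ and $\vartheta_T=1/2$: these coefficients are precisely what renders the two-point boundary value problem for $\mathbbm{g}$ compatible for arbitrary $\bar{x}$, so the bulk of the technical work lies in showing that any deviation from $((n-1)/2,\,1/2)$ either forces $\bar{x}=0$ or destroys admissibility, thereby establishing non-existence outside this regime.
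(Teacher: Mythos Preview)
The paper does not prove this theorem; it is quoted verbatim from \cite[Theorem~4.4, Corollary~4.6]{CampbellNutz.24}, and Section~\ref{cont_time_section} explicitly refers the reader there for proofs. So there is no ``paper's own proof'' to compare against beyond the citation itself.

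Your sketch is a faithful outline of the argument in \cite{CampbellNutz.24}: strict convexity of $\cC(\,\cdot\mid\boldsymbol{X}^{-i})$, first-order (variational) conditions, the symmetry ansatz $X^{*,i}=\mathbbm{f}\,(x_i-\bar x)+\mathbbm{g}\,\bar x$ that decouples the problem into a zero-net-supply piece (yielding the affine $\mathbbm{f}$) and a symmetric piece (yielding a linear second-order ODE for $\mathbbm{g}$ with exponent $\rho(n+1)/(n-1)$), followed by back-substitution for the costs. One small slip: $\mathbbm{g}$ is continuous at $T$ with $\mathbbm{g}(T)=0$, so there is no $\bar x^2$ contribution to $\mathscr{B}_T$ at all; the terminal block cost is purely $\vartheta_T\,\mathbbm{f}(T-)^2(x_i-\bar x)^2/2=(x_i-\bar x)^2/(4(\rho T+1)^2)$, matching~\eqref{block_costs} directly rather than after any ``absorption'' into $\mathscr{I}$.
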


\begin{remark}\label{non_existence_for_diff_theta_values}
The stated values for the block cost coefficients $\vartheta_0, \vartheta_T$ are the unique \say{correct} choice for this model. Indeed, for different values of $\vartheta_0, \vartheta_T$, equilibrium does not exist, except for particular initial inventories. Specifically, \cite[Theorem~4.4]{CampbellNutz.24} shows that if $\vartheta_T={1}/{2}$ but $\vartheta_0\neq({n-1})/{2}$, an equilibrium exists if and only if $\bar{x}=0$, and if $\vartheta_0=({n-1})/{2}$ but $\vartheta_T\neq{1}/{2}$, an equilibrium exists if and only if $x_1=\cdots=x_n$. Thus, if both $\vartheta_0\neq({n-1})/{2}$ and $\vartheta_T\neq{1}/{2}$, then the only case with equilibrium is $x_i\equiv0$, which yields the trivial no-trade solution $X^{*,i}\equiv0$. In the cases with existence, the equilibrium is given by~\eqref{eq:contEquilib}.
\end{remark}

\section{High-Frequency Limits}\label{hfl_section}

We can now present our main results on the high-frequency limits of the discrete equilibrium strategies and costs. In the case $\theta>0$ of non-zero instantaneous costs, we show that the discrete equilibria converge to the continuous-time equilibrium of Theorem~\ref{thm:equil.block.cost} including the particular boundary block costs. Whereas for $\theta=0$, the limit does not exist, and this will be linked to the non-existence of a continuous-time equilibrium when there are no boundary block costs (Remark~\ref{non_existence_for_diff_theta_values}). 
We fix initial inventories $({x_1,\dots,x_n})\in\mathbb{R}^n$ and consider equidistant grids
\begin{align}\label{time_grid_equidistant}
    \mathbb{T}_N := \{ kT/N \mid k = 0,1,\dots,N \}, \quad N = 2,3,\dots
\end{align}
For $t\in[0,T]$, define
\begin{equation}\label{fund_strat_rescaled}
    n_t = \lceil Nt/T \rceil, \qquad 
    V_t^{(N)} = 1 - \sum_{k=1}^{n_t} v_k, \qquad 
    W_t^{(N)} = 1 - \sum_{k=1}^{n_t} w_k,\qquad X^{(N),i}_t = \bar{x}V^{(N)}_t + (x_i-\bar x)W^{(N)}_t,
\end{equation}
where $\bm v$ and $\bm w$ are the vectors from \eqref{optimal sol}. 
Note that time $t$ corresponds to the $n_t$-th trading date in $\mathbb{T}_N$. In view of Remark~\ref{vwSpecialCases}, $V_t^{(N)}$ is the time-$t$ inventory of an agent with unit initial inventory in the symmetric case $x_1=\dots=x_n$. Similarly, $W_t^{(N)}$ is the time-$t$ inventory of an agent with unit initial inventory in the case of zero net supply. Finally, $X^{(N),i}_t$ is the time-$t$ inventory of agent $i$ with initial inventory $x_i$.

We first focus on the case $\theta>0$. The first result states the convergence of the strategies. More precisely, the time-$t$ inventory $X^{(N),i}_t$ converges pointwise to the continuous-time inventory $X^{*,i}_t$ of Theorem~\ref{thm:equil.block.cost} for $t\in(0,T)$, and we establish the rate $1/N$. Given the form of the strategies, convergence boils down to $V^{(N)}_t \to \mathbbm{g}(t)$ and $W^{(N)}_t \to \mathbbm{f}(t)$, where $\mathbbm{f}$ and $\mathbbm{g}$ are defined in~\eqref{def_of_f_fundamental} and~\eqref{def_of_g_fundamental}. At each of the boundaries ($t=0$ and $t=T$), one of these limits fails, whence the convergence of the strategies only on the open interval~$(0,T)$.

\begin{theorem}[Convergence of strategies for $\theta>0$]\label{strat_asympt_thm_theta_positive}
If $\theta>0$, we have 
\begin{align*}
    X^{(N),i}_t\lra X^{*,i}_t,\qquad\text{for any }t\in(0,T).
\end{align*}
More precisely:
    \begin{enumerate}[label=(\alph*)]
        \item\label{conv_of_V} For any $t\in(0,T]$, the sequence $N|V^{(N)}_t - \mathbbm{g}(t)|$ is bounded, and $V_0^{(N)}=1$ for all $N$.
        \item\label{conv_of_W} For any $t\in[0,T)$, the sequence $N|W^{(N)}_t - \mathbbm{f}(t)|$ is bounded, and $N|W_T^{(N)} - \frac{1}{(2\theta + \frac{1}{2})(\rho T + 1)}|=\mathcal{O}(1)$.
    \end{enumerate}
\end{theorem}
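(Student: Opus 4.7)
The plan is to build on the explicit closed-form expressions for the fundamental vectors $\bm v$ and $\bm w$ provided by Theorem~\ref{omega and nu closed form thm} in Appendix~\ref{closed form section}, and reduce the two claimed convergences to Taylor expansions of the resulting discrete geometric sums. Because the grid $\mathbb{T}_N$ is equidistant with spacing $h:=T/N$ and the kernel $G(t)=e^{-\rho t}$ is exponential, the matrices $\Gamma^\theta\pm\widetilde\Gamma$ have Toeplitz structure. In particular, $\Gamma^\theta-\widetilde\Gamma$ is upper triangular with constant diagonal $c:=2\theta+\tfrac12$ and super-diagonal entries $a^{j-i}$, where $a:=e^{-\rho h}$; back-substitution then exhibits $w_k$ as a bulk constant $A_N$ plus a boundary-layer geometric term $B_N r^{N-k}$ with ratio $r:=a(c-1)/c$. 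For $\theta>0$ one has $|r|\to|(c-1)/c|<1$ uniformly in $N$, so $r^{N-k}$ decays exponentially away from the endpoint $k=N$. This is the precise structural ingredient that will fail when $\theta=0$ (where $r=-a$ and $|r|^N\to e^{-\rho T}$ persists), so the $\theta>0$ hypothesis is used here in an essential way.

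For part (b), the admissibility constraint $\bm 1^\top\bm w=1$ pins down $A_N=\mu_N(1-a)/(c(1-r))$ and $B_N=a\mu_N/(c^2(1-r))$; a short calculation using the identity $c(1-r)=c(1-a)+a$ gives $\mu_N\to 1/(\rho T+1)$ at rate $1/N$. Substituting into $W^{(N)}_t=1-\sum_{k=1}^{n_t}w_k$ and splitting the partial sum into a linear bulk piece and a geometric tail, then applying the expansion $a=1-\rho h+\mathcal{O}(h^2)$ together with $n_t h=t+\mathcal{O}(h)$, the bulk contributes $\rho(T-t)/(\rho T+1)+\mathcal{O}(1/N)$ and the tail contributes $1/(\rho T+1)+\mathcal{O}(1/N)$ once $n_t$ is bounded away from $N$ (the factors $r^{N-n_t}$ and $r^{n_t}$ being exponentially small). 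Combining these yields $W^{(N)}_t=\mathbbm{f}(t)+\mathcal{O}(1/N)$ for $t\in[0,T)$. At $t=T$, the telescoping against admissibility collapses the formula to the single coefficient corresponding to the trade at $t_N=T$, namely $A_N+B_N=\mu_N/c$; since $\mu_N/c\to 1/(c(\rho T+1))$ at rate $1/N$, this is precisely the stated limit. The factor $1/c=1/(2\theta+\tfrac12)$ is exactly the inverse diagonal entry of $\Gamma^\theta-\widetilde\Gamma$, and encodes how the continuous terminal block $1/(\rho T+1)$ is spread across the adjacent boundary-layer trades in the discrete model.

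Part (a) follows the same template, but with the non-triangular matrix $\Gamma^\theta+(n-1)\widetilde\Gamma$, whose upper and lower off-diagonal parts are geometric with ratio $a$ but differ by the multiplicative factor $n$. The first-order equilibrium condition becomes a second-order difference equation whose two characteristic roots converge, as $h\to0$, to $1$ and $e^{-\rho(n+1)/(n-1)\,h}$; consequently $v_k$ decomposes as a combination of the two corresponding discrete exponentials. Summing telescoped versions of these, Taylor expanding in $1/N$, and identifying the normalization obtained from $\bm 1^\top \bm v=1$ reproduces both the linear factor $\rho t+1$ and the exponential factor $e^{\rho(n+1)/(n-1)\,t}$ in~\eqref{def_of_g_fundamental}, yielding $V^{(N)}_t=\mathbbm{g}(t)+\mathcal{O}(1/N)$ for $t\in(0,T]$. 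The identity $V^{(N)}_0=1$ is immediate from the empty sum.

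The main obstacle is the organizational bookkeeping in part (a): the closed form for $\bm v$ involves several exponentials whose leading and sub-leading orders must simultaneously be matched against the intricate six-parameter formula for $\mathbbm{g}$ in~\eqref{def_of_g_fundamental}, with the numerator and denominator requiring separate asymptotic identifications. A more subtle point in part (b) is the endpoint analysis at $t=T$: at any interior $t$ the boundary-layer residue $B_N r^{N-n_t}$ is exponentially damped and invisible in the limit, but at $t=T$ this residue becomes $B_N$ itself, giving the ``extra'' constant $1/(c(\rho T+1))$ that distinguishes the discrete-time boundary value from the continuous pre-jump value $\mathbbm{f}(T^-)=1/(\rho T+1)$.
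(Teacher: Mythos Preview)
Your approach is essentially the same as the paper's: both part (b) and part (a) are obtained by starting from the closed forms in Theorem~\ref{omega and nu closed form thm} and Taylor-expanding the resulting geometric sums in $h=T/N$. For part~(b) your decomposition $w_k=A_N+B_Nr^{N+1-k}$ with $r=a(c-1)/c$, $c=\tilde\kappa=2\theta+\tfrac12$, is exactly the paper's formula~\eqref{omi formula}, and your identification $\mu_N=1/(\bm1^\top\bm\omega)\to 1/(\rho T+1)$ and the endpoint analysis at $t=T$ match the paper's argument.

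For part~(a) your outline is in the right spirit but understates the work. Two points are worth flagging. First, the paper treats the special value $\kappa=n-1$ (equivalently $\theta=(n-1)/4$) separately, because at that value one of the characteristic roots $m_-$ of the tridiagonal reduction vanishes and the general formula~\eqref{ae} is not well defined; your sketch should acknowledge this degenerate case. Second, the structure of $\nu_i$ is richer than ``two characteristic roots converging to $1$ and $e^{-\rho(n+1)/(n-1)h}$'': the explicit form~\eqref{expl_form_nu_i} involves four geometric modes $(\alpha(\kappa-\hat\kappa)/m_\sigma)^i$ and $(m_\sigma/(\alpha\kappa))^i$ plus a constant, and for $\theta>0$ two of these are boundary layers at $i\approx 1$ while a third dies via $[m_-]^N\to0$. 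The single surviving bulk exponential has ratio $m_+/(\alpha\kappa)\approx e^{+\rho(n+1)/(n-1)h}$ (note the sign), matching the $e^{\rho(n+1)/(n-1)t}$ term in~\eqref{def_of_g_fundamental}; together with the constant mode this indeed recovers $\mathbbm{g}$, but the identification of all coefficients at order $1/N$ requires the detailed bookkeeping of Lemma~\ref{lemma_asympt_rate_various_elements}, which you correctly anticipate as the main burden.
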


We emphasize that the limits are \emph{independent} of the specific value of $\theta>0$. 

\begin{figure}[hbp]
    \centering
    \includegraphics[width=0.54\linewidth]{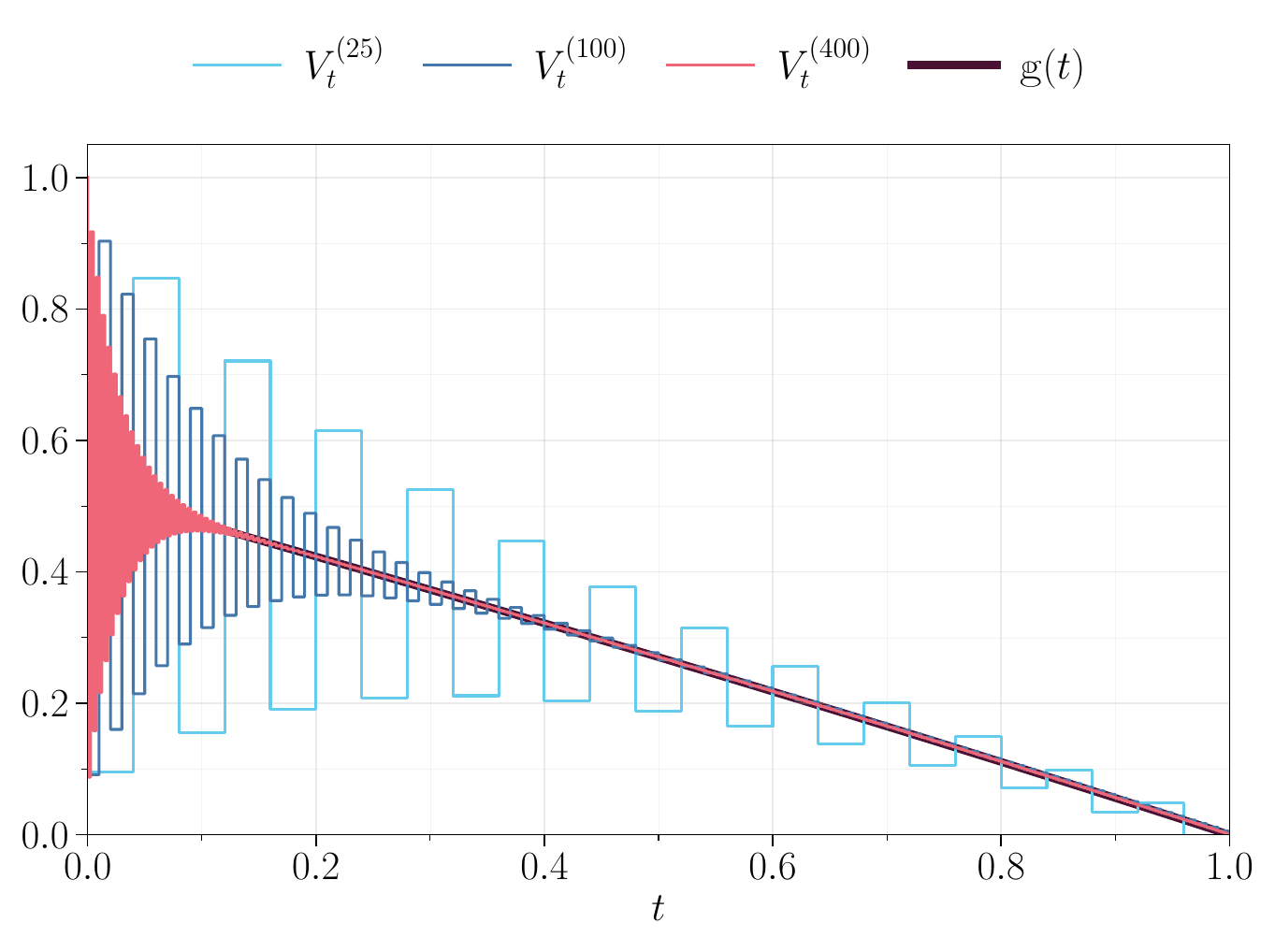} %
    \caption{Convergence of $V^{(N)}_t$ for $\theta=0.1$, $n=10$, and $\rho=1$.}
    \label{V_conv_img}
\end{figure}

\begin{figure}[htbp]
    \centering
    \includegraphics[width=0.54\linewidth]{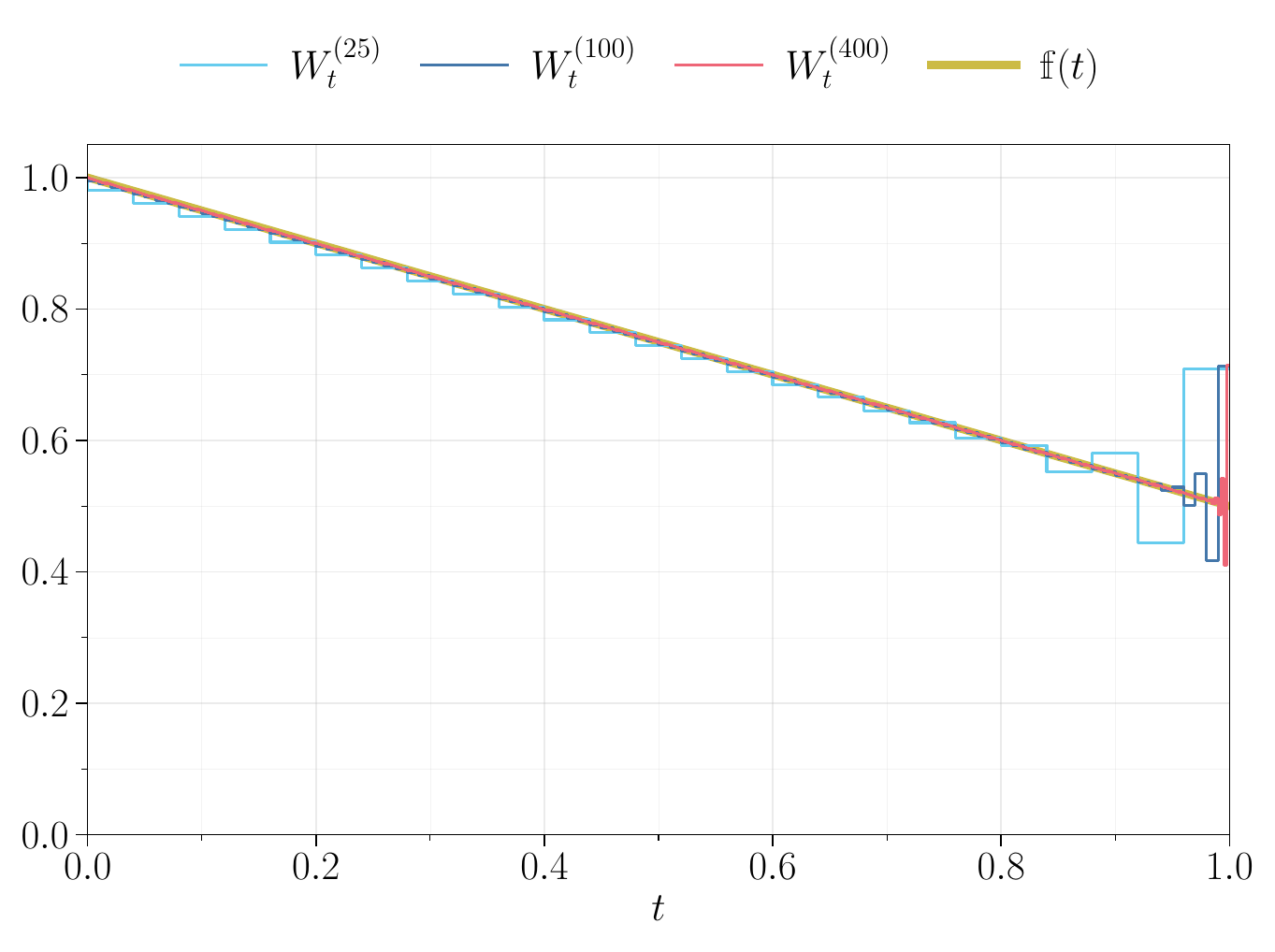}
    \caption{Convergence of $W^{(N)}_t$ for $\theta=0.1$, $n=10$, and $\rho=1$.}
    \label{W_conv_img}
\end{figure}

A similar conclusion holds for the costs. We show not only the convergence of the total cost, but also that the specific boundary block costs $\mathscr{B}_0$ and $\mathscr{B}_T$ arise as the limits of the instantaneous costs incurred near the boundaries $t=0$ and $t=T$, respectively. Hence, the coefficients $\vartheta_0$ and $\vartheta_T$ in Theorem~\ref{thm:equil.block.cost} arise naturally from the high-frequency limit, and they are canonical in that the limit does not depend on the value of~$\theta$ as long as $\theta>0$.

\begin{theorem}[Convergence of costs for $\theta>0$]\label{costs asymptotics thm}
    Let $\theta>0$ and let $({{\bm\xi}^*}^{({N})}_1,\dots,{\bm\xi^*}^{(N)}_n)\in\prod_{i=1}^n \mathscr{X}(x_i,\mathbb{T}_N)$ be the equilibrium strategies. 
The discrete execution cost converges to its continuous counterpart,
    \begin{align}\label{an}
        \lim_{N \uparrow \infty} \mathbb{E}\bigl[\mathscr{C}_{\mathbb{T}_N}({\bm\xi^*}^{(N)}_i \mid {\bm\xi^*}^{(N)}_{-i})\bigr]
        &= \cC(X^{*,i}\mid\boldsymbol{X}^{*,-i}) = \mathscr{I} + \mathscr{B}_0 + \mathscr{B}_T.
    \end{align}
    More precisely, for any split $m_N:=\lceil cN\rceil$ with $c\in({0,1})$, the cumulative initial/terminal instantaneous costs converge to the initial/terminal block costs of the continuous equilibrium,
    \begin{align}\label{inst_cost_to_block}
        \theta\sum_{k=0}^{m_N-1}\bigl({\xi^{*\,(N)}_{i,k}}\bigr)^2 \lra \mathscr{B}_0, \qquad \theta\sum_{k=m_N}^{N}\bigl({\xi^{*\,(N)}_{i,k}}\bigr)^2\lra \mathscr{B}_T,
    \end{align}
    and the discrete impact cost%
\begin{align*}
    \mathscr{I}_N({\bm\xi^*}_i^{(N)} \mid {\bm\xi^*}_{-i}^{(N)}) := \mathbb{E}\bigl[\mathscr{C}_{\mathbb{T}_N}({\bm\xi^*}_i^{(N)} \mid {\bm\xi^*}_{-i}^{(N)}) -\theta\sum_{k=0}^{N}\bigl(\xi^{*\,(N)}_{i,k}\bigr)^2\bigr]
\end{align*}
    converges to its continuous counterpart,
    \begin{align}\label{impact_cost_to_cont}
        \mathscr{I}_N({\bm\xi^*}^{(N)}_i \mid {\bm\xi^{*}}^{(N)}_{-i})\to \mathscr{I}.
    \end{align}
\end{theorem}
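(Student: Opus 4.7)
By Theorem~\ref{exuniqnasheqdiscrete}, $\bm\xi^{*(N)}_i=\bar x\bm v+(x_i-\bar x)\bm w$, so every quantity in the claim reduces to linear combinations of sums of $v_k^2, w_k^2, v_kw_k$ and bilinear forms in $\bm v,\bm w$ against $\Gamma^0$ and $\widetilde\Gamma$. Because $\mathscr B_0$ in~\eqref{block_costs} depends only on $\bar x^2$ and $\mathscr B_T$ only on $(x_i-\bar x)^2$, my strategy is to show, on the initial half $\{0,\dots,m_N-1\}$, that $\theta\sum v_k^2\to\mathscr B_0/\bar x^2$ while $\theta\sum w_k^2$ and the cross sum $\theta\sum v_k w_k$ both vanish, and symmetrically on the terminal half $\{m_N,\dots,N\}$ that $\theta\sum w_k^2\to\mathscr B_T/(x_i-\bar x)^2$ while the other two vanish. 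Summing these six limits yields~\eqref{inst_cost_to_block}.

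\textbf{Boundary-layer structure of $\bm v,\bm w$.} The key structural input is the closed-form expressions for $\bm v,\bm w$ from Theorem~\ref{omega and nu closed form thm} (Appendix~\ref{closed form section}). I would expand $v_k$ and $w_k$ as the sum of a smooth interior part of size $O(1/N)$ (consistent with the pointwise convergence $V^{(N)}_t\to\mathbbm g(t)$ and $W^{(N)}_t\to\mathbbm f(t)$ in Theorem~\ref{strat_asympt_thm_theta_positive}) plus exponentially decaying boundary-layer corrections of the form $\alpha_0 r_0^k+\alpha_T r_T^{N-k}$, where $r_0,r_T\in(0,1)$ and the amplitudes $\alpha_0,\alpha_T$ depend on $\theta,\rho,T,n$ but not on $N$. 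Because $\mathbbm g$ has an initial jump and $\mathbbm f$ a terminal jump (encoded by $\mathbbm g_{0-}=1$ and $\mathbbm f_T=0$), the layers with $O(1)$ amplitude are the initial layer for $\bm v$ and the terminal layer for $\bm w$; the complementary layers have vanishing amplitudes. By Remark~\ref{w_indep_n}, the expansion of $\bm w$ is moreover independent of $n$.

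\textbf{Collapse to geometric sums.} With this decomposition in hand, the interior contributions to $\theta\sum v_k^2$ and $\theta\sum w_k^2$ are of order $\theta\cdot N\cdot (1/N)^2=O(1/N)$ and vanish, so only the squared boundary layers survive. A geometric series then gives, for any fixed $c\in(0,1)$,
\[
    \theta\sum_{k=0}^{m_N-1}\bigl(\alpha_0 r_0^k\bigr)^2
    \,=\,\theta\alpha_0^2\,\frac{1-r_0^{2m_N}}{1-r_0^2}
    \,\longrightarrow\,\frac{\theta\alpha_0^2}{1-r_0^2},
\]
and similarly on the terminal half. Matching these two limits algebraically to $\mathscr B_0/\bar x^2$ and $\mathscr B_T/(x_i-\bar x)^2$---that is, verifying that the $\theta$-dependence baked into $\alpha_0,r_0$ (resp.\ $\alpha_T,r_T$) cancels so that the final expression matches the $\theta$-free coefficients of~\eqref{block_costs}---is the main computational step. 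The cross sums $\sum v_k w_k$ vanish on both halves because on each half exactly one of the two factors lacks an $O(1)$ layer, so a Cauchy--Schwarz estimate bounds the sum by a geometric-decay factor times an $\ell^2$-small interior factor, giving $O(1/N)$.

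\textbf{Impact cost and completion.} For~\eqref{impact_cost_to_cont}, I would use Lemma~\ref{formoffunctionaldisc} to rewrite $\mathscr I_N=\tfrac12\bm\xi_i^\top\Gamma^0\bm\xi_i+\bm\xi_i^\top\widetilde\Gamma\sum_{j\neq i}\bm\xi_j$, expand in $\bm v,\bm w$, and split each double sum into interior--interior, interior--boundary, and boundary--boundary pieces. By Theorem~\ref{strat_asympt_thm_theta_positive} and bounded convergence (the strategies are deterministic and uniformly bounded in $N$), the interior--interior part converges as a Riemann sum against the kernel $e^{-\rho|s-t|}$ to the continuous impact cost $\mathscr I$ of~\eqref{impact_cost_continuous_time}; the diagonal $G(0)=1$ entries of $\Gamma^0$ acting on the boundary layers reproduce the $\tfrac12\Delta S_t\,\Delta X^i_t$ self-impact carried by the block trades in~\eqref{cost_func_cont}; and the mixed interior--boundary pieces collapse onto the drift part of the continuous impact. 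Finally,~\eqref{an} follows by summing~\eqref{inst_cost_to_block} and~\eqref{impact_cost_to_cont}. The principal obstacle is the algebraic identification in Step~3: showing that the $\theta$-dependent amplitudes and decay rates arising from the closed-form inversion of $\Gamma^\theta\pm\widetilde\Gamma$ conspire so that the boundary-layer geometric sums reproduce the $\theta$-free coefficients $\mathscr B_0,\mathscr B_T$ of~\eqref{block_costs}---this cancellation is precisely the mechanism that drives the universality in $\theta>0$.
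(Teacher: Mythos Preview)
Your approach to~\eqref{inst_cost_to_block} is essentially the paper's: split into front and back windows, use the closed forms of Theorem~\ref{omega and nu closed form thm} to see that the interior contributions are $O(1/N)$ while boundary layers yield geometric series, and kill the cross terms by Cauchy--Schwarz. One caveat: your single-rate ansatz $\alpha_0 r_0^{\,k}+\alpha_T r_T^{\,N-k}$ is accurate for $\bm w$ (cf.~\eqref{omi formula}) but oversimplified for $\bm v$, whose closed form~\eqref{expl_form_nu_i} involves several geometric rates built from $m_\pm,\kappa,\kappa-\hat\kappa$; this does not change the outcome but makes the ``matching'' computation you flag as the main obstacle more intricate than a single geometric sum.

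Where you diverge from the paper is in the logical order for~\eqref{an} and~\eqref{impact_cost_to_cont}. The paper does \emph{not} attack the impact cost via Riemann sums. Instead it first proves the total-cost limit~\eqref{an} through an algebraic identity (equation~\eqref{am}): using the first-order conditions $(\Gamma^\theta+(n-1)\widetilde\Gamma)\bm\nu=\bm1$ and $(\Gamma^\theta-\widetilde\Gamma)\bm\omega=\bm1$, the full cost collapses to a closed expression in $\bm1^\top\bm\nu$, $\bm1^\top\bm\omega$, $\bm\nu^\top\widetilde\Gamma\bm\nu$, $\bm\omega^\top(\hat\kappa\widetilde\Gamma-\widetilde\Gamma^\top)\bm\nu$, $\bm\omega^\top\widetilde\Gamma\bm\omega$---note $\Gamma^\theta$ has been eliminated. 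The limits of these five scalars are computed in Lemma~\ref{cost functional ausiliral lemma}, and~\eqref{an} follows by substitution. Then~\eqref{impact_cost_to_cont} is obtained for free by subtracting~\eqref{inst_cost_to_block} from~\eqref{an}. Your route (prove~\eqref{impact_cost_to_cont} directly, then sum to get~\eqref{an}) is in principle workable, but your sketch is vague at the delicate point: the continuous impact $\mathscr I$ already contains the block-trade self-impact and the cross-impact of the initial/terminal blocks on the continuous part, so the ``interior--interior $\to\mathscr I$'' claim is not quite right as stated, and the required bookkeeping of interior--boundary and boundary--boundary pieces against $\Gamma^0$ and $\widetilde\Gamma$ is exactly what the paper's identity~\eqref{am} is designed to bypass.
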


Figures~\ref{V_conv_img} and~\ref{W_conv_img} illustrate the persistent oscillations of the inventories $V^{(N)}$ and $W^{(N)}$ near the boundaries $t=0$ and $t=T$, whereas in the interior $({0,T})$ the jumps of the inventories are $\mathcal{O}({1/N})$; see Theorem~\ref{strat_asympt_thm_theta_positive}. The cumulative instantaneous costs of the oscillations near the boundaries tend to $\mathscr{B}_0$ and $\mathscr{B}_T$; see \eqref{inst_cost_to_block}. 

Theorems~\ref{strat_asympt_thm_theta_positive} and \ref{costs asymptotics thm} show convergence to a limit (independent of $\theta$) whenever $\theta>0$. By contrast, without instantaneous costs ($\theta=0$), the optimal strategies and the costs both oscillate and do not converge. The following theorems make this precise; for brevity, we use the constants $\mathfrak{d}_1,\mathfrak{d}_2, \mathfrak{a}_\pm(t), \mathfrak{b}(t),\mathfrak{c}(t)$ detailed in 
Table~\ref{tab:constants}. %

\begin{table}[!htbp]
\centering
\renewcommand{\arraystretch}{1.15}
\setlength{\extrarowheight}{2pt}
\begin{tabularx}{\linewidth}{|c|Y|}
\hline
\textbf{Constant} & \textbf{Definition} \\
\hline
$\mathfrak{d}_1$ &
\wrapmath{ n e^{2\frac{n+1}{n-1}\rho T}((n+1)\rho T + n+3) + (n-1)^2 e^{\frac{n+1}{n-1}\rho T} + (n+1)\rho T + 3n+1 }\\
\hline
$\mathfrak{d}_2$ &
\wrapmath{ n e^{2\frac{n+1}{n-1}\rho T}((n+1)\rho T + n+3) + (1-n^2) e^{\frac{n+1}{n-1}\rho T} - (n+1)\rho T - 3n-1 }\\
\hline
$\mathfrak{a}_\pm(t)$ &
\wrapmath{ \pm (n+1)e^{\frac{n+1}{n-1}\rho (T-t)} \pm n(n+1)e^{\frac{n+1}{n-1}\rho(2T-t)} }\\
\hline
$\mathfrak{b}(t)$ &
\wrapmath{ e^{2\frac{n+1}{n-1}\rho T}(n(n+1)\rho(T-t)+2n) - 2n e^{\frac{n+1}{n-1}\rho(T+t)} }\\
\hline
$\mathfrak{c}(t)$ &
\wrapmath{ (n+1)\rho(T-t) + n(n-1)e^{\frac{n+1}{n-1}\rho T} + 2n e^{\rho\frac{n+1}{n-1}t} + n+1 }\\
\hline
\end{tabularx}
\caption{Constants for oscillatory limits.}
\label{tab:constants}
\end{table}

\begin{theorem}[Divergence of strategies for $\theta = 0$]\label{strat_osc_thm}
    Assume $\theta = 0$.
    \begin{enumerate}[label=(\alph*)]
        \item\label{V_oscillations} Define the functions $\beta_\pm, \gamma_\pm : [0,T] \to \mathbb{R}$ by
            \[
                \beta_\pm(t) := \frac{\mathfrak{a}_\pm(t) + \mathfrak{b}(t) + \mathfrak{c}(t)}{\mathfrak{d}_1},\qquad
                \gamma_\pm(t) := \frac{\mathfrak{a}_\pm(t) + \mathfrak{b}(t) - \mathfrak{c}(t)}{\mathfrak{d}_2}.
            \]
            Then $V_0^{(N)} = 1$, and for $0 < t \le T$ the subsequence $(V_t^{(2N)})_{N \in \mathbb{N}}$ has exactly the two cluster points $\beta_+(t)$ and $\beta_-(t)$, while $(V_t^{(2N+1)})_{N \in \mathbb{N}}$ has exactly the two cluster points $\gamma_+(t)$ and $\gamma_-(t)$.
        \item\label{W_oscillations} Define the functions $\varphi_\pm, \psi_\pm : [0,T] \to \mathbb{R}$ by 
        \[
            \varphi_\pm(t) := \frac{1 + \rho(T - t) \pm e^{-\rho(T - t)}}{1 + \rho T + e^{-\rho T}},\qquad
            \psi_\pm(t) := \frac{1 + \rho(T - t) \pm e^{-\rho(T - t)}}{1 + \rho T - e^{-\rho T}}.
        \]
        Then $W_0^{(N)} = 1$, and for $0 < t < T$ the sequence $(W_t^{(2N)})_{N \in \mathbb{N}}$ has exactly the two cluster points $\varphi_+(t)$ and $\varphi_-(t)$, while $(W_t^{(2N+1)})_{N \in \mathbb{N}}$ has exactly the two cluster points $\psi_+(t)$ and $\psi_-(t)$. For $t=T$ we have $W_T^{(2N)}\to\varphi_+(T)$ and $W_T^{(2N+1)}\to\psi_+(T)$.
    \end{enumerate}
\end{theorem}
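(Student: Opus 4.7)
The plan is to leverage the closed-form expressions for $\bm{v}$ and $\bm{w}$ provided by Theorem~\ref{omega and nu closed form thm} in Appendix~\ref{closed form section}, specialized to $\theta=0$. On the equidistant grid $\mathbb{T}_N$ with spacing $h=T/N$, the matrix $\Gamma^0$ becomes a Kac--Murdock--Szeg\H o matrix $\Gamma^0_{ij}=q^{|i-j|}$ with $q:=e^{-\rho T/N}$, whose inverse is tridiagonal. This should yield, for each $k$, an exact decomposition of $v_k$ and $w_k$ as a smooth term of size $\mathcal{O}(1/N)$ (that approximates a Riemann integrand) plus a boundary-layer oscillatory correction carrying a factor $(-1)^{k}$ or $(-1)^{N-k}$. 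Summing from $k=1$ to $n_t=\lceil Nt/T\rceil$ then produces a convergent smooth partial sum plus an oscillatory remainder of size $\mathcal{O}(1)$ whose sign depends on the parity of $n_t$, and hence ultimately on the parity of~$N$.

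I would first carry this out for $\bm{w}$, which is cleaner since the lower-triangular perturbation $\widetilde\Gamma$ in $\Gamma^0-\widetilde\Gamma$ interacts transparently with the tridiagonal inverse of $\Gamma^0$. The shape of the cluster points in \ref{W_oscillations} is a strong hint: the numerator $1+\rho(T-t)\pm e^{-\rho(T-t)}$ should arise as a smooth Riemann contribution $1+\rho(T-t)$ plus an oscillating boundary-layer term $\pm e^{-\rho(T-t)}$, while the two denominators $1+\rho T\pm e^{-\rho T}$ come from the two possible parities of~$N$ when computing the normalization $\bm{1}^\top(\Gamma^0-\widetilde\Gamma)^{-1}\bm{1}$. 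Once this two-term asymptotic expansion is established, passing to sub-subsequences on which the parity of $n_t$ stabilizes to $+1$ or $-1$ identifies the two cluster points, and a continuity-plus-parity argument rules out any others.

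The analysis of $\bm{v}$ follows the same pattern but is algebraically heavier: $\Gamma^0+(n-1)\widetilde\Gamma$ is non-symmetric, and the perturbation $(n-1)\widetilde\Gamma$ couples the boundaries more intricately, producing the richer constants $\mathfrak{d}_1,\mathfrak{d}_2,\mathfrak{a}_\pm,\mathfrak{b},\mathfrak{c}$. I would solve $(\Gamma^0+(n-1)\widetilde\Gamma)\bm{u}=\bm{1}$ directly as a three-term recurrence obtained by applying the tridiagonal inverse of $\Gamma^0$; the characteristic roots of this recurrence should yield the factors $e^{\rho\frac{n+1}{n-1}t}$ and $e^{\rho\frac{n+1}{n-1}(T-t)}$ visible in $\mathfrak{a}_\pm,\mathfrak{b},\mathfrak{c}$, while the oscillatory root supplies the $\pm$ sign. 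Summing $u_k$ up to $k=n_t$ and normalizing by $\bm{1}^\top\bm{u}$ should then produce the claimed expansion, with $\mathfrak{d}_1$ arising on even subsequences and $\mathfrak{d}_2$ on odd ones.

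The main obstacle will be exactly this bookkeeping step: tracking how the boundary contributions of $\bm{v}$ recombine in the partial sum up to $n_t$, and how the normalization splits according to the parity of~$N$, so that the combinations $\mathfrak{a}_\pm+\mathfrak{b}+\mathfrak{c}$ and $\mathfrak{a}_\pm+\mathfrak{b}-\mathfrak{c}$ emerge precisely in the numerators of $\beta_\pm$ and $\gamma_\pm$. The boundary values then drop out as corollaries of the expansion: $V_0^{(N)}=W_0^{(N)}=1$ holds because the defining sum is empty at $t=0$, and the single limits $W_T^{(2N)}\to\varphi_+(T)$, $W_T^{(2N+1)}\to\psi_+(T)$ arise by specializing the $\bm{w}$-expansion at $t=T$, where $n_T=N$ has the same parity as $N$, so that only the $+$ branch of the two-valued cluster set is selected.
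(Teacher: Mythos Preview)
Your proposal is correct and follows essentially the same route as the paper: the paper uses the closed-form partial sums of Lemma~\ref{sum_of_nu_i_lemma} together with the $\kappa=\hat\kappa/2$ limits in Lemma~\ref{lemma_asympt_rate_various_elements} (items (f')--(h')), where the negativity of $m_-$ and of $\kappa-\hat\kappa=-\kappa$ at $\theta=0$ is precisely the source of the $(-1)^{n_t}$ oscillation you anticipate, and then evaluates the four terms of~\eqref{ae} separately for $N$ even and odd before normalizing by the parity-dependent limits~\eqref{limits sum nui kappa=n-1/2 eq}. For part~\ref{W_oscillations} the paper simply invokes the $n=2$ result of Schied--Strehle--Zhang via the $n$-independence of $\bm w$ (Remark~\ref{w_indep_n}), but your direct computation from formula~\eqref{omi formula} with $\tilde\kappa=1/2$, so that $\alpha(\tilde\kappa-1)/\tilde\kappa=-\alpha$, is equally valid and leads to the same place.
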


\begin{figure}[bthp]
  \begin{minipage}{.99\linewidth}   %
    \centering
    \begin{overpic}[width=0.5\linewidth]{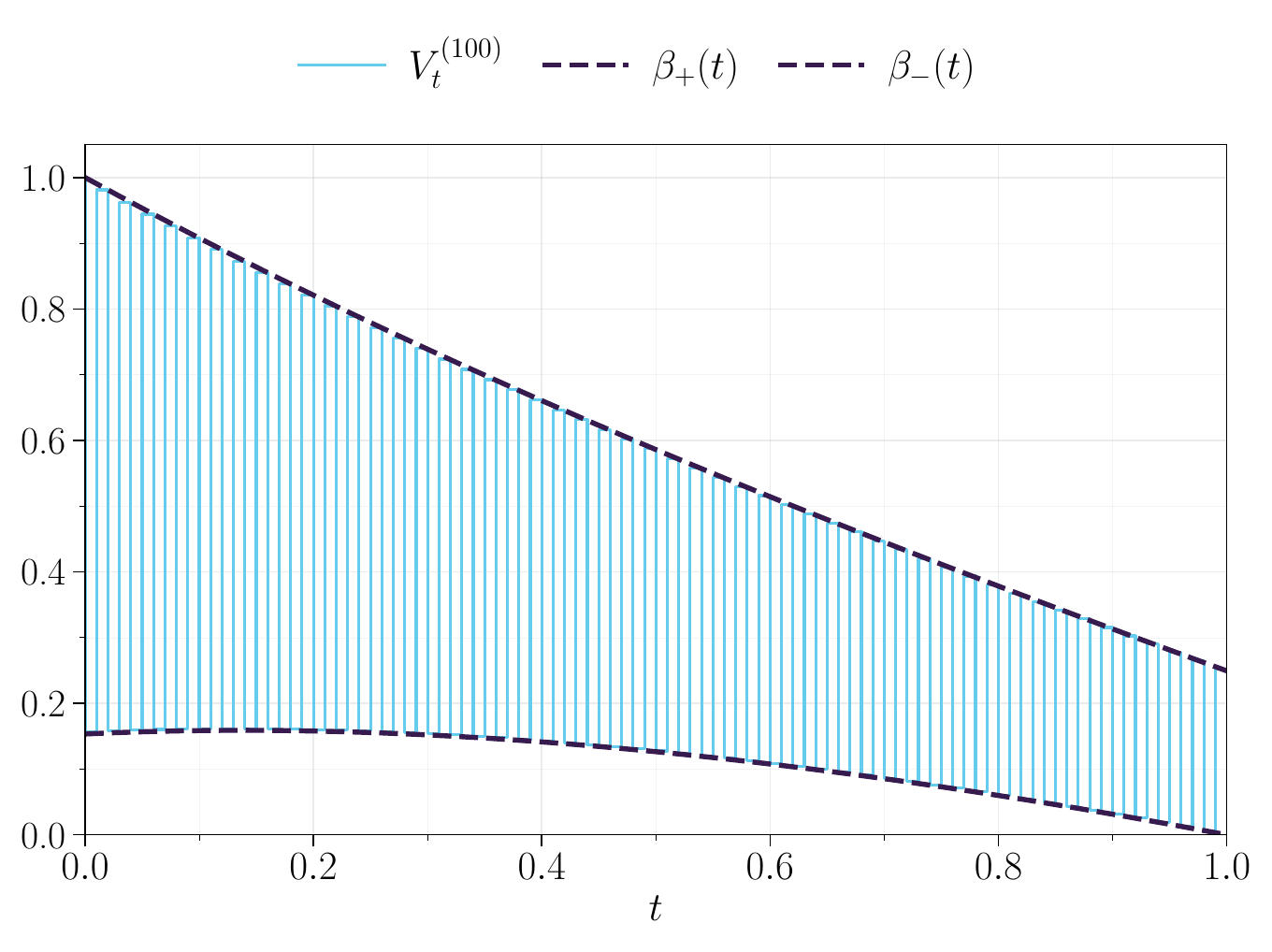}
    \end{overpic}
    \hspace{-10pt}
    \begin{overpic}[width=0.5\linewidth]{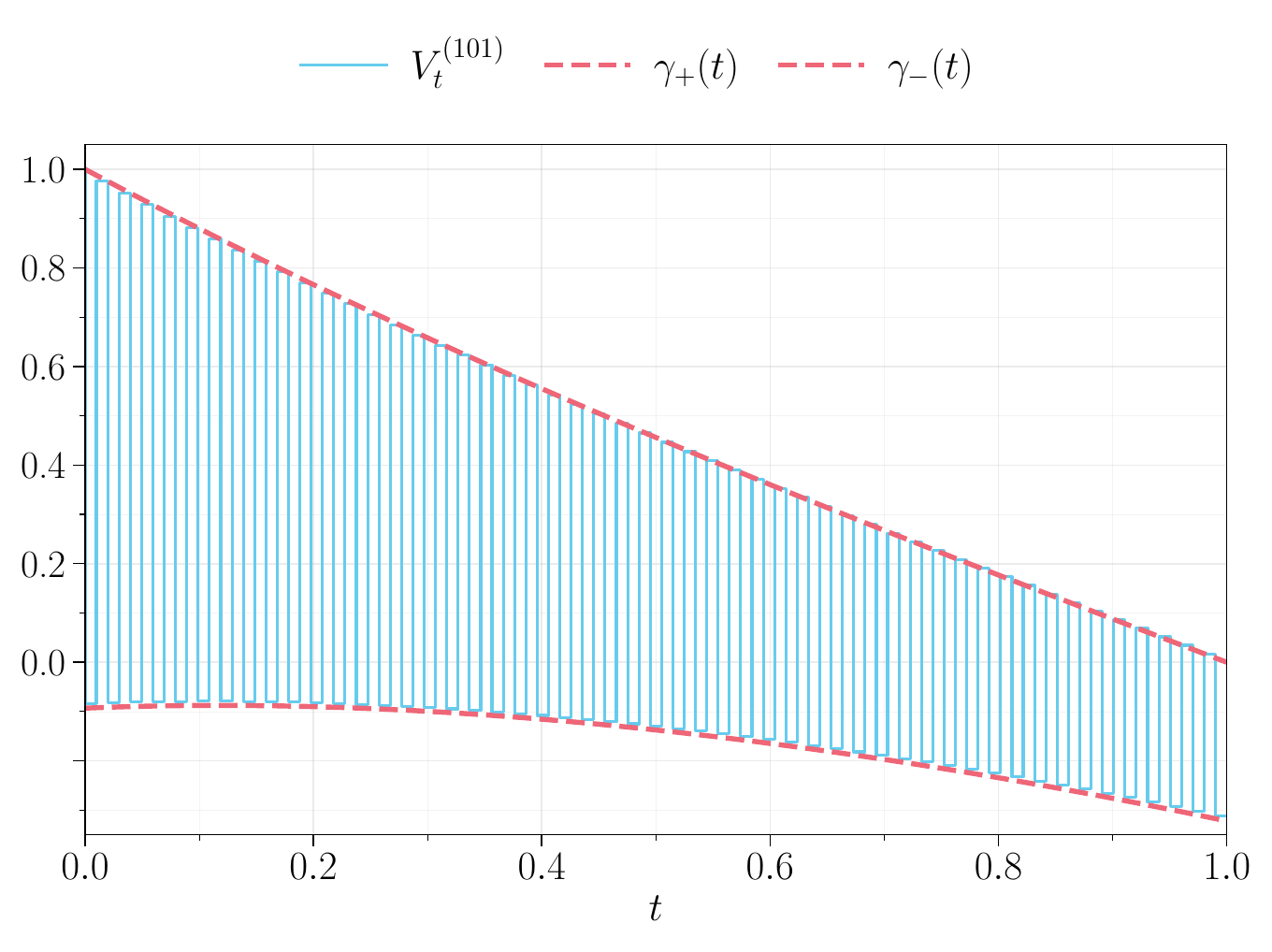}
    \end{overpic}
  \end{minipage}
  \caption{$V_t^{(100)}$ (left) and $V_t^{(101)}$ (right) for $n=10$, $\theta=0$, and $\rho=1$, together with the corresponding cluster points from Theorem~\ref{strat_osc_thm}\,\ref{V_oscillations}.}
\end{figure}

Likewise, the equilibrium costs oscillate when $\theta=0$, approaching different limits along subsequences of time grids with an even or odd number of steps.

\begin{theorem}[Divergence of costs for $\theta = 0$]\label{cost_asympt_thm_theta_zero}
    Using the same notation as in Theorem~\ref{costs asymptotics thm}, the equilibrium costs satisfy
    \begin{align*}
        \lim_{N \uparrow \infty} \mathbb{E}[\mathscr{C}_{\mathbb{T}_{2N}}({\bm\xi^*}_i^{(2N)} \mid {\bm\xi^*}_{-i}^{(2N)})]
        &= \frac{n \bar{x}^2 ((n + 1)n e^{2\rho \frac{n + 1}{n - 1}T} + n + 1)}{\mathfrak{d}_1}
        + \frac{n \bar{x} (x_i - \bar{x})}{e^{-\rho T} + \rho T + 1}
    \end{align*}
    and
    \begin{align*}
        \lim_{N \uparrow \infty} \mathbb{E}[\mathscr{C}_{\mathbb{T}_{2N+1}}({\bm\xi^*}_i^{(2N+1)} \mid {\bm\xi^*}_{-i}^{(2N+1)})]
        &= \frac{n \bar{x}^2 ((n + 1)n e^{2\rho \frac{n + 1}{n - 1}T} - n - 1)}{\mathfrak{d}_2}
        + \frac{n \bar{x} (x_i - \bar{x})}{\rho T + 1 - e^{-\rho T}}.
    \end{align*}
\end{theorem}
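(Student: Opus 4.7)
The plan is to first reduce the expected equilibrium cost to an explicit expression in the two scalar normalization constants of~\eqref{optimal sol}, and then extract their parity-dependent asymptotics from the closed forms in Appendix~\ref{closed form section}. Let $c_v^{(N)}:=\bigl(\bm{1}^\top(\Gamma^0+(n-1)\widetilde{\Gamma})^{-1}\bm{1}\bigr)^{-1}$ and $c_w^{(N)}:=\bigl(\bm{1}^\top(\Gamma^0-\widetilde{\Gamma})^{-1}\bm{1}\bigr)^{-1}$, so that $(\Gamma^0+(n-1)\widetilde{\Gamma})\bm{v}=c_v^{(N)}\bm{1}$, $(\Gamma^0-\widetilde{\Gamma})\bm{w}=c_w^{(N)}\bm{1}$, and $\bm{1}^\top\bm{v}=\bm{1}^\top\bm{w}=1$. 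A direct consequence of~\eqref{gammatheta} for $\theta=0$ is the symmetric splitting $\Gamma^0=\widetilde{\Gamma}+\widetilde{\Gamma}^\top$, which gives $\bm{v}^\top\Gamma^0\bm{v}=2\bm{v}^\top\widetilde{\Gamma}\bm{v}$ and $\bm{w}^\top\Gamma^0\bm{w}=2\bm{w}^\top\widetilde{\Gamma}\bm{w}$. Combining with the defining relations yields
\[
\bm{v}^\top\Gamma^0\bm{v}=\tfrac{2c_v^{(N)}}{n+1},\quad \bm{v}^\top\widetilde{\Gamma}\bm{v}=\tfrac{c_v^{(N)}}{n+1},\quad \bm{w}^\top\Gamma^0\bm{w}=2c_w^{(N)},\quad \bm{w}^\top\widetilde{\Gamma}\bm{w}=c_w^{(N)},
\]
\[
\bm{v}^\top\Gamma^0\bm{w}=c_v^{(N)}-(n-1)c_w^{(N)},\quad \bm{v}^\top\widetilde{\Gamma}\bm{w}=c_v^{(N)}-n\,c_w^{(N)},\quad \bm{w}^\top\widetilde{\Gamma}\bm{v}=c_w^{(N)}.
\]

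Next, inserting $\bm{\xi}^{*\,(N)}_i=\bar{x}\bm{v}+(x_i-\bar{x})\bm{w}$ and $\sum_{j\neq i}\bm{\xi}^{*\,(N)}_j=(n-1)\bar{x}\bm{v}-(x_i-\bar{x})\bm{w}$ into the functional of Lemma~\ref{formoffunctionaldisc}, expanding, and collecting coefficients, the $(x_i-\bar{x})^2$-terms cancel exactly (since $\tfrac12\bm{w}^\top\Gamma^0\bm{w}-\bm{w}^\top\widetilde{\Gamma}\bm{w}=0$), and the expected cost reduces, for every $N$, to
\[
\mathbb{E}\bigl[\mathscr{C}_{\mathbb{T}_N}(\bm{\xi}^{*\,(N)}_i\mid\bm{\xi}^{*\,(N)}_{-i})\bigr]=\frac{n\,c_v^{(N)}}{n+1}\,\bar{x}^2+n\,c_w^{(N)}\,\bar{x}(x_i-\bar{x}).
\]
This structural identity reduces the theorem to the even/odd asymptotics of the two scalars $c_v^{(N)}$ and $c_w^{(N)}$.

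To compute these limits, I would substitute the closed-form expressions for $\bm{v}$ and $\bm{w}$ on $\mathbb{T}_N$ from Theorem~\ref{omega and nu closed form thm} in Appendix~\ref{closed form section} and extract $c_v^{(N)}$ as the common value of $[(\Gamma^0+(n-1)\widetilde{\Gamma})\bm{v}]_k$ evaluated at a convenient index $k$ (the value being independent of $k$); likewise for $c_w^{(N)}$. The resulting expressions are rational in the two exponential scales $q:=e^{-\rho T/N}$ and $q^{(n+1)/(n-1)}$. For $\theta=0$ the underlying recursion has a characteristic root close to $-1$, whose contribution to the boundary terms of the sum carries a factor $(-1)^N$ that survives in the limit---the very mechanism generating the two-cluster structure of Theorem~\ref{strat_osc_thm}. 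Taylor-expanding $q=1-\rho T/N+\mathcal{O}(1/N^2)$ while preserving the $(-1)^N$-parity yields
\[
c_v^{(2M)}\lra\frac{(n+1)^2\bigl(n\,e^{2\rho(n+1)T/(n-1)}+1\bigr)}{\mathfrak{d}_1},\qquad c_v^{(2M+1)}\lra\frac{(n+1)^2\bigl(n\,e^{2\rho(n+1)T/(n-1)}-1\bigr)}{\mathfrak{d}_2},
\]
\[
c_w^{(2M)}\lra\frac{1}{\rho T+1+e^{-\rho T}},\qquad c_w^{(2M+1)}\lra\frac{1}{\rho T+1-e^{-\rho T}},
\]
which, plugged into the reduced cost formula above, produces exactly the two limits claimed in the theorem.

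The main obstacle is this last asymptotic expansion: one must carefully track the parity-dependent sign coming from the near-$-1$ characteristic root while Taylor-expanding. Reassuringly, the same constants $\mathfrak{d}_1,\mathfrak{d}_2$ already govern the strategy-level cluster points $\beta_\pm,\gamma_\pm$ of Theorem~\ref{strat_osc_thm}\,\ref{V_oscillations}, and the combinations $\rho T+1\pm e^{-\rho T}$ govern the cluster points $\varphi_\pm,\psi_\pm$ of Theorem~\ref{strat_osc_thm}\,\ref{W_oscillations}. Hence $c_v^{(N)}$ and $c_w^{(N)}$ are controlled by exactly the same algebra as the strategies themselves, so the computation should follow the pattern of Theorem~\ref{strat_osc_thm} augmented by one additional geometric/Abel-type summation to convert pointwise behavior of $\bm v$ and $\bm w$ into their normalization constants.
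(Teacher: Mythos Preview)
Your reduction is correct and considerably cleaner than the paper's argument. The identity $\Gamma^0=\widetilde{\Gamma}+\widetilde{\Gamma}^\top$, valid precisely at $\theta=0$, lets you eliminate all the quadratic forms and collapse the cost to
\[
\mathbb{E}\bigl[\mathscr{C}_{\mathbb{T}_N}(\bm{\xi}^{*\,(N)}_i\mid\bm{\xi}^{*\,(N)}_{-i})\bigr]=\frac{n}{n+1}\,\frac{\bar{x}^2}{\bm{1}^\top\bm{\nu}}+\frac{n\,\bar{x}(x_i-\bar{x})}{\bm{1}^\top\bm{\omega}},
\]
which I verified line by line. The paper instead works with the general representation~\eqref{am}, valid for all $\theta\ge0$, and therefore has to evaluate the three separate quadratic forms $\bm{\nu}^\top\widetilde{\Gamma}\bm{\nu}$, $\bm{\omega}^\top(\hat{\kappa}\widetilde{\Gamma}-\widetilde{\Gamma}^\top)\bm{\nu}$, and $\bm{\omega}^\top\widetilde{\Gamma}\bm{\omega}$ at $\kappa=\hat{\kappa}/2$; this is the content of the lengthy Lemma~\ref{cost functional ausiliral lemma k=n-1/2}, which your identity bypasses entirely. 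Your route trades generality (the paper's formula also serves the $\theta>0$ case) for a dramatic simplification here.

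One remark on the closing step: you describe the asymptotics of $c_v^{(N)}$ and $c_w^{(N)}$ as ``the main obstacle,'' but in fact there is nothing new to do. Since $c_v^{(N)}=(\bm{1}^\top\bm{\nu})^{-1}$ and $c_w^{(N)}=(\bm{1}^\top\bm{\omega})^{-1}$, their even/odd limits are exactly the reciprocals of~\eqref{limits sum nui kappa=n-1/2 eq} and~\eqref{sum omi limit formula kappa=1/2}, which are already established in the proof of Theorem~\ref{strat_osc_thm} (and in \cite{SchiedStrehleZhang.17} for $\bm{\omega}$). So once your algebraic reduction is in place, the theorem follows immediately by citation---no further expansion or ``Abel-type summation'' is needed.
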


Recall that Remark~\ref{non_existence_for_diff_theta_values} identified two special configurations of the initial inventories where continuous-time equilibrium exists even though one of the two boundary block costs $\vartheta_0,\vartheta_T$ does not have the \say{correct} value. That phenomenon has no analogue in Theorems~\ref{strat_osc_thm} and~\ref{cost_asympt_thm_theta_zero}, which feature a single parameter $\theta$ for the entire time interval. Appendix~\ref{half-grid-block-costs} analyzes the behavior of the equilibrium inventories in a richer discrete-time model where the cost functional is modified by charging instantaneous costs only on either the first or the second half of the time interval; this amounts to a time-dependent~$\theta$. When the cost is charged only on the second half (and there is no cost on the first half), the discrete-time inventories converge to the continuous-time equilibrium in the zero-net-supply case, whereas when the cost is charged only on the first half, convergence holds in the symmetric case. Thus, for those special configurations of the initial inventories, convergence of the discrete-time models with costs on the first/second half is in one-to-one correspondence with the existence of a continuous-time equilibrium when the initial/terminal block cost is specified correctly. 

This completes the overall picture that emerges from the preceding theorems: any positive instantaneous costs give rise to the \say{correct} boundary block costs in the limit, whereas absence of instantaneous costs leads to failure of convergence, corresponding to non-existence of equilibrium in the continuous-time setting.

\FloatBarrier

\appendix

\section{Closed Form of the Discrete-Time Equilibrium}\label{closed form section}
The goal of this section is to obtain an explicit formula for the discrete-time equilibrium of Theorem~\ref{exuniqnasheqdiscrete}. For that, we only need to derive a formula for $\bm{v}$. The formula for $\bm{w}$ is the same as in~\cite{SchiedStrehleZhang.17} which treats the case of $n=2$ traders; indeed, by Remark~\ref{w_indep_n}, $\bm{w}$ does not depend on~$n$.  Recall the time grid $\bT_N$ in \eqref{time_grid_equidistant} and the matrices $\Gamma^\theta$, $\wt{\Gamma}$ in \eqref{gammatheta}.
Define the following column vectors of length $N+1$,
\begin{align}\label{optsolvecangprod}
    \bm{\nu} := ( \Gamma^{\theta} + (n-1)\wt{\Gamma} )^{-1}\bm{1},
    \qquad
    \bm{\omega} := ( \Gamma^{\theta} - \wt{\Gamma} )^{-1}\bm{1}.
\end{align}
Then, by \eqref{optimal sol},
\begin{align}\label{v and w}
    \bm{v} = \frac{1}{\bm{1}^\top \bm{\nu}} \bm{\nu},
    \qquad
    \bm{w} = \frac{1}{\bm{1}^\top \bm{\omega}} \bm{\omega}.
\end{align}
Generalizing \cite{SchiedStrehleZhang.17}, we set
\[
\alpha := e^{-\rho T/N},
\qquad
\kappa := 2\theta + ({n-1})/{2},\qquad \Gamma := \Gamma^0, 
\]
and introduce the matrix
\[
    B := (1-\alpha^2)( \Id + \Gamma^{-1}((n-1)\wt{\Gamma} + 2\theta\Id)).
\]
From \eqref{optsolvecangprod} we then have
\begin{align}\label{nu.reduction}
    \bm{\nu} = (\Gamma^\theta + (n-1)\wt{\Gamma})^{-1}\bm{1} = (1-\alpha^2)B^{-1}\Gamma^{-1}\bm{1}.
\end{align}
Moreover, the inverse of the Kac--Murdock--Szeg\H{o} matrix $\Gamma$ is the tridiagonal matrix
\begin{equation*}%
\Gamma^{-1} = \frac{1}{1-\alpha^{2}}\begin{pmatrix}
  1 & -\alpha & 0 & \cdots & \cdots & 0 \\
  -\alpha & 1+\alpha^2 & -\alpha & 0 & \cdots & 0 \\
  0 & \ddots & \ddots & \ddots & \ddots & \vdots \\
  \vdots & \ddots & \ddots & \ddots & \ddots & \vdots \\
  \vdots & \ddots & \ddots & -\alpha & 1+\alpha^2 & -\alpha \\
  0 & \cdots & \cdots & 0 & -\alpha & 1
 \end{pmatrix};
\end{equation*}
see, e.g., \cite[Section~7.2, Problems~12--13]{HornJohnson.85}. From this expression, we find that
\begin{align}\label{mi_serve_nella_proof}
    (1-\alpha^2)\Gamma^{-1}\bm{1} = (1-\alpha, (1-\alpha)^2, \dots, (1-\alpha)^2, 1-\alpha)^\top.
\end{align}
In view of~\eqref{nu.reduction}, our main task is then to determine $B^{-1}$. To that end, we first compute that
{\footnotesize\begin{align*}
B &= (1-\alpha^2)\Id \\[2pt]
&+ 
 \begin{pmatrix}
  1 & -\alpha & 0 & \cdots & \cdots & 0 \\
  -\alpha & 1+\alpha^2 & -\alpha & 0 & \cdots & 0 \\
  0 & \ddots & \ddots & \ddots & \ddots & \vdots \\
  \vdots & \ddots & \ddots & \ddots & \ddots & \vdots \\
  \vdots & \ddots & \ddots & -\alpha & 1+\alpha^2 & -\alpha \\
  0 & \cdots & \cdots & 0 & -\alpha & 1
 \end{pmatrix}
 \begin{pmatrix}
  \kappa & 0 & \cdots & \cdots & \cdots & 0 \\
  (n-1)\alpha & \kappa & 0 & \cdots & \cdots & 0 \\
  (n-1)\alpha^2 & (n-1)\alpha & \ddots & \ddots & \ddots & \vdots \\
  \vdots & \vdots &  &  & 0 & \vdots \\
  (n-1)\alpha^{N-1} & (n-1)\alpha^{N-2} & \ddots & \ddots & \kappa & 0 \\
  (n-1)\alpha^N & (n-1)\alpha^{N-1} & \cdots & \cdots & (n-1)\alpha & \kappa
 \end{pmatrix} \\
&= \begin{pmatrix}
     1 - n\alpha^2 + \kappa & -\alpha\kappa & 0 & \cdots & \cdots & 0 \\
     -\alpha(\kappa + 1 - n) & 1 + \alpha^2(\kappa - n) + \kappa & -\alpha\kappa & 0 & \cdots & 0 \\
     0 & \ddots & \ddots & \ddots & \ddots & \vdots \\
     \vdots & \ddots & \ddots & \ddots & \ddots & \vdots \\
     \vdots & \ddots & 0 & -\alpha(\kappa + 1 - n) & 1 + \alpha^2(\kappa - n) + \kappa & -\alpha\kappa \\
     0 & \cdots & \cdots & 0 & -\alpha(\kappa + 1 - n) & 1 - \alpha^2 + \kappa
 \end{pmatrix}.
\end{align*}}

\begin{lemma}\label{lemma1}

For $k \le N$, the $k$\emph{th} leading principal minor $\delta_k$ of $B$ is given by
\begin{align*}
    \delta_k &= c_+ m_+^k + c_- m_-^k,
\end{align*}
where, defining the real number
\begin{align*}
    R := \sqrt{\alpha^4(\kappa-n)^2 - 2\alpha^2(\kappa(\kappa+1) + n(1-\kappa)) + (\kappa + 1)^2},
\end{align*}
the real numbers $c_\pm$ and $m_\pm$ are given by
\begin{align*}
    c_\pm = \frac{ \pm (1 - \alpha^2(\kappa+n) + \kappa) + R}{2R},
    \qquad
    m_\pm = \frac{1 + \alpha^2(\kappa - n) + \kappa \pm R}{2}.
\end{align*}
\hide{Moreover,
\begin{equation}\label{d}
\begin{aligned}
\det B &= \delta_{N+1}\\ 
&= \left( 1 -\alpha^2 +\kappa \right)\left(c_+\left(m_+\right)^N+c_-\left(m_-\right)^N\right)\\
&\quad -\alpha^2\kappa \left( \kappa+1-n \right)\left(c_+\left(m_+\right)^{N-1}+c_-\left(m_-\right)^{N-1}\right)\\
&=\left( 1 -\alpha^2 +\kappa \right)\delta_N -\alpha^2\kappa \left( \kappa+1-n \right)\delta_{N-1}
\end{aligned}
\end{equation}}
\end{lemma}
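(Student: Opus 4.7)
The plan is to derive a three-term linear recurrence for $\delta_k$, solve it explicitly, and pin down the constants from the initial data. The structural observation that makes this work is that for $k\le N$, the $k$th leading principal submatrix of $B$ is tridiagonal with diagonal $(1-n\alpha^2+\kappa,D,D,\dots,D)$, where $D:=1+\alpha^2(\kappa-n)+\kappa$, super-diagonal $-\alpha\kappa$, and sub-diagonal $-\alpha(\kappa+1-n)$. Crucially, the exceptional diagonal entry $1-\alpha^2+\kappa$ of $B$ sits in row $N+1$ only, so it never enters our minors. Cofactor expansion along the bottom row (or column) of the submatrix yields
\begin{equation*}
\delta_k = D\,\delta_{k-1} - \alpha^2\kappa(\kappa+1-n)\,\delta_{k-2}, \qquad 2\le k\le N,
\end{equation*}
with initial data $\delta_0:=1$ (by the usual convention for the empty determinant) and $\delta_1=1-n\alpha^2+\kappa$ read off from the top-left entry.

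Next I would solve this constant-coefficient recurrence in the standard way. The characteristic polynomial $m^2-Dm+\alpha^2\kappa(\kappa+1-n)=0$ has roots $m_\pm=(D\pm\sqrt{D^2-4\alpha^2\kappa(\kappa+1-n)})/2$, and expanding and regrouping by powers of $\alpha^2$ one verifies the key identity
\begin{equation*}
D^2 - 4\alpha^2\kappa(\kappa+1-n) = (\kappa+1)^2 - 2\alpha^2\bigl(\kappa(\kappa+1)+n(1-\kappa)\bigr) + \alpha^4(\kappa-n)^2 = R^2,
\end{equation*}
which matches the $R$ in the statement. Assuming $R\ne 0$ so the two roots are distinct, the general solution is $\delta_k=c_+m_+^k+c_-m_-^k$.

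Finally I would determine $c_\pm$ from the initial data: $c_++c_-=1$ from $\delta_0$, and $c_+m_++c_-m_-=1-n\alpha^2+\kappa$ from $\delta_1$. Using $m_+-m_-=R$, eliminating $c_-=1-c_+$ from the second equation, substituting $m_-=(D-R)/2$, and collecting terms, one obtains $c_+=\bigl((1-\alpha^2(\kappa+n)+\kappa)+R\bigr)/(2R)$, and symmetrically for $c_-$. The only real obstacle is algebraic bookkeeping: verifying the discriminant identity $D^2-4\alpha^2\kappa(\kappa+1-n)=R^2$, and the final simplification of the numerator of $c_\pm$ into the compact form $\pm(1-\alpha^2(\kappa+n)+\kappa)+R$. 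Both are purely mechanical polynomial manipulations in $\alpha,\kappa,n$ with no conceptual subtlety.
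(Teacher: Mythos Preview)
Your approach is essentially the paper's: set up the tridiagonal three-term recurrence for $\delta_k$, solve the characteristic equation, and match initial data. Your use of $\delta_0=1$ and $\delta_1$ in place of the paper's $\delta_1$ and $\delta_2$ is an equivalent (and slightly cleaner) choice, since the recurrence extends correctly to $k=2$ with the empty-determinant convention.

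The one omission is your ``assuming $R\ne 0$.'' The statement asserts that $R$ is a \emph{real} number, and for the two-term form $c_+m_+^k+c_-m_-^k$ to be the general solution you also need $m_+\ne m_-$, i.e.\ $R\ne 0$. The paper does not take this for granted: it devotes a paragraph to showing that the radicand $f(\alpha^2):=\alpha^4(\kappa-n)^2-2\alpha^2(\kappa(\kappa+1)+n(1-\kappa))+(\kappa+1)^2$ is strictly positive for $\alpha\in[0,1]$ and $\kappa\ge(n-1)/2$, by analyzing the quadratic $f(t)$ in $t=\alpha^2$ and checking its minimum on $[0,1]$. This is not deep, but it is part of the lemma's content and should be included.
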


\begin{proof}
    We have
    \begin{equation}\label{a}
        \delta_1 = 1 - n\alpha^2 + \kappa
    \end{equation}
    and
    \begin{equation}\label{b}
        \delta_2 = -n\alpha^4(\kappa-n) - n\alpha^2(\kappa + 2) + (\kappa+1)^2 .
    \end{equation}
    \hide{Indeed for $\delta_2$ we have
        \begin{eqnarray*}
            \delta_2&=&(1-\alpha^2n+\kappa)(1 +\alpha^2(\kappa -n) +\kappa)-\alpha\kappa(\alpha(\kappa +1-n))\\
            &=& (1-\alpha^2n+\kappa)(1 +\alpha^2\kappa -\alpha^2n +\kappa)-\alpha^2\kappa(\kappa +1-n)\\
            &=&1+\cancel{\alpha^2\kappa} - \alpha^2n+\kappa-\alpha^2n-\alpha^4n\kappa+\alpha^4n^2-\cancel{\alpha^2n\kappa} + \kappa + \cancel{\alpha^2\kappa^2}-\alpha^2n\kappa+\kappa^2 - \cancel{\alpha^2\kappa^2}-\cancel{\alpha^2\kappa}+\cancel{\alpha^2n\kappa}\\
            &=& 1  - \alpha^2n+\kappa-\alpha^2n-\alpha^4n\kappa+\alpha^4n^2 + \kappa -\alpha^2n\kappa+\kappa^2\\
            &=& -n\alpha^4(\kappa-n) -n\alpha^2(\kappa+2)   + (\kappa+1)^2
        \end{eqnarray*}}
    For $k\in\{3,\dots,N\}$, the $k$th principal minor $\delta_k$ satisfies the recursion
    \begin{align*}
        \delta_k
        = (1 + \alpha^2(\kappa - n) + \kappa)\delta_{k-1}
          - \alpha^2 \kappa(\kappa + 1 - n)\delta_{k-2}.
    \end{align*}
    This is a homogeneous linear difference equation of second-order. Its characteristic equation is
    \begin{align}\label{careqprincmin}
        m^2 - (1 + \alpha^2(\kappa - n) + \kappa)m
        + \alpha^2\kappa(\kappa + 1 - n) = 0.
    \end{align}
    \hide{Let now 
    \begin{eqnarray*}
        R&:=&\sqrt{(1 +\alpha^2(\kappa -n) +\kappa)^2-4\alpha^2\kappa(\kappa+1-n)}\\
        &=&\sqrt{\alpha^4(\kappa-n)^2 -2\alpha^2(\kappa(\kappa+1) + n(1-\kappa)) + (\kappa +1)^2},
    \end{eqnarray*}}
    The roots of \eqref{careqprincmin} are 
    \[
        m_{\pm} = \frac{1 + \alpha^2(\kappa - n) + \kappa \pm R}{2}.
    \]
    We first claim that $m_+$ and $m_-$ are real for $\alpha\in[0,1]$ and $\kappa\ge \frac{n-1}{2}$. This is equivalent to the nonnegativity of the radicand in $R$, which in turn is equivalent to
    \[
        f(t):=t^2(\kappa-n)^2 - 2t(\kappa(\kappa+1) + n(1-\kappa)) + (\kappa+1)^2 \ge 0,
        \quad 0\le t\le 1,
    \]
    after setting $t=\alpha^2$. The claim is clear for $\kappa=n$ since
    \[
        -2t(n^2 + 2n - n^2) + n^2 + 2n + 1
        = n^2 + 1 + 2n(1-2t)
        \ge (n-1)^2 \ge 0 .
    \]
    Otherwise, $f$ is minimized at
    \[
        t_0 := \frac{\kappa(\kappa+1) + n(1-\kappa)}{(\kappa-n)^2}.
    \]
    We have $t_0<1$ iff $\kappa<\frac{n^2-2}{n+1}$. In this case, $f(t)\ge f(t_0)
    = \frac{-(\kappa(\kappa+1) + n(1-\kappa))^2 + (\kappa-n)^2 (\kappa+1)^2}{(\kappa-n)^2}>0$ for all $t$. \hide{Indded, we need to show that
    \begin{eqnarray*}
        (\kappa-n)^2(\kappa+1)^2&>&(\kappa(\kappa+1) + n(1-\kappa))^2\\
        (\kappa-n)^2(\kappa+1)^2&>&(\kappa^2+\kappa + n-n\kappa)^2\\
        (\kappa-n)^2(\kappa^2 + 1 + 2\kappa)&>&(\kappa(\kappa-n) + (\kappa +n))^2\\
        (\kappa-n)^2(\cancel{\kappa^2} + 1 + 2\kappa)&>&\cancel{\kappa^2}(\kappa-n)^2 + (\kappa+n)^2 + 2\kappa(\kappa^2-n^2)\\
        (\kappa-n)^2 + 2\kappa(\cancel{\kappa^2} + 2n^2 -2\kappa n)&>&(\kappa+n)^2 + 2\kappa\cancel{\kappa^2}\\
        \cancel{\kappa^2}+\cancel{n^2}-2\kappa n + 2\kappa(2n^2 -2\kappa n)&>&\cancel{\kappa^2}+\cancel{n^2}+2\kappa n\\
        n &>&\kappa +1   
    \end{eqnarray*}
    Now just compute
    \begin{eqnarray*}
        n &>& (n^2-1 + n)/(n+1)\\
        n^2 + n &>& n^2+n-1\\
        0&>&-1
    \end{eqnarray*}
    now since by assumption $(n^2-1 + n)/(n+1)>\kappa +1$, we have our claim.
    } For $\kappa\ge \frac{n^2-2}{n+1}$, we have $t_0\ge 1$ and, in turn, $f'(t)\le 0$ for $0\le t\le 1$, so $f(t)\ge f(1)=(n-1)^2>0$. This proves that $m_\pm$ are real. \hide{Indeed, 
    \begin{eqnarray*}
        f'(t)&=&2t(\kappa-n)^2 - 2(\kappa(\kappa+1) + n(1-\kappa))\\
        &\le&2(\kappa-n)^2 - 2(\kappa(\kappa+1) + n(1-\kappa))
    \end{eqnarray*}
    Now we just need to prove that
    \begin{eqnarray*}
        (\kappa-n)^2 - (\kappa(\kappa+1) + n(1-\kappa))&\le&0\\
        \cancel{\kappa^2} + n^2 - 2n\kappa &\le& \cancel{\kappa^2} + \kappa + n -n\kappa\\
        n^2-n\kappa-\kappa-n&\le&0\\
    \end{eqnarray*}
    Now by assumption $-\kappa\le -(n^2-2)/(n+1)$, thus we get
    \begin{eqnarray*}
        n^2 -n (n^2-2)/(n+1) -(n^2-2)/(n+1) -n&\le&0\\
        n^2(n+1) - n (n^2-2)-(n^2-2)-n(n+1)&\le&0\\
        \cancel{n^3} + \cancel{n^2} - \cancel{n^3} +2n-\cancel{n^2}+2-n^2-n&\le&0\\
        -n^2 +n +2&\le&0\\
        2+n&\le&n^2\\
        2&\le&n
    \end{eqnarray*}
    that gives us our claim} 

    By the theory of second-order linear difference equations, every solution of \eqref{careqprincmin} has the form
    \(
      c_1 m_+^k + c_2 m_-^k
    \)
    with real constants \(c_1,c_2\); see \cite[Theorem~3.7]{KelleyPeterson.01}.
    Imposing the initial conditions \eqref{a}--\eqref{b} yields \(c_1=c_+\) and \(c_2=c_-\), as stated.
    \hide{Indded,
    \begin{align*}
        \begin{cases}
        \delta_1 &= c_1m_+ + c_2m_-\\
        \delta_2 &= c_1m_+^2 + c_2m_-^2
    \end{cases}.
    \end{align*} 
    Recall now
    \begin{eqnarray*}
        m_\pm^2 &=& (1 +\alpha^2(\kappa -n) +\kappa)m_\pm -\alpha^2\kappa(\kappa+1-n)\\
        &=:& b m_\pm - z
    \end{eqnarray*}
    thus substituting this into the equation we get
    \begin{align*}
        \begin{cases}
        \delta_1 &= c_1m_+ + c_2m_-\\
        \delta_2 &= c_1(b m_+ - z) + c_2(b m_- - z)
        \end{cases};\quad
        \begin{cases}
        c_1 &= \frac{\delta_1 - c_2m_-}{m_+}\\
        \delta_2 &= b(\delta_1 - \cancel{c_2m_-}) -zc_1 + c_2(\cancel{bm_-}  - z)
        \end{cases};
    \end{align*} 
    \begin{align*}
        \begin{cases}
        \delta_1 &= c_1m_+ + c_2m_-\\
        \delta_2 &= \delta_1b -z(c_1 + c_2)
        \end{cases}.
    \end{align*} Let now 
    \begin{align*}
        \lambda:=1 +\alpha^2(\kappa -n) +\kappa,\qquad \beta_1:=c_1+c_2,\qquad\beta_2:=c_1-c_2.
    \end{align*}
    Then we write
    \begin{align*}
        \begin{cases}
        \delta_1 &= \frac{\lambda}{2}\beta_1 + \frac{R}{2}\beta_2\\
        \delta_2 &= \delta_1b -z\beta_1
        \end{cases};\qquad
        \begin{cases}
        \delta_1 &= \frac{\lambda}{2} \Big( \frac{\delta_1b-\delta_2}{z} \Big) + \frac{R}{2}\beta_2\\
        \beta_1 &= \frac{\delta_1b-\delta_2}{z}
        \end{cases};\qquad
        \begin{cases}
            \beta_2 &= \frac{2\delta_1 - \lambda}{R}\\
            \beta_1 &= 1
        \end{cases};\qquad
        \begin{cases}
            c_1 + c_2 &= 1\\
            c_1-c_2 &= \frac{2\delta_1 - \lambda}{R}
        \end{cases}.
    \end{align*}
    now let's show $\alpha=1$,
    \begin{align*}
        \delta_1b -\delta_2 &= (1-n\alpha^2 + \kappa)(1 +\alpha^2\kappa -\alpha^2n +\kappa) - \delta_2\\
        &=((1-n\alpha^2) + \kappa)((1-n\alpha^2) + \kappa(1+\alpha^2))- \delta_2\\
        &=(1-n\alpha^2)^2 + (1-n\alpha^2)\kappa(2+\alpha^2) + \kappa^2(1+\alpha^2)- \delta_2\\
        &=n^2\alpha^4 +1 - 2n\alpha^2 +2\kappa +\kappa\alpha^2 -2\kappa n\alpha^2 -\kappa n\alpha^4 +\kappa^2 +\alpha^2\kappa^2 - \delta_2\\
        &=z+\delta_2 -\delta_2=z
    \end{align*}
    Thus, we get
    \begin{align*}
        \begin{cases}
            2c_1 &= 1 + \frac{2\delta_1 - \lambda}{R}\\
            2c_2 &= 1-\frac{2\delta_1 - \lambda}{R}
        \end{cases};\qquad
        \begin{cases}
            c_1 &= \frac{2\delta_1 - \lambda +R}{2R}\\
            c_2 &= \frac{-2\delta_1 + \lambda +R}{2R}
        \end{cases}
    \end{align*} 
    Thus we get
    \begin{align*}
        c_\pm = \frac{\pm(2\delta_1 - \lambda) +R}{2R}.
    \end{align*}
    Now 
    \begin{align*}
        2\delta_1 - \lambda = 1-\alpha^2(\kappa+n) + \kappa
    \end{align*}
    That gives us the thesis.}
\end{proof}

\begin{lemma}\label{lemma2}
    Define the sequence $\phi_k$ recursively by
    \[
        \phi_{N+2} = 1, \qquad \phi_{N+1} = 1-\alpha^2+\kappa,
    \]
    and, for $k = N, N-1, \dots, 2$,
    \[
        \phi_k
    = (1+\alpha^2(\kappa-n)+\kappa)\phi_{k+1}
      - \alpha^2\kappa(\kappa+1-n)\phi_{k+2}.
    \]
    Then, for $k \in \{2,\dots, N+2\}$,
    \[
        \phi_k = d_+ m_+^{N+2-k} + d_- m_-^{N+2-k},
    \]
    where $m_\pm$ are as in Lemma~\ref{lemma1} and
    \[
        d_\pm := \frac{\pm(1+(1-\alpha^2)\kappa - \alpha^2(2-n)) + R}{2R}.
    \]
\end{lemma}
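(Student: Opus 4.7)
\smallskip

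\noindent\textbf{Proof plan for Lemma~\ref{lemma2}.} The plan is to reduce the backward recursion for $\phi_k$ to the same forward second-order linear recursion already solved in Lemma~\ref{lemma1}, and then match the two initial conditions to identify $d_\pm$. Concretely, I would reverse indices by setting $\psi_j := \phi_{N+2-j}$ for $j=0,1,\dots,N$. The boundary data $\phi_{N+2}=1$ and $\phi_{N+1}=1-\alpha^2+\kappa$ translate into $\psi_0=1$ and $\psi_1=1-\alpha^2+\kappa$. Writing the recursion $\phi_k=(1+\alpha^2(\kappa-n)+\kappa)\phi_{k+1}-\alpha^2\kappa(\kappa+1-n)\phi_{k+2}$ in terms of $\psi$ yields, for $j=2,\dots,N$,
\[
\psi_j=(1+\alpha^2(\kappa-n)+\kappa)\,\psi_{j-1}-\alpha^2\kappa(\kappa+1-n)\,\psi_{j-2},
\]
which is exactly the recursion \eqref{careqprincmin} from the proof of Lemma~\ref{lemma1}. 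Its characteristic polynomial has roots $m_\pm$ (real, as already shown in Lemma~\ref{lemma1}), so by \cite[Theorem~3.7]{KelleyPeterson.01} the general solution is $\psi_j=c_+m_+^j+c_-m_-^j$.

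It remains to solve for $c_\pm$ using the initial conditions, which give the linear system $c_++c_-=1$ and $c_+m_++c_-m_-=1-\alpha^2+\kappa$. Substituting $c_-=1-c_+$ into the second equation yields
\[
c_+=\frac{1-\alpha^2+\kappa-m_-}{m_+-m_-}=\frac{1-\alpha^2+\kappa-m_-}{R},
\]
and plugging in the closed form $m_-=\tfrac12(1+\alpha^2(\kappa-n)+\kappa-R)$ and simplifying (using $\kappa-\alpha^2\kappa=(1-\alpha^2)\kappa$ and $-2\alpha^2+\alpha^2 n=-\alpha^2(2-n)$) gives $c_+=d_+$. Setting $c_-=1-d_+=d_-$ then follows from the explicit formula for $d_\pm$. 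Re-substituting $j=N+2-k$ recovers the claimed expression for $\phi_k$.

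The argument is essentially the same template as Lemma~\ref{lemma1}, so no genuine obstacle is expected; the only care needed is in the algebraic simplification of $1-\alpha^2+\kappa-m_-$ to recognize the numerator of $d_+$, which is a short computation. I would therefore present the reduction explicitly and relegate the verification that $c_+=d_+$ to a one-line simplification, leaving the real-valuedness of $m_\pm$ to be cited from Lemma~\ref{lemma1}.
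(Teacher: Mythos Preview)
Your proposal is correct and follows essentially the same approach as the paper: both reverse indices via $\psi_j=\phi_{N+2-j}$, obtain the forward recursion with characteristic roots $m_\pm$, and fix the constants from the initial data $\psi_0=1$, $\psi_1=1-\alpha^2+\kappa$. You in fact supply a bit more detail than the paper, which simply asserts that $d_\pm$ satisfy the initial conditions without spelling out the algebra.
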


\begin{proof}
    Let
    \begin{align}\label{g}
        \psi_0 = 1, \qquad \psi_1 = 1-\alpha^2+\kappa,
    \end{align}
    and, for $l \in \{2,\dots,N\}$, set
    \begin{align}\label{f}
        \psi_l
    = (1+\alpha^2(\kappa-n)+\kappa)\psi_{l-1}
      - \alpha^2\kappa(\kappa+1-n)\psi_{l-2}.
    \end{align}
    Then $\psi_k = \phi_{N+2-k}$. As in the proof of Lemma~\ref{lemma1}, the general solution to \eqref{f} is $d_1 m_+^{l} + d_2 m_-^{l}$ with $m_\pm$ as above. Choosing $d_1=d_+$ and $d_2=d_-$ satisfies the initial conditions \eqref{g} and completes the proof.
\end{proof}

\begin{lemma}\label{B inverse lemma}
    The matrix $B$ is nonsingular and its inverse is
    \begin{align}\label{B inverse eq}
        (B^{-1})_{ij}
        &=
        \begin{cases}
            (\alpha\kappa)^{j-i}\delta_{i-1}\phi_{j+1}\delta_{N+1}^{-1}, & \text{if } i\le j,\\
            (\alpha(\kappa+1-n))^{i-j}\delta_{j-1}\phi_{i+1}\delta_{N+1}^{-1}, & \text{if } i\ge j,
        \end{cases}
    \end{align}
    where $\delta_0=1$ and $\delta_{N+1} = \det B$.
\end{lemma}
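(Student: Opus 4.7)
The plan is to recognize $B$ as a tridiagonal matrix with constant super-diagonal $-\alpha\kappa$ and constant sub-diagonal $-\alpha(\kappa+1-n)$, and then invoke the classical formula for the inverse of a tridiagonal matrix expressed via leading and trailing principal minors. The formulas in Lemmas~\ref{lemma1} and~\ref{lemma2} will then turn out to be exactly the leading and trailing minors of $B$ respectively.

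To establish nonsingularity, I would rewrite
\[
B = (1-\alpha^2)\Gamma^{-1}\bigl(\Gamma^\theta + (n-1)\widetilde{\Gamma}\bigr),
\]
using $\Gamma^\theta = \Gamma + 2\theta\,\Id$. Since $\alpha = e^{-\rho T/N} \in (0,1)$ and both $\Gamma$ and $\Gamma^\theta + (n-1)\widetilde{\Gamma}$ are positive (hence invertible) by Lemma~\ref{matrices positive definite}, $B$ is invertible. A Laplace expansion of $\det B$ along the last row then yields
\[
\det B = (1-\alpha^2+\kappa)\,\delta_N - \alpha^2\kappa(\kappa+1-n)\,\delta_{N-1},
\]
which is the value obtained by extending the recursion of Lemma~\ref{lemma1} one further step, so $\det B = \delta_{N+1}$.

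For the explicit form of $B^{-1}$, I would invoke the classical tridiagonal-inversion identity: for a tridiagonal matrix with super-diagonal entries $c_i$ and sub-diagonal entries $a_i$,
\[
(B^{-1})_{ij} = \frac{1}{\det B}\begin{cases}(-1)^{i+j}\,c_i c_{i+1}\cdots c_{j-1}\,\theta_{i-1}\,\varphi_{j+1} & \text{if } i\le j,\\ (-1)^{i+j}\,a_j a_{j+1}\cdots a_{i-1}\,\theta_{j-1}\,\varphi_{i+1} & \text{if } i\ge j,\end{cases}
\]
where $\theta_k$ is the $k$-th leading principal minor (with $\theta_0=1$) and $\varphi_k$ is the trailing principal minor on rows and columns $k,\ldots,N+1$ (with $\varphi_{N+2}=1$). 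Substituting $c_i \equiv -\alpha\kappa$ and $a_i \equiv -\alpha(\kappa+1-n)$, the sign simplification $(-1)^{i+j}(-1)^{j-i}=1$ converts $(-\alpha\kappa)^{j-i}$ into $(\alpha\kappa)^{j-i}$, and likewise $(-\alpha(\kappa+1-n))^{i-j}$ into $(\alpha(\kappa+1-n))^{i-j}$.

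The concluding step is to match $\delta_k$ with $\theta_k$ (identical by construction) and $\phi_k$ with $\varphi_k$. For the latter, the bottom-right $1\times 1$ minor gives $\varphi_{N+1}=1-\alpha^2+\kappa$, matching the initial value of $\phi_{N+1}$, and $\varphi_{N+2}=1=\phi_{N+2}$ by convention. A Laplace expansion of any trailing $(N+2-k)\times(N+2-k)$ submatrix along its first row produces the backward recursion
\[
\varphi_k = (1+\alpha^2(\kappa-n)+\kappa)\,\varphi_{k+1} - \alpha^2\kappa(\kappa+1-n)\,\varphi_{k+2},
\]
which is exactly the defining recursion of $\phi_k$ in Lemma~\ref{lemma2}. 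The main subtlety lies precisely here: the three distinct diagonal entries of $B$---the corner entries $1-n\alpha^2+\kappa$ and $1-\alpha^2+\kappa$, versus the generic interior entry $1+\alpha^2(\kappa-n)+\kappa$---must be absorbed into the initial conditions of both the leading and trailing minor recursions. These are exactly the initial values~\eqref{a}--\eqref{b} in Lemma~\ref{lemma1} and $\phi_{N+1}=1-\alpha^2+\kappa$, $\phi_{N+2}=1$ in Lemma~\ref{lemma2}; once these are in place, the formula~\eqref{B inverse eq} follows directly from the tridiagonal-inversion identity.
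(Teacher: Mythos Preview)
Your proposal is correct and takes essentially the same approach as the paper: the paper also establishes invertibility via the factorization $B=(1-\alpha^2)\Gamma^{-1}(\Gamma^\theta+(n-1)\widetilde{\Gamma})$ and then invokes Usmani's formula for the inverse of a tridiagonal matrix. You have in fact spelled out more detail than the paper does---the verification that $\phi_k$ agrees with the trailing minors via matching initial values and recursion, and the sign simplification $(-1)^{i+j}(-1)^{j-i}=1$---whereas the paper simply cites Usmani and stops.
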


\begin{proof}
    We have shown in Lemma~\ref{matrices positive definite} that both $\Gamma$ and $\Gamma^\theta + (n-1)\wt{\Gamma}$ are invertible. Thus
    \[
        B = (1-\alpha^2)\Gamma^{-1}(\Gamma^\theta+(n-1)\wt{\Gamma})
    \]
    is also invertible. Hence $\delta_{N+1}\neq 0$, so the right–hand side of \eqref{B inverse eq} is well defined.
    In view of Lemmas~\ref{lemma1} and \ref{lemma2}, the explicit form follows from Usmani’s formula for the inverse of a tridiagonal Jacobi matrix \cite{Usmani.94}.
\end{proof}

\begin{theorem}[Explicit form of $\bm\omega$ and $\bm\nu$]\label{omega and nu closed form thm}
Recall that the vectors $\bm{v}$ and $\bm{w}$ of the discrete-time equilibrium in Theorem~\ref{exuniqnasheqdiscrete} have been written as 
\begin{align*}%
    \bm{v} = \frac{1}{\bm{1}^\top \bm{\nu}} \bm{\nu},
    \qquad
    \bm{w} = \frac{1}{\bm{1}^\top \bm{\omega}} \bm{\omega}.
\end{align*}
Let $\tilde{\kappa} = 2\theta + \frac{1}{2}$. Then the components of $\bm\omega$ are
\begin{align}\label{omi formula}
    \omega_i
    &= \frac{(1-\alpha)\tilde{\kappa}
        + \alpha\left(\frac{\alpha(\tilde{\kappa}-1)}{\tilde{\kappa}}\right)^{N+1-i}}
        {\tilde{\kappa}(\tilde{\kappa}-\alpha(\tilde{\kappa}-1))},
    \qquad i \in \{1,\dots,N+1\},
\end{align}
and in particular $\omega_{N+1} = 1/\tilde{\kappa}$.
The components of $\bm\nu$ are 
\begin{align*}
    \nu_1 &= \frac{1-\alpha}{\delta_{N+1}}
            \Bigg(\phi_2
                  + (1-\alpha)\sum_{j=2}^N (\alpha\kappa)^{j-1}\phi_{j+1}
                  + (\alpha\kappa)^{N}\Bigg),\\
    \nu_{N+1} &= \frac{1-\alpha}{\delta_{N+1}}
            \Bigg((\alpha(\kappa+1-n))^{N}
                  + (1-\alpha)\sum_{j=2}^N (\alpha(\kappa+1-n))^{N+1-j}\delta_{j-1}
                  + \delta_N\Bigg),
\end{align*}
and, for $i=2,\dots,N$,
\begin{equation*}
\begin{split}
    \nu_i &= \frac{1-\alpha}{\delta_{N+1}}
            \Bigg((\alpha(\kappa+1-n))^{i-1}\phi_{i+1}
                  + (1-\alpha)\sum_{j=2}^{i-1} (\alpha(\kappa+1-n))^{i-j}\delta_{j-1}\phi_{i+1}\\
          &\qquad\qquad\qquad\qquad\qquad\quad
                  + (1-\alpha)\sum_{j=i}^{N} (\alpha\kappa)^{j-i}\delta_{i-1}\phi_{j+1}+ (\alpha\kappa)^{N+1-i}\delta_{i-1}\Bigg).
\end{split}
\end{equation*}
\end{theorem}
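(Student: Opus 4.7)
The plan is to treat $\bm\omega$ and $\bm\nu$ separately, since the former only requires back-substitution in an upper-triangular system, while the latter exploits the decomposition \eqref{nu.reduction} together with Lemma~\ref{B inverse lemma}.

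\textbf{Step 1 (formula for $\bm\omega$).} Inspection of \eqref{gammatheta} shows that $\Gamma^\theta-\widetilde{\Gamma}$ is upper triangular with diagonal entries $\tilde{\kappa}=2\theta+\tfrac12$ and superdiagonal entries $(\Gamma^\theta-\widetilde{\Gamma})_{ij}=\alpha^{j-i}$ for $i<j$. So $(\Gamma^\theta-\widetilde{\Gamma})\bm\omega=\bm 1$ reads, for each row $i$, $\tilde{\kappa}\omega_i + \sum_{j=i+1}^{N+1}\alpha^{j-i}\omega_j = 1$. Introducing $S_i := \sum_{j=i+1}^{N+1}\alpha^{j-i}\omega_j$ with $S_{N+1}=0$, this gives $\omega_i = (1-S_i)/\tilde{\kappa}$ together with the recursion $S_{i-1}=\alpha\omega_i+\alpha S_i = \tfrac{\alpha}{\tilde{\kappa}} + r\,S_i$ where $r:=\alpha(\tilde{\kappa}-1)/\tilde{\kappa}$. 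Iterating from $S_{N+1}=0$ produces the geometric sum $S_i=\tfrac{\alpha}{\tilde{\kappa}}(1-r^{N+1-i})/(1-r)$, and substituting back into $\omega_i=(1-S_i)/\tilde{\kappa}$ and collecting terms over the common denominator $\tilde{\kappa}(\tilde{\kappa}-\alpha(\tilde{\kappa}-1))$ yields exactly \eqref{omi formula}. The identity $\omega_{N+1}=1/\tilde{\kappa}$ is the $i=N+1$ case. (This reproduces the formula in \cite{SchiedStrehleZhang.17}, consistent with the $n$-independence noted in Remark~\ref{w_indep_n}.)

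\textbf{Step 2 (formula for $\bm\nu$).} By \eqref{nu.reduction} and \eqref{mi_serve_nella_proof},
\[
  \bm\nu = B^{-1}\bigl((1-\alpha^2)\Gamma^{-1}\bm 1\bigr) = B^{-1}\bm u,\qquad \bm u=(1-\alpha,(1-\alpha)^2,\dots,(1-\alpha)^2,1-\alpha)^\top,
\]
so $\nu_i = (1-\alpha)(B^{-1})_{i,1} + (1-\alpha)^2\sum_{j=2}^{N}(B^{-1})_{ij} + (1-\alpha)(B^{-1})_{i,N+1}$. Substitute the explicit entries from Lemma~\ref{B inverse lemma}, using $(B^{-1})_{ij}=(\alpha\kappa)^{j-i}\delta_{i-1}\phi_{j+1}/\delta_{N+1}$ for $j\ge i$ and $(B^{-1})_{ij}=(\alpha(\kappa+1-n))^{i-j}\delta_{j-1}\phi_{i+1}/\delta_{N+1}$ for $j\le i$, and split the middle sum at $j=i$. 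With the boundary conventions $\delta_0=1$ and $\phi_{N+2}=1$ from Lemmas~\ref{lemma1}–\ref{lemma2}, this yields directly the stated expression for $\nu_i$ in the interior range $2\le i\le N$. The endpoint cases $i=1$ and $i=N+1$ collapse one of the two sums (because then only the $j\ge i$ or $j\le i$ formula applies throughout), and after pulling out the factor $(1-\alpha)/\delta_{N+1}$ one recovers the displayed expressions for $\nu_1$ and $\nu_{N+1}$.

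\textbf{Main obstacle.} Conceptually there is no serious obstacle: Step~1 is a one-line recursion, and Step~2 is a mechanical application of Lemma~\ref{B inverse lemma}. The only delicate point is bookkeeping in Step~2, namely carefully splitting the matrix–vector product into the three regimes $j<i$, $j=i$, $j>i$, and verifying that the boundary values $\delta_0=1$, $\phi_{N+2}=1$ assemble correctly so that the boundary entries $u_1=u_{N+1}=1-\alpha$ (rather than $(1-\alpha)^2$) are absorbed into the factorized expressions in the stated formulas.
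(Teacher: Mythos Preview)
Your proposal is correct and follows essentially the same route as the paper. For $\bm\nu$ your Step~2 is exactly the paper's argument: apply \eqref{nu.reduction}, \eqref{mi_serve_nella_proof}, and Lemma~\ref{B inverse lemma}, then split the matrix--vector product according to $j<i$, $j\ge i$ and use $\delta_0=1$, $\phi_{N+2}=1$ at the endpoints. For $\bm\omega$ the paper simply cites \cite[Equation~(16)]{SchiedZhang.19} (invoking Remark~\ref{w_indep_n} to pass from $n=2$ to general~$n$), whereas you carry out the back-substitution in the upper-triangular system explicitly; your recursion $S_{i-1}=\alpha/\tilde\kappa + rS_i$ is correct and yields \eqref{omi formula} on the nose, so this is just a self-contained version of the same computation rather than a different method.
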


\begin{proof} 
    The representation \eqref{omi formula} for the components of $\bm\omega$ was proved in \cite[Equation~(16)]{SchiedZhang.19} in the case $n=2$. (Note that our vector $\bm\omega$ is denoted by $\bm u$ in \cite{SchiedZhang.19}, our $\alpha$ corresponds to $a^{1/N}$ there, and we have $\lambda=1$ here.) By Remark~\ref{w_indep_n}, $\bm\omega$ does not depend on $n$, so the same formula holds for any~$n$. For $\bm\nu$, recall from \eqref{nu.reduction} that
    \[
        \bm{\nu}
        = (\Gamma^\theta + (n-1)\wt{\Gamma})^{-1}\bm{1}
        = (1-\alpha^2)B^{-1}\Gamma^{-1}\bm{1}.
    \]
    Using the explicit expression for $(1-\alpha^2)\Gamma^{-1}\bm{1}$ from~\eqref{mi_serve_nella_proof} and the formula for $B^{-1}$ in Lemma~\ref{B inverse lemma}, we obtain the stated formulas for the components of~$\bm\nu$.
\end{proof}

\section{Proofs for Section~\ref{The_Problem}}\label{exist_unique_Nash}

We first show uniqueness. 
\hide{
but before the proof we recall some results from \cite{CampbellNutz.24}. First of all, we recall that the cost-functional $\cC$ of the continuous-time defined in \eqref{cost_func_cont} is a particular case of a more general class of functional. In the same setting of Section~\ref{cont_time_section}, let us charge a (deterministic but possibly time-dependent) cost $\vartheta_t>0$ on block trades, the resulting functional takes the form
\begin{align*}
    \cC(X^i\mid\boldsymbol{X}^{-i})
  = \E\left[\int_0^T S^{\bm X}_{t-}\,dX^i_t + \frac12\sum_{t\in[0,T]}\Delta S_t\,\Delta X^i_t + \frac12\sum_{t\in[0,T]}\vartheta_t(\Delta X^i_t)^2 \right].
\end{align*}
In particular, \cite[Proposition 2.4]{CampbellNutz.24} guarantees uniqueness of the Nasj equilibrium for this probem for any choice of the time-dependent block cost parameter $\vartheta$
By \cite[Proposition 2.2]{CampbellNutz.24}, $\overline\cC$ has the form
\begin{align*}
    \bar\cC(X^i\mid\boldsymbol{X}^{-i})= &\mathbb{E}\left[\frac{1}{2}\int_0^T\int_0^T e^{-\rho|t-s|}dX_s^{i}dX_t^{i} +\int_0^T\int_0^{t-} e^{-\rho(t-s)}\sum_{j\neq i}dX_s^{j}dX_t^{i}+\frac{1}{2}\sum_{j\neq i}\sum_{t\in[0,T]}\Delta X^j_t\Delta X^i_t\right]\\
&\qquad+\frac{1}{2}\,\mathbb{E}\!\left[\sum_{t\in[0,T]}\vartheta_t(\Delta X_t^{i})^2\right]
\end{align*}
\cite[Proposition 2.4]{CampbellNutz.24}
}

\begin{lemma}\label{le:discreteUniqueness}
For a given time grid $\mathbb{T}$ and initial inventories $(x_1,\dots,x_n)$, there exists at most one Nash equilibrium in the class $\prod_i\mathscr{X}(x_i,\mathbb{T})$.  
\end{lemma}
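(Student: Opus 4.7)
The plan is to exploit the strict convexity of each agent's objective, guaranteed by the positivity of $\Gamma^\theta$ (Lemma~\ref{matrices positive definite}), together with a symmetrization of the cross-interaction matrix $\widetilde{\Gamma}$. I fix two Nash equilibria $\bm\xi^*=(\bm\xi^*_1,\dots,\bm\xi^*_n)$ and $\bm\eta^*=(\bm\eta^*_1,\dots,\bm\eta^*_n)$ in $\prod_i \mathscr{X}(x_i,\mathbb{T})$ and set $\bm\zeta_i := \bm\eta^*_i - \bm\xi^*_i$. Each $\bm\zeta_i$ is adapted, bounded and satisfies $\bm{1}^\top\bm\zeta_i = 0$ a.s., hence is an admissible direction preserving agent $i$'s liquidation constraint. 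Since $\mathscr{X}(x_i,\mathbb{T})$ is an affine subspace and the objective from Lemma~\ref{formoffunctionaldisc} is strictly convex in $\bm\xi_i$, optimality of $\bm\xi^*_i$ against $\bm\xi^*_{-i}$ is equivalent to the first-order identity
\[
    \mathbb{E}\Bigl[\bm\zeta_i^\top\Gamma^\theta\bm\xi^*_i + \bm\zeta_i^\top\widetilde{\Gamma}\sum_{j\neq i}\bm\xi^*_j\Bigr] = 0,
\]
and analogously for $\bm\eta^*$ tested against $-\bm\zeta_i$.

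Subtracting the two first-order conditions gives, for each $i$,
\[
    \mathbb{E}[\bm\zeta_i^\top\Gamma^\theta\bm\zeta_i] = -\mathbb{E}\Bigl[\bm\zeta_i^\top\widetilde{\Gamma}\sum_{j\neq i}\bm\zeta_j\Bigr].
\]
Reading the identity $\widetilde{\Gamma} + \widetilde{\Gamma}^\top = \Gamma^0$ directly off~\eqref{gammatheta} and summing over $i$, the total cross-term becomes
\[
    \sum_{i\neq j}\bm\zeta_i^\top\widetilde{\Gamma}\bm\zeta_j \;=\; \tfrac12\Bigl[\Bigl(\textstyle\sum_i \bm\zeta_i\Bigr)^\top\Gamma^0\Bigl(\textstyle\sum_i \bm\zeta_i\Bigr) - \sum_i \bm\zeta_i^\top\Gamma^0\bm\zeta_i\Bigr].
\]
Combined with $\Gamma^\theta = \Gamma^0 + 2\theta\,\Id$, this collapses to
\[
    \sum_i \mathbb{E}\Bigl[\bm\zeta_i^\top\bigl(\tfrac12\Gamma^0 + 2\theta\,\Id\bigr)\bm\zeta_i\Bigr] \;=\; -\tfrac12\,\mathbb{E}\Bigl[\Bigl(\textstyle\sum_i\bm\zeta_i\Bigr)^\top\Gamma^0\Bigl(\textstyle\sum_i\bm\zeta_i\Bigr)\Bigr].
\]

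The left-hand side is nonnegative (positivity of $\Gamma^0$ from Lemma~\ref{matrices positive definite}), whereas the right-hand side is nonpositive, so both must vanish. Positivity of $\tfrac12\Gamma^0 + 2\theta\,\Id$ then forces $\bm\zeta_i=0$ a.s.\ for every $i$, giving $\bm\eta^*=\bm\xi^*$. The main care in carrying out this plan lies in justifying the first-order condition in the random setting; this is routine because the admissible set is an affine subspace on which the expected quadratic functional is Gateaux differentiable, the variation $\bm\eta^*_i - \bm\xi^*_i$ is itself admissible, and boundedness of strategies ensures all expectations and the associated dominated-convergence step are well-defined.
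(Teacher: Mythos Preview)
Your proof is correct. The first-order conditions are valid because the admissible set is an affine space whose direction space is symmetric under sign change, so the directional derivative at a minimizer vanishes (not merely is nonnegative) in any admissible direction; the symmetrization via $\widetilde{\Gamma}+\widetilde{\Gamma}^\top=\Gamma^0$ is exactly the right identity from~\eqref{gammatheta}; and the final sign argument cleanly forces each $\bm\zeta_i=0$ a.s.\ via the strict positivity of $\tfrac12\Gamma^0+2\theta\,\Id$ (which holds even for $\theta=0$).

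The paper's own proof takes a very different route: it does not argue directly but simply invokes an external uniqueness theorem (\cite[Theorem~5.1]{CampbellNutz.25}) for a general class of continuous-time models, embedding the discrete grid by charging infinite cost off~$\mathbb{T}$. Your argument is self-contained and elementary, relying only on Lemma~\ref{matrices positive definite} and the explicit quadratic structure of Lemma~\ref{formoffunctionaldisc}. In effect, you are carrying out in this concrete setting the monotone-operator step that underlies the cited general result: summing the individual variational inequalities and exploiting the symmetric part of the interaction kernel. The benefit of the paper's approach is brevity and connection to a general framework; the benefit of yours is that it keeps the appendix independent of external references and makes transparent exactly which structural features (positive definiteness of $\Gamma^0$ and the identity $\widetilde{\Gamma}+\widetilde{\Gamma}^\top=\Gamma^0$) drive uniqueness.
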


\begin{proof}
  This is a special case of the uniqueness result stated in \cite[Theorem~5.1]{CampbellNutz.25} for a general class of models. To embed the present discrete-time model in that continuous-time setting, we specify an infinite cost for strategies acting outside the grid~$\bT$; cf.\ \cite[Section~2.3]{CampbellNutz.25}.
\end{proof}
Next, we reduce the existence proof to the class
\begin{align*}
    \mathscr{X}_{det}(x,\bT):=\{ \bm\xi\in\mathscr{X}(x,\bT) \mid \bm\xi\text{ is deterministic} \}
\end{align*}
of deterministic strategies. A Nash equilibrium in the class $\cX_{\rm det}(x_1,\bT)\times\dots\times\cX_{\rm det}(x_n,\bT)$ is defined in the same way as in Definition~\ref{Nasheqdiscdef} and called a deterministic Nash equilibrium.

\begin{lemma}\label{detstratonlydisc}
    A Nash equilibrium in the class $\cX_{\rm det}(x_1,\bT)\times\dots\times\cX_{\rm det}(x_n,\bT)$ of deterministic strategies is also a Nash equilibrium in the class $\cX(x_1,\bT)\times\dots\times\cX(x_n,\bT)$ of adapted strategies.
\end{lemma}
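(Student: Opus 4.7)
The plan is to show that when the opponents $j\neq i$ use deterministic strategies $\bm\xi_j^*$, agent $i$'s expected cost functional is minimized by a deterministic strategy; replacing any adapted candidate by its pointwise expectation can only decrease the cost. Given that the deterministic profile $(\bm\xi^*_1,\dots,\bm\xi^*_n)$ is already optimal within the deterministic class, this will immediately show it is a Nash equilibrium in the larger adapted class.

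More concretely, fix $i$ and any admissible $\bm\xi_i=(\xi_{i,0},\dots,\xi_{i,N})\in\mathscr{X}(x_i,\mathbb{T})$. Let $\bar{\bm\xi}_i:=(\mathbb{E}[\xi_{i,0}],\dots,\mathbb{E}[\xi_{i,N}])^\top$. Boundedness of the components is preserved under expectation, and taking expectations in the liquidation constraint $\xi_{i,0}+\dots+\xi_{i,N}=x_i$ (which holds $\mathbb{P}$-a.s.) yields $\bar\xi_{i,0}+\dots+\bar\xi_{i,N}=x_i$, so $\bar{\bm\xi}_i\in\mathscr{X}_{\rm det}(x_i,\mathbb{T})$.

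Next, I invoke Lemma~\ref{formoffunctionaldisc}. Since the opponents' strategies $\bm\xi_j^*$ are deterministic, the cross term is linear in $\bm\xi_i$, giving
\begin{align*}
\mathbb{E}[\mathscr{C}_{\mathbb{T}}(\bm\xi_i\mid\bm\xi^*_{-i})]
=\tfrac{1}{2}\,\mathbb{E}\bigl[\bm\xi_i^\top\Gamma^\theta\bm\xi_i\bigr]
+\bar{\bm\xi}_i^{\top}\widetilde{\Gamma}\Bigl(\sum_{j\neq i}\bm\xi^*_j\Bigr).
\end{align*}
By Lemma~\ref{matrices positive definite}, $\Gamma^\theta$ is positive, so the symmetric map $\bm x\mapsto\bm x^\top\Gamma^\theta\bm x$ is (strictly) convex. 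Jensen's inequality then gives
$\mathbb{E}[\bm\xi_i^\top\Gamma^\theta\bm\xi_i]\ge \bar{\bm\xi}_i^\top\Gamma^\theta\bar{\bm\xi}_i$, and therefore
\begin{align*}
\mathbb{E}[\mathscr{C}_{\mathbb{T}}(\bm\xi_i\mid\bm\xi^*_{-i})]
\ge \tfrac{1}{2}\,\bar{\bm\xi}_i^\top\Gamma^\theta\bar{\bm\xi}_i
+\bar{\bm\xi}_i^{\top}\widetilde{\Gamma}\Bigl(\sum_{j\neq i}\bm\xi^*_j\Bigr)
=\mathbb{E}[\mathscr{C}_{\mathbb{T}}(\bar{\bm\xi}_i\mid\bm\xi^*_{-i})].
\end{align*}

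To conclude, use the deterministic equilibrium property of $\bm\xi_i^*$ at the deterministic strategy $\bar{\bm\xi}_i\in\mathscr{X}_{\rm det}(x_i,\mathbb{T})$ to obtain $\mathbb{E}[\mathscr{C}_{\mathbb{T}}(\bar{\bm\xi}_i\mid\bm\xi^*_{-i})]\ge\mathbb{E}[\mathscr{C}_{\mathbb{T}}(\bm\xi^*_i\mid\bm\xi^*_{-i})]$, and combining with the previous display establishes the required inequality for all adapted deviations $\bm\xi_i$. There is no real obstacle here: the argument hinges only on the convex-quadratic structure of the objective (Lemma~\ref{formoffunctionaldisc}) together with positivity of $\Gamma^\theta$ (Lemma~\ref{matrices positive definite}) and the linearity of the cross term, which is what makes deterministic opponents sufficient. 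The only point requiring a brief verification is admissibility of $\bar{\bm\xi}_i$, already handled above.
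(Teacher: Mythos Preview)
Your proof is correct and follows essentially the same approach as the paper: define $\bar{\bm\xi}_i$ as the componentwise expectation, use the explicit form of the objective from Lemma~\ref{formoffunctionaldisc} together with Jensen's inequality applied to the convex quadratic $\bm x\mapsto\bm x^\top\Gamma^\theta\bm x$, and conclude via the deterministic optimality of $\bm\xi_i^*$. The paper's version is slightly terser but identical in substance; your explicit verification that $\bar{\bm\xi}_i\in\mathscr{X}_{\rm det}(x_i,\mathbb{T})$ is a nice addition.
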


\begin{proof}
    We follow \cite[Lemma 3.4]{SchiedZhang.19}. Assume that $(\bm\xi_1^*,\dots,\bm\xi_n^*)$ is a Nash equilibrium in the class $\cX_{\rm det}(x_1,\bT)\times\dots\times\cX_{\rm det}(x_n,\bT)$ of deterministic strategies. We need to show that $\bm\xi_i^*$ minimizes $\E[\cC_\bT(\bm\xi  \lvert  \bm\xi^*_{-i})]$ over $\cX(x_i,\bT)$, for any~$i$. To this end, fix $\bm\xi\in\cX(x_i,\bT)$ and define $\overline{\bm\xi}\in\cX_{\rm det}(x_i,\bT)$ by $\overline{\xi}_k=\E[\xi_k]$ for $k=0,1,\dots,N$.
    Applying Jensen’s inequality to the convex map $\mathbb{R}^{N+1}\ni\bm x\mapsto\bm x^\top\Gamma^\theta\bm x$ gives
    \begin{align*}
        \E[\mathscr{C}_{\mathbb{T}}(\bm{\xi} \mid \bm{\xi}^*_{-i})]
        &= \E\Big[\frac{1}{2}\bm\xi^\top\Gamma^\theta\bm\xi + \bm\xi^\top\widetilde{\Gamma}\sum_{j\neq i}\bm\xi^*_j\Big] 
        = \E\Big[\frac{1}{2}\bm\xi^\top\Gamma^\theta\bm\xi\Big] + \overline{\bm\xi}^\top\widetilde{\Gamma}\sum_{j\neq i}\bm\xi^*_j\\
        &\ge \frac{1}{2}\overline{\bm\xi}^\top\Gamma^\theta\overline{\bm\xi} + \overline{\bm\xi}^\top\widetilde{\Gamma}\sum_{j\neq i}\bm\xi^*_j
        = \E[\mathscr{C}_{\mathbb{T}}(\overline{\bm{\xi}} \mid \bm{\xi}^*_{-i})]\\
        &\ge \E[\mathscr{C}_{\mathbb{T}}(\bm{\xi}_i^* \mid \bm{\xi}^*_{-i})],
    \end{align*}
    showing that $\bm{\xi}_i^*$ minimizes $\E[\mathscr{C}_{\mathbb{T}}(\bm{\xi} \mid \bm{\xi}^*_{-i})]$ over $\bm\xi\in\cX(x_i,\bT)$. %
\end{proof}

We can now establish the main theorem on the discrete-time equilibrium.

\begin{proof}[Proof of Theorem \ref{exuniqnasheqdiscrete}]
    We adapt \cite[Theorem 2.4]{LuoSchied.19}. Recall that uniqueness was shown in Lemma~\ref{le:discreteUniqueness}. By Lemma \ref{detstratonlydisc}, it then suffices to show that the strategies stated in Theorem~\ref{exuniqnasheqdiscrete} form a deterministic Nash equilibrium. In view of Lemma \ref{formoffunctionaldisc}, we thus need to show that
    \[
        \E[\cC_\bT(\bm\xi^*_i\mid\bm\xi^*_{-i})]
        = \min_{\bm{\xi}_i \in \mathscr{X}_{\mathrm{det}}(x_i, \mathbb{T})}
          \Big( \frac{1}{2}\bm{\xi}_i^{\top}\Gamma^{\theta}\bm{\xi}_i
               + \bm{\xi}_i^{\top}\widetilde{\Gamma} \sum_{j\neq i}\bm{\xi}^*_j \Big).
    \]
    The constraint $\bm{\xi}_i\in \mathscr{X}_{\mathrm{det}}(x_i, \mathbb{T})$ is the linear equality $\bm{1}^{\top}\bm{\xi}_i=x_i$. By Lagrange multiplier theory, a necessary condition for $(\bm\xi^*_1,\dots,\bm\xi^*_n)$ to be a deterministic Nash equilibrium is the existence of $\alpha_i\in\mathbb{R}$, $i=1,\dots,n$, such that
    \begin{equation}\label{pf:optimalityconditions}
        \left\{
        \begin{array}{l}
              \Gamma^{\theta}\bm{\xi}_i^* + \widetilde{\Gamma} \displaystyle\sum_{j\neq i}\bm{\xi}_j^* = \alpha_i \bm{1}, \\
              \bm{1}^{\top}\bm{\xi}_i^* = x_i .
        \end{array} 
        \right.
    \end{equation}
    We will show below that these equations are also sufficient for our optimization problem. Summing the first line of \eqref{pf:optimalityconditions} over $i$ yields
    \[
        (\Gamma^{\theta} + (n-1)\widetilde{\Gamma})\sum_{j=1}^{n}\bm{\xi}_j^*
        = \Big(\sum_{j=1}^{n}\alpha_j\Big)\bm{1}.
    \]
    By Lemma~\ref{matrices positive definite}, $\Gamma^{\theta}+ (n-1)\widetilde{\Gamma}$ is invertible. Hence
    \begin{equation}\label{pf:maineq1}
        \begin{split}
            \sum^{n}_{j=1}\bm{\xi}_j^* 
            & = \sum^{n}_{j=1}\alpha_j (\Gamma^{\theta} + (n-1) \widetilde{\Gamma})^{-1}\bm{1} \\
            & = \frac{\bm{1}^{\top} \sum^{n}_{j=1}\alpha_j (\Gamma^{\theta}+ (n-1) \widetilde{\Gamma})^{-1}\bm{1}}{\bm{1}^{\top}(\Gamma^{ \theta}+ (n-1) \widetilde{\Gamma})^{-1}\bm{1}} (\Gamma^{ \theta}+ (n-1) \widetilde{\Gamma})^{-1}\bm{1} \\
            & =  \sum^{n}_{j=1}\frac{\bm{1}^{\top}\bm{\xi}_j^*}{\bm{1}^{\top}(\Gamma^{ \theta}+ (n-1) \widetilde{\Gamma})^{-1}\bm{1}} (\Gamma^{ \theta}+ (n-1) \widetilde{\Gamma})^{-1}\bm{1}\\
            &= \sum^{n}_{j=1} x_j \bm{v},
        \end{split}
    \end{equation}
    using the second line of \eqref{pf:optimalityconditions} in the last step. Next, take the $i$th equation in \eqref{pf:optimalityconditions}, multiply by $n-1$, and subtract the sum of the remaining $n-1$ equations. This gives
        \[
        \Gamma^{ \theta} \Big( (n-1)\bm{\xi}_i^* - \sum_{j\neq i}\bm{\xi}_j^* \Big)  - \widetilde{\Gamma}  \Big( (n-1)\bm{\xi}_i^* - \sum_{j\neq i}\bm{\xi}_j^* \Big)  = \Big( (n-1)\alpha_i - \sum_{j\neq i}\alpha_j \Big) \bm{1}.
        \]
    Further simplifications show that
        \[
        (\Gamma^{ \theta} - \widetilde{\Gamma})\Big( n\bm{\xi}_i^* - \sum^{n}_{j=1}\bm{\xi}_j^* \Big) = \Big( n\alpha_i - \sum^{n}_{j=1}\alpha_j \Big) \bm{1}.
        \]
    Since $\Gamma^{\theta}-\widetilde{\Gamma}$ is invertible (Lemma~\ref{matrices positive definite}), we obtain
    \begin{equation}\label{pf:maineq2}
        n\bm{\xi}_i^*-\sum_{j=1}^{n}\bm{\xi}_j^* = \Big(nx_i-\sum_{j=1}^{n}x_j\Big)\bm{w}.
    \end{equation}
    Combining \eqref{pf:maineq1} and \eqref{pf:maineq2} yields
    \[
        \bm{\xi}_i^*  =  \bar{x} \bm{v} + (x_i-\bar{x}) \bm{w}.
    \]
    It remains to show that \eqref{pf:optimalityconditions} is sufficient. Write
    \[
        \frac{1}{2}{\bm{\xi}_i^*}^{\top}\Gamma^{\theta}\bm{\xi}_i
        + {\bm{\xi}_i^*}^{\top}\widetilde{\Gamma} \sum_{j\neq i}\bm{\xi}^*_j
        = \frac{1}{2}{\bm{\xi}_i^*}^{\top}\Gamma^{\theta}\bm{\xi}_i^* + \bm{g}_i^{\top}\bm{\xi}^*_i,
        \qquad \bm{g}_i := \widetilde{\Gamma}\sum_{j\neq i}\bm{\xi}^*_j.
    \]
    For any $\bm{\eta}_i\in\mathscr{X}_{\mathrm{det}}(x_i,\mathbb{T})$, using \eqref{pf:optimalityconditions} and the symmetry of $\Gamma^\theta$,
    \begin{equation}\label{optimality check eq}
        \begin{split}
            \frac{1}{2}\bm{\eta_i}^{\top}\Gamma^{\theta}\bm{\eta_i} + \bm{g_i}^{\top}\bm{\eta_i} -  \frac{1}{2}{\bm{\xi}_i^*}^{\top}\Gamma^{  \theta}{\bm{\xi}_i^*} - \bm{g_i}^{\top}{\bm{\xi}_i^*} 
            & = \frac{1}{2}(\bm{\eta_i} + {\bm{\xi}_i^*})^{\top}\Gamma^{  \theta}(\bm{\eta_i} - {\bm{\xi}_i^*}) + \bm{g_i}^{\top}(\bm{\eta_i} - {\bm{\xi}_i^*}) \\
            & = \Big(\frac{1}{2} (\Gamma^{  \theta})^{\top}(\bm{\eta_i} + {\bm{\xi}_i^*}) +  \bm{g_i}\Big)^{\top}(\bm{\eta_i} - {\bm{\xi}_i^*}) \\
            & = \Big((\Gamma^{  \theta}{\bm{\xi}_i^*} +  \bm{g_i}) + \frac{1}{2}(\Gamma^{  \theta})^\top(\bm{\eta_i} - {\bm{\xi}_i^*}) \Big)^{\top}(\bm{\eta_i} - {\bm{\xi}_i^*}) \\
            & = \Big(\alpha_i \bm{1} + \frac{1}{2}(\Gamma^{  \theta})^\top(\bm{\eta_i} - {\bm{\xi}_i^*}) \Big)^{\top}(\bm{\eta_i} - {\bm{\xi}_i^*}) \\
            & = \alpha_i \bm{1}^{\top}(\bm{\eta_i} - {\bm{\xi}_i^*}) + \frac{1}{2}(\bm{\eta_i} - {\bm{\xi}_i^*}) ^{\top}\Gamma^{  \theta}(\bm{\eta_i} - {\bm{\xi}_i^*}) \\
            & \geq 0,
        \end{split}
    \end{equation}
    with equality if and only if $\bm{\eta_i} = {\bm{\xi}_i^*}$. Therefore, the strategy profile defined by \eqref{NashEqStratDisc} is a deterministic Nash equilibrium and the proof is complete.
\end{proof}

We mention that the proofs in this section remain valid if the exponential kernel $G$ is generalized to an arbitrary positive definite kernel (in the sense of Bochner). 

\section{Proofs for Section~\ref{hfl_section}}\label{sec_3_proofs_appendix}

The proofs for the high-frequency asymptotics of Section~\ref{hfl_section} involve rather lengthy expressions. We start with some abstract remarks and notation intended to make the exposition more concise. While the quantities introduced in Appendix~\ref{closed form section} (e.g., \(\alpha\), \(\bm\nu\), \(\bm\omega\)) depend on the trading frequency \(N\), we usually suppress this dependence for brevity. Throughout, we let \(N\uparrow\infty\), so, for example, we write
\[
\lim_{N\uparrow\infty}\alpha
=
\lim_{N\uparrow\infty} e^{-\rho T/N}
=1.
\]
For \(t\in[0,T]\) we recall the discrete trading index \(n_t=\lceil Nt/T\rceil\) and denote the distance between $Nt/T$ and the subsequent grid point by
\[
\eta_t^{N}:=n_t-\frac{Nt}{T}\in[0,1).
\]
This will appear, for example, when first-order terms depend on \(n_t\).

Rather than expanding every expression directly in powers of \(N^{-1}\), it will be often convenient to introduce the small parameter
\(
\Delta:=1-\alpha=1-e^{-\rho T/N}.
\)
A Taylor expansion at \(0\) yields
\[
\Delta=\frac{\rho T}{N}-\frac{(\rho T)^2}{2N^2}+\mathcal{O}(N^{-3}),
\qquad
\frac{1}{N}=\frac{\Delta}{\rho T}+\frac{\Delta^2}{2\rho T}+\mathcal{O}(\Delta^{3}).
\]
Hence \(o(N^{-p})\) and \(o(\Delta^{p})\) are interchangeable; we switch between these two symbols as convenient.

All the functions we manipulate %
are real-analytic in the neighborhoods we consider. Two consequences, often used without further comment, are the following.
\begin{enumerate}
\item \emph{Stability under algebraic operations.} If \(A_N=a_0+o(N^{-p})\) and \(B_N=b_0+o(N^{-p})\), then
\[
A_N\pm B_N=(a_0\pm b_0)+o(N^{-p}),\qquad A_NB_N=a_0b_0+o(N^{-p}),
\]
and, provided \(b_0\neq 0\),
\[
\frac{A_N}{B_N}=\frac{a_0}{b_0}+o(N^{-p}).
\]
Thus sums, products, and quotients preserve the error order.
\item \emph{Stability under composition.} If \(X_N=x_0+r_N\) with \(r_N=o(N^{-p})\) and \(h\) is real-analytic on a neighborhood \(U\) of \(x_0\), then by %
Taylor’s formula with Lagrange remainder, for any fixed \(q\in\mathbb{N}\) and all sufficiently large \(N\),
\[
h(X_N)=\sum_{k=0}^{q}\frac{h^{(k)}(x_0)}{k!}r_N^{k}
+\frac{h^{(q+1)}(x_0+\zeta_N r_N)}{(q+1)!}r_N^{q+1},
\qquad \zeta_N\in(0,1).
\]
Hence, if \(r_N=\mathcal{O}(N^{-m})\) and \((q+1)m>p\), the remainder is \(o(N^{-p})\). In particular, in our setting, compositions of finitely many analytic maps preserve \(o(N^{-p})\) remainders (equivalently \(o(\Delta^{p})\)). Typical uses below include \(h(x)=1/x\) (with \(x_0\neq0\)), \(h(x)=\sqrt{x}\) (with \(x_0>0\)), \(\log x\) (with \(x_0>0\)), and their compositions.
\end{enumerate}

Whenever a quotient of two analytic expansions is required, we identify the coefficients via the standard series-division rule below; see also \cite[§ 67]{ChurchillBrown.84}. This will be used repeatedly when taking quotients of closed forms and extracting leading constants.

\begin{lemma}[Quotient of analytic Taylor series]
Let \(I\subset\mathbb{R}\) be an open interval containing \(a\), and let \(f,g\) be real-analytic on \(I\) with
\[
f(x)=\sum_{k\ge 0} a_k (x-a)^{k},
\qquad
g(x)=\sum_{k\ge 0} b_k (x-a)^{k},
\]
both converging on some interval \((a-R,a+R)\subset I\). If \(b_0=g(a)\neq 0\), then \(f/g\) is real-analytic on \((a-r,a+r)\) for some \(r\in(0,R)\) with
\[
\frac{f(x)}{g(x)}=\sum_{k\ge 0} c_k (x-a)^{k},
\]
and the coefficients \(\{c_k\}_{k\ge0}\) are uniquely determined by
\[
  b_0c_0=a_0,\qquad
  \sum_{j=0}^m b_jc_{m-j}=a_m\quad(m\ge1).
\]
\end{lemma}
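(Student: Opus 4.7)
The plan is to proceed in three steps. First, I will establish that $f/g$ is real-analytic on a sufficiently small neighborhood of $a$; second, I will use the identity $f = g \cdot (f/g)$ together with the Cauchy product to derive the convolution relation; third, I will invoke $b_0 \neq 0$ to show the recursion determines the $c_k$ uniquely.

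For the first step, since $g$ is continuous at $a$ with $g(a) = b_0 \neq 0$, there exists $r_0 \in (0, R)$ with $g(x) \neq 0$ for all $x \in (a - r_0, a + r_0)$. I would then argue that the reciprocal of a nowhere-vanishing real-analytic function is itself real-analytic, which can be verified directly by writing $g(x) = b_0(1 + h(x))$ with $h$ real-analytic and $h(a) = 0$, and expanding $1/(1 + h(x))$ as a geometric series that converges absolutely on any neighborhood where $|h| < 1$. Consequently, $f/g$ is real-analytic on some interval $(a - r, a + r)$ with $r \in (0, r_0]$, and therefore admits a convergent Taylor expansion $\sum_{k \geq 0} c_k (x-a)^k$ there.

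For the second step, on the interval $(a - r, a + r)$ both series $g(x) = \sum b_k (x-a)^k$ and $(f/g)(x) = \sum c_k (x-a)^k$ converge absolutely, so the Cauchy product (or Mertens' theorem) applies, yielding
\[
f(x) \;=\; g(x)\cdot\frac{f(x)}{g(x)} \;=\; \sum_{m \geq 0}\left(\sum_{j=0}^{m} b_j c_{m-j}\right)(x-a)^m.
\]
By uniqueness of Taylor coefficients, matching this expansion against $f(x) = \sum a_k (x-a)^k$ gives $a_m = \sum_{j=0}^{m} b_j c_{m-j}$ for every $m \geq 0$. Since $b_0 \neq 0$, the case $m = 0$ produces $c_0 = a_0/b_0$, and for $m \geq 1$ one can isolate $c_m = \bigl(a_m - \sum_{j=1}^{m} b_j c_{m-j}\bigr)/b_0$, which determines the $c_k$ uniquely by induction on $m$.

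The only substantive point in the argument is the analyticity of the reciprocal; everything else follows formally from multiplication of power series and uniqueness of Taylor coefficients. No delicate estimate on the common radius of convergence is needed beyond the initial choice of $r$ guaranteeing nonvanishing of $g$.
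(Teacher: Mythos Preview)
Your proof is correct and follows the standard approach. The paper itself does not supply a proof of this lemma; it states the result and refers the reader to \cite[\S 67]{ChurchillBrown.84}, so there is nothing to compare against beyond noting that your argument is precisely the classical one found in such references.
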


\begin{remark}
In particular, at first order one has
\begin{align}\label{quotient_expansion}
\frac{f(x)}{g(x)}
=
\frac{a_0}{b_0}
+
\frac{a_1 b_0 - a_0 b_1}{b_0^{2}}(x-a)
+\text{higher-order terms}.
\end{align}
\end{remark}

The subsequent proofs proceed by expanding all discrete objects using the conventions above, together with \eqref{quotient_expansion}, to organize remainders into \(o(N^{-p})\) at the target order. 

\subsection{Proof of Theorem~\ref{strat_asympt_thm_theta_positive}~\ref{conv_of_W}}

We remark that the convergence of $W^{(N)}_t$ to $\mathbbm{f}(t)$ for $t\in[0,T)$, without a rate, already follows from \cite[Theorem~3.1(c)]{SchiedStrehleZhang.17} as $W^{(N)}$ is independent of $n$ by Remark~\ref{w_indep_n}. Next, we establish the $1/N$ rate and recover their result as a by-product. We observe that (in contrast to the statement in \cite[Theorem~3.1(c)]{SchiedStrehleZhang.17}, which seems to have a glitch) the sequence $W_T^{(N)}$ does not converge to $\mathbbm{f}(T)$.

\begin{proof}[Proof of Theorem~\ref{strat_asympt_thm_theta_positive}~\ref{conv_of_W}]
    Using the closed-form expression in Theorem~\ref{omega and nu closed form thm},
    \begin{align}\label{W^N_t_expl_form}
        W^{(N)}_t
        =
        \frac{1}{\mathbf{1}^\top\bm{\omega}}
        \sum_{k = n_t+1}^{N+1}\omega_k
        =
        \frac{(N+1-n_t)\frac{(1-\alpha)}{\tilde{\kappa}-\alpha(\tilde{\kappa}-1)}
        +
        \frac{\alpha}{\tilde{\kappa}(\tilde{\kappa}-\alpha(\tilde{\kappa}-1))}
        \frac{1-\left(\frac{\alpha(\tilde{\kappa}-1)}{\tilde{\kappa}}\right)^{N+1-n_t}}{1-\frac{\alpha(\tilde{\kappa}-1)}{\tilde{\kappa}}}}
        {(N+1)\frac{(1-\alpha)}{\tilde{\kappa}-\alpha(\tilde{\kappa}-1)}
        +
        \frac{\alpha}{\tilde{\kappa}(\tilde{\kappa}-\alpha(\tilde{\kappa}-1))}
        \frac{1-\left(\frac{\alpha(\tilde{\kappa}-1)}{\tilde{\kappa}}\right)^{N+1}}{1-\frac{\alpha(\tilde{\kappa}-1)}{\tilde{\kappa}}}} .
    \end{align}
    We first treat $t=T$. With $\tilde{\kappa}=2\theta+\frac{1}{2}$ and using \eqref{quotient_expansion},
    \[
        W^{(N)}_T
        =
        \frac{\omega_{N+1}}{\mathbf{1}^\top\bm{\omega}}
        =
        \frac{1}{\tilde{\kappa}\mathbf{1}^\top\bm{\omega}}
        =
        \frac{1}{\tilde{\kappa}(\rho T+1)}
        -
        \frac{1}{N}
        \frac{(\tilde{\kappa}-\frac{3}{2})\rho^2T^2-2\rho T(\tilde{\kappa}-1)(\rho T+1)}
             {\tilde{\kappa}(\rho T+1)^2}
        + o\left(\frac{1}{N}\right),
    \]
    which yields the stated claim at $t=T$.

    Now fix $t\in[0,T)$ and apply \eqref{quotient_expansion}--\eqref{W^N_t_expl_form}. A direct calculation (whose details we omit for the sake of brevity) yields
    \[
        N|W^{(N)}_t-\mathbbm{f}(t)|
        =
        \frac{\rho T}{\rho T+1}
        \bigg|
           \eta_t^{N}
           + \frac{\rho t(2\theta-1)}{\rho T+1}
        \bigg| + o(1),\qquad N\to\infty.
    \] 
    This proves the claimed $\mathcal{O}(N^{-1})$ rate of convergence to $\mathbbm{f}(t)$ for every fixed $t\in[0,T)$.
\end{proof}

\subsection{Proof of Theorem~\ref{strat_asympt_thm_theta_positive}~\ref{conv_of_V}}

We state separately the proofs for $\kappa = n - 1$ and $\kappa \neq n - 1$. The details are different because the general representation for the sum of the components of $\bm\nu$ in \eqref{ae} involves denominators that vanish exactly at $\kappa = n - 1$, and therefore is not well defined at this value.

\subsubsection{Proof for $\kappa = n - 1$}

Adapting arguments from \cite{SchiedStrehleZhang.17}, we first consider the case $\kappa=n-1$, which corresponds to $\theta=\frac{n-1}{4}$. The proof of Theorem~\ref{strat_asympt_thm_theta_positive}\ref{conv_of_V} for this particular value of $\kappa$ will be given after the following lemma.

\begin{lemma}
    Let $\kappa=n-1$. Then, for $m\in\{1,\dots,N+1\}$,
    \begin{align}\label{aa}
        \sum_{i=1}^m \nu_i
        = \frac{1}{n+\alpha}\left(
            (1-\alpha)m+\alpha
            +\frac{\alpha(\alpha^2-n)}{n(n+\alpha)}\left(\frac{\alpha(n-1)}{n-\alpha^2}\right)^{N+1}
            +\frac{\alpha(1+\alpha)}{n+\alpha}\left(\frac{\alpha(n-1)}{n-\alpha^2}\right)^{N+1-m}
        \right).
    \end{align}
\end{lemma}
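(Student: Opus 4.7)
The plan is to exploit the observation that when $\kappa=n-1$ the factor $\kappa+1-n$ vanishes, so the sub-diagonal of $B$ (whose entries are $-\alpha(\kappa+1-n)$) collapses to zero and $B$ becomes upper triangular with diagonal entries $n(1-\alpha^2), n-\alpha^2,\dots,n-\alpha^2$ and super-diagonal $-\alpha(n-1)$. Combining \eqref{nu.reduction} with \eqref{mi_serve_nella_proof}, solving for $\bm\nu$ then amounts to the triangular system $B\bm\nu=(1-\alpha,(1-\alpha)^2,\dots,(1-\alpha)^2,1-\alpha)^\top$, which I would attack by backward substitution.

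First I would read off $\nu_{N+1}=(1-\alpha)/(n-\alpha^2)$ from the last row. For $2\le k\le N$, the $k$th row yields the affine first-order recurrence
\[
(n-\alpha^2)\nu_k-\alpha(n-1)\nu_{k+1}=(1-\alpha)^2.
\]
Its constant particular solution is $\frac{(1-\alpha)^2}{(n-\alpha^2)-\alpha(n-1)}=\frac{1-\alpha}{n+\alpha}$ (using the factorization $(n-\alpha^2)-\alpha(n-1)=(n+\alpha)(1-\alpha)$), and the homogeneous part has ratio $r:=\alpha(n-1)/(n-\alpha^2)$, so
\[
\nu_k=\frac{1-\alpha}{n+\alpha}+C\,r^{N+1-k},\qquad 2\le k\le N+1.
\]
Matching the boundary condition at $k=N+1$ gives $C=\frac{\alpha(1-\alpha^2)}{(n-\alpha^2)(n+\alpha)}$.

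Row $1$ is structurally different because its diagonal entry is $n(1-\alpha^2)$ rather than $n-\alpha^2$, and I would handle it separately via $\nu_1=\frac{(1-\alpha)+\alpha(n-1)\nu_2}{n(1-\alpha^2)}$. Using the identity $\alpha(n-1)=r(n-\alpha^2)$ to cancel the prefactor, a short simplification yields the clean expression $\nu_1=\frac{1}{n+\alpha}+\frac{\alpha\,r^N}{n(n+\alpha)}$.

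Finally I would assemble $\sum_{i=1}^m\nu_i$ by adding $\nu_1$ to the geometric tail $\sum_{i=2}^m\nu_i$. The explicit sum $\sum_{i=2}^m r^{N+1-i}=\frac{r^{N+1-m}-r^N}{1-r}$, together with the simplification $1-r=(n+\alpha)(1-\alpha)/(n-\alpha^2)$ (hence $C/(1-r)=\alpha(1+\alpha)/(n+\alpha)^2$), makes the coefficient of $r^{N+1-m}$ appear in the desired form. Collecting the various $r^N$ contributions and converting $r^N$ into $r^{N+1}$ through $r^N=r^{N+1}(n-\alpha^2)/(\alpha(n-1))$ produces the term $\frac{\alpha(\alpha^2-n)}{n(n+\alpha)}r^{N+1}$, matching \eqref{aa}. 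The only real obstacle is bookkeeping: the special treatment of $\nu_1$ must be balanced against the geometric sum so that the $r^N$ contributions combine consistently into a single $r^{N+1}$ term.
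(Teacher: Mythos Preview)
Your proposal is correct and takes a genuinely different route from the paper. The paper specializes the general machinery built in Lemmas~\ref{lemma1}--\ref{lemma2} and Theorem~\ref{omega and nu closed form thm}: at $\kappa=n-1$ one finds $R=n-\alpha^2$, $\delta_k=n(1-\alpha^2)(n-\alpha^2)^{k-1}$, $\phi_k=(n-\alpha^2)^{N+2-k}$, plugs these into the Usmani-based formula for $\nu_i$, and then sums. You instead exploit the structural observation that $\kappa+1-n=0$ annihilates the sub-diagonal of $B$, reducing the system $B\bm\nu=(1-\alpha^2)\Gamma^{-1}\bm 1$ to an upper-bidiagonal one that is solved directly by backward substitution and a first-order recurrence. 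Your argument is more elementary and self-contained for this particular value of $\kappa$---it never touches the principal-minor recursions or Usmani's inversion formula---whereas the paper's approach simply reuses machinery already in place for the general case $\kappa\neq n-1$. Both yield identical explicit expressions for $\nu_1$ and $\nu_i$ (your $\nu_1=\frac{1}{n+\alpha}+\frac{\alpha r^N}{n(n+\alpha)}$ is the paper's $\nu_1$ after writing $r^{N+1}\cdot\frac{n-\alpha^2}{n(n-1)}=\frac{\alpha}{n}r^N$), and the summation step is essentially the same geometric-series computation in either approach.
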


\begin{proof}
Plugging in $\kappa=n-1$ yields $R=n-\alpha^2$, $\delta_k=n(1-\alpha^2)(n-\alpha^2)^{ k-1}$ for $k\in\{1,\dots,N+1\}$, and $\phi_k=(n-\alpha^2)^{N+2-k}$ for $k\in\{2,\dots,N+2\}$. Therefore,
\begin{equation}\label{expl_form_nu_i_kappa_n-1}
\begin{aligned}
    \nu_1
    &= \frac{1}{n+\alpha}\left(
        1+\left(\frac{n-\alpha^2}{n(n-1)}\right)\left(\frac{\alpha(n-1)}{n-\alpha^2}\right)^{N+1}
    \right),\\
    \nu_i
    &= \frac{1}{n+\alpha}\left(
        1-\alpha
        +\left(\frac{\alpha(n-1)}{n-\alpha^2}\right)^{N+2-i}\left(\frac{1-\alpha^2}{n-1}\right)
    \right),
    \qquad i\in\{2,\dots,N+1\}.
\end{aligned}
\end{equation}
Summing \eqref{expl_form_nu_i_kappa_n-1} over $i=1,\dots,m$ yields \eqref{aa}. \hide{Indeed 
    \begin{align*}
        \nu_1 &= \frac{1-\alpha}{n(1-\alpha^2)}\left( 1+ \left(1-\alpha \right)\sum_{j=1}^{N-1}\left( \frac{\alpha\left(n-1\right)}{n-\alpha^2} \right)^j + \left( \frac{\alpha\left(n-1\right)}{n-\alpha^2}\right)^N \right)\\
        &= \frac{1-\alpha}{n(1-\alpha^2)}\left( 1+\left(1-\alpha\right)\frac{\left( \frac{\alpha\left(n-1\right)}{n-\alpha^2}\right)^N-\left( \frac{\alpha\left(n-1\right)}{n-\alpha^2}\right)}{\left( \frac{\alpha\left(n-1\right)}{n-\alpha^2}\right)-1} + \left( \frac{\alpha\left(n-1\right)}{n-\alpha^2}\right)^N \right)\\
        &=\frac{\alpha^2-n}{n(1-\alpha^2)\left(\alpha +n\right)} \left( -1+ \alpha\left( \frac{\alpha\left(n-1\right)}{n-\alpha^2}\right) +\left( \frac{\alpha\left(n-1\right)}{n-\alpha^2}\right)^{N+1}\left(\frac{\alpha^2-1}{n-1}\right)\right)\\
        &=\frac{1}{n+\alpha}\left( 1+ \left( \frac{\alpha\left(n-1\right)}{n-\alpha^2}\right)^{N+1} \left(\frac{\alpha^2-1}{n-1}\right) \frac{n-\alpha^2}{n(\alpha^2-1)} \right)\\
        &=\frac{1}{n+\alpha}\left( 1+\left( \frac{n-\alpha^2}{n\left(n-1\right)} \right)\left(\frac{\alpha\left(n-1\right)}{n-\alpha^2}\right)^{N+1}  \right)
    \end{align*}
    and
    \begin{align*}
        \nu_i &= \frac{(1-\alpha)}{\delta_{N+1}}\left((1-\alpha)\sum_{j=i}^N\delta_{i-1}\phi_{j+1}(\alpha\kappa)^{j-i} + (\alpha\kappa)^{N+1-i}\delta_{i-1}\right)\\
        &= \frac{(1-\alpha)}{(n-\alpha^2)}\left((1-\alpha)\sum_{z=0}^{N-i}\left(\frac{\alpha\left(n-1\right)}{n-\alpha^2}\right)^{z} + \left(\frac{\alpha\left(n-1\right)}{n-\alpha^2}\right)^{N+1-i}\right)\\
        &=\frac{(1-\alpha)}{(n-\alpha^2)}\frac{n-\alpha^2}{(\alpha-1)(\alpha+n)}\left((-\alpha)\left(\frac{\alpha\left(n-1\right)}{n-\alpha^2}\right)^{N+1-i} - (1-\alpha) + \left(\frac{\alpha\left(n-1\right)}{n-\alpha^2}\right)^{N+2-i}\right)\\
        &=\frac{1}{\alpha+n}\left((1-\alpha) + \left(\frac{\alpha\left(n-1\right)}{n-\alpha^2}\right)^{N+2-i}\left(\alpha\left(\frac{\alpha\left(n-1\right)}{n-\alpha^2}\right)^{-1} -1\right) \right)\\
        &=\frac{1}{\alpha+n}\left( 1-\alpha + \left( \frac{\alpha\left(n-1\right)}{n-\alpha^2} \right)^{N+2-i}\left( \frac{1-\alpha^2}{n-1} \right) \right).
    \end{align*}
    Let now for simplicity $\vartheta:=\left( \frac{\alpha\left(n-1\right)}{n-\alpha^2} \right)$, then 
    \begin{align*}
        \sum_{i=1}^m\nu_i &= \frac{1}{n+\alpha}\left( (1-\alpha)m+\alpha + \frac{n-\alpha^2}{n(n-1)}\vartheta^{N+1} +\vartheta^N\left( \frac{1-\alpha^2}{n-1}\right)\sum_{i=0}^{m-2}\theta^{-i}\right)\\
        &=\frac{1}{n+\alpha}\left( (1-\alpha)m+\alpha + \frac{n-\alpha^2}{n(n-1)}\vartheta^{N+1} +\vartheta^N\left( \frac{1-\alpha^2}{n-1}\right)\frac{\vartheta^{1-m}-1}{\vartheta-1}\right)\\
        &= \frac{1}{n+\alpha}\left( (1-\alpha)m+\alpha + \vartheta^{N+1}A +\vartheta^{N+1-m}B\right)
    \end{align*}
    Now 
    \begin{align*}
        A&= \frac{1}{\vartheta^{-1}-1}\left( \frac{n-\alpha^2}{n(n-1)}(\vartheta^{-1}-1)-\frac{1-\alpha^2}{n-1}\frac{n-\alpha^2}{\alpha(n-1)} \right)\\
        &= \frac{\cancel{\alpha(n-1)}}{(1-\alpha)(\alpha+n)}\left( \frac{n-\alpha^2}{n(n-1)}\frac{(1-\alpha)(\alpha+n)}{\cancel{\alpha(n-1)}}-\frac{1-\alpha^2}{n-1}\frac{n-\alpha^2}{\cancel{\alpha(n-1)}} \right)\\
        &=\frac{(n-\alpha^2)}{(\alpha+n)n(n-1)}\left( (\alpha+n) - n(1+\alpha) \right)\\
        &=\frac{\alpha^2-n}{n(n+\alpha)}\alpha
    \end{align*}
    and finally
    \begin{align*}
        B&= \left(  \frac{1-\alpha^2}{n-1}\frac{\alpha(n-1)}{(1-\alpha)(\alpha+n)} \right) = \frac{\alpha(1+\alpha)}{n+\alpha}
    \end{align*}}
\end{proof}

\begin{proof}[Proof of Theorem~\ref{strat_asympt_thm_theta_positive}~\ref{conv_of_V} for $\kappa=n-1$]

Recall that $\alpha=e^{-\rho T/N}$. Therefore,
\begin{align*}
    (1-\alpha)n_t
    &= \rho t + \frac{1}{N}\big(\eta_t^{N}\rho T - \tfrac{\rho^2 T t}{2}\big)
       + o\!\left(\frac{1}{N}\right),\\
    \left(\frac{\alpha(n-1)}{n-\alpha^2}\right)^{N+1}
    &= e^{-\rho T\frac{n+1}{n-1}}
       \bigg(1+\frac{1}{N}\Big(- \rho T\frac{n+1}{n-1}
       + \frac{2n\rho^2 T^2}{(n-1)^2}\Big)\bigg)
       + o\!\left(\frac{1}{N}\right),\\
    \left(\frac{\alpha(n-1)}{n-\alpha^2}\right)^{N+1-n_t}
    &= e^{-\rho\frac{n+1}{n-1}(T-t)}
       \bigg(1+\frac{1}{N}\Big(\frac{2n\rho^2 T(T-t)}{(n-1)^2}
       - (1-\eta_t^{N}) \rho T\frac{n+1}{n-1}\Big)\bigg)
       + o\!\left(\frac{1}{N}\right),\\
    \frac{1}{n+\alpha}
    &= \frac{1}{n+1} + \frac{\rho T}{N(n+1)^{2}} + o\!\left(\frac{1}{N}\right),\\
    \alpha
    &= 1 - \frac{\rho T}{N} + o\!\left(\frac{1}{N}\right),\\
     \frac{\alpha(\alpha^{2}-n)}{n(n+\alpha)} &= \frac{1-n}{n(n+1)} + \frac{1}{N} \frac{\rho T(n^{2}-3n-2)}{n(n+1)^{2}} + o\!\left(\frac{1}{N}\right),\\
    \frac{\alpha(1+\alpha)}{n+\alpha}
    &= \frac{2}{n+1}
       - \frac{1}{N} \frac{\rho T(3n+1)}{(n+1)^{2}}
       + o\!\left(\frac{1}{N}\right)
\end{align*}
for all $t\in(0,T]$. Moreover,
\[
({1-\alpha}) (N+1)
= \rho T + \frac{\rho T - \frac{\rho^{2}T^{2}}{2}}{N}
  + o  \tonde{\frac{1}{N}}.
\]

Putting everything together in \eqref{aa} yields
\begin{align}\label{sumnuitotal}
\sum_{i=1}^{N+1}\nu_i
= \frac{e^{-\rho \frac{n+1}{n-1}T}}{(n+1)^2 n}
   \Big( n\big((\rho T+1)(n+1)+2\big)e^{\rho \frac{n+1}{n-1}T} - (n-1)\Big)
   + \mathscr{Q} \frac{1}{N} +o  \tonde{\frac{1}{N}},
\end{align}
where
\begin{align*}
    \mathscr{Q}=\frac{\rho T}{2({n+1})^{3}}
   \tonde{(\rho T)({1-n^{2}}) - 4(n-1)}
  + e^{-\rho T\frac{n+1}{n-1}}\left( \frac{2\rho T(n-1)}{(n+1)^3} - \frac{2\rho^2T^2}{(n+1)^2(n-1)} \right)
\end{align*}
and
\begin{align}\label{sumnuiupto_t}
\sum_{i=1}^{n_t}\nu_i
= \frac{e^{-\rho \frac{n+1}{n-1}T}}{(n+1)^2 n}
  \Big( n(\rho t+1)(n+1)e^{\rho \frac{n+1}{n-1}T}
        - (n-1) + 2n e^{\rho \frac{n+1}{n-1}t}
  \Big)
  + \mathscr{R}_N(t) \frac{1}{N} +o  \tonde{\frac{1}{N}},
\end{align}
where
\[
\mathscr{R}_N(t)
= \frac{1}{n+1} b_1({t,\eta_t^{N}})
  + \frac{\rho T}{({n+1})^{2}} b_0({t})
\]
with
\[
b_0({t})
= \rho t + 1 + \frac{1-n}{n({n+1})} e^{-\rho\frac{n+1}{n-1}T}
  + \frac{2}{n+1} e^{-\rho\frac{n+1}{n-1}({T-t})}
\]
and
\begin{align*}
b_1({t,\eta})
&= \eta \rho T - \frac{\rho^{2}T t}{2} - \rho T
   + e^{-\rho\frac{n+1}{n-1}T}\frac{1-n}{n({n+1})}
     \tonde{-\frac{\rho T({n+1})}{n-1} + \frac{2n\rho^{2}T^{2}}{({n-1})^{2}}}
\\
&\quad
   + e^{-\rho\frac{n+1}{n-1}T} \frac{\rho T({n^{2}-3n-2})}{n({n+1})^{2}}
   + \frac{2}{n+1}e^{-\rho\frac{n+1}{n-1}({T-t})}
     \tonde{\frac{2n\rho^{2}T({T-t})}{({n-1})^{2}}
           - ({1-\eta}) \rho T\frac{n+1}{n-1}}
\\
&\quad
   - \frac{\rho T({3n+1})}{({n+1})^{2}}
     e^{-\rho\frac{n+1}{n-1}({T-t})}.
\end{align*} 
Because $\eta_t^{N}\in[0,1)$ for all $N\in\mathbb{N}$ and $b_1$ depends linearly on $\eta_t^{N}$, the sequence $\mathscr{R}_N(t)$ is bounded for fixed parameters $\theta,T,\rho,n$.

Finally, plugging \eqref{sumnuitotal} and \eqref{sumnuiupto_t} into the definition of $V^{(N)}_t$ and applying \eqref{quotient_expansion} once more yields the claim.
\hide{Indeed, 
\begin{align*}
    \sum_{i=1}^{N+1}\nu_i &\longrightarrow \frac{1}{n+1}\left( \rho T +1 +\frac{1-n}{n(n+1)}e^{-\rho \frac{n+1}{n-1}T} +\frac{2}{n+1} \right)\\
    &=\frac{e^{-\rho \frac{n+1}{n-1}T}}{(n+1)^2n}\Big( n((\rho T +1)(n+1) +2)e^{\rho \frac{n+1}{n-1}T} - (n-1)
    \Big)
\end{align*}
and
\begin{align*}
    \sum_{i=1}^{n_t}\nu_i &\longrightarrow \frac{1}{n+1}\left( \rho t +1 +\frac{1-n}{n(n+1)}e^{-\rho \frac{n+1}{n-1}T} +\frac{2}{n+1}e^{-\rho \frac{n+1}{n-1}(T-t)} \right)\\
    &=\frac{e^{-\rho \frac{n+1}{n-1}T}}{(n+1)^2n} \Big( n(\rho t +1)(n+1)e^{\rho \frac{n+1}{n-1}T}  - (n-1) + 2ne^{\rho \frac{n+1}{n-1}t}
    \Big).
\end{align*}}
\hide{Indeed, by definition of $V^{(N)}_t$, we have 
\begin{align*}
    V^{(N)}_t = 1-\sum_{k=1}^{n_t} v_k \longrightarrow 1-\frac{n(\rho t +1)(n+1)e^{\rho\frac{n+1}{n-1}T}+2ne^{\rho\frac{n+1}{n-1}t}-(n-1)}{n((\rho T+1)(n+1)+2)e^{\rho\frac{n+1}{n-1}T}-(n-1)} = \mathbbm{g}(t)
\end{align*}}
\end{proof}

\subsubsection{Proof for $\kappa\neq n-1$}

We now prepare for the proof of Theorem~\ref{strat_asympt_thm_theta_positive}~\ref{conv_of_V} for the case $\kappa\neq n-1$. We introduce the shorthand 
\[
  [x]^m := \frac{1-\alpha}{\delta_{N+1}} x^m
\]
for $x\in\mathbb{R}$ and $m\in\mathbb{N}$, which is convenient when taking limits of terms such as $\left[x\right]^N$.

\begin{lemma}\label{sum_of_nu_i_lemma}
Let $\kappa \ge \frac{n-1}{2}$ and $\kappa \neq n-1$. Define $\hat{\kappa}:=n-1$ and $C_1  :=  \frac{\alpha(\alpha+1)}{\kappa+1-\alpha\big(\kappa-\hat{\kappa}-1\big)}$.
Then, for $m\in\{1,\dots,N\}$,
\begin{align}\label{ae}
\sum_{i=1}^m \nu_i
&= \sums \frac{d_\sigma\big(m_\sigma-\alpha^2\kappa\big)}{m_\sigma-\alpha\kappa} \big[m_\sigma\big]^N \\
&\quad + (1-\alpha)(m-1) \sums c_\sigma d_\sigma
\left(
  \frac{\alpha\big(\kappa-\hat{\kappa}\big)}{m_\sigma-\alpha\big(\kappa-\hat{\kappa}\big)}
  + \frac{m_\sigma}{m_\sigma-\alpha\kappa}
\right)\big[m_\sigma\big]^N \nonumber \\
&\quad + C_1  \left(1+\sums \frac{c_\sigma m_\sigma\Big(\big(\frac{m_\sigma}{\alpha\kappa}\big)^{m-1}-1\Big)}{m_\sigma-\alpha\kappa}\right)\alpha^N [\kappa]^N \nonumber \\
&\quad + nC_1 \sums \frac{d_\sigma m_\sigma  \left(\frac{\alpha(\kappa-\hat{\kappa})}{m_\sigma}
        - \big(\tfrac{\alpha(\kappa-\hat{\kappa})}{m_\sigma}\big)^m\right)}
      {m_\sigma-\alpha(\kappa-\hat{\kappa})} \big[m_\sigma\big]^N, \nonumber
\end{align}
and
\begin{equation}\label{af}
\nu_{N+1}
= \sums \frac{c_\sigma\big(m_\sigma-\alpha^2(\kappa-\hat{\kappa})\big)}{m_\sigma-\alpha(\kappa-\hat{\kappa})} \big[m_\sigma\big]^N
  + nC_1 \alpha^N [\kappa-\hat{\kappa}]^N .
\end{equation}
\end{lemma}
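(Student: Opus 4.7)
The plan is a direct expansion. I start with the closed forms for $\nu_1$, $\nu_i$ ($2\le i\le N$), and $\nu_{N+1}$ given by Theorem~\ref{omega and nu closed form thm}, substitute the spectral representations $\delta_k=c_+m_+^k+c_-m_-^k$ from Lemma~\ref{lemma1} and $\phi_k=d_+m_+^{N+2-k}+d_-m_-^{N+2-k}$ from Lemma~\ref{lemma2}, evaluate the finite geometric sums in the ratios $m_\sigma/(\alpha\kappa)$ and $m_\sigma/A$ (with $A:=\alpha(\kappa-\hat{\kappa})$), and collect the result in the shorthand $[m_\sigma]^N=\frac{1-\alpha}{\delta_{N+1}}m_\sigma^N$. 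The hypothesis $\kappa\neq n-1$ gives $A\neq 0$, and a short calculation using the Vieta relations $m_+m_-=\alpha^2\kappa(\kappa-\hat{\kappa})$ and $m_++m_-=1+\alpha^2(\kappa-n)+\kappa$ shows that $m_\sigma\notin\{\alpha\kappa,A\}$ whenever $\alpha\in(0,1)$ and $\kappa\ge\tfrac{n-1}{2}$, so none of the geometric denominators degenerate.

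I would dispatch \eqref{af} first as a warm-up. Its three constituents $A^N$, $(1-\alpha)\sum_{j=2}^N A^{N+1-j}\delta_{j-1}$, and $\delta_N$ become, after expanding $\delta$ spectrally and applying
$\sum_{j=2}^N A^{N+1-j}m_\sigma^{j-1}=\frac{Am_\sigma^N-A^Nm_\sigma}{m_\sigma-A}$,
a sum of $m_\sigma^N$- and $A^N$-terms. Using $\alpha A=\alpha^2(\kappa-\hat{\kappa})$, the $m_\sigma^N$-coefficient is exactly the first summand of \eqref{af}. The $A^N$-coefficient is $1-\sum_\sigma\frac{c_\sigma(1-\alpha)m_\sigma}{m_\sigma-A}$; I would reduce it to $nC_1$ by writing it over the common denominator $(m_+-A)(m_--A)$ and invoking the Vieta relations together with the identity $c_+m_++c_-m_-=\delta_1=1-n\alpha^2+\kappa$. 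The numerator collapses to $n\alpha(1+\alpha)$ and the denominator to $(1-\alpha)(\kappa+1-\alpha(\kappa-n))$, producing exactly $nC_1\alpha^N[\kappa-\hat{\kappa}]^N$.

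For the full sum \eqref{ae} the same substitutions applied piecewise ($i=1$ and $2\le i\le N$) generate four families of building blocks: the boundary terms $(\alpha(\kappa-\hat{\kappa}))^{i-1}\phi_{i+1}$ and $(\alpha\kappa)^{N+1-i}\delta_{i-1}$, and the two double sums containing $\delta_{j-1}\phi_{i+1}$ and $\delta_{i-1}\phi_{j+1}$. Each inner $j$-sum collapses to a rational expression in $m_\sigma$, $\alpha\kappa$, and $A$, and summing over $i=1,\dots,m$ produces another layer of geometric series in the same ratios, with the linear $(1-\alpha)(m-1)$ contribution arising from the diagonal $j=i$. Grouping the outcome by $[m_\sigma]^N$ and $\alpha^N[\kappa]^N$ and identifying the scalar $nC_1$ exactly as in the $\nu_{N+1}$ computation reproduces the four lines of \eqref{ae}. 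The principal obstacle is combinatorial bookkeeping rather than any conceptual difficulty: four nested geometric sums must be carried through, and the coefficients of the pairs $\sigma=\pm$ need to be combined using the characteristic identity $m_\sigma^2=(1+\alpha^2(\kappa-n)+\kappa)m_\sigma-\alpha^2\kappa(\kappa-\hat{\kappa})$ from \eqref{careqprincmin} and the explicit formulas for $c_\pm$ and $d_\pm$ to collapse the expression into the compact form stated in the lemma.
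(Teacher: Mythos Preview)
Your approach is essentially the same as the paper's: expand $\delta_{j-1}$ and $\phi_{j+1}$ spectrally, evaluate the inner $j$-sums as geometric series in $m_\sigma/(\alpha\kappa)$ and $m_\sigma/A$, then sum over $i$. The paper organizes this slightly differently by first deriving a compact closed form for each individual $\nu_i$ (for $2\le i\le N$) before summing, and the key simplification there is that the mixed terms $(m_+/m_-)^i$ and $(m_-/m_+)^i$ arising from the cross products $c_\pm d_\mp$ in the two inner sums cancel exactly, thanks to the Vieta identity $m_+m_-=\alpha^2\kappa(\kappa-\hat\kappa)$ rewritten as $A(m_\mp-\alpha\kappa)+m_\mp(m_\pm-A)=0$; once those cross terms are gone, $\nu_i$ consists of an $i$-independent piece (which produces your $(1-\alpha)(m-1)$ line upon summation) plus two geometric-in-$i$ pieces whose prefactors are identified as $nC_1$ and $C_1$ by the same common-denominator computation you describe for $\nu_{N+1}$. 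Your attribution of the constant-in-$i$ term to ``the diagonal $j=i$'' is a slight mischaracterization---it is the $\sigma$-diagonal part of the full geometric closed form, not just the $j=i$ summand---but this does not affect correctness.
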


\begin{proof}
    For $i\in\{3,\dots,N\}$ we have
    \begin{align}\label{ab}
    \sum_{j=2}^{i-1}  (\alpha(\kappa-\hat{\kappa}))^{ i-j}\delta_{j-1}\phi_{i+1}
    &= \alpha(\kappa-\hat{\kappa})\Bigg(
    \sums \frac{c_\sigma d_\sigma}{m_\sigma-\alpha(\kappa-\hat{\kappa})} m_\sigma^{N}
    \\[-1ex]
    &\hspace{2.1cm}
    + \frac{c_+ d_- m_-^{N+1}}{m_+(m_+-\alpha(\kappa-\hat{\kappa}))}
      \left(\frac{m_+}{m_-}\right)^{  i}
    + \frac{c_- d_+ m_+^{N+1}}{m_-(m_--\alpha(\kappa-\hat{\kappa}))}
      \left(\frac{m_-}{m_+}\right)^{  i} \nonumber \\
    &\hspace{2.1cm}
    - \sums \frac{c_\sigma m_\sigma}{m_\sigma-\alpha(\kappa-\hat{\kappa})}
           \sumt \frac{d_\tau m_\tau^{N+1}}{(\alpha(\kappa-\hat{\kappa}))^2}
                 \left(\frac{\alpha(\kappa-\hat{\kappa})}{m_\tau}\right)^{  i}
    \Bigg) \nonumber
    \end{align}
    \hide{Indeed
    \begin{align*}
        \delta_{j-1}\phi_{i+1}&= \sums c_\sigma d_\sigma m_\sigma^Nm_\sigma^{j-i} + c_+m_+^{j-1}d_-m_-^{N+1-i} + c_-m_-^{j-1}d_+m_+^{N+1-i}\\
        &= \sums A_\sigma + B +D
    \end{align*}
    so 
    \begin{align*}
        \sum_{j=2}^{i-1}\sums (\alpha(\kappa-\hat{\kappa}))^{i-j} A_\sigma &= \sums c_\sigma d_\sigma m_\sigma^N\sum_{j=2}^{i-1} \left( \frac{m_\sigma}{\alpha(\kappa-\hat{\kappa})} \right)^{j-i}  \\
        &= (\alpha(\kappa-\hat{\kappa})) \sums \frac{c_\sigma d_\sigma}{m_\sigma - \alpha(\kappa-\hat{\kappa})} m_\sigma^N  \left(1-\left( \frac{m_\sigma}{\alpha(\kappa-\hat{\kappa})} \right)^{2-i}\right)
    \end{align*}
    and
    \begin{align*}
        \sum_{j=2}^{i-1} (\alpha(\kappa-\hat{\kappa}))^{i-j}B &= c_+d_-m_-^{N+1-i}m_+^{-1}(\alpha(\kappa-\hat{\kappa}))^{i}\sum_{j=2}^{i-1}\left(\frac{m_+}{(\alpha(\kappa-\hat{\kappa}))}\right)^{j}\\
        &= (\alpha(\kappa-\hat{\kappa}))\left(\frac{c_+d_-m_-^{N+1}}{m_+(m_+-\alpha(\kappa-\hat{\kappa}))} \left(\frac{m_+}{m_-} \right)^i +\right. \\
        &\quad \left. - \frac{c_+m_+}{m_+-\alpha(\kappa-\hat{\kappa})}\frac{d_-m_-^{N+1}}{(\alpha(\kappa-\hat{\kappa}))^2}\left(\frac{\alpha(\kappa-\hat{\kappa})}{m_-}\right)^i \right)
    \end{align*}
    finally for $D$ is the same as before with $+$ and $-$ inverted.\\
    }
    and
    \begin{align}\label{ac}
    \sum_{j=i}^{N}  (\alpha\kappa)^{ j-i}\delta_{i-1}\phi_{j+1}
    &= \sums \frac{c_\sigma d_\sigma}{m_\sigma-\alpha\kappa} m_\sigma^{N+1}
     + \frac{c_+ d_- m_-^{N+2}}{m_+\big(m_--\alpha\kappa\big)}
       \left(\frac{m_+}{m_-}\right)^{  i}
     + \frac{c_- d_+ m_+^{N+2}}{m_-\big(m_+-\alpha\kappa\big)}
       \left(\frac{m_-}{m_+}\right)^{  i}
    \\[-1ex]
    &\quad
    - \sums \frac{d_\sigma m_\sigma}{m_\sigma-\alpha\kappa}
           \sumt \frac{c_\tau (\alpha\kappa)^{N+1}}{m_\tau}
                 \left(\frac{m_\tau}{\alpha\kappa}\right)^{  i}. \nonumber
    \end{align}
    Using
    \[
        \alpha(\kappa-\hat{\kappa})(m_--\alpha\kappa)+m_-\big(m_+-\alpha(\kappa-\hat{\kappa})\big)
        = \alpha(\kappa-\hat{\kappa})(m_+-\alpha\kappa)+m_+\big(m_--\alpha(\kappa-\hat{\kappa})\big)
        = m_+m_- - \alpha^2\kappa(\kappa-\hat{\kappa}) = 0,
    \]
    the second and third terms in \eqref{ab} and \eqref{ac} cancel. After simplification we obtain, for $i\in\{2,\dots,N\}$,
    \begin{align}\label{expl_form_nu_i}
    \nu_i
    &= (1-\alpha) \sums c_\sigma d_\sigma
    \left(
      \frac{\alpha(\kappa-\hat{\kappa})}{m_\sigma-\alpha(\kappa-\hat{\kappa})}
      + \frac{m_\sigma}{m_\sigma-\alpha\kappa}
    \right) [m_\sigma]^N
    \\
    &\quad
    + nC_1 \sums \frac{d_\sigma m_\sigma}{\alpha(\kappa-\hat{\kappa})} [m_\sigma]^N
      \left(\frac{\alpha(\kappa-\hat{\kappa})}{m_\sigma}\right)^{  i}
    + C_1 \sums \frac{c_\sigma \alpha^{N+1}\kappa}{m_\sigma} [\kappa]^N
      \left(\frac{m_\sigma}{\alpha\kappa}\right)^{  i}. \nonumber
    \end{align}
    \hide{To see the last calculation, we just compute
    \begin{align*}
         C_1 &= 1+(\alpha-1)\left(\frac{d_+m_+}{m_+ -\alpha\kappa} + \frac{d_-m_-}{m_- -\alpha\kappa}\right)\\
         &=(\alpha-1)\left(\frac{m_+m_--\alpha\kappa(1-\alpha^2+\kappa)}{m_+m_- -\alpha\kappa(m_++m_-) +\alpha^2\kappa^2}\right) +1\\
         &= (\alpha-1)\left(\frac{\alpha^2\kappa(\kappa-\hat{\kappa})-\alpha\kappa(1-\alpha^2+\kappa)}{\alpha^2\kappa(2\kappa-\hat{\kappa}) -\alpha\kappa(1+\alpha^2(\kappa-n)+\kappa)}\right) +1\\
         &=\frac{(\alpha-1)(1+\alpha)\alpha}{\alpha(2\kappa-\hat{\kappa})-1-\alpha^2(\kappa-n)-k}\\
         &=\frac{(1+\alpha)\alpha}{\kappa +1 -\alpha(\kappa - \hat{\kappa}-1)}
    \end{align*}
    for the penultimate calculation, we just compute
    \begin{align*}
        &1+(\alpha-1)\left( \frac{c_+m_+}{m_+-\alpha(\kappa-\hat{\kappa})} + \frac{c_-m_-}{m_--\alpha(\kappa-\hat{\kappa})} \right)\\
        &=1+(\alpha-1)\left( \frac{m_+m_- -\alpha(\kappa-\hat{\kappa})(1-n\alpha^2+\kappa)}{m_+m_--\alpha(\kappa-\hat{\kappa})(1+\alpha^2(\kappa-n)+\kappa)+\alpha^2(\kappa-\hat{\kappa})} \right)\\
        &=1 +(\alpha-1) \frac{\alpha\kappa - (1-n\alpha^2 + \kappa)}{\alpha\kappa - (1+\alpha^2(\kappa-n)+\kappa)+\alpha(\kappa-\hat{\kappa})} \\ 
        &=\frac{-\alpha(1-n\alpha^2+\kappa) + \alpha(\kappa-\hat{\kappa})}{(\alpha-1)(\kappa+1-\alpha(\kappa-\hat{\kappa}-1))} = nC_1
    \end{align*}}
    Similar computations give
    \begin{align}
        \nu_1
        &= \sums \frac{d_\sigma(m_\sigma-\alpha^2\kappa)}{m_\sigma-\alpha\kappa} [m_\sigma]^N
           + C_1 \alpha^N [\kappa]^N, \label{nu_one_expl_form}\\
        \nu_{N+1}
        &= \sums \frac{c_\sigma(m_\sigma-\alpha^2(\kappa-\hat{\kappa}))}{m_\sigma-\alpha(\kappa-\hat{\kappa})} [m_\sigma]^N
           + nC_1 \alpha^N [\kappa-\hat{\kappa}]^N. \label{nu_N+1_expl_form}
    \end{align}
    Finally, for $m\in\{2,\dots,N\}$,
    \[
    \sum_{i=2}^{m} \sums \frac{d_\sigma m_\sigma}{\alpha(\kappa-\hat{\kappa})} [m_\sigma]^N
    \left(\frac{\alpha(\kappa-\hat{\kappa})}{m_\sigma}\right)^{  i}
    = \sums \frac{d_\sigma m_\sigma\Big(\big(\frac{m_\sigma}{\alpha(\kappa-\hat{\kappa})}\big)^{N-1}
      - \big(\frac{m_\sigma}{\alpha(\kappa-\hat{\kappa})}\big)^{N-m}\Big)}
     {m_\sigma-\alpha(\kappa-\hat{\kappa})} \alpha^N [\kappa-\hat{\kappa}]^N,
    \]
    and
    \[
    \sum_{i=2}^{m} \sums \frac{c_\sigma \alpha^{N+1}\kappa}{m_\sigma} [\kappa]^N  
    \left(\frac{m_\sigma}{\alpha\kappa}\right)^{  i}
    = \sums \frac{c_\sigma m_\sigma\Big(\big(\frac{m_\sigma}{\alpha\kappa}\big)^{m-1}-1\Big)}
     {m_\sigma-\alpha\kappa} \alpha^N [\kappa]^N,
    \]
    which, together with \eqref{expl_form_nu_i}, \eqref{nu_one_expl_form}, and \eqref{nu_N+1_expl_form}, proves the claim.
\end{proof}

The next lemma collects the limiting behavior of all quantities that appear in the derivation of the limiting strategy and, subsequently, the limiting costs. In addition, for the case $\theta>0$ (equivalently, $\kappa>\hat{\kappa}/2$) we record first or second-order Taylor expansions used to compute the pointwise convergence rate of the strategies.

For a sequence $\tonde{a_N}_{N\in\mathbb{N}}$ and a real number $a$, we use the shorthand
\[
({a_N})^{ n_t}\to \pm a
\quad:\Longleftrightarrow\quad
({a_N})^{ n_t}=({-1})^{n_t} \lvert a_N\rvert^{ n_t}
\ \text{ and }\ 
\lim_{N\to\infty}\lvert a_N\rvert^{ n_t}=a.
\]
Recall that $\Delta=({1-\alpha})$ and note that $\Delta\to0$ as $N\to\infty$. When convenient, we express expansions in the variable $\Delta$; see also the discussion at the beginning of Appendix~\ref{sec_3_proofs_appendix}.

\begin{lemma}\label{lemma_asympt_rate_various_elements}
    For $\kappa>\frac{\hat{\kappa}}{2}$ and $\kappa\neq\hat{\kappa}$, we have the following Taylor expansions as $N\uparrow\infty$. 
    \begin{enumerate}[label = $\mathrm{(\alph*)}$]
        \itemsep1em
        \item 
        \begin{flalign*}
            &\hspace{1.23cm}\begin{aligned}
                &\alpha&&=1-\frac{\rho T}{N}+\frac{\rho^2T^2}{2N^2}+o  \left(\frac{1}{N^2}\right),
                \\
                &\alpha^{n_t}&&=e^{-\frac{\rho T}{N}\left(\frac{Nt}{T}+\eta^N_t\right)}
                =e^{-\rho t}  \left(1-\frac{\rho T \eta^N_t}{N}+\frac{(\rho T \eta^N_t)^2}{2N^2}\right)
                +o  \left(\frac{1}{N^2}\right).
            \end{aligned}&&
        \end{flalign*}
        \item\label{lemma_asympt_rate_various_elements:label_b}
        \begin{flalign*}
            &\hspace{1.23cm}\begin{aligned}
                &R      &&=\hat{\kappa}
                +\Delta 2\tonde{\kappa\frac{n+1}{\hat{\kappa}}-n}
                +\Delta^2\frac{2\kappa^2+3n^2-5\kappa n-\kappa-n-2({\kappa\frac{n+1}{\hat{\kappa}}-n})^2}{\hat{\kappa}}
                +o  ({\Delta^2}), 
                \\
                &c_+    &&=\Delta\frac{2\kappa n}{\hat{\kappa}^2}
                -\Delta^2 \kappa n\frac{4\kappa n+8\kappa-3n^2-2n+5}{\hat{\kappa}^4}
                +o  ({\Delta^2}),
                \\
                &c_-    &&=1-\Delta\frac{2\kappa n}{\hat{\kappa}^2}
                +\Delta^2 \kappa n\frac{4\kappa n+8\kappa-3n^2-2n+5}{\hat{\kappa}^4}
                +o  ({\Delta^2}),
                \\
                &d_+    &&=1-\Delta\frac{2}{\hat{\kappa}^2}({\kappa-\hat{\kappa}})
                +\Delta^2\frac{3\ka^3-11\ka^2\kappa+4\ka^2+8\ka\kappa^2-16\ka\kappa+12\kappa^2}{\ka^4}
                +o  ({\Delta^2}),
                \\
                &d_-    &&=\Delta\frac{2}{\ka^2}({\kappa-\ka})
                +\Delta^2\frac{-3\ka^3+11\ka^2\kappa-4\ka^2-8\ka\kappa^2+16\ka\kappa-12\kappa^2}{\ka^4}
                +o  ({\Delta^2}),
                \\
                &m_+    &&=\kappa+\Delta\frac{2\kappa}{\hat{\kappa}}
                +\Delta^2\kappa\frac{3\ka^2-4\ka\kappa+4\ka-4\kappa}{\ka^3}
                +o  ({\Delta^2}),
                \\
                &m_-    &&=\kappa-\hat{\kappa}+\Delta\frac{2n({\ka-\kappa})}{\ka}
                +\Delta^2 n\frac{-\ka^3+4\kappa^2+\kappa n^2-6\kappa n+5\kappa}{\ka^3}
                +o  ({\Delta^2}).
            \end{aligned}&&
        \end{flalign*}
        \item\label{lemma_asympt_rate_various_elements:label_c}
        \begin{flalign*}
            &\hspace{1.23cm}\begin{aligned}
                &\frac{c_+}{m_+-\kappa}                     &&=\frac{n}{\ka}
                +\Delta\frac{2n({\ka-2\kappa})}{\ka^3}
                +o  ({\Delta}),
                \\
                &\frac{c_+}{m_+-\alpha\kappa}               &&=\frac{2n}{\ka({n+1})}
                +\Delta n \frac{-4\kappa n^2-4\kappa n-8\kappa+3n^3-n^2+n-3}{({n+1})^2\ka^3}
                +o  ({\Delta}),
                \\
                &\frac{c_+}{m_+-\alpha^2\kappa}             &&=\frac{1}{n-1}
                +\Delta \frac{-2\kappa-n(2\kappa-\ka)+n+\ka^2-1}{\ka^3}
                +o({\Delta}),
                \\
                &\frac{c_+}{1-\alpha^2}                     &&=\frac{\kappa n}{\ka^2}
                -\Delta 2\kappa n \frac{\kappa n+2\kappa-n^2+1}{\ka^4}
                +o({\Delta}).
            \end{aligned}&&
        \end{flalign*}
        \item\label{lemma_asympt_rate_various_elements:label_d}
        \begin{flalign*}
            &\hspace{1.23cm}\begin{aligned}
                &\frac{d_-}{m_--({\kappa-\ka})}                &&=-\frac{1}{n\ka}
                +\Delta \frac{8\theta}{({n-1})^3}
                +o({\Delta}),
                \\
                &\frac{d_-}{m_--\alpha({\kappa-\ka})}          &&=-\frac{2}{(n+1)\ka}
                +\Delta \frac{8\kappa n^2+4\kappa n+4\kappa-5n^3+3n^2+n+1}{\ka^3(n+1)^2}
                +o({\Delta}),
                \\
                &\frac{d_-}{m_--\alpha^2(\kappa-\ka)}        &&=-\frac{1}{\ka}
                +\Delta \frac{2\kappa+n(2\kappa-\ka)-n-\ka^2+1}{\ka^3}
                +o({\Delta}),
                \\
                &\frac{d_-}{1-\alpha^2}                            &&=\frac{\kappa-\ka}{\ka^2}
                +\Delta 2 \frac{-\ka^3+3\ka^2\kappa-\ka^2-2\ka\kappa^2+4\ka\kappa-3\kappa^2}{\ka^4}
                +o({\Delta}).
            \end{aligned}&&
        \end{flalign*}
    \end{enumerate}
    \noindent Fix $t\in(0,T]$ and recall $\eta_T^N=0$ for all $N\in\N$. If $\kappa>\hat{\kappa}/2$, the following expansions hold.
	\begin{enumerate}[label = $\mathrm{(\alph*)}$]
	   \setcounter{enumi}{4}
       \item\label{lemma_asympt_rate_various_elements:label_e}
       \begin{flalign*}
            &\hspace{1.23cm}\begin{aligned}
                (1-\alpha) n_t=\rho t+\frac{\rho T}{N}\tonde{\eta^N_t-\frac{\rho t}{2}}
                +o  \tonde{\frac{1}{N}}.
            \end{aligned}&&
        \end{flalign*}
		\item\label{lemma_asympt_rate_various_elements:label_f} 
        \begin{flalign*}
            &\hspace{1.23cm}\begin{aligned}
                &\tonde{\frac{\kappa-\ka}{\kappa}}^{n_t}    &&=o  \tonde{\frac{1}{N}}
                \quad\text{and more generally}\quad
                \tonde{\frac{\kappa-\ka}{\kappa}}^{n_t}=o  \tonde{\frac{1}{N^p}},\ \forall p\in\N,
                \\
                &\tonde{\frac{m_+}{\kappa}}^{n_t}           &&=\exp  \tonde{\frac{2\rho t}{n-1}}
                \tonde{1+\frac{\rho T}{N}\tonde{-\rho t \frac{8n\theta}{(n-1)^3}+\frac{2}{n-1}\eta^N_t}}
                +o  \tonde{\frac{1}{N}},
                \\
                &\tonde{\frac{\kappa-\ka}{m_+}}^{n_t}       &&=o  \tonde{\frac{1}{N}}
                \quad\text{and more generally}\quad
                \tonde{\frac{\kappa-\ka}{m_+}}^{n_t}=o  \tonde{\frac{1}{N^p}},\ \forall p\in\N,
                \\
                &\tonde{\frac{m_-}{\kappa}}^{n_t}           &&=o  \tonde{\frac{1}{N}}
                \quad\text{and more generally}\quad
                \tonde{\frac{m_-}{\kappa}}^{n_t}=o  \tonde{\frac{1}{N^p}},\ \forall p\in\N,
                \\
                &\tonde{\frac{\kappa-\ka}{m_-}}^{n_t}       &&=\exp  \tonde{\frac{2n\rho t}{n-1}}
                \tonde{1+\frac{\rho T n}{N}\tonde{-\rho t \frac{8\theta}{(n-1)^3}+\frac{2}{n-1}\eta^N_t}}
                +o  \tonde{\frac{1}{N}}.
            \end{aligned}&&
        \end{flalign*}
        \item\label{lemma_asympt_rate_various_elements:label_g}
        \begin{flalign*}
            &\hspace{1.23cm}\begin{aligned}
                &\quadre{m_+}^N         &&=\frac{\ka}{2\kappa n}  \left(1+\Delta \frac{\ka^2-8\ka+8\kappa}{2\ka^2}\right)
                +o  \tonde{\Delta},
                \\
                &[m_-]^N         &&=o  \tonde{\frac{1}{N}}
                \quad\text{and more generally}\quad
                [m_-]^N=o  \tonde{\frac{1}{N^p}},\ \forall p\in\N,
                \\
                &\quadre{\kappa}^N      &&=\exp  \tonde{\frac{-2\rho T}{\ka}} 
                \frac{\ka}{2\kappa n}  \left(1+\frac{\rho T}{N}\tonde{\rho T \frac{8n\theta}{\ka^3}+\frac{\ka^2-8\ka+8\kappa}{2\ka^2}}\right)
                +o  \tonde{\frac{1}{N}},
                \\
                &[\kappa-\ka]^N  &&=o  \tonde{\frac{1}{N}}
                \quad\text{and more generally}\quad
                [\kappa-\ka]^N=o  \tonde{\frac{1}{N^p}},\ \forall p\in\N.
            \end{aligned}&&
        \end{flalign*}
        \item\label{lemma_asympt_rate_various_elements:label_h}
        \begin{flalign*}
            &\hspace{1.23cm}\begin{aligned}
                &\frac{\tonde{\frac{\kappa-\ka}{\kappa}}^N}{1-\alpha^2}     &&=o  \tonde{\frac{1}{N}}
                \quad\text{and more generally}\quad
                \frac{\tonde{\frac{\kappa-\ka}{\kappa}}^N}{1-\alpha^2}=o  \tonde{\frac{1}{N^p}},\ \forall p\in\N,
                \\
                &\frac{[m_-]^N}{1-\alpha^2}                          &&=o  \tonde{\frac{1}{N}}
                \quad\text{and more generally}\quad
                \frac{[m_-]^N}{1-\alpha^2}=o  \tonde{\frac{1}{N^p}},\ \forall p\in\N,
                \\
                &\frac{[\kappa-\ka]^N}{1-\alpha^2}                   &&=o  \tonde{\frac{1}{N}}
                \quad\text{and more generally}\quad
                \frac{[\kappa-\ka]^N}{1-\alpha^2}=o  \tonde{\frac{1}{N^p}},\ \forall p\in\N.
            \end{aligned}&&
        \end{flalign*}
	\end{enumerate}
    \noindent If, on the other hand, $\kappa=\hat{\kappa}/2$, then the preceding limits no longer hold. Instead, we have:	
    \begin{enumerate}[label={\normalfont(\alph*')}]
    	\itemsep1em
    	\setcounter{enumi}{5}
    		\item\label{f'} $\left(\frac{\kappa-\hat{\kappa}}{\kappa}\right)^{n_t}\to{\pm1},\quad
        \left(\frac{m_+}{\kappa}\right)^{n_t}\to e^{\frac{2\rho t}{n-1}},\quad
        \left(\frac{\kappa-\hat{\kappa}}{m_+}\right)^{n_t}\to{\pm e^{-\frac{2\rho t}{n-1}}},\quad
        \left(\frac{m_-}{\kappa}\right)^{n_t}\to{\pm e^{-\frac{2n\rho t}{n-1}}}$,
        \\
        and $\left(\frac{\kappa-\hat{\kappa}}{m_-}\right)^{n_t}\to e^{\frac{2n\rho t}{n-1}}$;
    	\item\label{g'} $[m_+]^{2N}\to\frac{1}{e^{-2\frac{n+1}{n-1}\rho T}+n},\quad
        [m_-]^{2N}\to\frac{1}{ne^{2\frac{n+1}{n-1}\rho T}+1},\quad
        [\kappa]^{2N}\to\frac{e^{\frac{2n\rho T}{n-1}}}{ne^{2\frac{n+1}{n-1}\rho T}+1},\quad
        [\kappa-\hat{\kappa}]^{2N}\to\frac{e^{\frac{2n\rho T}{n-1}}}{ne^{2\frac{n+1}{n-1}\rho T}+1}$,
        \\
        $[m_+]^{2N+1}\to\frac{1}{-e^{-2\frac{n+1}{n-1}\rho T}+n},\quad
        [m_-]^{2N+1}\to\frac{1}{-ne^{2\frac{n+1}{n-1}\rho T}+1},\quad
        [\kappa]^{2N+1}\to\frac{e^{\frac{2n\rho T}{n-1}}}{ne^{2\frac{n+1}{n-1}\rho T}-1}$,
        \\
        and $[\kappa-\hat{\kappa}]^{2N+1}\to\frac{e^{\frac{2n\rho T}{n-1}}}{-ne^{2\frac{n+1}{n-1}\rho T}+1}$;
    	\item\label{h'} $\frac{m_++\kappa-\hat{\kappa}}{m_++\alpha(\kappa-\hat{\kappa})} \to  {\frac{2}{n+1}}, \quad  \frac{m_-+\alpha^2\kappa}{m_-+\alpha\kappa} \to  {\frac{2}{n+1}}$, \ and \ $\frac{\kappa+\alpha(\kappa-\hat{\kappa})}{1-\alpha^2} \to  {\frac{n-1}{4}}$.
    \end{enumerate}
\end{lemma}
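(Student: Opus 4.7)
The overall strategy is to reduce every assertion to Taylor expansions in the small parameter $\Delta = 1-\alpha$, exploiting that all quantities involved are real-analytic in $\alpha$ near $\alpha = 1$. Part~(a) is the expansion of $\alpha = e^{-\rho T/N}$, and part~(e) follows from this together with $n_t = Nt/T + \eta_t^{N}$. For part~(b), each of $R$, $c_\pm$, $d_\pm$, $m_\pm$ is an explicit algebraic function of $\alpha$ defined in Lemma~\ref{lemma1}. Substituting $\alpha = 1$ into the radicand of $R$ gives $R|_{\alpha=1} = \ka$, which, together with $m_\pm|_{\alpha=1} = \kappa$ and $\kappa-\ka$, fixes the base points; the first and second Taylor coefficients in $\Delta$ then follow by the chain rule applied to the formulas of Lemma~\ref{lemma1}. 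Parts~(c) and~(d) are immediate by repeated use of the quotient-expansion rule~\eqref{quotient_expansion} applied to the expansions produced in~(b).

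For parts~(f)--(h), the common technique is to write $a_N^{n_t} = \exp(n_t \log a_N)$ and combine with $n_t\Delta = \rho t + O(1/N)$ from part~(e). Any base $a_N$ whose limit is bounded away from $1$ in absolute value produces $a_N^{n_t}$ decaying exponentially in $N$, and hence is $o(N^{-p})$ for every $p$; this immediately yields the ``$o(N^{-p})$ for all $p$'' assertions in~(f), since $\kappa > \ka/2$ implies $|\kappa-\ka|/\kappa < 1$ and similarly for $(\kappa-\ka)/m_+$ and $m_-/\kappa$. For the entries that tend to a nonzero limit, expand $\log a_N$ to first order in $\Delta$ using the coefficients from~(b). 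Part~(g) rests on $[m_\sigma]^N = (1-\alpha) m_\sigma^N/\delta_{N+1}$: using the closed form $\delta_k = c_+ m_+^k + c_- m_-^k$ of Lemma~\ref{lemma1} for $k \le N$ together with the boundary-row identity $\delta_{N+1} = (1-\alpha^2+\kappa)\delta_N - \alpha^2\kappa(\kappa+1-n)\delta_{N-1}$ expresses $\delta_{N+1}$ as a linear combination of $m_\pm^{N-1}, m_\pm^N, m_\pm^{N+1}$; dividing numerator and denominator by the dominant $m_+^{N+1}$ and invoking $(m_-/m_+)^N = o(N^{-p})$ reduces each limit to a finite computation in the constants of~(b)--(c)--(d). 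Part~(h) then follows from~(g) via $1-\alpha^2 = (1+\alpha)\Delta$.

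The genuinely delicate portion is the critical case $\kappa = \ka/2$ (equivalently $\theta = 0$) addressed in~(f')--(h'). Direct substitution at $\alpha = 1$ yields $m_+ = \kappa$ and $m_- = -\kappa$, so $|m_-/m_+| = 1$ and the dominated-convergence argument of the non-critical case collapses. Writing $m_-/m_+ = -1 + \Delta(n+1)/\kappa + O(\Delta^2)$, one finds $(m_-/m_+)^{2N}$ converges while $(m_-/m_+)^{2N+1}$ converges to the negative of that value, which explains both the $\pm$ notation of~(f') and the subsequence-dependent limits of~(g'). To evaluate these limits, one has to carry the $\Delta$-expansions of $c_\pm, m_\pm$ one order higher than in the non-critical case, because the leading-order combination $c_+ m_+ + c_- m_-$ cancels to $0$ at $\alpha = 1$ and only the first $\Delta$-correction survives in the denominator after dividing by the dominant factor. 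The principal obstacle throughout the lemma is the sheer volume of bookkeeping; the only real conceptual care is required here, where subleading terms become leading after cancellation and sign tracking across even/odd $N$ must be executed exactly.
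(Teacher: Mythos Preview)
Your proposal is correct and follows essentially the same route as the paper: Taylor-expand the closed-form quantities of Lemma~\ref{lemma1} in $\Delta$, propagate through quotients via~\eqref{quotient_expansion}, handle powers through $a_N^{n_t}=\exp(n_t\log a_N)$ combined with $n_t\Delta\to\rho t$, and treat $[m_+]^N$ by dividing $\delta_{N+1}$ through by the dominant $m_+$-power. The only cosmetic difference is that for the critical case $\kappa=\ka/2$ the paper appeals to L'H\^opital's rule where you describe carrying the $\Delta$-expansions one order further to resolve the $0/0$ cancellations; these are of course equivalent.
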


\begin{proof}
    We start with \ref{lemma_asympt_rate_various_elements:label_b}. Let
    \begin{align*}
        R
        &=\sqrt{\alpha^4(\kappa-n)^2-2\alpha^2\big(\kappa(\kappa+1)+n(1-\kappa)\big)+(\kappa+1)^2}
        \\
        &=\hat{\kappa}
        +\Delta 2\tonde{\kappa\frac{n+1}{\hat{\kappa}}-n}
        +\Delta^2\frac{2\kappa^2+3n^2-5\kappa n-\kappa-n-2({\kappa\frac{n+1}{\hat{\kappa}}-n})^2}{\hat{\kappa}}
        +o  ({\Delta^2}),
        \end{align*}
    for $c_+$, set $R=\hat{\kappa}+\Delta L_R+\Delta^2 B_R+o(\Delta^2)$ and compute
    \begin{align*}
        c_+
        &= \frac{1-(1-\Delta)^2(\kappa+n)+\kappa+R}{2R}\\
        &= \frac{\Delta(2(\kappa+n)+L_R)
           + \Delta^2(B_R-(\kappa+n))
           + o(\Delta^2)}
           {2\hat{\kappa}(1+\Delta L_R/\hat{\kappa}
           + \Delta^2 B_R/\hat{\kappa}
           + o(\Delta^2))}\\
        &= \Delta\,\frac{2\kappa n}{\hat{\kappa}^2}
           - \Delta^2\,\kappa n
             \frac{4\kappa n+8\kappa-3n^2-2n+5}{\hat{\kappa}^4}
           + o(\Delta^2).
    \end{align*}
    Analogous expansions yield $d_\pm$ and $m_\pm$.
    For \ref{lemma_asympt_rate_various_elements:label_c}, write
    \[
        c_+ = \Delta L_{c_+} + \Delta^2 B_{c_+} + o(\Delta^2),
        \qquad
        m_+ - \kappa = \Delta L_{m_+} + \Delta^2 B_{m_+} + o(\Delta^2),
    \]
    and compute
    \begin{align*}
        \frac{c_+}{m_+ - \kappa}
        &= \frac{L_{c_+} + \Delta B_{c_+} + o(\Delta)}
           {L_{m_+}(1 + \Delta B_{m_+}/L_{m_+} + o(\Delta))}= \left(\frac{n}{\hat{\kappa}}
           + \Delta \frac{\hat{\kappa}}{2\kappa} B_{c_+}
           + o(\Delta)\right)
           \left(1 - \Delta \frac{\hat{\kappa}}{2\kappa} B_{m_+}
           + o(\Delta)\right)\\
        &= \frac{n}{\hat{\kappa}}
           + \Delta\left(-\frac{n}{2\kappa} B_{m_+}
           + \frac{\hat{\kappa}}{2\kappa} B_{c_+}\right)
           + o(\Delta)
        = \frac{n}{\hat{\kappa}}
           + \Delta \frac{2n(\hat{\kappa} - 2\kappa)}{\hat{\kappa}^3}
           + o(\Delta).
    \end{align*}
    The remaining ratios in
    \ref{lemma_asympt_rate_various_elements:label_c}--\ref{lemma_asympt_rate_various_elements:label_d}
    follow similarly. Item \ref{lemma_asympt_rate_various_elements:label_e} and the limits in
    \ref{lemma_asympt_rate_various_elements:label_f} are obtained by the same ideas used in the proof of Theorem~\ref{strat_asympt_thm_theta_positive}\ref{conv_of_W}.
    For \ref{lemma_asympt_rate_various_elements:label_g}, note that
    \[
    [m_+]^{N}
    = \frac{\Delta}{\delta_{N+1}} m_+^N
    = \frac{\Delta}{
        c_+\left(1-\alpha^2+\kappa-\frac{\alpha^2\kappa(\kappa-\hat{\kappa})}{m_+}\right)
        + c_-\left(1-\alpha^2+\kappa-\frac{\alpha^2\kappa(\kappa-\hat{\kappa})}{m_-}\right)
          \left(\frac{m_-}{m_+}\right)^{N}
      }.
    \]
    Expanding the denominator in $\Delta$ and observing that the second term decays faster than any power of $1/N$, we only need the expansion of $c_+\Big(1-\alpha^2+\kappa-\frac{\alpha^2\kappa({\kappa-\hat{\kappa}})}{m_+}\Big)$.
    Writing
    \[
    c_+ = \Delta a_1 + \Delta^2 a_2 + o(\Delta^2),
    \qquad
    m_+ = \kappa + \Delta b_1 + \Delta^2 b_2 + o(\Delta^2),
    \]
    we compute
    \begin{align*}
        &c_+\left(1-\alpha^2+\kappa-\frac{\alpha^2\kappa(\kappa-\hat{\kappa})}{m_+}\right)
        = \Delta(a_1 + \Delta a_2 + o(\Delta))
          \left(\hat{\kappa}
          + \Delta\big((\kappa-\hat{\kappa})(\hat{b}+2) + 2\big)
          + o(\Delta)\right),
    \end{align*}
    where $\hat{b} = b_1/\kappa$.
    After some algebra, we arrive at
    \[
        c_+\left(1-\alpha^2+\kappa-\frac{\alpha^2\kappa(\kappa-\hat{\kappa})}{m_+}\right)
        = \Delta\left(\frac{2\kappa n}{\hat{\kappa}}
        + \Delta\,\kappa n\,\frac{10\hat{\kappa}-8\kappa-n^2+1}{\hat{\kappa}^3}
        + o(\Delta)\right),
    \]
    thus 
    \begin{align*}
        [m_+]^N
        &= \frac{\Delta}{
            \Delta\left(
              \frac{2\kappa n}{\hat{\kappa}}
              + \Delta\kappa n\,\frac{10\hat{\kappa}-8\kappa-n^2+1}{\hat{\kappa}^3}
              + o(\Delta)
            \right)
          }
        = 
        \frac{\hat{\kappa}}{2\kappa n}
           \left(1+\Delta\,\frac{\hat{\kappa}^2-8\hat{\kappa}+8\kappa}{2\hat{\kappa}^2}
           + o(\Delta)\right).
    \end{align*}
    The remaining expansions in
\ref{lemma_asympt_rate_various_elements:label_g}--\ref{lemma_asympt_rate_various_elements:label_h} follow analogously.
    Items \ref{f'}--\ref{h'} follow by L'Hôpital's rule and straightforward algebra. %
\end{proof}

\hide{\begin{lemma}\label{lemma4}
For $\kappa\ge\hat{\kappa}/2$ and  $\kappa \neq \hat{\kappa}$, we have the following limits for $N\uparrow\infty$.
	\begin{enumerate}
	\itemsep1em
		\item $\alpha \to 1$ and $\alpha^{n_t} \to e^{-\rho t}$;
		\item $R \to \hat{\kappa},\quad  c_+ \to 0,\quad  c_- \to 1,\quad  d_+ \to 1,\quad  d_- \to 0,\quad  m_+ \to \kappa$, \  and \  $m_- \to \kappa-\hat{\kappa}$;
		\item $\frac{c_+}{m_+-\kappa} \to  {\frac{n}{n-1}}, \quad \frac{c_+}{m_+-\alpha\kappa} \to  {\frac{2n}{(n+1)(n-1)}}, \quad  \frac{c_+}{m_+-\alpha^2\kappa} \to  {\frac{1}{n-1}}$, \ and \ $\frac{c_+}{1-\alpha^2} \to  {\frac{n\kappa}{(n-1)^2}}$;
		\item $\frac{d_-}{m_--(\kappa-\hat{\kappa})} \to  {-\frac{1}{n(n-1)}},\quad  \frac{d_-}{m_--\alpha(\kappa-\hat{\kappa})} \to  {-\frac{2}{(n+1)(n-1)}}, \quad \frac{d_-}{m_--\alpha^2(\kappa-\hat{\kappa})} \to  {-\frac{1}{n-1}}$, \ and \ $\frac{d_-}{1-\alpha^2} \to  {\frac{k-\hat{\kappa}}{(n-1)^2}}$;
		\item $\left(1-\alpha\right)n_t \to \rho t$. 
	\end{enumerate}
\noindent	If additionally $\kappa > \hat{\kappa}/2$, then also the following limits are true.
	\begin{enumerate}
		\itemsep1em
	\setcounter{enumi}{5}
		\item $\left(\frac{\kappa-\hat{\kappa}}{\kappa}\right)^{n_t}\to  {0},\quad  \left(\frac{m_+}{\kappa}\right)^{n_t} \to  {e^{\frac{2\rho t}{n-1}}}, \quad \left(\frac{\kappa-\hat{\kappa}}{m_+}\right)^{n_t} \to  {0}, \quad  \left(\frac{m_-}{\kappa}\right)^{n_t} \to  {0}$, \ and \ $\left(\frac{\kappa-\hat{\kappa}}{m_-}\right)^{n_t} \to  {e^{\frac{2n\rho t}{n-1} }}$;
		\item $  \left[m_+\right]^N \to  {\frac{(n-1)^2}{2n\kappa\hat{\kappa}}}, \quad  \left[m_-\right]^N \to  {0}, \quad   [\kappa]^N \to  {e^{\frac{-2\rho T}{(n-1)}}\frac{(n-1)^2}{2n\kappa\hat{\kappa}}}$, \ and \ $  \left[\kappa-\hat{\kappa}\right]^N\to  {0}$;
		\item $\frac{\left((\kappa-\hat{\kappa})/\kappa\right)^N}{1-\alpha^2} \to  {0}, \quad  \frac{  \left[m_-\right]^N}{1-\alpha^2} \to 0$, \ and \ $\frac{  \left[\kappa-\hat{\kappa}\right]^N}{1-\alpha^2} \to 0$.
	\end{enumerate}
\noindent	If, on the other hand,  $\kappa = \hat{\kappa}/2$ then the preceding limits no longer hold. Instead, we have the following.	
    \begin{enumerate}[label={\normalfont(\alph*')}]
    	\itemsep1em
    	\setcounter{enumi}{5}
    		\item $\left(\frac{\kappa-\hat{\kappa}}{\kappa}\right)^{n_t} \to  {\pm1}, \quad \left(\frac{m_+}{\kappa}\right)^{n_t} \to  {e^{\frac{2\rho t}{n-1}}}, \quad \left(\frac{\kappa-\hat{\kappa}}{m_+}\right)^{n_t} \to  {\pm e^{\frac{-2\rho t}{n-1}}}, \quad \left(\frac{m_-}{\kappa}\right)^{n_t} \to   {\pm e^{-\frac{2 n \rho t}{n-1}}}$,\\[3pt]  and $\left(\frac{\kappa-\hat{\kappa}}{m_-}\right)^{n_t} \to  {e^{\frac{2 n \rho t}{n-1}}}$;
    	\item $ \left[m_+\right]^{2N} \to  {\frac{1}{e^{-2\frac{n+1}{n-1}\rho T}+n}},  \quad \left[m_-\right]^{2N}\to  {\frac{1}{ne^{2\frac{n+1}{n-1}\rho T}+1}},  \quad \left[\kappa\right]^{2N}\to  {\frac{e^{\frac{2n\rho T}{n-1}}}{ne^{2\frac{n+1}{n-1}\rho T}+1}}, \quad  \left[\kappa-\hat{\kappa}\right]^{2N} \to  {\frac{e^{\frac{2n\rho T}{n-1}}}{ne^{2\frac{n+1}{n-1}\rho T}+1}}$,\\[3pt]
    		$ \left[m_+\right]^{2N+1} \to  {\frac{1}{-e^{-2\frac{n+1}{n-1}\rho T}+n}},  \quad \left[m_-\right]^{2N+1} \to  {\frac{1}{-ne^{2\frac{n+1}{n-1}\rho T}+1}},  \quad \left[\kappa\right]^{2N+1} \to  {\frac{e^{\frac{2n\rho T}{n-1}}}{ne^{2\frac{n+1}{n-1}\rho T}-1}}$, \\[3pt]
            and $ \left[\kappa-\hat{\kappa}\right]^{2N+1} \to  {\frac{e^{\frac{2n\rho T}{n-1}}}{-ne^{2\frac{n+1}{n-1}\rho T}+1}}$;
    	\item $\frac{m_++\kappa-\hat{\kappa}}{m_++\alpha(\kappa-\hat{\kappa})} \to  {\frac{2}{n+1}}, \quad  \frac{m_-+\alpha^2\kappa}{m_-+\alpha\kappa} \to  {\frac{2}{n+1}}$, \ and \ $\frac{\kappa+\alpha(\kappa-\hat{\kappa})}{1-\alpha^2} \to  {\frac{n-1}{4}}$.
    \end{enumerate}
\end{lemma}

\begin{proof}
    (a) and (b) are obvious, (c)--(e) follow by applying L'H\^opital's rule.
    \blue{\\
    Let's start in order with (c)
    \begin{enumerate}[label=\arabic*.]
        \item 
            \begin{align*}
            {c_+} = \frac{1-\alpha^2(\kappa+n)+\kappa+R}{2R} = \frac{1+\alpha^2(\kappa-n)-\kappa+R}{2R} + \kappa\frac{1-\alpha^2}{R}
        \end{align*}
        thus
        \begin{align*}
            \frac{c_+}{m_+-\kappa} = \frac{1}{R} + \frac{\kappa}{R}\frac{1-\alpha^2}{m_+-\kappa}
        \end{align*}
        one can show using L'H\^opital's rule that
        \begin{align*}
            \frac{1-\alpha^2}{m_+-\kappa}\to \frac{n-1}{\kappa}
        \end{align*}
        thus
        \begin{align*}
            \frac{c_+}{m_+-\kappa}\to \frac{1}{n-1}+1 = \frac{n}{n-1}
        \end{align*}
        \item 
            \begin{align*}
        c_+ &= \frac{1-\alpha^2(\kappa+n) + \kappa +R}{2R} = \frac{1+\alpha^2(\kappa-n)-2\kappa\alpha^2 + \kappa +R -2\kappa\alpha + 2\kappa\alpha}{2R} \\ 
        &=\frac{1}{R}(m_+ - \alpha\kappa) +\frac{\alpha\kappa}{R}(1-\alpha)
    \end{align*}
    so 
    \begin{align*}
        \frac{c_+}{m_+-\alpha\kappa} = \frac{1}{R} +\frac{\kappa}{R}\frac{(\alpha-\alpha^2)}{m_+-\alpha\kappa}
    \end{align*}
    one can show using L'H\^opital's rule that
    \begin{align*}
        \frac{(\alpha-\alpha^2)}{m_+-\alpha\kappa}\to\frac{n-1}{\kappa(n+1)}
    \end{align*}
    thus
    \begin{align*}
        \frac{c_+}{m_+-\alpha\kappa}\to \frac{1}{n-1} + \frac{1}{n+1} = \frac{2n}{(n-1)(n+1)}
    \end{align*}
    \item 
        \begin{align*}
            \frac{c_+}{m_+-\alpha^2\kappa} = \frac{1}{R}\to\frac{1}{n-1}
        \end{align*}
    \item 
        \begin{align*}
            \frac{c_+}{1-\alpha^2} = \frac{\kappa+n}{2R} + \frac{1}{2R}\frac{R+1-n}{1-\alpha^2}
        \end{align*}
        one can show using L'H\^opital's rule that
        \begin{align*}
            \frac{R+1-n}{1-\alpha^2}\to\frac{1}{n-1}(n\kappa + \kappa + n-n^2)
        \end{align*}
        thus then 
        \begin{align*}
            \frac{c_+}{1-\alpha^2}\to \frac{\kappa+n}{2(n-1)} + \frac{1}{2(n-1)^2}(n\kappa + \kappa + n-n^2) = \frac{\kappa n}{(n-1)^2}.
        \end{align*}
    \end{enumerate}
    Now let's turn to (d)
    \begin{enumerate}[label=\arabic*.]
        \item 
            \begin{align*}
                \frac{d_-}{m_--(\kappa-\hat{\kappa})} = \frac{1}{R} + \frac{1}{R}\frac{\alpha^2 +R -n}{m_--(\kappa-\hat{\kappa})}
            \end{align*}
            one can show using L'H\^opital's rule that
            \begin{align*}
                \frac{\alpha^2 +R -n}{m_--(\kappa-\hat{\kappa})}\to -\frac{n+1}{n}
            \end{align*}
            thus
            \begin{align*}
                \frac{d_-}{m_--(\kappa-\hat{\kappa})}\to\frac{1}{n-1} -\frac{n+1}{n(n-1)}=\frac{-1}{n(n-1)}
            \end{align*}
    \end{enumerate}
    for 2., 3. and 4. of part (d) we apply L'H\^opital's rule directly and we easily get the result.}
    The first statement in (f) follows from the fact that $\kappa > \hat{\kappa}/2$, thus $\frac{\kappa-\hat{\kappa}}{\kappa}\in(-1,1)$. To prove the second, write 
    \begin{align*}
        \left(m_+/\kappa\right)^N = \exp\left(N\log\left(m_+/\kappa\right)\right)
    \end{align*}	
    and apply L'H\^opital's rule.
    \blue{
    We have
    \begin{align*}
        \left(\frac{m_+}{\kappa}\right)^{n_t} &= \exp\left( \lceil N t/T\rceil \ln\left(\frac{m_+}{\kappa}\right) \right)\cong\exp\left(  (N t/T) \ln\left(\frac{m_+}{\kappa}\right) \right)\\
        &\cong\exp\left( \frac{t}{T} \frac{\ln\left({m_+}/{\kappa}\right)}{1/N} \right)
    \end{align*}
    then we apply L'H\^opital's rule to show that
    \begin{align*}
        \frac{\ln\left({m_+}/{\kappa}\right)}{1/N}\to \rho T\frac{2}{n-1}.
    \end{align*}
    }
    The third statement follows directly, since
    \begin{align*}
        \left((\kappa-\hat{\kappa})/\kappa\right)^{n_t}=\left(m_+/\kappa\right)^{n_t}\left((\kappa-\hat{\kappa})/m_+\right)^{n_t}\to 0.
    \end{align*}
    The fourth and fifth statements can be proved in a similar fashion. With regard to (g) and (h), recall that
    \begin{equation*} 
    \begin{aligned}
        &\hphantom{{}=}\frac{1-\alpha}{\delta_{N+1}}\\
        &=\left(\frac{c_+\left(1-\alpha^2+\kappa-\frac{\alpha^2\kappa(\kappa-\hat{\kappa})}{m_+}\right)}{1-\alpha}\left(m_+\right)^N+ \frac{c_-\left(\left(1-\alpha^2+\kappa\right)m_--\alpha^2\kappa(\kappa-\hat{\kappa})\right)}{m_-\left(1-\alpha\right)}\left(m_-\right)^N\right)^{-1}\hspace{-.4cm}.
    \end{aligned}
    \end{equation*}
    Applying L'H\^opital's rule: 
    \begin{align}
        \frac{c_+\left(1-\alpha^2+\kappa-\frac{\alpha^2\kappa(\kappa-\hat{\kappa})}{m_+}\right)}{1-\alpha} &\to 2n\frac{\kappa\hat{\kappa}}{(n-1)^2},\hspace{1cm}\text{ and }\tag{$\star 1$}\label{star1}\\
        \frac{c_-\left(\left(1-\alpha^2+\kappa\right)m_--\alpha^2\kappa(\kappa-\hat{\kappa})\right)}{m_-\left(1-\alpha\right)} &\to  {\frac{-2(\kappa-\hat{\kappa})}{n-1}}.\tag{$\star 2$}\label{star2}
    \end{align}
    It follows from (f) that $\left(m_-/m_+\right)^N \to 0$ and, using L'H\^opital's rule again, 
    \begin{align*}
	\frac{\left(m_-/m_+\right)^N}{1-\alpha^2}\to 0.
	\end{align*}
	Plugging in and taking limits yields the results. If $\kappa = \hat{\kappa}/2$, observe that
    \begin{align*}
        m_- &= \frac{n+1}{4} \left(1-\alpha^2-\sqrt{\left(1-\alpha^2\right)^2+4\frac{(n-1)^2}{(n+1)^2}\alpha^2}\right)< 0 < m_+.
    \end{align*}
    With this in mind, statements (f') and (g') can be proved in the same way as statements (f) and (g).
    \blue{Pluggin in the value of $\kappa=\hat{\kappa}/2$ in \eqref{star1} and \eqref{star2}, we get \eqref{star1}$\to n$ and \eqref{star2}$\to 1$ also we have that in this case with $\left[m_+\right]^{2N}$, we mean
    \begin{align*}
        \left[m_+\right]^{2N} = \frac{1-\alpha}{\delta_{2N +1}}m_+^{2N}
    \end{align*}
    so it's easy to see that: 
    \begin{align*}
        \frac{1-\alpha}{\delta_{2N +1}}m_+^{2N} \to \left( n + e^{-2\frac{n+1}{n-1}\rho T} \right)^{-1}=\frac{1}{n + e^{-2\frac{n+1}{n-1}\rho T}}
    \end{align*}
    the other limits all follows in a similar fashion.}
    (h') is another application of L'H\^opital's rule.
\end{proof}}

\begin{proof}[Proof of Theorem~\ref{strat_asympt_thm_theta_positive}~\ref{conv_of_V} for $\kappa\neq n-1$]
    Let $\kappa>\frac{n-1}{2}$ with $\kappa\neq n-1$. Starting from \eqref{ae}--\eqref{af}, we expand each term using the asymptotics in Lemma~\ref{lemma_asympt_rate_various_elements}.
    
    \noindent\emph{Step 1: Expansion of $\sum_{i=1}^{n_t}\nu_i$.}

    Let $t\in(0,T]$ and consider \eqref{ae} with $m=n_t$;  we expand each of the four terms in \eqref{ae} as $N\uparrow\infty$.
    \begin{enumerate}[label=\textup{\arabic*)},leftmargin=2.2em]
    \item\label{it:term1}
        \begin{align*}
        \sums\frac{d_\sigma(m_\sigma-\alpha^2\kappa)}{m_\sigma-\alpha\kappa}[m_\sigma]^N
        &= \frac{n-1}{\kappa(n+1)}
          + \frac{1}{N}\underbrace{\frac{2\rho T(-n^2+8n\theta+1)}
          {(n-1)(n+1)^2(n-1+4\theta)}}_{=:~\mathscr{Y}^{(1)}}
          + o\left(\frac{1}{N}\right).
      \end{align*}
    \item\label{it:term2}
        \begin{multline*}
        (1-\alpha)(n_t-1)\sums c_\sigma d_\sigma
        \left(
          \frac{\alpha(\kappa-\hat{\kappa})}{m_\sigma-\alpha(\kappa-\hat{\kappa})}
          + \frac{m_\sigma}{m_\sigma-\alpha\kappa}
        \right)[m_\sigma]^N\\
        = \frac{\rho t}{n+1}
          + \frac{1}{N}\underbrace{
              \frac{\rho T}{n+1}
              \left(
                \eta_t^N - 1 - \frac{\rho t}{2}
                + \frac{\rho t(n-4\theta+1)}{2(n+1)}
              \right)
            }_{=:~\mathscr{Y}_N^{(2)}(t)}
          + o\left(\frac{1}{N}\right).
      \end{multline*}
    \item\label{it:term3}
        \begin{multline*}
            C_1\left(
              1+\sums
              \frac{c_\sigma m_\sigma((m_\sigma/(\alpha\kappa))^{n_t-1}-1)}{m_\sigma-\alpha\kappa}
            \right)\alpha^N[\kappa]^N\\
            = \frac{e^{-\rho\frac{n+1}{n-1}T}}{n(n+1)^2}
              (-(n-1)+2n e^{\rho\frac{n+1}{n-1}t})
              + \frac{\mathscr{Y}_N^{(3)}(t)}{N}
              + o\left(\frac{1}{N}\right),
      \end{multline*}
        where
        \begin{align*}
        \mathscr{Y}_N^{(3)}(t)
        &= e^{-\rho T\frac{n+1}{n-1}}
           (c^{(0)} + c_N^{(1)}(t)e^{\rho t\frac{n+1}{n-1}}),\\[2pt]
        c^{(0)}
        &= \frac{\rho T}{n(n-1)^2(n+1)^3}(
             -8Tn^2\rho\theta - 8Tn\rho\theta + n^4 + 4n^3\theta
             - 20n^2\theta - 2n^2 + 12n\theta + 4\theta + 1
           ),\\[2pt]
        c_N^{(1)}(t)
        &= \frac{2\rho T}{(n-1)^3(n+1)^3}(
             \eta_t^N(n^4 - 2n^2 + 1)
             - 8\rho t(n^2\theta + n\theta)
             + 8T\rho n^2\theta + 8T\rho n\theta
             - n^4 - 6n^3\theta\\
        &\hspace{11em}
             + 14n^2\theta + 2n^2 - 10n\theta + 2\theta - 1
           ).
      \end{align*}
    \item\label{it:term4}
        \begin{multline*}
            nC_1\sums\frac{d_\sigma m_\sigma  \left(\frac{\alpha(\kappa-\hat\kappa)}{m_\sigma}
            - \bigl(\frac{\alpha(\kappa-\hat\kappa)}{m_\sigma}\bigr)^{n_t}\right)}
            {m_\sigma-\alpha(\kappa-\hat\kappa)}[m_\sigma]^N\\
            = \frac{\kappa-\hat\kappa}{\kappa(n+1)}
            + \frac{1}{N}\underbrace{\frac{\rho T (n-4\theta-1) (n^2+4n\theta+2n+1)}
            {(n-1)(n+1)^2 (n+4\theta-1)}}_{=:~\mathscr{Y}^{(4)}}
            +o \left(\frac{1}{N}\right).
        \end{multline*}
    \end{enumerate}
    Collecting the first-order coefficients of \ref{it:term1}, \ref{it:term2}, and \ref{it:term4}, set
    \begin{align*}
      \mathscr{Y}_N(t)
      &:= \mathscr{Y}^{(1)} + \mathscr{Y}_N^{(2)}(t) + \mathscr{Y}^{(4)}\\
      &= \frac{2\rho T(-n^2+8n\theta+1)}{(n-1)(n+1)^2(n-1+4\theta)}
        + \frac{\rho T}{n+1}\left(\eta_t^N - 1 - \frac{\rho t}{2}
        + \frac{\rho t(n-4\theta+1)}{2(n+1)}\right)\\
      &\qquad + \frac{\rho T(n-4\theta-1)(n^2+4n\theta+2n+1)}
        {(n-1)(n+1)^2(n+4\theta-1)}.
    \end{align*}

    Hence
    \begin{align}
      \sum_{i=1}^{n_t}\nu_i
      &= \frac{e^{-\rho\frac{n+1}{n-1}T}}{(n+1)^2 n}
         (
           n(\rho t+1)(n+1)e^{\rho\frac{n+1}{n-1}T}
           - (n-1)
           + 2n e^{\rho\frac{n+1}{n-1}t}
         )\notag\\
      &\quad + \frac{\mathscr{Y}_N(t)+\mathscr{Y}_N^{(3)}(t)}{N}
         + o\left(\frac{1}{N}\right),
      \label{sum nuit kappa>1/2, kappa neq 1 eq}
    \end{align}
    and $\mathscr{Y}_N(t)+\mathscr{Y}_N^{(3)}(t)$ is bounded once $\theta,T,\rho,n$ are fixed.

    \noindent\emph{Step 2: Expansion of \(\nu_{N+1}\).}

    From \eqref{af} and Lemma~\ref{lemma_asympt_rate_various_elements},
    \[
      \nu_{N+1}
      = \frac{\mathscr{T}}{N} + o\left(\frac{1}{N}\right),
      \qquad
      \mathscr{T} := \frac{\rho T}{\hat{\kappa}}.
    \]

    \smallskip
    
    \noindent\emph{Step 3: Expansion of \(\sum_{i=1}^{N+1}\nu_i\).}

    Since $n_T = N$, \eqref{sum nuit kappa>1/2, kappa neq 1 eq} at $t=T$ gives $\sum_{i=1}^{N}\nu_i$; adding $\nu_{N+1}$ yields
    \begin{align}\label{sum nui kappa>1/2, kappa neq 1 eq}
      \sum_{i=1}^{N+1}\nu_i
      &= \frac{e^{-\rho\frac{n+1}{n-1}T}}{(n+1)^2 n}
         (
           n((\rho T+1)(n+1)+2)e^{\rho\frac{n+1}{n-1}T}
           - (n-1)
         )
         + \frac{\mathscr{M}}{N}
         + o\left(\frac{1}{N}\right),
    \end{align}
    with $\mathscr{M} := \mathscr{Y}_N(T) + \mathscr{Y}_N^{(3)}(T) + \mathscr{T}$ (note that $\eta_T^N = 0$, so $\mathscr{M}$ is independent of $N$).

    \smallskip
    
    \noindent\emph{Step 4: Expansion of \(V_t^{(N)}\).}

     The limit in \eqref{sum nui kappa>1/2, kappa neq 1 eq} matches the right-hand side of \eqref{sum nuit kappa>1/2, kappa neq 1 eq} at $t=T$, and \eqref{sum nuit kappa>1/2, kappa neq 1 eq} matches the limit from \eqref{sumnuiupto_t} (obtained when $\kappa=n-1$). Although the leading coefficient of the $1/N$ term depends on $\theta$, the convergence order remains $1/N$ for every $\theta>0$. Plugging these into the definition of $V^{(N)}$ and applying \eqref{quotient_expansion} once more yields the claim.
\end{proof}

\subsection{Proof of Theorem~\ref{costs asymptotics thm}}

Let $(\bm\xi_1,\dots,\bm\xi_n)$ be the equilibrium profile; we drop the star superscript and suppress the $N$-dependence of $\bm\xi_i$ and related quantities to keep notation light. We start with a simple lemma (valid for all $\kappa\ge \hat{\kappa}/2$). 

\begin{lemma}
For all $i=1,\dots,n$,
\begin{align}\label{am}
\mathbb{E}\left[\mathscr{C}_\mathbb{T}\left(\bm{\xi}_i \mid \bm{\xi}_{-i}\right)\right]
&= \frac{1}{2}\Bigg(
      \frac{\bar{x}^2}{\bm{1}^\top\bm{\nu}}
    + \frac{\bar{x}(x_i-\bar{x})(\bm{1}^\top\bm{\nu}+\bm{1}^\top\bm{\omega})}
            {(\bm{1}^\top\bm{\nu})(\bm{1}^\top\bm{\omega})}
    + \frac{(x_i-\bar{x})^2}{\bm{1}^\top\bm{\omega}}\\
&\hspace{3em}
    + \hat{\kappa}\left(\frac{\bar{x}}{\bm{1}^\top\bm{\nu}}\right)^2  
      \bm{\nu}^\top\tilde{\Gamma}\bm{\nu}
    + \frac{\bar{x}(x_i-\bar{x})}{(\bm{1}^\top\bm{\nu})(\bm{1}^\top\bm{\omega})}
      \bm{\omega}^\top(\hat{\kappa}\tilde{\Gamma}-\tilde{\Gamma}^\top)\bm{\nu}
    - \left(\frac{x_i-\bar{x}}{\bm{1}^\top\bm{\omega}}\right)^2
      \bm{\omega}^\top\tilde{\Gamma}\bm{\omega}
\Bigg),\nonumber
\end{align}
where $\bar{x} = \frac{1}{n}\sum_{i=1}^n x_i$.
\end{lemma}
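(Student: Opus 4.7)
The plan is to start from Lemma~\ref{formoffunctionaldisc}, which gives
\[
\mathbb{E}[\mathscr{C}_{\mathbb{T}}(\bm\xi_i\mid\bm\xi_{-i})] = \tfrac{1}{2}\mathbb{E}[\bm\xi_i^\top\Gamma^\theta\bm\xi_i] + \mathbb{E}\bigl[\bm\xi_i^\top\widetilde\Gamma{\textstyle\sum_{j\neq i}}\bm\xi_j\bigr].
\]
Since the equilibrium profile is deterministic (Theorem~\ref{exuniqnasheqdiscrete}), the expectation disappears. Substituting $\bm\xi_i = \bar x\bm v + (x_i-\bar x)\bm w$ and
\[
\sum_{j\neq i}\bm\xi_j \;=\; n\bar x\bm v - \bm\xi_i \;=\; (n-1)\bar x\bm v - (x_i-\bar x)\bm w
\]
(using $\sum_j(x_j-\bar x)=0$), the cost reduces to a quadratic form in $\bar x$ and $x_i-\bar x$ whose coefficients are the six scalars $\bm v^\top\Gamma^\theta\bm v$, $\bm v^\top\Gamma^\theta\bm w$, $\bm w^\top\Gamma^\theta\bm w$, $\bm v^\top\widetilde\Gamma\bm v$, $\bm v^\top\widetilde\Gamma\bm w$, $\bm w^\top\widetilde\Gamma\bm w$, together with $\bm w^\top\widetilde\Gamma\bm v$ (which differs from $\bm v^\top\widetilde\Gamma\bm w$ because $\widetilde\Gamma$ is not symmetric).

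The key step is to eliminate $\Gamma^\theta$ via the relations coming directly from \eqref{optsolvecangprod},
\[
\Gamma^\theta\bm\nu \;=\; \bm{1} - (n-1)\widetilde\Gamma\bm\nu, \qquad \Gamma^\theta\bm\omega \;=\; \bm{1} + \widetilde\Gamma\bm\omega,
\]
combined with $\bm{1}^\top\bm v = \bm{1}^\top\bm w = 1$ from \eqref{v and w}. This immediately yields the two diagonal coefficients
\[
\bm v^\top\Gamma^\theta\bm v \;=\; \frac{1}{\bm{1}^\top\bm\nu} - \frac{(n-1)\bm\nu^\top\widetilde\Gamma\bm\nu}{(\bm{1}^\top\bm\nu)^2}, \qquad \bm w^\top\Gamma^\theta\bm w \;=\; \frac{1}{\bm{1}^\top\bm\omega} + \frac{\bm\omega^\top\widetilde\Gamma\bm\omega}{(\bm{1}^\top\bm\omega)^2}.
\]
For the cross coefficient $\bm v^\top\Gamma^\theta\bm w$, I would exploit that $\Gamma^\theta$ is symmetric and average the two natural one-sided reductions (using the $\bm\nu$-identity on one side and the $\bm\omega$-identity on the other); invoking $\bm\nu^\top\widetilde\Gamma\bm\omega = \bm\omega^\top\widetilde\Gamma^\top\bm\nu$ and writing $\hat\kappa = n-1$, the average produces
\[
\bm v^\top\Gamma^\theta\bm w \;=\; \frac{\bm{1}^\top\bm\nu + \bm{1}^\top\bm\omega}{2(\bm{1}^\top\bm\nu)(\bm{1}^\top\bm\omega)} \;-\; \frac{\bm\omega^\top(\hat\kappa\widetilde\Gamma - \widetilde\Gamma^\top)\bm\nu}{2(\bm{1}^\top\bm\nu)(\bm{1}^\top\bm\omega)},
\]
from which the distinctive operator $\hat\kappa\widetilde\Gamma - \widetilde\Gamma^\top$ in the statement will eventually emerge.

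It remains to collect the monomials $\bar x^2$, $\bar x(x_i-\bar x)$, and $(x_i-\bar x)^2$. For $\bar x^2$, the piece $-\tfrac{1}{2}(n-1)\bm\nu^\top\widetilde\Gamma\bm\nu/(\bm{1}^\top\bm\nu)^2$ inside $\tfrac{1}{2}\bm v^\top\Gamma^\theta\bm v$ combines with $(n-1)\bm v^\top\widetilde\Gamma\bm v$ from the $\widetilde\Gamma$-part to leave $+\tfrac{1}{2}\hat\kappa\bigl(\bar x/(\bm{1}^\top\bm\nu)\bigr)^2\bm\nu^\top\widetilde\Gamma\bm\nu$; an analogous cancellation flips the sign of $\bm\omega^\top\widetilde\Gamma\bm\omega$ in the $(x_i-\bar x)^2$ block. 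In the mixed block, the symmetrized cross coefficient supplies $\tfrac{1}{2}(\bm{1}^\top\bm\nu + \bm{1}^\top\bm\omega)/[(\bm{1}^\top\bm\nu)(\bm{1}^\top\bm\omega)]$, and after combining with $-\bm v^\top\widetilde\Gamma\bm w + (n-1)\bm w^\top\widetilde\Gamma\bm v$ from the $\widetilde\Gamma$-part, one more application of $\bm\nu^\top\widetilde\Gamma\bm\omega = \bm\omega^\top\widetilde\Gamma^\top\bm\nu$ collapses the remainder to $\tfrac{1}{2}\bm\omega^\top(\hat\kappa\widetilde\Gamma - \widetilde\Gamma^\top)\bm\nu/[(\bm{1}^\top\bm\nu)(\bm{1}^\top\bm\omega)]$, exactly as stated. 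None of this is conceptually delicate; the only mild obstacle is disciplined bookkeeping of the asymmetry of $\widetilde\Gamma$, i.e.\ consistently rewriting every occurrence of $\bm\nu^\top\widetilde\Gamma\bm\omega$ as $\bm\omega^\top\widetilde\Gamma^\top\bm\nu$ before grouping so that the operator $\hat\kappa\widetilde\Gamma - \widetilde\Gamma^\top$ emerges cleanly in its stated form.
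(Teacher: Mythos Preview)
Your proposal is correct and follows essentially the same approach as the paper: start from Lemma~\ref{formoffunctionaldisc}, substitute the deterministic equilibrium $\bm\xi_i=\bar x\bm v+(x_i-\bar x)\bm w$ and $\sum_{j\neq i}\bm\xi_j=\hat\kappa\bar x\bm v-(x_i-\bar x)\bm w$, eliminate $\Gamma^\theta$ via the defining identities $(\Gamma^\theta+\hat\kappa\widetilde\Gamma)\bm\nu=\bm1$ and $(\Gamma^\theta-\widetilde\Gamma)\bm\omega=\bm1$, and use $\bm\nu^\top\widetilde\Gamma\bm\omega=\bm\omega^\top\widetilde\Gamma^\top\bm\nu$ to regroup. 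The paper's proof is terser and does not single out the ``averaging'' step for the cross term, but the underlying computation is identical.
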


\begin{proof}
By Lemma~\ref{formoffunctionaldisc},
\begin{align}\label{formoffunctlemma}
\mathbb{E}[\mathscr{C}_\mathbb{T}(\bm{\xi}_i \mid \bm{\xi}_{-i})]
= \frac{1}{2}\bm{\xi}_i^\top\Gamma^\theta\bm{\xi}_i
  + \bm{\xi}_i^\top\tilde{\Gamma}\sum_{j\neq i}\bm{\xi}_j
=: A+B.
\end{align}
By Theorem~\ref{exuniqnasheqdiscrete}, $\bm{\xi}_i=\bar{x}\bm{v}+(x_i-\bar{x})\bm{w}$, hence
\begin{align*}
A
&= \frac{1}{2}\Big(
    \bar{x}^2\bm{v}^\top\Gamma^\theta\bm{v}
  + \bar{x}(x_i-\bar{x})\bm{v}^\top\Gamma^\theta\bm{w}
  + \bar{x}(x_i-\bar{x})\bm{w}^\top\Gamma^\theta\bm{v}
  + (x_i-\bar{x})^2\bm{w}^\top\Gamma^\theta\bm{w}
\Big).
\end{align*}
Moreover, since $\sum_{j\neq i}\bm{\xi}_j=\hat{\kappa}\bar{x} \bm{v}+(\bar{x}-x_i)\bm{w}$,
\begin{align*}
B
&= \left(\bar{x}\bm{v}+\left(x_i-\bar{x}\right)\bm{w}\right)^\top
   \tilde{\Gamma}  \left(\hat{\kappa}\bar{x} \bm{v}+(\bar{x}-x_i)\bm{w}\right)
\\
&= \hat{\kappa}\bar{x}^{ 2}\bm{v}^\top\tilde{\Gamma}\bm{v}
 + \hat{\kappa}\bar{x}  \left(x_i-\bar{x}\right)\bm{w}^\top\tilde{\Gamma}\bm{v}
 - \bar{x}  \left(x_i-\bar{x}\right)\bm{v}^\top\tilde{\Gamma}\bm{w}
 - \left(x_i-\bar{x}\right)^{  2}\bm{w}^\top\tilde{\Gamma}\bm{w}.
\end{align*}
Substituting into \eqref{formoffunctlemma} and using
\[
(\Gamma^\theta+\hat{\kappa}\tilde{\Gamma})\bm{\nu}=\bm{1},
\qquad
(\Gamma^\theta-\tilde{\Gamma})\bm{\omega}=\bm{1},
\]
together with $\bm{\nu}^\top\bm{1}=\bm{1}^\top\bm{\nu}$,
$\bm{\omega}^\top\bm{1}=\bm{1}^\top\bm{\omega}$, and
$\bm{\nu}^\top\tilde{\Gamma}\bm{\omega}=\bm{\omega}^\top\tilde{\Gamma}^\top\bm{\nu}$,
and writing $\bm{v}=\bm{\nu}/(\bm{1}^\top\bm{\nu})$ and
$\bm{w}=\bm{\omega}/(\bm{1}^\top\bm{\omega})$, we obtain \eqref{am}.
\end{proof}

\begin{lemma}\label{cost functional ausiliral lemma}
For $\kappa>\hat{\kappa}/2$, as $N\uparrow\infty$,
\begin{align*}
{\bm\nu}^\top \tilde{\Gamma} {\bm\nu}
&\longrightarrow
\frac{n-1}{2n^2(n+1)^3}\Biggl(
 -e^{-2\rho\frac{n+1}{ n-1 }T}
 - 4n e^{-\rho\frac{n+1}{ n-1 }T}
 + \frac{2n^2(n+1)}{n-1} \rho T
 + \frac{n^2(n+7)}{n-1}
\Biggr),\\[0.5em]
{\bm\omega}^\top  \bigl(\hat{\kappa}\tilde{\Gamma}-\tilde{\Gamma}^\top\bigr){\bm\nu}
&\longrightarrow
\frac{-(n-1)(2n-1) e^{-\rho\frac{n+1}{ n-1 }T}
      + n(n+4)(n-1)
      + n(n+1)(n-2) \rho T}{n(n+1)^2},\\[0.5em]
{\bm\omega}^\top \tilde{\Gamma} {\bm\omega}
&\longrightarrow \frac{2\rho T+1}{2} .
\end{align*}
\end{lemma}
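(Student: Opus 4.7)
The plan is to avoid evaluating the three quadratic forms directly and instead exploit the optimality equations defining $\bm\nu$ and $\bm\omega$. The splitting $\Gamma^\theta = 2\theta I + \widetilde\Gamma + \widetilde\Gamma^\top$ follows at once from~\eqref{gammatheta}. Testing $(\Gamma^\theta + \hat\kappa\widetilde\Gamma)\bm\nu = \bm 1$ and $(\Gamma^\theta - \widetilde\Gamma)\bm\omega = \bm 1$ against $\bm\nu$, $\bm\omega$, and each against the other, and using the scalar identity $\bm\nu^\top\widetilde\Gamma^\top\bm\nu = \bm\nu^\top\widetilde\Gamma\bm\nu$ together with symmetry of $\Gamma^\theta$, yields the exact algebraic identities
\begin{align*}
  (n+1)\,\bm\nu^\top\widetilde\Gamma\bm\nu &= \bm 1^\top\bm\nu - 2\theta\|\bm\nu\|^2,\\
  \bm\omega^\top\widetilde\Gamma\bm\omega &= \bm 1^\top\bm\omega - 2\theta\|\bm\omega\|^2,\\
  \bm\omega^\top(\hat\kappa\widetilde\Gamma - \widetilde\Gamma^\top)\bm\nu &= (2n-1)\bm 1^\top\bm\nu - \bm 1^\top\bm\omega - 4(n-1)\theta\,\bm\omega^\top\bm\nu,
\end{align*}
valid for every $N\ge 1$. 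Consequently, the lemma reduces to identifying the five scalar limits of $\bm 1^\top\bm\nu$, $\bm 1^\top\bm\omega$, $\theta\|\bm\nu\|^2$, $\theta\|\bm\omega\|^2$, and $\theta\,\bm\omega^\top\bm\nu$, and then substituting.

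Two of these are already available. The limit of $\bm 1^\top\bm\nu$ was derived inside the proof of Theorem~\ref{strat_asympt_thm_theta_positive}\ref{conv_of_V}; compare~\eqref{sumnuitotal} and~\eqref{sum nui kappa>1/2, kappa neq 1 eq}. The limit $\bm 1^\top\bm\omega \to \rho T+1$ follows from $\omega_{N+1} = 1/\tilde\kappa$ combined with Theorem~\ref{strat_asympt_thm_theta_positive}\ref{conv_of_W}. For $\theta\|\bm\omega\|^2$, the closed form $\omega_i = A + Br^{N+1-i}$ from~\eqref{omi formula}, with $A = (1-\alpha)/(\tilde\kappa - \alpha(\tilde\kappa-1))$, $B = \alpha/(\tilde\kappa(\tilde\kappa - \alpha(\tilde\kappa-1)))$, and $r = \alpha(\tilde\kappa-1)/\tilde\kappa\in(-1,1)$, gives
\[
\|\bm\omega\|^2 = (N+1)A^2 + 2AB\,\frac{1-r^{N+1}}{1-r} + B^2\,\frac{1-r^{2(N+1)}}{1-r^2}.
\]
Since $A = \mathcal{O}(1/N)$ and $|r| < 1$, the first two terms vanish and the third converges to $B^2/(1-r^2) = 1/(2\tilde\kappa - 1) = 1/(4\theta)$. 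Hence $2\theta\|\bm\omega\|^2 \to 1/2$, and the second identity above immediately gives $\bm\omega^\top\widetilde\Gamma\bm\omega \to (\rho T + 1) - 1/2 = (2\rho T + 1)/2$, which is the third asserted limit.

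For the remaining two scalars $\theta\|\bm\nu\|^2$ and $\theta\,\bm\omega^\top\bm\nu$, I would plug in the closed-form representation of $\nu_i$ from Lemma~\ref{sum_of_nu_i_lemma} (and from~\eqref{expl_form_nu_i_kappa_n-1} in the borderline case $\kappa = \hat\kappa$). Each $\nu_i$ splits into a bulk contribution of order $(1-\alpha)$, a boundary layer $\bigl(\alpha(\kappa-\hat\kappa)/m_\sigma\bigr)^i$ concentrated near $i = 1$, and a boundary layer $(m_\sigma/(\alpha\kappa))^i$ concentrated near $i = N+1$; see~\eqref{expl_form_nu_i}--\eqref{nu_N+1_expl_form}. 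Expanding $\sum_i \nu_i^2$ and $\sum_i \omega_i\nu_i$ reduces each to a finite family of double geometric sums, all available in closed form. The crucial simplification is that Lemma~\ref{lemma_asympt_rate_various_elements}\ref{lemma_asympt_rate_various_elements:label_f}--\ref{lemma_asympt_rate_various_elements:label_h} forces every mixed contribution involving $[m_-]^N$, $[\kappa-\hat\kappa]^N$, $\bigl((\kappa-\hat\kappa)/\kappa\bigr)^{n_t}$, or $(m_-/\kappa)^{n_t}$ to decay faster than any power of $1/N$, so only a handful of surviving cross-products determine the limits.

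The main obstacle is the bookkeeping in this last step: $\nu_i$ is a linear combination of roughly four geometric modes in $i$ and $\omega_i$ of two, so $\|\bm\nu\|^2$ and $\bm\omega^\top\bm\nu$ generate on the order of a dozen double sums, each of which must be matched with its leading-order constant via Lemma~\ref{lemma_asympt_rate_various_elements}. Once $\theta\|\bm\nu\|^2$ and $\theta\,\bm\omega^\top\bm\nu$ are identified in closed form, substituting into the three identities above and collecting terms over a common denominator yields exactly the three stated limits.
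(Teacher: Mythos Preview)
Your approach is correct and takes a genuinely different, more economical route than the paper. The paper attacks $\bm\nu^\top\tilde\Gamma\bm\nu$ and $\bm\omega^\top(\hat\kappa\tilde\Gamma-\tilde\Gamma^\top)\bm\nu$ head-on: it writes out $(\tilde\Gamma\bm\nu)_i$ and $\bigl(\bm\omega^\top(\hat\kappa\tilde\Gamma-\tilde\Gamma^\top)\bigr)_i$ explicitly from the closed forms of $\nu_i$ and $\omega_i$, multiplies by $\nu_i$, decomposes the resulting products into families of double geometric sums (labeled $D_1^i,\dots,D_4^i$ and $G_1^i,G_2^i,G_3^i$), sums each in closed form, and passes to the limit via Lemma~\ref{lemma_asympt_rate_various_elements}. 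Your three exact identities bypass the action of $\tilde\Gamma$ entirely and reduce everything to the scalars $\bm 1^\top\bm\nu$, $\bm 1^\top\bm\omega$, $\theta\|\bm\nu\|^2$, $\theta\|\bm\omega\|^2$, $\theta\,\bm\omega^\top\bm\nu$. Two of the ``missing'' limits are in fact computed later in the paper anyway: $\theta\sum_k\nu_k^2$ is evaluated in the proof of Theorem~\ref{costs asymptotics thm} (Step~2) for the block-cost identification $\mathscr B_0$, and $\bm\omega^\top\bm\nu\to 0$ follows from the Cauchy--Schwarz splitting in Step~1 there (the $O(1)$ boundary layers of $\bm\nu$ and $\bm\omega$ sit at opposite ends of the grid, so their inner product is $O(1/N)$). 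Your route would therefore both shorten the present proof and unify it with the block-cost computation; as a bonus, since the identities are exact for every $N$ and reduce to $(n+1)\bm\nu^\top\tilde\Gamma\bm\nu=\bm 1^\top\bm\nu$ etc.\ when $\theta=0$, they also collapse the proof of Lemma~\ref{cost functional ausiliral lemma k=n-1/2} to the already-known limits of $\bm 1^\top\bm\nu$ and $\bm 1^\top\bm\omega$ along even/odd subsequences. One small correction: the closed form for $\nu_i$ you need is \eqref{expl_form_nu_i}--\eqref{nu_N+1_expl_form} (or \eqref{expl_form_nu_i_kappa_n-1} when $\kappa=\hat\kappa$), not Lemma~\ref{sum_of_nu_i_lemma}, which records the partial sums.
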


\begin{proof}
    The third limit follows from \cite[Lemma~A.5]{SchiedStrehleZhang.17}; we prove the first two.
    
    \noindent\emph{Step 1: Case $\kappa=\hat{\kappa}$ (so $\tilde{\kappa}=n/2$).} We have
    \begin{align*}
    {\bm\nu}^\top \tilde{\Gamma} {\bm\nu}
    &= \frac{\nu_1^{ 2}}{2}
     + \frac12\sum_{i=2}^{N+1}\nu_i^{ 2}
     + \nu_1 \sum_{i=2}^{N+1}\nu_i\alpha^{ i-1}
     + \sum_{i=3}^{N+1}\sum_{j=2}^{i-1}\nu_i\nu_j \alpha^{ i-j}\\
    &= \frac{1}{(n+\alpha)^2}\Biggl[
     \frac{(1-\alpha^2)N}{2}
     + \frac{-\alpha^4 + 2\alpha^3(n-2) + \alpha^2(2n-2) + 2n\alpha + n^2}{2\left(n^2-\alpha^2\right)}\\[-0.15em]
    &\hspace{4.2em}
     - \theta^{N} \frac{(n-1)\alpha^2(\alpha+1)}{n(n+\alpha)}
     - \theta^{2N} \frac{\alpha^4\hat{\kappa}^2}{2n^2\left(n^2-\alpha^2\right)}
    \Biggr],
    \end{align*}
    and therefore, as $N\uparrow\infty$,
    \begin{align*}
    {\bm\nu}^\top \tilde{\Gamma} {\bm\nu}
    &\longrightarrow
    \frac{1}{(n+1)^2}
    \Biggl(
     \rho T + \frac{n+7}{2(n+1)}
     - \frac{2(n-1)}{n(n+1)} e^{-\rho\frac{n+1}{ n-1 }T}
     - \frac{n-1}{2n^2(n+1)} e^{-2\rho\frac{n+1}{ n-1 }T}
    \Biggr)\\
    &=
    \frac{n-1}{2n^2(n+1)^3}\Biggl(
     -e^{-2\rho\frac{n+1}{ n-1 }T}
     - 4n e^{-\rho\frac{n+1}{ n-1 }T}
     + \frac{2n^2(n+1)}{n-1} \rho T
     + \frac{n^2(n+7)}{n-1}
    \Biggr)
    \end{align*}
    \hide{
        Let $\theta:=\left(\frac{\alpha\left(n-1\right)}{n-\alpha^2}\right)$, then we have
        \begin{align*}
            {\bm\nu}^\top \tilde{\Gamma} {\bm\nu}&=\frac{\left(\nu_1\right)^2}{2}+\frac{1}{2}\sum_{i=2}^{N+1}\left(\nu_i\right)^2+ \nu_1 \sum_{i=2}^{N+1}\nu_i\alpha^{i-1} + \sum_{i=3}^{N+1} \sum_{j=2}^{i-1}\nu_i\nu_j\alpha^{i-j} \\
            &=\frac{1}{(n+\alpha)^2}\Bigg[\frac{1}{2}\left( 1 + \left( \frac{n-\alpha^2}{n\left(n-1\right)} \right)\theta^{N+1} \right)^2 + \frac{1}{2}\sum_{i=2}^{N+1}\left( 1-\alpha + \theta^{N+2-i}\left( \frac{1-\alpha^2}{n-1} \right) \right)^2 +\\
            &\hspace{0.5cm}+ \left( 1 + \left( \frac{n-\alpha^2}{n\left(n-1\right)} \right)\theta^{N+1} \right)\sum_{i=2}^{N+1} \left( 1-\alpha + \theta^{N+2-i}\left( \frac{1-\alpha^2}{n-1} \right) \right) \alpha^{i-1} +\\
            &\hspace{0.5cm}+\sum_{i=3}^{N+1} \sum_{j=2}^{i-1} \left( 1-\alpha + \theta^{N+2-i}\left( \frac{1-\alpha^2}{n-1} \right) \right)\left( 1-\alpha + \theta^{N+2-j}\left( \frac{1-\alpha^2}{n-1} \right) \right) \alpha^{i-j}\Bigg]
        \end{align*}
        now let's treat every addendum inside the $[\dots]$ in order: 
        \begin{enumerate}[label=\arabic*.]
            \item 
                \begin{align*}
                    \frac{1}{2}\left( 1 + \left( \frac{n-\alpha^2}{n\left(n-1\right)} \right)\theta^{N+1} \right)^2 = \frac{1}{2}\left( 1+2\theta^{N+1}\frac{n-\alpha^2}{n\left(n-1\right)} +  \left( \frac{n-\alpha^2}{n\left(n-1\right)} \right)^2\theta^{2N+2}\right) = \frac12 + \theta^{N}\frac\alpha n+\frac12\frac{\alpha^2}{n^2}\theta^{2N}
                \end{align*}
            \item 
                \begin{align*}
                    &\frac{1}{2}\sum_{i=2}^{N+1}\left( 1-\alpha + \theta^{N+2-i}\left( \frac{1-\alpha^2}{n-1} \right) \right)^2 = \frac12\sum_{i=2}^{N+1} \left( \left( 1-\alpha \right)^2 + \left(\theta^2\right)^{N+2-i}\left( \frac{1-\alpha^2}{n-1} \right)^2 \right) + \left(1-\alpha \right)\sum_{i=2}^{N+1}\theta^{N+2-i}\left( \frac{1-\alpha^2}{n-1} \right)\\
                    &\hspace{1cm}=\frac N2 \left( 1-\alpha \right)^2 + \frac12 \left( \frac{1-\alpha^2}{n-1} \right)^2\theta^2\theta^{2N}\frac{\theta^{-2N} -1}{1-\theta^2} + \left(1-\alpha\right)\left( \frac{1-\alpha^2}{n-1} \right)\theta\theta^N\frac{\theta^{-N} -1}{1-\theta}\\
                    &\hspace{1cm}=\frac N2 \left( 1-\alpha \right)^2 + \frac12 \frac{\alpha^2(1-\alpha^2)}{n^2-\alpha^2}\left(1-\theta^{2N}\right) + \frac{\alpha(1-\alpha^2)}{n+\alpha}(1-\theta^N) \\
                    &\hspace{1cm}=\frac N2 \left( 1-\alpha \right)^2 + \frac12 \frac{\alpha^2(1-\alpha^2)}{n^2-\alpha^2} + \frac{\alpha(1-\alpha^2)}{n+\alpha} + \frac{\alpha(\alpha^2-1)}{n+\alpha}\theta^{N} + \theta^{2N}\frac12\frac{\alpha^2(\alpha^2-1)}{n^2-\alpha^2}
                \end{align*}
            \item 
                \begin{align*}
                    &\left( 1 + \left( \frac{n-\alpha^2}{n\left(n-1\right)} \right)\theta^{N+1} \right)\sum_{i=2}^{N+1} \left( 1-\alpha + \theta^{N+2-i}\left( \frac{1-\alpha^2}{n-1} \right) \right) \alpha^{i-1} \\
                    &\hspace{1cm}= \left( 1 + \left( \frac{n-\alpha^2}{n\left(n-1\right)} \right)\theta^{N+1} \right)\Bigg[(1-\alpha)\alpha \frac{\alpha^N-1}{\alpha-1} +\alpha\left( \frac{1-\alpha^2}{n-1} \right)\theta^N\sum_{i=0}^{N-1}\left(\frac \alpha \theta\right)^i\Bigg] \\
                    &\hspace{1cm} = \left( 1 + \left( \frac{n-\alpha^2}{n\left(n-1\right)} \right)\theta^{N+1} \right)\Bigg[\alpha\left({1-\alpha^N}\right) + \alpha(\alpha^N - \theta^N)\Bigg]\\
                    &\hspace{1cm} = \left( 1 + \left( \frac{n-\alpha^2}{n\left(n-1\right)} \right)\theta^{N+1} \right)\Bigg[\alpha\left({1-\theta^N}\right)\Bigg]\\
                    &\hspace{1cm} = \alpha\left({1-\theta^N}\right) + \frac{\alpha^2}{n}\theta^N(1-\theta^N) = \alpha + \theta^N\frac{\alpha(\alpha-n)}{n} - \theta^{2N}\frac{\alpha^2}{n}
                \end{align*}
            \item 
                \begin{align*}
                    &\sum_{i=3}^{N+1} \sum_{j=2}^{i-1} \left( 1-\alpha + \theta^{N+2-i}\left( \frac{1-\alpha^2}{n-1} \right) \right)\left( 1-\alpha + \theta^{N+2-j}\left( \frac{1-\alpha^2}{n-1} \right) \right) \alpha^{i-j}\\
                    &\hspace{1cm} = \sum_{i=3}^{N+1} \sum_{j=2}^{i-1}(1-\alpha)^2\alpha^{i-j} + \frac{1-\alpha^2}{n-1}(1-\alpha)\theta^N\sum_{i=3}^{N+1}\alpha^{i-2}\sum_{j=0}^{i-3}\left(\frac{1}{\alpha\theta}\right)^j + \frac{1-\alpha^2}{n-1}(1-\alpha)\theta^N \sum_{i=3}^{N+1} \left(\frac \alpha \theta\right)^{i-2}\sum_{j=0}^{i-3}\alpha^{-j}+\\
                    &\hspace{1.5cm}+ \left( \frac{1-\alpha^2}{n-1} \right)^2\theta^{2N}\sum_{i=3}^{N+1}\left( \frac{\alpha}{\theta} \right)^{i-2}\sum_{j=0}^{i-3}\left( \frac{1}{\theta\alpha}\right)^j =: A + B + C + D,
                \end{align*}
                then
                \begin{align*}
                    A &= \alpha(1-\alpha)(N-1) + \alpha(\alpha^N - \alpha)\\
                    B &= \frac{\alpha^3(n-1)}{n(n+\alpha)} - \frac{\alpha^2(1+\alpha)}{\alpha+n}\theta^N + \frac{\alpha^2}{n}\theta^N\alpha^N\\
                    C &= \frac{\alpha^2(1+\alpha)}{n+\alpha} - \alpha^{N+1} + \theta^N\frac{\alpha(n-\alpha^2)}{n+\alpha}\\
                    D &= \frac{\alpha^4(n-1)}{n(n^2-\alpha^2)} - \frac{\alpha^2}{n}\alpha^N\theta^N + \theta^{2N}\frac{\alpha^2(n-\alpha^2)}{n^2-\alpha^2}.
                \end{align*}
        \end{enumerate}
        Now summing the coefficient for $\theta^{2N}$ all together we get,
        \begin{align*}
            -\theta^{2N}\frac{\alpha^4\hat{\kappa}^2}{2n^2(n^2-\alpha^2)}
        \end{align*}
        for $\theta^N$ we get,
        \begin{align*}
            -\theta^{N}\frac{(n-1)\alpha^2(\alpha+1)}{n(n+\alpha)}
        \end{align*}
        finally for the constant term, we get: 
        \begin{align*}
            \frac{(1-\alpha^2)N}{2} +\frac{-\alpha^4 + 2\alpha^3(n-2) + \alpha^2(2n-2) + 2n\alpha + n^2}{2(n^2-\alpha^2)}
        \end{align*}
    }
    as well as
    \begin{align*}
        {\bm\omega}^\top \left(\hat{\kappa}\tilde{\Gamma}-\tilde{\Gamma}^\top \right){\bm\nu}
        &= \frac{n-2}{2}\left(\nu_1\omega_1 + \sum_{i=2}^{N+1}\nu_i\omega_i\right)
         + \hat\kappa\nu_1 \sum_{i=2}^{N+1}  \omega_i \alpha^{i-1}
         + \hat{\kappa}\omega_{N+1}\sum_{i=2}^N \nu_i \alpha^{N+1-i} \\
        &\quad + \sum_{i=1}^N  \omega_i\left(\hat{\kappa}\sum_{j=2}^{i-1} \nu_j \alpha^{i-j} - \sum_{j=i+1}^{N+1} \nu_j \alpha^{j-i}\right)\\
        &\to \frac{-(n-1)(2n-1)e^{-\rho\frac{n+1}{n-1}T} + n(n+4)(n-1) + n(n+1)(n-2)\rho T}{n(n+1)^2}.
    \end{align*}
    \hide{
        Letting $\eta:=\frac{\alpha(n-2)}{n}$, we have
        \begin{align*}
             \omega_i &= \frac{\left(1-\alpha\right)\frac n2+\alpha\eta^{N+1-i}}{\frac n2\left( \frac n2(1-\alpha)+\alpha\right)},
        \end{align*}
        for all $i \in \left\{1,\dots,N+1\right\}$. In particular, $ \omega_{N+1} = 2/n$. Furthermore, as above, we have
        \begin{align*}
            \nu_1 &= \frac{1}{n+\alpha}\left( 1+ \frac{\alpha}{n} \theta^{N}  \right)\qquad\text{and}\\
            \nu_i &= \frac{1}{n+\alpha }\left( 1-\alpha + \theta^{N+2-i}\left( \frac{1-\alpha^2}{n-1} \right) \right)
        \end{align*}
        for $i\in\left\{2,\dots,N+1\right\}$. Now, notice that we can collect $\frac{2}{n(n+\alpha)(\frac n2(1-\alpha)+\alpha)}$ since we have a product between $\omega$ and $\nu$ in every addendum, so we deal with each addendum, divided by $\frac{2}{n(n+\alpha)(\frac n2(1-\alpha)+\alpha)}$, separately
        \begin{enumerate}[label=\arabic*.]
            \item 
                \begin{align*}
                    &\frac{n-2}{2}\left(\left( 1+ \frac{\alpha}{n} \theta^{N}  \right)\left( \left(1-\alpha\right)\frac n2+\alpha\eta^{N} \right) + \sum_{i=2}^{N+1}\left( 1-\alpha + \theta^{N+2-i}\left( \frac{1-\alpha^2}{n-1} \right) \right)\left(\left(1-\alpha\right)\frac n2+\alpha\eta^{N+1-i}\right)\right) =\\
                    &\hspace{1cm}= \frac{n-2}{2} \left( A + B \right),
                \end{align*}
                where
                \begin{align*}
                    A &:= \frac{(1-\alpha)n}{2} + \alpha\eta^N + \frac{\alpha(1-\alpha)}{2}\theta^N + \frac{\alpha^2}{n}(\theta\eta)^N,\\
                    B &:= \frac{nN}{2}(1-\alpha)^2 + \frac{n\alpha(1-\alpha)}{2\alpha + n(1-\alpha)}(1-\eta^N) + \frac{n\alpha(1-\alpha^2)}{2(n+\alpha)}(1-\theta^N) + \frac{\alpha^2n(1-\alpha^2)}{n^2(1-\alpha^2) + 2n\alpha^2 - 2\alpha^2}(1-(\theta\eta)^N).
                \end{align*}
            \item 
                \begin{align*}
                    &\hat\kappa\left( 1+ \frac{\alpha}{n} \theta^{N}  \right) \sum_{i=2}^{N+1}  \left(\left(1-\alpha\right)\frac n2+\alpha\eta^{N+1-i}\right) \alpha^{i-1}=\\
                    &\hspace{1cm}=\frac{n\alpha(n-1)}{2}\left( 1+ \frac{\alpha}{n} \theta^{N}  \right)\left( 1-\eta^N \right)
                \end{align*}
            \item
                \begin{align*}
                    &\hat{\kappa}((1-\alpha)\frac n2+\alpha)\sum_{i=2}^N \left( 1-\alpha + \theta^{N+2-i}\left( \frac{1-\alpha^2}{n-1} \right) \right) \alpha^{N+1-i} =\\
                    &\hspace{1cm}= (n-1)((1-\alpha)\frac n2+\alpha)(\alpha - \alpha^N + \frac \alpha n(\alpha\theta -\alpha^N\theta^N))
                \end{align*}
            \item 
                \begin{align*}
                    &\sum_{i=1}^N  \left(\left(1-\alpha\right)\frac n2+\alpha\eta^{N+1-i}\right)\left(\hat{\kappa}\sum_{j=2}^{i-1} \left( 1-\alpha + \theta^{N+2-j}\left( \frac{1-\alpha^2}{n-1} \right) \right) \alpha^{i-j} - \sum_{j=i+1}^{N+1} \left( 1-\alpha + \theta^{N+2-j}\left( \frac{1-\alpha^2}{n-1} \right) \right) \alpha^{j-i}\right)\\
                    &\hspace{1cm}= \alpha((1-\alpha) \frac n2 + \alpha\eta^N)(\theta^N -1) + \alpha((1-\alpha) \frac n2 + \alpha\eta^{N-1})(\theta^{N-1} -1)+\\
                    &\hspace{1.5cm}+ \frac n2 \frac{\alpha(n-1)}{(n+\alpha)}\left( \alpha + \frac{n-1}{n}\alpha^2\theta \right)(1-\theta^{N-2}) + \frac n2 \hat{\kappa}(1 + \frac \alpha n\theta^N)(\alpha^N - \alpha^2) + \frac n2 \alpha(1-\alpha)(n-2)(N-2)+\\
                    &\hspace{1.5cm}+ \alpha\left( \alpha + \hat{\kappa}\frac{\alpha^2}{n}\theta\right)\frac{\alpha^2(n-1)(n-2)}{n^2(1-\alpha^2)+2n\alpha^2 -2\alpha^2}\left( 1-\theta^{N-2}\eta^{N-2}\right)  +\\
                    &\hspace{1.5cm}- \frac{\alpha\hat{\kappa}}{2}(n-2)(1+\frac \alpha n \theta^N)(\alpha^N - \alpha^2\eta^{N-2})  +\\
                    &\hspace{1.5cm}+ \frac{\alpha^3(n-2)^2}{2\alpha+n(1-\alpha)}(1-\eta^{N-2})
                \end{align*}
        \end{enumerate}
        notice now that, $\eta^N\to0$ and $\eta^N\theta^N\to0$ as $N\to\infty$, so we get in order 
        \begin{enumerate}[label=\arabic*.]
            \item $\to 0$ 
            \item $\to \frac{n(n-1)}{2}(1+\frac 1ne^{-\rho\frac{n+1}{n-1}T})$
            \item $\to (n-1)(1-e^{-\rho T} + \frac 1n(1 - e^{-\rho\frac{2n}{n-1} T}))$
            \item 
            \begin{enumerate}[label=4.\arabic*.]
                \item $\to0$
                \item $\to 0$
                \item $\to \frac{(n-1)(2n-1)}{2(n+1)}\left( 1-e^{-\rho\frac{n+1}{n-1}T}\right)$
                \item $\to \frac{n(n-1)}{2}\left( 1+\frac 1ne^{-\rho\frac{n+1}{n-1}T} \right)\left( e^{-\rho T} -1 \right)$
                \item $\to \frac{n(n-2)}{2}\rho T$ 
                \item $\to \frac{(2n-1)(n-2)}{2n}$
                \item $\to -\frac{(n-1)(n-2)}{2}(1+\frac 1ne^{-\rho\frac{n+1}{n-1}T})e^{-\rho T}$
                \item $\to \frac{(n-2)^2}{2}$,
            \end{enumerate}
        \end{enumerate}
        Finally, multiplying by the limit of $\frac{2}{n(n+\alpha)(\frac n2(1-\alpha)+\alpha)} = \frac{2}{n(n+1)}$ and summing up all together, we get 
        \begin{align*}
            \frac{-(n-1)(2n-1)e^{-\rho\frac{n+1}{n-1}T} + n(n+4)(n-1) + n(n+1)(n-2)\rho T}{n(n+1)^2}
        \end{align*}
    }
    
    \noindent\emph{Step 2: General case $\kappa \ge \frac{n-1}{2}$ with $\kappa \neq n-1$.}
    We include the boundary value $\kappa=\frac{n-1}{2}$ because intermediary limits below will also be used for that case. We first compute $\tilde{\Gamma}\bm{\nu}$. Define $C_1= \frac{\alpha\left(\alpha+1\right)}{\kappa +1 -\alpha\left(\kappa -\hat{\kappa}-1\right)}$ as above and
    \begin{align*}
    C_2&:=\sums c_\sigma d_\sigma\left(\frac{\alpha(\kappa-\hat{\kappa})}{m_\sigma-\alpha(\kappa-\hat{\kappa})}+\frac{m_\sigma}{m_\sigma-\alpha\kappa}\right)  [m_\sigma]^N,\\
    C_3&:=-C_2 + \sums d_\sigma\left( \frac{m_\sigma - \alpha^2\kappa}{m_\sigma-\alpha\kappa} + \frac{nC_1(\kappa-\hat{\kappa})}{m_\sigma-(\kappa-\hat{\kappa})} \right)[m_\sigma]^N .
    \end{align*}
    For $\sigma\in\{+,-\}$, write $\bar{\sigma}=-$ if $\sigma=+$ and $\bar{\sigma}=+$ if $\sigma = -$. Then
    \begin{align*}
    (\tilde{\Gamma}{\bm\nu})_1 &= \frac 12\sums\frac{d_\sigma(m_\sigma-\alpha^2\kappa)}{m_\sigma-\alpha\kappa}  [m_\sigma]^N+\frac{C_1 \alpha^N}{2}   [\kappa]^N,\\
    (\tilde{\Gamma}{\bm\nu})_2 &= \sums d_\sigma\left(\frac{\alpha(m_\sigma-\alpha^2\kappa)}{m_\sigma-\alpha\kappa}+\frac{nC_1\alpha(\kappa-\hat{\kappa}) }{2m_\sigma}\right)  [m_\sigma]^N \\
    &\hspace{1cm}{}+ \frac{C_2(1-\alpha)}{2} + \frac{C_1(1+\alpha^2\left(2\kappa-n\right)+\kappa)\alpha^N}{2\alpha\kappa}  [\kappa]^N,
    \end{align*}
    and, for $i\in \{3,\dots,N\}$,
    \begin{align*}
    (\tilde{\Gamma}{\bm\nu})_i &= \frac{C_2\left(1+\alpha\right)}{2}+\frac{nC_1}{2\alpha(\kappa-\hat{\kappa})} \sums \frac{d_\sigma m_\sigma\left(\hat{\kappa}-\kappa-m_\sigma\right)   [m_\sigma]^N}{\hat{\kappa}-\kappa+m_\sigma}\left(\frac{(\kappa-\hat{\kappa})\alpha}{m_\sigma}\right)^i\\
    &\hspace{1cm}{}+ \frac{C_1\alpha^{N+1}\kappa  [\kappa]^N}{2} \sums \frac{c_\sigma\left(m_\sigma+\alpha^2\kappa\right)}{m_\sigma\left(m_\sigma-\alpha^2\kappa\right)}\left(\frac{m_\sigma}{\alpha\kappa}\right)^i+C_3\alpha^{i-1}.
    \end{align*}
    Moreover,
    \begin{align*}
    (\tilde{\Gamma}{\bm\nu})_{N+1} &= \sums c_\sigma\left(\frac{d_\sigma m_\sigma \alpha }{m_\sigma-\alpha\kappa}+\frac{m_\sigma+(2d_\sigma-1)\alpha^2(\kappa-\hat{\kappa})}{2(m_\sigma-\alpha(\kappa-\hat{\kappa}))}+\frac{C_1\alpha^2\kappa}{m_\sigma-\alpha^2\kappa}\right)  [m_\sigma]^N\\
    &\hspace{.5cm} {}+C_3\alpha^N -C_1\alpha^N\tilde{\kappa}\left[\kappa-\hat{\kappa}\right]^N,
    \end{align*}
    whose last term can also be written as $-\frac{C_1\alpha^N}{2}( 2\kappa-\hat{\kappa}+1 )[\kappa-\hat{\kappa}]^N$.
    
    Next, for $i\in \{3,\dots,N\}$,
    \[
    \nu_i  (\tilde{\Gamma} {\bm\nu})_{i}  =  D_1^i + D_2^i + D_3^i + D_4^i,
    \]
    with $D_1^i := \dfrac{C_2\left(1+\alpha\right)}{2}  \nu_i$ and
    \begin{align*}
    D_2^i := C_2 (1-\alpha) \Bigg(&C_3\alpha^{i-1}+\frac{nC_1}{2\alpha(\kappa-\hat{\kappa})} \sums \frac{d_\sigma m_\sigma(\hat{\kappa}-\kappa-m_\sigma)   [m_\sigma]^N}{\hat{\kappa}-\kappa+m_\sigma}\left(\frac{(\kappa-\hat{\kappa})\alpha}{m_\sigma}\right)^i\\
    &\hspace{3.5cm}{}+ \frac{C_1\alpha^{N+1}\kappa  [\kappa]^N}{2} \sums \frac{c_\sigma(m_\sigma+\alpha^2\kappa)}{m_\sigma(m_\sigma-\alpha^2\kappa)}\left(\frac{m_\sigma}{\alpha\kappa}\right)^i\Bigg),
    \end{align*}
    \begin{align*}
    D_3^i := \frac{C_1C_3}{\alpha} \left(n\sums \frac{d_\sigma m_\sigma   [m_\sigma]^N}{\alpha(\kappa-\hat{\kappa})}\left(\frac{\alpha^2(\kappa-\hat{\kappa})}{m_\sigma}\right)^i +\sums \frac{c_\sigma \alpha^{N+1}\kappa   [\kappa]^N}{m_\sigma}\left(\frac{m_\sigma}{\kappa}\right)^i\right),
    \end{align*}
    and
    \begin{align*}
    D_4^i&:=\left(C_1\right)^2\Bigg( \frac{n^2}{2(\alpha(\kappa-\hat{\kappa}))^2}\Bigg(\sums\left( d_\sigma m_\sigma [m_\sigma]^N \right)^2\frac{\hat{\kappa}-\kappa-m_\sigma}{\hat{\kappa}-\kappa+m_\sigma}\left(\frac{\alpha(\kappa-\hat{\kappa})}{m_\sigma}\right)^{2i} \\
    &\hspace{4.5cm} + d_+d_-m_+m_-\left[m_+\right]^N\left[m_-\right]^N\left( \frac{2\left((\hat{\kappa}-\kappa)^2-m_+m_-\right)}{n(1-\alpha^2)(\hat{\kappa}-\kappa)} \right)\left(\frac{\left(\alpha(\kappa-\hat{\kappa})\right)^2}{m_+m_-}\right)^{i}\Bigg)\\
    &\hspace{2cm}+\frac{\alpha^{2(N+1)}\kappa^2\left([\kappa]^N\right)^2}{2}\Bigg(\sums \frac{(c_\sigma)^2(m_\sigma+\alpha^2\kappa)}{(m_\sigma)^2(m_\sigma-\alpha^2\kappa)}\left(\frac{m_\sigma}{\alpha\kappa}\right)^{2i}\\
    &\hspace{6cm} +\frac{c_+c_-2\left( (\alpha^2\kappa)^2-m_+m_- \right)}{nm_+m_-\alpha^2(1-\alpha^2)\kappa}\left(\frac{m_+m_-}{(\alpha\kappa)^2}\right)^i  \Bigg)\\
    &\hspace{2cm} + \frac{\alpha^N\kappa\left[ \kappa \right]^Nn}{2(\kappa-\hat{\kappa})}\Bigg(\frac{2(\hat{\kappa} - (1-\alpha^2)\kappa)}{n(1-\alpha^2)}\sums d_\sigma c_\sigma\left[ m_\sigma \right]^N\left(\frac{\kappa-\hat{\kappa}}{\kappa}\right)^i\\
    &\hspace{4.7cm} + \sums\frac{c_{\bar{\sigma}}d_\sigma m_\sigma[m_\sigma]^N}{m_{\bar{\sigma}}}\left( \frac{\hat{\kappa} - \kappa -m_\sigma}{\hat{\kappa} - \kappa +m_\sigma} +\frac{m_{\bar{\sigma}} +\alpha^2\kappa }{m_{\bar{\sigma}} -\alpha^2\kappa} \right)\left( \frac{m_{\bar{\sigma}}(\kappa - \hat{\kappa})}{m_\sigma\kappa} \right)^i \Bigg)\Bigg).
    \end{align*}
    Summing over $i$, we obtain
    \begin{align*}
    \sum_{i=3}^N D^i_1 &= \frac{C_2(1+\alpha)}{2}\Bigg( C_2(1-\alpha)(N-2) + C_1\sums\frac{m_\sigma c_\sigma\left( \frac{\alpha\kappa}{m_\sigma}[m_\sigma]^N - \frac{m_\sigma}{\alpha\kappa}\alpha^N[\kappa]^N \right)}{m_\sigma -\alpha\kappa} \\
    &\hspace{2.65cm} +nC_1\sums\frac{d_\sigma m_\sigma\left(  \left( \frac{\alpha(\kappa-\hat{\kappa})}{m_\sigma} \right)^2[m_\sigma]^N - \alpha^N\left[\kappa-\hat{\kappa}\right]^N \right)}{m_\sigma - \alpha(\kappa - \hat{\kappa})} \Bigg),
    \end{align*}
    and 
    \begin{align*}
    \sum_{i=3}^N D_2^i&=C_2C_3(\alpha^2-\alpha^N)\\
    &\quad+\frac{C_2C_1(1-\alpha^2)n}{2(1+\alpha)}\sums\frac{(d_\sigma)^2(\hat{\kappa} - \kappa - m_\sigma)\left( (m_\sigma)^2\alpha^N\left[ \kappa-\hat{\kappa} \right]^N - (\alpha(\kappa-\hat{\kappa}))^2\left[ m_\sigma \right]^N\right)}{d_\sigma m_\sigma(m_\sigma - (\kappa-\hat{\kappa}))((\kappa-\hat{\kappa})\alpha-m_\sigma)}\\
    &\quad+\frac{C_2C_1(1-\alpha^2)}{2(1+\alpha)}\sums\frac{(c_\sigma)^2(m_\sigma + \alpha^2\kappa)((\alpha\kappa)^2\left[ m_\sigma \right]^N - (m_\sigma)^2\alpha^N\left[ \kappa \right]^N)}{c_\sigma\alpha\kappa(m_\sigma-\alpha\kappa)(m_\sigma-\alpha^2\kappa)},
    \end{align*}
    \begin{align*}
    \sum_{i=3}^N D_3^i&=C_1C_3\Bigg( \sums \frac{nd_\sigma \left( (m_\sigma\alpha^N)^2\left[\kappa-\hat{\kappa}\right]^N -\left({\alpha^2}(\kappa-\hat{\kappa})\right)^2[m_\sigma]^N\right)}{m_\sigma(\alpha^2(\kappa-\hat{\kappa})-m_\sigma)} \\
    &\hspace{2.65cm}+\sums\frac{c_\sigma\left( \kappa^2\alpha^N\left[ m_\sigma \right]^N - (m_\sigma)^2\alpha^N[\kappa]^N \right)}{\kappa(m_\sigma-\kappa)}\Bigg),
    \end{align*}
    \begin{align}\label{fa}
        \lefteqn{\sum_{i=3}^N D_4^i}\nonumber\\
        &=(C_1)^2\Bigg(\frac{n^2}{2}\sums\frac{({\ds\ms})^2({\hat{\kappa}-\kappa-\ms})}{({\ms -(\kappa-\hat{\kappa})})({\alpha(\kappa-\hat{\kappa}) - \ms})({\ms +\alpha(\kappa-\hat{\kappa})})}({\alpha^N[{\kappa-\hat{\kappa}}]^N})^2\nonumber\\
        &\hspace{.5cm}- \frac{1}{2(\alpha\kappa)^2}\sums\frac{(\cs)^2(\ms)^4(\ms + \alpha^2\kappa)}{(\ms-\alpha\kappa)(\ms -\alpha^2\kappa)\tonde{\ms +\alpha\kappa}}\tonde{\alpha^N\quadre{\kappa}^N}^2\nonumber\\
        &\hspace{.5cm}+\sums\Bigg(\frac{(\cs)^2\tonde{\ms+\alpha^2\kappa}(\alpha\kappa)^2}{2(\ms-\alpha\kappa)\tonde{\ms-\alpha^2\kappa}\tonde{\ms+\alpha\kappa}}\nonumber\\
        &\hspace{3cm}-\frac{n^2\tonde{\ds}^2\tonde{\alpha(\kappa-\hat{\kappa})}^4\tonde{\hat{\kappa}-\kappa-\ms}}{2(\ms)^2\tonde{\ms -(\kappa-\hat{\kappa})}\tonde{\alpha(\kappa-\hat{\kappa}) - \ms}\tonde{\ms +\alpha(\kappa-\hat{\kappa})}}\Bigg)\tonde{\quadre{\ms}^N}^2\nonumber\\
        &\hspace{.5cm}+\sums\alpha^N\kappa\tonde{ \frac{n\cs\dsb\msb\tonde{\frac{\hat{\kappa}-\kappa-\msb}{\hat{\kappa}-\kappa+\msb} +\frac{\ms+\alpha^2\kappa}{\ms-\alpha^2\kappa}}}{2\tonde{\ms(\kappa-\hat{\kappa})-\msb\kappa}} - \frac{\cs\ds\tonde{\hat{\kappa}-\tonde{1-\alpha^2}\kappa}}{\tonde{1-\alpha^2}\hat{\kappa}}}\quadre{\ms}^N\quadre{\kappa-\hat{\kappa}}^N\nonumber\\
        &\hspace{.5cm}+\sums\frac{\alpha^N(\kappa-\hat{\kappa})^2}{\kappa}\tonde{\frac{\cs\ds\tonde{\hat{\kappa}-\tonde{1-\alpha^2}\kappa}}{\hat{\kappa}\tonde{1-\alpha^2}}-\frac{n\csb\ds\tonde{\msb}^2\tonde{\frac{\hat{\kappa}-\kappa-\ms}{\hat{\kappa}-\kappa+\ms} + \frac{\msb + \alpha^2\kappa}{\msb - \alpha^2\kappa}}}{2\ms\tonde{\msb(\kappa-\hat{\kappa})-\ms\kappa}}}\quadre{\ms}^N\quadre{\kappa}^N\nonumber\\
        &\hspace{.5cm}+\kappa\tonde{\tonde{\alpha^2-1}\kappa+\hat{\kappa}}\tonde{\alpha^N}^2\tonde{ \frac{c_+c_-\tonde{\alpha(\kappa-\hat{\kappa})}^2}{n(\alpha\kappa)^2\tonde{1-\alpha^2}\hat{\kappa}}\tonde{\quadre{\kappa}^N}^2 -\frac{nd_+d_-}{\hat{\kappa}\tonde{1-\alpha^2}}\tonde{\quadre{\kappa-\hat{\kappa}}^N}^2}\nonumber\\
        &\hspace{.5cm}+\frac{\tonde{1-\alpha^2}\kappa-\hat{\kappa}}{n\kappa}\tonde{\frac{c_+c_-\kappa^2}{\hat{\kappa}\tonde{1-\alpha^2}}-\frac{n^2d_+d_-(\kappa-\hat{\kappa})^2}{\hat{\kappa}\tonde{1-\alpha^2}} }\quadre{m_+}^N[m_-]^N\Bigg).\nonumber
    \end{align}
    Note that
    \[
    \frac{c_+}{m_+(\kappa-\hat{\kappa})-m_-\kappa}
    = \left(R\left(\frac{d_-}{1-\alpha^2}\frac{1-\alpha^2}{c_+}\kappa+(\kappa-\hat{\kappa})\right)\right)^{-1}.
    \]
    Using Lemma~\ref{lemma_asympt_rate_various_elements}, the limits of the preceding sums combine to give
    \begin{align*}
    {\bm\nu}^\top \tilde{\Gamma} {\bm\nu}
    &= \nu_1 (\tilde{\Gamma} {\bm\nu})_1 + \nu_2 (\tilde{\Gamma} {\bm\nu})_2 + \sum_{k=1}^4 \sum_{i=3}^N D_k^i + \nu_{N+1} (\tilde{\Gamma} {\bm\nu})_{N+1}\\
    &\to \frac{(n-1)}{2n^2(n+1)^3}\left(-e^{-2\rho\frac{n+1}{n-1}T} - 4n e^{-\rho\frac{n+1}{n-1}T}  +\frac{2n^2(n+1)}{(n-1)} \rho T + \frac{n^2(n+7)}{(n-1)}\right).
    \end{align*}
    \hide{
        Note first that we can easily calculate the following limits: 
        \begin{enumerate}[label=\arabic*.]
            \item $C_1\to\frac{2}{n+1}$
            \item $C_2\to\frac{1}{n+1}$
            \item $C_3\to 0$
        \end{enumerate}
        So first of all let's evaluate, in order, the limits of $\sum_{k=1}^4 \sum_{i=3}^N D_k^i$: 
            \begin{enumerate}[label=\arabic*.]
                \item 
                    \begin{align*}
                        \sum_{i=3}^N D^i_1\to \frac{1}{(n+1)^2}\rho T + \frac{2}{(n+1)^3}\tonde{1-e^{-\rho\frac{n+1}{n-1}T}} + \frac{(\kappa-\hat{\kappa})^2}{(n+1)^2n\kappa^2}e^{-\rho\frac{n+1}{n-1}T} + \frac{(\kappa-\hat{\kappa})^2}{(n+1)^2\kappa^2}
                    \end{align*}
                    Indeed, in order, we have
                    \begin{enumerate}[label=\alph*.]
                        \item 
                        \begin{align*}
                            \frac{C_2^2(1+\alpha)(1-\alpha)(N-2)}{2}\to\frac{1}{(n+1)^2}\rho T
                        \end{align*}
                        \item 
                        \begin{align*}
                            \frac{C_2C_1(1+\alpha)}{2}\sums\frac{m_\sigma c_\sigma\left( \frac{\alpha\kappa}{m_\sigma}[m_\sigma]^N - \frac{m_\sigma}{\alpha\kappa}\alpha^N[\kappa]^N \right)}{m_\sigma -\alpha\kappa}\to \frac{2}{(n+1)^3}\tonde{1-e^{-\rho\frac{n+1}{n-1}T}} + \frac{(\kappa-\hat{\kappa})^2}{(n+1)^2n\kappa^2}e^{-\rho\frac{n+1}{n-1}T}
                        \end{align*}
                        \item 
                        \begin{align*}
                            \frac{nC_1C_2(1+\alpha)}{2}\sums\frac{d_\sigma m_\sigma\left(  \left( \frac{\alpha(\kappa-\hat{\kappa})}{m_\sigma} \right)^2[m_\sigma]^N - \alpha^N\left[\kappa-\hat{\kappa}\right]^N \right)}{m_\sigma - \alpha(\kappa - \hat{\kappa})}\to \frac{(\kappa-\hat{\kappa})^2}{(n+1)^2\kappa^2}
                        \end{align*}
                    \end{enumerate}
                \item 
                    \begin{align*}
                        \sum_{i=3}^N D^i_2\to \frac{n-1}{n(n+1)^3}\tonde{1-e^{-\rho\frac{n+1}{n-1}T}}
                    \end{align*}
                    Indeed, in order, we have
                    \begin{enumerate}[label=\alph*.]
                        \item
                            \begin{align*}
                                C_2C_3(\alpha^2-\alpha^N)\to 0
                            \end{align*}
                        \item 
                            \begin{align*}
                                \frac{C_2C_1(1-\alpha^2)n}{2(1+\alpha)}\sums\frac{(d_\sigma)^2(\hat{\kappa} - \kappa - m_\sigma)\left( (m_\sigma)^2\alpha^N\left[ \kappa-\hat{\kappa} \right]^N - (\alpha(\kappa-\hat{\kappa}))^2\left[ m_\sigma \right]^N\right)}{d_\sigma m_\sigma(m_\sigma - (\kappa-\hat{\kappa}))((\kappa-\hat{\kappa})\alpha-m_\sigma)}\to0
                            \end{align*}
                        \item 
                            \begin{align*}
                                &\frac{C_2C_1(1-\alpha^2)}{2(1+\alpha)}\sums\frac{(c_\sigma)^2(m_\sigma + \alpha^2\kappa)((\alpha\kappa)^2\left[ m_\sigma \right]^N - (m_\sigma)^2\alpha^N\left[ \kappa \right]^N)}{c_\sigma\alpha\kappa(m_\sigma-\alpha\kappa)(m_\sigma-\alpha^2\kappa)}\\
                                &\hspace{.5cm}\to \frac{n-1}{n(n+1)^3}\tonde{1-e^{-\rho\frac{n+1}{n-1}T}}
                            \end{align*}
                    \end{enumerate}
                \item 
                    \begin{align*}
                        \sum_{i=3}^N D^i_3\to 0 
                    \end{align*}
                     Indeed, in order, we have
                     \begin{enumerate}[label=\alph*.]
                         \item 
                            \begin{align*}
                                 \sums \frac{nd_\sigma \left( (m_\sigma\alpha^N)^2\left[\kappa-\hat{\kappa}\right]^N -\left({\alpha^2}(\kappa-\hat{\kappa})\right)^2[m_\sigma]^N\right)}{m_\sigma(\alpha^2(\kappa-\hat{\kappa})-m_\sigma)}\to \frac{n\alpha^4(\kappa-\hat{\kappa})^2}{2n\kappa^2}
                            \end{align*}
                        \item  
                            \begin{align*}
                                \sums\frac{c_\sigma\left( \kappa^2\alpha^N\left[ m_\sigma \right]^N - (m_\sigma)^2\alpha^N[\kappa]^N \right)}{\kappa(m_\sigma-\kappa)}\to \frac{e^{-\rho T}}{2}\tonde{1-e^{-\frac{2\rho T}{n-1}}} + \frac{(\kappa-\hat{\kappa})^2}{2n\kappa^2}e^{-\rho\frac{n+1}{n-1}T}
                            \end{align*}
                     \end{enumerate}
                \item
                    \begin{align*}
                        \sum_{i=3}^N D^i_4\to &-\frac{e^{-2\rho\frac{n+1}{n-1}T}}{(n+1)^32n^2\kappa^4}\tonde{2n\kappa^4+(n+1)(\kappa-\hat{\kappa})^4} + \frac{1}{(n+1)^2n\kappa^2}\tonde{\frac{\kappa^2}{n+1}-\frac{n(\kappa-\hat{\kappa})^4}{2\kappa^2}}\\
                        &+ \frac{\tonde{2\kappa\hat{\kappa}-\hat{\kappa}^2 }(\kappa-\hat{\kappa})^2}{(n+1)^2n\kappa^4}e^{-\rho\frac{n+1}{n-1}T} + e^{-2\rho\frac{n+1}{n-1}T}\frac{(\kappa-\hat{\kappa})^2}{(n+1)^2n^2\kappa^2}
                    \end{align*}
                    Indeed, in order, we have
                    \begin{enumerate}[label=\alph*.]
                        \item 
                            \begin{align*}
                                &\tonde{C_1}^2\frac{n^2}{2}\sums\frac{\tonde{\ds\ms}^2\tonde{\hat{\kappa}-\kappa-\ms}}{\tonde{\ms -(\kappa-\hat{\kappa})}\tonde{\alpha(\kappa-\hat{\kappa}) - \ms}\tonde{\ms +\alpha(\kappa-\hat{\kappa})}}\tonde{\alpha^N\quadre{\kappa-\hat{\kappa}}^N}^2\to 0
                            \end{align*}
                        \item
                            \begin{align*}
                                &-\frac{\tonde{C_1}^2}{2(\alpha\kappa)^2}\sums\frac{(\cs)^2(\ms)^4(\ms + \alpha^2\kappa)}{(\ms-\alpha\kappa)(\ms -\alpha^2\kappa)\tonde{\ms +\alpha\kappa}}\tonde{\alpha^N\quadre{\kappa}^N}^2\\
                                &\hspace{.5cm}\to -\frac{e^{-2\rho\frac{n+1}{n-1}T}}{(n+1)^32n^2\kappa^4}\tonde{2n\kappa^4+(n+1)(\kappa-\hat{\kappa})^4} 
                            \end{align*}
                        \item 
                            \begin{align*}
                                \tonde{C_1}^2&\sums\Bigg(\frac{(\cs)^2\tonde{\ms+\alpha^2\kappa}(\alpha\kappa)^2}{2(\ms-\alpha\kappa)\tonde{\ms-\alpha^2\kappa}\tonde{\ms+\alpha\kappa}}\\
                                &\hspace{1.7cm}-\frac{n^2\tonde{\ds}^2\tonde{\alpha(\kappa-\hat{\kappa})}^4\tonde{\hat{\kappa}-\kappa-\ms}}{2(\ms)^2\tonde{\ms -(\kappa-\hat{\kappa})}\tonde{\alpha(\kappa-\hat{\kappa}) - \ms}\tonde{\ms +\alpha(\kappa-\hat{\kappa})}}\Bigg)\tonde{\quadre{\ms}^N}^2\\
                                &\hspace{.5cm}\to \frac{1}{(n+1)^2n\kappa^2}\tonde{\frac{\kappa^2}{n+1}-\frac{n(\kappa-\hat{\kappa})^4}{2\kappa^2}}
                            \end{align*}
                        \item 
                            \begin{align*}
                                \tonde{C_1}^2\sums\alpha^N\kappa\tonde{ \frac{n\cs\dsb\msb\tonde{\frac{\hat{\kappa}-\kappa-\msb}{\hat{\kappa}-\kappa+\msb} +\frac{\ms+\alpha^2\kappa}{\ms-\alpha^2\kappa}}}{2\tonde{\ms(\kappa-\hat{\kappa})-\msb\kappa}} - \frac{\cs\ds\tonde{\hat{\kappa}-\tonde{1-\alpha^2}\kappa}}{\tonde{1-\alpha^2}\hat{\kappa}}}\quadre{\ms}^N\quadre{\kappa-\hat{\kappa}}^N\to 0
                            \end{align*}
                        \item 
                            \begin{align*}
                                &\tonde{C_1}^2\sums\frac{\alpha^N(\kappa-\hat{\kappa})^2}{\kappa}\tonde{\frac{\cs\ds\tonde{\hat{\kappa}-\tonde{1-\alpha^2}\kappa}}{\hat{\kappa}\tonde{1-\alpha^2}}-\frac{n\csb\ds\tonde{\msb}^2\tonde{\frac{\hat{\kappa}-\kappa-\ms}{\hat{\kappa}-\kappa+\ms} + \frac{\msb + \alpha^2\kappa}{\msb - \alpha^2\kappa}}}{2\ms\tonde{\msb(\kappa-\hat{\kappa})-\ms\kappa}}}\quadre{\ms}^N\quadre{\kappa}^N\\
                                &\hspace{3cm}\to \frac{\tonde{2\kappa\hat{\kappa}-\hat{\kappa}^2 }(\kappa-\hat{\kappa})^2}{(n+1)^2n\kappa^4}e^{-\rho\frac{n+1}{n-1}T}
                            \end{align*}
                        \item 
                            \begin{align*}
                                &\tonde{C_1}^2\kappa\tonde{\tonde{\alpha^2-1}\kappa+\hat{\kappa}}\tonde{\alpha^N}^2\tonde{ \frac{c_+c_-\tonde{\alpha(\kappa-\hat{\kappa})}^2}{n(\alpha\kappa)^2\tonde{1-\alpha^2}\hat{\kappa}}\tonde{\quadre{\kappa}^N}^2 -\frac{nd_+d_-}{\hat{\kappa}\tonde{1-\alpha^2}}\tonde{\quadre{\kappa-\hat{\kappa}}^N}^2}\\
                                &\hspace{.5cm}\to e^{-2\rho\frac{n+1}{n-1}T}\frac{(\kappa-\hat{\kappa})^2}{(n+1)^2n^2\kappa^2}
                            \end{align*}
                        \item
                            \begin{align*}
                                \tonde{C_1}^2\frac{\tonde{1-\alpha^2}\kappa-\hat{\kappa}}{n\kappa}\tonde{\frac{c_+c_-\kappa^2}{\hat{\kappa}\tonde{1-\alpha^2}}-\frac{n^2d_+d_-(\kappa-\hat{\kappa})^2}{\hat{\kappa}\tonde{1-\alpha^2}} }\quadre{m_+}^N[m_-]^N\to 0
                            \end{align*}
                    \end{enumerate}
            \end{enumerate}
            now, we see that
            \begin{align*}
                \nu_1\to \frac{n-1}{(n+1)n\kappa}\tonde{n+e^{-\rho\frac{n+1}{n-1}T}}
            \end{align*}
            and then
            \begin{align*}
                (\left.\tilde{\Gamma}{\bm\nu}\right.)_1\to \frac{n-1}{(n+1)2\kappa}\tonde{1+\frac{1}{n}e^{-\rho\frac{n+1}{n-1}T}}
            \end{align*}
            so then we have
            \begin{align*}
                \nu_1(\left.\tilde{\Gamma}{\bm\nu}\right.)_1\to \frac{\tonde{n-1}^2e^{-2\rho\frac{n+1}{n-1}T}\tonde{ne^{\rho\frac{n+1}{n-1}T} +1}^2}{(n+1)^22n^2\kappa^2}
            \end{align*}
            now, we see that
            \begin{align*}
                \nu_2\to \frac{\tonde{n-1}(\kappa-\hat{\kappa})\tonde{n+e^{-\rho\frac{n+1}{n-1}T}}}{(n+1)n\kappa^2}
            \end{align*}
            and then
            \begin{align*}
                (\left.\tilde{\Gamma}{\bm\nu}\right.)_2 \to \frac{\tonde{3\kappa-\hat{\kappa}}\tonde{n-1}\tonde{n+e^{-\rho\frac{n+1}{n-1}T}}}{2(n+1)\kappa^2n}
            \end{align*}
            so then we have
            \begin{align*}
                \nu_2(\left.\tilde{\Gamma}{\bm\nu}\right.)_2\to \frac{e^{-2\rho\frac{n+1}{n-1}T}\tonde{ne^{\rho\frac{n+1}{n-1}T}+1}^2(\kappa-\hat{\kappa})\tonde{3\kappa-\hat{\kappa}}\tonde{n-1}^2}{2(n+1)^2\kappa^4n^2}
            \end{align*}
            now, we see that
            \begin{align*}
                \nu_{N+1}\to 0
            \end{align*}
            and then
            \begin{align*}
                (\left.\tilde{\Gamma}{\bm\nu}\right.)_{N+1} \to 1/n
            \end{align*}
            so then we have
            \begin{align*}
                \nu_{N+1}(\left.\tilde{\Gamma}{\bm\nu}\right.)_{N+1}\to 0
            \end{align*}   
            }
    \medskip
    We now turn to ${\bm\omega}^\top \left(\hat{\kappa}\tilde{\Gamma}-\tilde{\Gamma}^\top \right){\bm\nu}$. Set
    \begin{align*}
    C_4 &:= \frac{\left(\alpha^2\left(\Tilde{\kappa}-1\right)-\Tilde{\kappa}\right)-\alpha\left(\frac{\alpha\left(\Tilde{\kappa}-1\right)}{\Tilde{\kappa}}\right)^{N+1}}{\left(\Tilde{\kappa}-\alpha\left(\Tilde{\kappa}-1\right)\right)\left(\alpha^2\left(\Tilde{\kappa}-1\right)-\Tilde{\kappa}\right)},\\ 
    C_5 &:= \left(\Tilde{\kappa}+\alpha\left(\Tilde{\kappa}-1\right) - \frac{(n-2)\tilde{\kappa}\left(\alpha^2(\tilde{\kappa}-1)-\tilde{\kappa}\right)}{\tilde{\kappa}-\alpha(\tilde{\kappa}-1)}\right)\frac{\alpha^2(\tilde{\kappa}-1)}{\Tilde{\kappa}^2\left(\alpha^2\left(\Tilde{\kappa}-1\right)-\Tilde{\kappa}\right)},\\
    C_6 &:=\frac{n-2}{2 \tilde{\kappa}\left(\tilde{\kappa}-\alpha(\tilde{\kappa}-1)\right)}.
    \end{align*}
    
    Then, for $i\in\{2,\dots,N\}$,
    \begin{align*}
    \bigl({\bm\omega}^{\top}(\hat{\kappa}\tilde{\Gamma}-\tilde{\Gamma}^{\top})\bigr)_1
    &= \frac{n-2}{2} \omega_1
      + \frac{(n-1)\alpha}{\tilde{\kappa}-\alpha\bigl(\tilde{\kappa}-1\bigr)}
        \left(1-\left(\frac{\alpha(\tilde{\kappa}-1)}{\tilde{\kappa}}\right)^{  N}\right),\\[0.35em]
    \bigl({\bm\omega}^{\top}(\hat{\kappa}\tilde{\Gamma}-\tilde{\Gamma}^{\top})\bigr)_i
    &= \frac{n-2}{2} \omega_i
      + C_4 \alpha^{i}
      + C_5\left(\frac{\alpha(\tilde{\kappa}-1)}{\tilde{\kappa}}\right)^{  N-i}
      + \frac{(n-2)\alpha}{\tilde{\kappa}-\alpha\bigl(\tilde{\kappa}-1\bigr)},
    \end{align*}
    and
    \begin{align*}
    \bigl({\bm\omega}^\top  (\hat{\kappa}\tilde{\Gamma}-\tilde{\Gamma}^\top )\bigr)_{N+1}
    = \frac{\alpha\Big(\alpha^N\left(\tilde{\kappa}-\alpha^2\left(\tilde{\kappa}-1\right)\right)\tilde{\kappa}+\alpha^2\left(\tilde{\kappa}-1\right)\left(\tilde{\kappa}-1+\left(\frac{\alpha^2\left(\tilde{\kappa}-1\right)}{\tilde{\kappa}}\right)^{  N}\right)-\tilde{\kappa}^2\Big)}{\tilde{\kappa}\left(\tilde{\kappa}-\alpha\left(\tilde{\kappa}-1\right)\right)\left(\tilde{\kappa}-\alpha^2\left(\tilde{\kappa}-1\right)\right)} + \frac{n-2}{2}\frac{1}{\tilde{\kappa}}.
    \end{align*}
    For $i \in \{2,\dots,N\}$, write
    \[
    \bigl({\bm\omega}^\top  (\hat{\kappa}\tilde{\Gamma}-\tilde{\Gamma}^\top ) \bigr)_i {\nu}_i = G^i_1 + G^i_2 + G^i_3,
    \]
    where
    \begin{align*}
    G^i_1&=C_2(1-\alpha)\left(\frac{n-2}{2}\omega_i + C_4\alpha^i +C_5\left(\frac{\alpha(\tilde{\kappa}-1)}{\tilde{\kappa}}\right)^{  N}\left(\frac{\tilde{\kappa}}{\alpha(\tilde{\kappa}-1)}\right)^{  i}+\frac{(n-2)\alpha}{\tilde{\kappa} - \alpha(\tilde{\kappa}-1)}\right),\\
    G^i_2&=nC_1\sums \frac{d_\sigma m_\sigma}{\alpha(\kappa-\hat{\kappa})}\Bigg(\frac{n-2}{2}\omega_i\left(\frac{\alpha(\kappa-\hat{\kappa})}{m_\sigma}\right)^{  i} + C_4\left(\frac{\alpha^2(\kappa-\hat{\kappa})}{m_\sigma}\right)^{  i} \\
    &\qquad\qquad+ C_5\left(\frac{\alpha(\tilde{\kappa}-1)}{\tilde{\kappa}}\right)^{  N}\left(\frac{\tilde{\kappa}(\kappa-\hat{\kappa})}{m_\sigma(\tilde{\kappa}-1)}\right)^{  i}
    +\frac{(n-2)\alpha}{\tilde{\kappa}-\alpha(\tilde{\kappa}-1)}\left(\frac{\alpha(\kappa-\hat{\kappa})}{m_\sigma}\right)^{  i}\Bigg)[m_\sigma]^N,\\
    G^i_3&=C_1\sums \frac{c_\sigma \alpha^{N+1}\kappa}{m_\sigma}\Bigg(\frac{n-2}{2}\omega_i\left(\frac{m_\sigma}{\alpha\kappa}\right)^{  i} + C_4\left(\frac{m_\sigma}{\kappa}\right)^{  i} +C_5\left(\frac{\alpha(\tilde{\kappa}-1)}{\tilde{\kappa}}\right)^{  N}\left(\frac{\tilde{\kappa}m_\sigma}{{\kappa}\alpha^2(\tilde{\kappa}-1)}\right)^{  i}\\
    &\qquad\qquad+\frac{(n-2)\alpha}{\tilde{\kappa}-\alpha(\tilde{\kappa}-1)}\left(\frac{m_\sigma}{\alpha\kappa}\right)^{  i} \Bigg)[\kappa]^N .
    \end{align*}
    \hide{
    Notice that
    \begin{align*}
       (\left.{\bm\omega}^\top  (\left.\hat{\kappa}\tilde{\Gamma}-\tilde{\Gamma}^\top \right.)\right.)_i = \frac{n-2}{2}\omega_i + C_4\alpha^i +C_5\tonde{\frac{\alpha\tonde{\tilde{\kappa}-1}}{\tilde{\kappa}}}^{N-i}+\frac{\tonde{n-2}\alpha}{\tilde{\kappa} - \alpha\tonde{\tilde{\kappa}-1}}
    \end{align*}
    }
    
    Summing over $i$,
    \begin{align*}
    \sum_{i=2}^N G_1^i
    &= C_2\left(1-\alpha\right)\vast[
         C_6\left(
            \left(1-\alpha\right)\left(N-1\right)\tilde{\kappa}
            + \frac{\alpha^2\left(\tilde{\kappa}-1\right)}{\tilde{\kappa}-\alpha\left(\tilde{\kappa}-1\right)}
              \left(1-\left(\frac{\alpha\left(\tilde{\kappa}-1\right)}{\tilde{\kappa}}\right)^{  N-1}\right)
         \right) \\
    &\hspace{4.6em}
       +  C_4 \frac{\alpha^{N+1}-\alpha^2}{\alpha-1}
       + \frac{C_5 \tilde{\kappa}}{\tilde{\kappa}-\alpha\left(\tilde{\kappa}-1\right)}
         \left(1-\left(\frac{\alpha\left(\tilde{\kappa}-1\right)}{\tilde{\kappa}}\right)^{  N-1}\right)
       + \frac{\left(n-2\right)\alpha}{\tilde{\kappa}-\alpha\left(\tilde{\kappa}-1\right)}\left(N-1\right)
    \vast].
    \end{align*}
    \hide{
    Notice that
    \begin{align*}
        \sum_{i=2}^N\omega_i = \frac{1}{\tilde{\kappa}\tonde{\tilde{\kappa}-\alpha\tonde{\tilde{\kappa}-1}}}\tonde{\tonde{1-\alpha}\tonde{N-1}\tilde{\kappa}+\frac{\alpha^2\tonde{\tilde{\kappa}-1}}{\tilde{\kappa}-\alpha\tonde{\tilde{\kappa}-1}}\tonde{1-\tonde{\frac{\alpha\tonde{\tilde{\kappa}-1}}{\tilde{\kappa}}}^{N-1}}}
    \end{align*}
    }
    Moreover, 
    \begin{align*}
        {\sum_{i=2}^NG^i_2}&\\
        &\hspace{-1cm}=\sums \frac{nC_1d_\sigma m_\sigma   }{\alpha(\kappa-\hat{\kappa})}\vast[C_6\vastt(\frac{\alpha\tonde{1-\alpha}\tilde{\kappa}(\kappa-\hat{\kappa})}{\alpha(\kappa-\hat{\kappa})-\ms}\tonde{\alpha^N\left[\kappa-\hat{\kappa}\right]^N - \frac{\alpha\tonde{
        \kappa-\hat{\kappa}}}{\ms}\left[ \ms \right]^{N}}\\
        &\hspace{2.8cm}+\frac{\alpha\tilde{\kappa}(\kappa-\hat{\kappa})}{\tilde{\kappa}(\kappa-\hat{\kappa})-\ms\tonde{\tilde{\kappa}-1}}\tonde{\frac{\tilde{\kappa}-1}{\tilde{\kappa}}\alpha^{N+1}\left[\kappa-\hat{\kappa}\right]^N-\frac{\kappa-\hat{\kappa}}{\ms}\alpha^{N+1}\tonde{\frac{\tilde{\kappa}-1}{\tilde{\kappa}}}^N\left[\ms\right]^{N}}\vastt)\\
        &\hspace{2.9cm}+ \frac{C_4\alpha^2(\kappa-\hat{\kappa})}{\alpha^2(\kappa-\hat{\kappa})-\ms}\tonde{\alpha^{2N}\left[\kappa-\hat{\kappa}\right]^N - \frac{\alpha^2(\kappa-\hat{\kappa})}{\ms}\left[\ms\right]^{N}} \\
        &\hspace{2.9cm}+ \frac{C_5\tilde{\kappa}(\kappa-\hat{\kappa})}{\tilde{\kappa}(\kappa-\hat{\kappa})-\ms\tonde{\tilde{\kappa}-1}}\tonde{\alpha^N\left[\kappa-\hat{\kappa}\right]^N-\frac{\kappa-\hat{\kappa}}{\ms}\alpha^N\tonde{\frac{\tilde{\kappa}-1}{\tilde{\kappa}}}^{N-1}\left[ \ms \right]^{N}} \\
        &\hspace{2.9cm}+ \frac{\tonde{n-2}\alpha^2(\kappa-\hat{\kappa})}{\tonde{\tilde{\kappa}-\alpha\tonde{\tilde{\kappa}-1}}\tonde{\alpha(\kappa-\hat{\kappa})-\ms}}\tonde{\alpha^N\left[\kappa-\hat{\kappa}\right]^N - \frac{\alpha(\kappa-\hat{\kappa})}{\ms}\left[\ms\right]^{N}}\vast].
    \end{align*}
    \hide{
    Notice that
    \begin{align*}
        &\sum_{i=2}^N\frac{n-2}{2}\omega_i\left(\frac{\alpha(\kappa-\hat{\kappa})}{m_\sigma}\right)^i\left[\ms\right]^{N}\\
        &\hspace{.5cm}=\frac{n-2}{2{\tilde{\kappa}\tonde{\tilde{\kappa}-\alpha\tonde{\tilde{\kappa}-1}}}}\vast(\frac{\alpha\tonde{1-\alpha}\tilde{\kappa}(\kappa-\hat{\kappa})}{\alpha(\kappa-\hat{\kappa})-\ms}\tonde{\alpha^N\left[\kappa-\hat{\kappa}\right]^N - \alpha\tonde{
        \kappa-\hat{\kappa}}\left[ \ms \right]^{N-1}}\\
        &\hspace{4.7cm}+\frac{\alpha\tilde{\kappa}(\kappa-\hat{\kappa})}{\tilde{\kappa}(\kappa-\hat{\kappa})-\ms\tonde{\tilde{\kappa}-1}}\tonde{\frac{\tilde{\kappa}-1}{\tilde{\kappa}}\alpha^{N+1}\left[\kappa-\hat{\kappa}\right]^N-(\kappa-\hat{\kappa})\alpha^{N+1}\tonde{\frac{\tilde{\kappa}-1}{\tilde{\kappa}}}^N\left[\ms\right]^{N-1}}\vast),
    \end{align*}
    also
    \begin{align*}
        C_4\sum_{i=2}^N\left(\frac{\alpha^2(\kappa-\hat{\kappa})}{m_\sigma}\right)^i\left[\ms\right]^{N} = \frac{C_4\alpha^2(\kappa-\hat{\kappa})}{\alpha^2(\kappa-\hat{\kappa})-\ms}\tonde{\alpha^{2N}\left[\kappa-\hat{\kappa}\right]^N - \alpha^2(\kappa-\hat{\kappa})\left[\ms\right]^{N-1}},
    \end{align*}
    also
    \begin{align*}
        &C_5\tonde{\frac{\alpha\tonde{\tilde{\kappa}-1}}{\tilde{\kappa}}}^{N}\sum_{i=2}^N\tonde{\frac{\tilde{\kappa}(\kappa-\hat{\kappa})}{\ms\tonde{\tilde{\kappa}-1}}}^{i}\left[\ms\right]^{N}\\
        &\hspace{.5cm}=\frac{C_5\tilde{\kappa}(\kappa-\hat{\kappa})}{\tilde{\kappa}(\kappa-\hat{\kappa})-\ms\tonde{\tilde{\kappa}-1}}\tonde{\alpha^N\left[\kappa-\hat{\kappa}\right]^N-(\kappa-\hat{\kappa})\alpha^N\tonde{\frac{\tilde{\kappa}-1}{\tilde{\kappa}}}^{N-1}\left[ \ms \right]^{N-1}},
    \end{align*}
    and finally
    \begin{align*}
        &\frac{\tonde{n-2}\alpha}{\tilde{\kappa}-\alpha\tonde{\tilde{\kappa}-1}}\sum_{i=2}^N\left(\frac{\alpha(\kappa-\hat{\kappa})}{m_\sigma}\right)^i\left[\ms\right]^{N}\\
        &\hspace{.5cm}= \frac{\tonde{n-2}\alpha^2(\kappa-\hat{\kappa})}{\tonde{\tilde{\kappa}-\alpha\tonde{\tilde{\kappa}-1}}\tonde{\alpha(\kappa-\hat{\kappa})-\ms}}\tonde{\alpha^N\left[\kappa-\hat{\kappa}\right]^N - \alpha(\kappa-\hat{\kappa})\left[\ms\right]^{N-1}}
    \end{align*}
    }
    Finally,
    \begin{align*}
        {\sum_{i=2}^NG^i_3}&=C_1\sums \frac{c_\sigma\kappa   }{m_\sigma}\vast[C_6\vastt(\frac{\tonde{1-\alpha}\tilde{\kappa}\ms}{\ms-\alpha\kappa} \tonde{ \alpha\left[\ms \right]^N -\frac{\alpha^N\ms}{\kappa}[\kappa]^N } \\
        &\hspace{3.2cm}+ \frac{\alpha\ms\tilde{\kappa}}{\ms\tilde{\kappa}-\kappa\alpha^2\tonde{\tilde{\kappa}-1}}\tonde{\frac{\alpha^2\tonde{\tilde{\kappa}-1}}{\tilde{\kappa}} \left[\ms\right]^N -\tonde{\frac{\alpha^2\tonde{\tilde{\kappa}-1}}{\tilde{\kappa}}}^N\frac{\ms}{\kappa}[\kappa]^N }\vastt) \\
        &\hspace{2.5cm}+\frac{C_4\ms}{\ms-\kappa}\tonde{\alpha^{N+1}\left[\ms\right]^N - \frac{\ms}{\kappa}\alpha^{N+1}[\kappa]^N} \\
        &\hspace{2.5cm}+ C_5\frac{\alpha\tilde{\kappa}\ms}{\tilde{\kappa}\ms-\kappa\alpha^2\tonde{\tilde{\kappa}-1}}\tonde{\left[\ms\right]^N - \tonde{\frac{\alpha^2\tonde{\tilde{\kappa}-1}}{\tilde{\kappa}}}^{N-1}\frac{\ms}{\kappa}[\kappa]^N}\\
        &\hspace{2.5cm}+\frac{\tonde{n-2}\alpha}{\tilde{\kappa}-\alpha\tonde{\tilde{\kappa}-1}}\frac{\ms}{\ms-\alpha\kappa}\tonde{\alpha\left[\ms\right]^N - \frac{\ms}{\kappa}\alpha^N[\kappa]^N} \vast].
    \end{align*}
    \hide{
    Notice that
    \begin{align*}
        &\frac{n-2}{2}\alpha^{N+1}[\kappa]^N\sum_{i=2}^{N}\omega_i\tonde{\frac{\ms}{\alpha\kappa}}^i \\
        &\hspace{-0.5cm}= C_6\tonde{\frac{\tonde{1-\alpha}\tilde{\kappa}\ms}{\ms-\alpha\kappa} \tonde{ \alpha\left[\ms \right]^N -\frac{\alpha^N\ms}{\kappa}[\kappa]^N } + \frac{\alpha\ms\tilde{\kappa}}{\ms\tilde{\kappa}-\kappa\alpha^2\tonde{\tilde{\kappa}-1}}\tonde{\frac{\alpha^2\tonde{\tilde{\kappa}-1}}{\tilde{\kappa}} \left[\ms\right]^N -\tonde{\frac{\alpha^2\tonde{\tilde{\kappa}-1}}{\tilde{\kappa}}}^N\frac{\ms}{\kappa}[\kappa]^N }},
    \end{align*}
    also 
    \begin{align*}
        C_4\alpha^{N+1}[\kappa]^N\sum_{i=2}^N\tonde{\frac{\ms}{\kappa}}^i=\frac{C_4\ms}{\ms-\kappa}\tonde{\alpha^{N+1}\left[\ms\right]^N - \frac{\ms}{\kappa}\alpha^{N+1}[\kappa]^N}
    \end{align*}
    \begin{align*}
        &C_5\alpha\tonde{\frac{\alpha^2\tonde{\tilde{\kappa}-1}}{\tilde{\kappa}}}^N\left[\kappa \right]^N\sum_{i=2}^N\tonde{\frac{\tilde{\kappa}\ms}{\kappa\alpha^2\tonde{\tilde{\kappa}-1}}}^i\\
        &\hspace{-1cm}=C_5\frac{\alpha\tilde{\kappa}\ms}{\tilde{\kappa}\ms-\kappa\alpha^2\tonde{\tilde{\kappa}-1}}\tonde{\left[\ms\right]^N - \tonde{\frac{\alpha^2\tonde{\tilde{\kappa}-1}}{\tilde{\kappa}}}^{N-1}\frac{\ms}{\kappa}[\kappa]^N}
    \end{align*}
    and finally
    \begin{align*}
        &\frac{\tonde{n-2}\alpha}{\tilde{\kappa}-\alpha\tonde{\tilde{\kappa}-1}}\alpha^{N+1}[\kappa]^N\sum_{i=2}^N\tonde{\frac{\ms}{\alpha\kappa}}^i\\
        &\hspace{-1cm}=\frac{\tonde{n-2}\alpha}{\tilde{\kappa}-\alpha\tonde{\tilde{\kappa}-1}}\frac{\ms}{\ms-\alpha\kappa}\tonde{\alpha\left[\ms\right]^N - \frac{\ms}{\kappa}\alpha^N[\kappa]^N}
    \end{align*}
    }
    Again, Lemma~\ref{lemma_asympt_rate_various_elements} and \cite[Lemma~A.3]{SchiedStrehleZhang.17} yield all necessary limits, and therefore
    \begin{align*}
        {\bm\omega}^\top (\left.\hat{\kappa}\tilde{\Gamma}-\tilde{\Gamma}^\top \right.){\bm\nu} &= ({\bm\omega}^\top (\left.\hat{\kappa}\tilde{\Gamma}-\tilde{\Gamma}^\top \right.))_1{\nu_1} + \sum_{k=1}^3\sum_{i=2}^NG^i_k + ({\bm\omega}^\top (\left.\hat{\kappa}\tilde{\Gamma}-\tilde{\Gamma}^\top \right.))_{N+1}{\nu_{N+1}}\\
        &\to \frac{\tonde{n-1}^2\tonde{n+e^{-\rho\frac{n+1}{n-1}T}}}{n\kappa(n+1)} +\vast(\frac{(n+1)n\kappa\tonde{1+(n-2)\rho T - e^{-\rho\frac{n+1}{n-1}T}} }{(n+1)^2n\kappa}\\
        &+\frac{(n+1)(\kappa-\hat{\kappa})\tonde{n(n-1) +(n-1)e^{-\rho\frac{n+1}{n-1}T} }+2(n-2)n\kappa\tonde{1-e^{-\rho\frac{n+1}{n-1}T}}}{(n+1)^2n\kappa}\vast) + 0\\
        &= \frac{-(n-1)(2n-1)e^{-\rho\frac{n+1}{n-1}T} + n(n+4)(n-1) + n(n+1)(n-2)\rho T}{n(n+1)^2}.
    \end{align*}
    \hide{
    Note first that we can easily calculate the following limits:
    \begin{enumerate}[label=\arabic*.]
        \item $C_4\to 1$
        \item $C_5\to \frac{\tonde{n\tilde{\kappa}-1}\tonde{1-\tilde{\kappa}}}{\tilde{\kappa}^2}$
        \item $C_6\to \frac{n-2}{2\tilde{\kappa}}$
    \end{enumerate}
    First of all let's evaluate, in order, the limits of $\sum_{k=1}^3\sum_{i=2}^NG^i_k $:
    \begin{enumerate}[label=\arabic*.]
        \item 
            \begin{align*}
                {\sum_{i=2}^NG^i_1}\to \frac{1-e^{-\rho T} + (n-2)\rho T}{n+1} 
            \end{align*}
            indeed in order, we have
            \begin{enumerate}[label=\alph*.]
                \item 
                    \begin{align*}
                        C_2\tonde{1-\alpha} C_6{\tonde{\tonde{1-\alpha}\tonde{N-1}\tilde{\kappa}+\frac{\alpha^2\tonde{\tilde{\kappa}-1}}{\tilde{\kappa}-\alpha\tonde{\tilde{\kappa}-1}}\tonde{1-\tonde{\frac{\alpha\tonde{\tilde{\kappa}-1}}{\tilde{\kappa}}}^{N-1}}}}\to 0
                    \end{align*}
                \item 
                    \begin{align*}
                        C_2\tonde{1-\alpha} C_4\frac{\alpha^{N+1}-\alpha^2}{\alpha-1}\to \frac{1-e^{-\rho T}}{n+1}
                    \end{align*}
                \item 
                    \begin{align*}
                        C_2\tonde{1-\alpha}\frac{C_5\tilde{\kappa}}{\tilde{\kappa}-\alpha\tonde{\tilde{\kappa}-1}}\tonde{1-\tonde{\frac{\alpha\tonde{\tilde{\kappa}-1}}{\tilde{\kappa}}}^{N-1}}\to 0
                    \end{align*}
                \item 
                    \begin{align*}
                        C_2\tonde{1-\alpha}\frac{\tonde{n-2}\alpha}{\tilde{\kappa} - \alpha\tonde{\tilde{\kappa}-1}}\tonde{N-1}\to \frac{n-2}{n+1}\rho T
                    \end{align*}
            \end{enumerate}
        \item 
            \begin{align*}
                {\sum_{i=2}^NG^i_2}\to {\frac{\tonde{n-1}(\kappa-\hat{\kappa})}{(n+1)\kappa}}
            \end{align*}
            indeed in order, we have
            \begin{enumerate}[label=\alph*.]
                \item 
                    \begin{align*}
                        &\sums \frac{nC_1d_\sigma m_\sigma   }{\alpha(\kappa-\hat{\kappa})}C_6\frac{\alpha\tonde{1-\alpha}\tilde{\kappa}(\kappa-\hat{\kappa})}{\alpha(\kappa-\hat{\kappa})-\ms}\tonde{\alpha^N\left[\kappa-\hat{\kappa}\right]^N - \frac{\alpha(\kappa-\hat{\kappa})}{\ms}\left[ \ms \right]^{N}}\\
                        &\hspace{1cm}\to 0
                    \end{align*}
                \item 
                    \begin{align*}
                        &\sums \frac{nC_1d_\sigma m_\sigma   }{\alpha(\kappa-\hat{\kappa})}C_6\frac{\alpha\tilde{\kappa}(\kappa-\hat{\kappa})}{\tilde{\kappa}(\kappa-\hat{\kappa})-\ms\tonde{\tilde{\kappa}-1}}\tonde{\frac{\tilde{\kappa}-1}{\tilde{\kappa}}\alpha^{N+1}\left[\kappa-\hat{\kappa}\right]^N-\frac{\kappa-\hat{\kappa}}{\ms}\alpha^{N+1}\tonde{\frac{\tilde{\kappa}-1}{\tilde{\kappa}}}^N\left[\ms\right]^{N}}\\
                        &\hspace{1cm}\to 0
                    \end{align*}
                \item 
                    \begin{align*}
                        &\sums \frac{nC_1d_\sigma m_\sigma   }{\alpha(\kappa-\hat{\kappa})}\frac{C_4\alpha^2(\kappa-\hat{\kappa})}{\alpha^2(\kappa-\hat{\kappa})-\ms}\tonde{\alpha^{2N}\left[\kappa-\hat{\kappa}\right]^N - \frac{\alpha^2(\kappa-\hat{\kappa})}{\ms}\left[\ms\right]^{N}}\\
                        &\hspace{1cm}\to  {\frac{\kappa-\hat{\kappa}}{\kappa(n+1)}}
                    \end{align*}
                \item 
                    \begin{align*}
                        &\sums \frac{nC_1d_\sigma m_\sigma   }{\alpha(\kappa-\hat{\kappa})}\frac{C_5\tilde{\kappa}(\kappa-\hat{\kappa})}{\tilde{\kappa}(\kappa-\hat{\kappa})-\ms\tonde{\tilde{\kappa}-1}}\tonde{\alpha^N\left[\kappa-\hat{\kappa}\right]^N-\frac{\kappa-\hat{\kappa}}{\ms}\alpha^N\tonde{\frac{\tilde{\kappa}-1}{\tilde{\kappa}}}^{N-1}\left[ \ms \right]^{N}}\\
                        &\hspace{1cm}\to  {0 }
                    \end{align*}
                \item 
                    \begin{align*}
                        &\sums \frac{nC_1d_\sigma m_\sigma   }{\alpha(\kappa-\hat{\kappa})}\frac{\tonde{n-2}\alpha^2(\kappa-\hat{\kappa})}{\tonde{\tilde{\kappa}-\alpha\tonde{\tilde{\kappa}-1}}\tonde{\alpha(\kappa-\hat{\kappa})-\ms}}\tonde{\alpha^N\left[\kappa-\hat{\kappa}\right]^N - \frac{\alpha(\kappa-\hat{\kappa})}{\ms}\left[\ms\right]^{N}}\\
                        &\hspace{1cm}\to  {\frac{(n-2)(\kappa-\hat{\kappa})}{(n+1)\kappa}}
                    \end{align*}
            \end{enumerate}
        \item 
            \begin{align*}
                &{\sum_{i=2}^NG^i_3}\\
                &\to \frac{e^{-\rho T} - e^{-\rho\frac{n+1}{n-1}T}}{n+1}  +\frac{(\kappa-\hat{\kappa})}{n(n+1)\kappa}e^{-\rho\frac{n+1}{n-1}T} + \frac{2(n-2)}{(n+1)^2}\tonde{1-e^{-\rho\frac{n+1}{n-1}T}} + \frac{\tonde{n-2}(\kappa-\hat{\kappa})}{(n+1)n\kappa}e^{-\rho\frac{n+1}{n-1}T}
            \end{align*}
            indeed in order, we have
            \begin{enumerate}[label=\alph*.]
                \item 
                    \begin{align*}
                        C_1\sums \frac{c_\sigma\kappa   }{m_\sigma}C_6\frac{\tonde{1-\alpha}\tilde{\kappa}\ms}{\ms-\alpha\kappa} \tonde{ \alpha\left[\ms \right]^N -\frac{\alpha^N\ms}{\kappa}[\kappa]^N }\to 0
                    \end{align*}
                \item 
                    \begin{align*}
                        C_1\sums \frac{c_\sigma\kappa   }{m_\sigma}C_6\frac{\alpha\ms\tilde{\kappa}}{\ms\tilde{\kappa}-\kappa\alpha^2\tonde{\tilde{\kappa}-1}}\tonde{\frac{\alpha^2\tonde{\tilde{\kappa}-1}}{\tilde{\kappa}} \left[\ms\right]^N -\tonde{\frac{\alpha^2\tonde{\tilde{\kappa}-1}}{\tilde{\kappa}}}^N\frac{\ms}{\kappa}[\kappa]^N }\to 0
                    \end{align*}
                \item 
                    \begin{align*}
                        &C_1\sums \frac{c_\sigma\kappa   }{m_\sigma}\frac{C_4\ms}{\ms-\kappa}\tonde{\alpha^{N+1}\left[\ms\right]^N - \frac{\ms}{\kappa}\alpha^{N+1}[\kappa]^N}\\
                        &\hspace{1cm}\to \frac{e^{-\rho T} - e^{-\rho\frac{n+1}{n-1}T}}{n+1}  +\frac{(\kappa-\hat{\kappa})}{n(n+1)\kappa}e^{-\rho\frac{n+1}{n-1}T}
                    \end{align*}
                \item 
                    \begin{align*}
                        C_1\sums \frac{c_\sigma\kappa   }{m_\sigma}C_5\frac{\alpha\tilde{\kappa}\ms}{\tilde{\kappa}\ms-\kappa\alpha^2\tonde{\tilde{\kappa}-1}}\tonde{\left[\ms\right]^N - \tonde{\frac{\alpha^2\tonde{\tilde{\kappa}-1}}{\tilde{\kappa}}}^{N-1}\frac{\ms}{\kappa}[\kappa]^N}\to 0
                    \end{align*}
                \item 
                    \begin{align*}
                        &C_1\sums \frac{c_\sigma\kappa   }{m_\sigma}\frac{\tonde{n-2}\alpha}{\tilde{\kappa}-\alpha\tonde{\tilde{\kappa}-1}}\frac{\ms}{\ms-\alpha\kappa}\tonde{\alpha\left[\ms\right]^N - \frac{\ms}{\kappa}\alpha^N[\kappa]^N}\\
                        &\hspace{1cm}\to \frac{2(n-2)}{(n+1)^2}\tonde{1-e^{-\rho\frac{n+1}{n-1}T}} + \frac{\tonde{n-2}(\kappa-\hat{\kappa})}{(n+1)n\kappa}e^{-\rho\frac{n+1}{n-1}T}
                    \end{align*}
            \end{enumerate}
    \end{enumerate}
    }
\end{proof}

Before proving Theorem~\ref{costs asymptotics thm}, we recall that $v_k$ corresponds to the $k$-th element of the vector $\bm v=(v_1,\dots,v_{N+1})\in\R^{N+1}$, whereas $\xi_k$ corresponds to the $(k+1)$-th element of the vector $\bm \xi=(\xi_0,\dots,\xi_N)\in\R^{N+1}$. 

\begin{proof}[Proof of Theorem~\ref{costs asymptotics thm}] By \cite[(23)]{SchiedStrehleZhang.17} we have
\begin{equation}\label{1top omega convergence eq}
\mathbf{1}^\top {\bm\omega}
= \sum_{i=1}^{N+1} \omega_i
 \lra  \rho T + 1
\quad\text{as } N\uparrow\infty.
\end{equation}
Moreover, the limit of $\mathbf{1}^\top {\bm\nu}=\sum_{i=1}^{N+1}\nu_i$ is given by \eqref{sumnuitotal} when $\kappa=n-1$, and by \eqref{sum nui kappa>1/2, kappa neq 1 eq} when $\kappa\neq n-1$ with $\kappa>\frac{n-1}{2}$. The limits of ${\bm\nu}^\top \tilde{\Gamma} {\bm\nu}$, ${\bm\omega}^\top  (\ka{\tilde{\Gamma}-\tilde{\Gamma}^\top}){\bm\nu}$, and ${\bm\omega}^\top \tilde{\Gamma} {\bm\omega}$ are collected in Lemma~\ref{cost functional ausiliral lemma}. Substituting these into \eqref{am} yields the claim. \hide{
        Indeed \eqref{am} is
        \begin{align*}
            \mathbb{E}\left[\mathscr{C}_\mathbb{T}\left(\bm\xi_i\mid\bm\xi_{-i}\right)\right] &= \frac{1}{2}\Bigg(\frac{\left(\bar{X}\right)^2}{\mathbf{1}^\top {\bm\nu}}+\frac{\bar{X}\left(X_i-\bar{X}\right)\left(\mathbf{1}^\top {\bm\nu}+\mathbf{1}^\top {\bm\omega}\right)}{\left(\mathbf{1}^\top {\bm\nu}\right)\left(\mathbf{1}^\top {\bm\omega}\right)}+\frac{\left(X_i-\bar{X}\right)^2}{\mathbf{1}^\top {\bm\omega}}
            \\
            &\hspace{1cm}{}+\hat{\kappa}\left(\frac{\bar{X}}{\mathbf{1}^\top {\bm\nu}}\right)^2{\bm\nu}^\top \tilde{\Gamma} {\bm\nu} + \frac{\bar{X}\left(X_i-\bar{X}\right)}{\left(\mathbf{1}^\top {\bm\nu}\right)\left(\mathbf{1}^\top {\bm\omega}\right)}{\bm\omega}^\top \left(\hat{\kappa}\tilde{\Gamma}-\tilde{\Gamma}^\top \right){\bm\nu}-\left(\frac{\left(X_i-\bar{X}\right)}{\mathbf{1}^\top {\bm\omega}}\right)^2 {\bm\omega}^\top \tilde{\Gamma} {\bm\omega}\nonumber\Bigg).
        \end{align*}
        and 
        \begin{enumerate}[label=\arabic*.]
            \item 
                \begin{align*}
                    \frac{1}{\mathbf{1}^\top {\bm\nu}} \longrightarrow \frac{e^{\rho \frac{n+1}{n-1}T}(n+1)^2n}{n((\rho T +1)(n+1) +2)e^{\rho \frac{n+1}{n-1}T} - (n-1)},
                \end{align*} 
            \item
                \begin{align*}
                    {\bm\nu}^\top \tilde{\Gamma} {\bm\nu} &\lra \frac{(n-1)}{2n^2(n+1)^3}\left(-e^{-2\rho\frac{n+1}{n-1}T} - 4ne^{-\rho\frac{n+1}{n-1}T}  +\frac{2n^2(n+1)}{(n-1)}\rho T + \frac{n^2(n+7)}{(n-1)}  \right),
                \end{align*}
            \item
                \begin{align*}
                    {\bm\omega}^\top (\left.\hat{\kappa}\tilde{\Gamma}-\tilde{\Gamma}^\top \right.){\bm\nu} &\lra \frac{-(n-1)(2n-1)e^{-\rho\frac{n+1}{n-1}T} + n(n+4)(n-1) + n(n+1)(n-2)\rho T}{n(n+1)^2},\text{ and}
                \end{align*}
            \item
                \begin{align*}
                    {\bm\omega}^\top \tilde{\Gamma} {\bm\omega} &\lra \left(2\rho T+1\right)/2.
                \end{align*}
        \end{enumerate}
        So substituting all together, we get
        \begin{align*}
            \mathbb{E}\left[\mathscr{C}_\mathbb{T}\left(\bm\xi_i\mid\bm\xi_{-i}\right)\right] &\to \frac{1}{2}\Bigg( \bar{X}^2\frac{e^{\rho \frac{n+1}{n-1}T}(n+1)^2n}{n((\rho T +1)(n+1) +2)e^{\rho \frac{n+1}{n-1}T} - (n-1)} \\
            &\hspace{1.2cm}+\frac{\bar{X}\left(X_i-\bar{X}\right)}{\rho T+1} +\bar{X}\left(X_i-\bar{X}\right)\frac{e^{\rho \frac{n+1}{n-1}T}(n+1)^2n}{n((\rho T +1)(n+1) +2)e^{\rho \frac{n+1}{n-1}T} - (n-1)}  \\
            &\hspace{1.2cm}+\frac{\left(X_i-\bar{X}\right)^2}{\rho T+1} \\
            &\hspace{1.2cm}+ \bar{X}^2\frac{(n-1)^2(n+1)}{2}\frac{\left(-1 - 4ne^{\rho \frac{n+1}{n-1}T}  +\frac{2n^2(n+1)}{(n-1)}e^{2\rho \frac{n+1}{n-1}T}\rho T + \frac{n^2(n+7)}{(n-1)}e^{2\rho \frac{n+1}{n-1}T}  \right)}{\left(n((\rho T +1)(n+1) +2)e^{\rho \frac{n+1}{n-1}T} - (n-1)\right)^2} \\
            &\hspace{1.2cm}+ \frac{\bar{X}\left(X_i-\bar{X}\right) \left(-(n-1)(2n-1) + n(n+4)(n-1)e^{\rho \frac{n+1}{n-1}T} + n(n+1)(n-2)\rho Te^{\rho \frac{n+1}{n-1}T}\right)}{\left(n((\rho T +1)(n+1) +2)e^{\rho \frac{n+1}{n-1}T} - (n-1)\right)(\rho T +1)} \\
            &\hspace{1.2cm}- \frac{\left(X_i-\bar{X}\right)^2(2\rho T +1)}{2\left(\rho T + 1\right)^2}\Bigg).
        \end{align*}
        Which has to be equal to: 
        \begin{align*}
            &\frac{n}{\rho T+1} \bar{X}(X_i-\bar{X})  \\
            &+\frac{\bar{X}^2 n^3 (n+1)\left( \left(\left(\rho T +\frac{1}{2}\right) (n+1) +3\right) e^{\frac{2 \left(n +1\right) \rho  T}{n -1}}-\frac{2 \left(n -1\right)}{n^2} \left(n e^{\frac{\left(n +1\right) \rho  T}{n -1}}+\frac{1}{4}\right) \right)}{\left(n \left(\left(\rho T +1\right)(n+1) + 2\right) e^{\frac{\left(n +1\right) \rho  T}{n -1}}-(n -1)\right)^{2}}\\
            & + \frac{(n-1) (n+1)^2 (1+ n e^{\rho \frac{n+1}{n-1} T})^2 \bar{X}^2}{4 \left(n((\rho T+1) (n+1) + 2)e^{\rho \frac{n+1}{n-1} T} - (n-1)\right)^2}\\
            &+\frac{(X_i-\bar{X})^2}{4 (\rho T+1)^2}.
        \end{align*}
        I actually have to prove that
        \begin{equation}
            \begin{aligned}\label{uno}
                &\frac{\bar{X}(X_i-\bar{X})}{2(\rho T+1)} + \frac{\bar{X}X_i}{2D}e^{\rho\frac{n+1}{n-1}T}(n+1)^2n +\\
                &+ \bar{X}^2\frac{(n-1)^2(n+1)}{4D^2}\left(-1 - 4ne^{\rho \frac{n+1}{n-1}T}  +\frac{2n^2(n+1)}{(n-1)}e^{2\rho \frac{n+1}{n-1}T}\rho T + \frac{n^2(n+7)}{(n-1)}e^{2\rho \frac{n+1}{n-1}T}  \right) \\ 
                &+ \frac{\bar{X}\left(X_i-\bar{X}\right) \left(-(n-1)(2n-1) + n(n+4)(n-1)e^{\rho \frac{n+1}{n-1}T} + n(n+1)(n-2)\rho Te^{\rho \frac{n+1}{n-1}T}\right)}{2D(\rho T+1)}
            \end{aligned}
        \end{equation}
        is equal to 
        \begin{equation}
            \begin{aligned}\label{due}
                &\frac{n}{\rho T+1} \bar{X}(X_i-\bar{X})  \\
                &+\frac{\bar{X}^2 n^3 (n+1)\left( \left(\left(\rho T +\frac{1}{2}\right) (n+1) +3\right) e^{\frac{2 \left(n +1\right) \rho  T}{n -1}}-\frac{2 \left(n -1\right)}{n^2} \left(n e^{\frac{\left(n +1\right) \rho  T}{n -1}}+\frac{1}{4}\right) \right)}{\left(n \left(\left(\rho T +1\right)(n+1) + 2\right) e^{\frac{\left(n +1\right) \rho  T}{n -1}}-(n -1)\right)^{2}}\\
                & + \frac{(n-1) (n+1)^2 (1+ n e^{\rho \frac{n+1}{n-1} T})^2 \bar{X}^2}{4 \left(n((\rho T+1) (n+1) + 2)e^{\rho \frac{n+1}{n-1} T} - (n-1)\right)^2}
            \end{aligned}
        \end{equation}
        Now we do $3-\tilde{3}-\tilde{2}$ to get
        \begin{align}\label{primadiff}
            3-\tilde{3}-\tilde{2} = \frac{\bar{X}^2(n+1)}{4D^2}\left( 2ne^{\rho\frac{n+1}{n-1}T}(n^2-1) - 2n^2e^{2\frac{n+1}{n-1}\rho T}\left( \rho T(n+1)^2 + n^2 + 4n + 3\right) \right)
        \end{align}
        Now we do $4-\tilde{1}$ to get
        \begin{align}\label{secondadiff}
            \frac{\bar{X}(X_i - \bar{X})}{2D(\rho T +1)}\left( n-1 -ne^{\rho\frac{n+1}{n-1}T} \left( n^2 + 3n + 4 + \rho T(n^2 + 3n +2) \right) \right)
        \end{align}
        Now we sum \eqref{primadiff} and \eqref{secondadiff} with the first and the second term in \eqref{uno} and we get $0$, i.e. our thesis.
    }
Finally, we only need to prove \eqref{inst_cost_to_block}, then \eqref{impact_cost_to_cont} will follow automatically; recall
\[
  \xi_{i,k}  =  \bar x v_k  +  (x_i-\bar x) w_k,
\]
where $\bm w$ and $\bm v$ are defined in \eqref{v and w}. Without loss of generality, and to simplify explicit computations, we can fix $c=1/2$; the same argument remains valid replacing $1/2$ with any $c\in(0,1)$.

\noindent\emph{Step 1: Back window \([\lceil{N}/{2}\rceil,\dots,N]\), recovery of \(\mathscr{B}_T\).}

Near \(t=T\) the \(\bm w\)-contribution dominates, hence (recall the indexing convention for $\bm\xi$ is $\{0,\dots,N\}$ and for $\bm v$ and $\bm w$ is $\{1,\dots,N+1\}$)
\begin{align*}
    \theta \sum_{k=\lceil N/2\rceil}^{N} \bigl(\xi_{i,k}\bigr)^2
    &= \theta \bar{x}^2\sum_{k=\lceil N/2\rceil+1}^{N+1}v_k^2
        +  2\theta \bar{x}\bigl(x_i-\bar{x}\bigr)\sum_{k=\lceil N/2\rceil+1}^{N+1}v_k w_k
        +  \theta\bigl(x_i-\bar{x}\bigr)^2 \sum_{k=\lceil N/2\rceil+1}^{N+1} w_k^2\\
    &= \theta\bigl(x_i-\bar{x}\bigr)^2 \sum_{k=\lceil N/2\rceil+1}^{N+1} w_k^2  +  o(1)
     \longrightarrow  \mathscr{B}_T \qquad (N\to\infty).
\end{align*}
Using the explicit formula in \eqref{omi formula},
\[
  \sum_{k=\lceil N/2\rceil+1}^{N+1}\omega_k^2 \ \longrightarrow\ \frac{1}{2\tilde\kappa-1} = \frac{1}{4\theta},
\] \hide{
To see this notice that: (independently of the choice $c=1/2$)
\begin{align*}
    \sum_{k=\lceil\frac{N}{2}\rceil+1}^{N+1} \omega_k^2 &= \frac{1}{\tilde{\kappa}^2\tonde{\tilde{\kappa}-\alpha\tonde{\tilde{\kappa}-1}}^2}\sum_{k=\lceil\frac{N}{2}\rceil+1}^{N+1} \tonde{1-\alpha}^2\tilde{\kappa}^2 + \alpha^2\left(\frac{\alpha^2\left(\tilde{\kappa}-1\right)^2}{\tilde{\kappa}^2}\right)^{N+1-k} + 2\alpha\tonde{1-\alpha}\tilde{\kappa}\left(\frac{\alpha\left(\tilde{\kappa}-1\right)}{\tilde{\kappa}}\right)^{N+1-k}
\end{align*}
it is not difficult to see that: 
\begin{align*}
    \frac{\sum_{k=\lceil\frac{N}{2}\rceil+1}^{N+1} \tonde{1-\alpha}^2\tilde{\kappa}^2}{\tilde{\kappa}^2\tonde{\tilde{\kappa}-\alpha\tonde{\tilde{\kappa}-1}}^2}\to 0; 
\end{align*}
\begin{align*}
    \frac{\sum_{k=\lceil\frac{N}{2}\rceil+1}^{N+1} \alpha^2\left(\frac{\alpha^2\left(\tilde{\kappa}-1\right)^2}{\tilde{\kappa}^2}\right)^{N+1-k}}{\tilde{\kappa}^2\tonde{\tilde{\kappa}-\alpha\tonde{\tilde{\kappa}-1}}^2}\to \frac{1}{4\theta};
\end{align*}
\begin{align*}
    \frac{\sum_{k=\lceil\frac{N}{2}\rceil+1}^{N+1} 2\alpha\tonde{1-\alpha}\tilde{\kappa}\left(\frac{\alpha\left(\tilde{\kappa}-1\right)}{\tilde{\kappa}}\right)^{N+1-k}}{\tilde{\kappa}^2\tonde{\tilde{\kappa}-\alpha\tonde{\tilde{\kappa}-1}}^2}\to 0;
\end{align*}
}
and, combining this with \eqref{1top omega convergence eq},
\[
  \sum_{k=\lceil N/2\rceil+1}^{N+1} w_k^2
  \ \longrightarrow\ \frac{1}{4\theta (\rho T+1)^2}.
\]
To see that the $\bm v$-part and the cross term vanish as $N\to\infty$, first consider $\kappa=n-1$: by \eqref{expl_form_nu_i_kappa_n-1},
\[
  \nu_i^2  \le  \frac{\rho^2 T^2}{(n-1)^2}\frac{1}{N^2}  +  o  \left(\frac{1}{N^2}\right),
  \qquad i\in\Bigl\{\bigl\lceil\tfrac{N}{2}\bigr\rceil+1,\dots,N+1\Bigr\}.
\] \hide{it is easy to see that this does not depend on the choice $c=1/2$}
For $\kappa\neq n-1$, \eqref{expl_form_nu_i} and \eqref{nu_N+1_expl_form} yield, for $i\in\{\lceil N/2\rceil+1,\dots,N\}$,
\[
  \nu_i^2  \le  \rho^2T^2\Bigl(\frac{2}{(n+1)(n-1)}e^{\rho T\frac{n+1}{n-1}} + \frac{1}{n+1}\Bigr)^2\frac{1}{N^2}  +  o  \left(\frac{1}{N^2}\right),
  \qquad
  \nu_{N+1}^2  \le  \frac{\rho^2T^2}{(n-1)^2}\frac{1}{N^2}  +  o  \left(\frac{1}{N^2}\right).
\]
\hide{Let us show this bound is true for $i\in\{m_N,\dots,N  \}$, where $m_N=\lceil cN\rceil$ for $c\in(0,1)$. By \eqref{expl_form_nu_i} we have:
\begin{align*}
    \nu_i
    &= (1-\alpha) \sums c_\sigma d_\sigma
    \left(
      \frac{\alpha(\kappa-\hat{\kappa})}{m_\sigma-\alpha(\kappa-\hat{\kappa})}
      + \frac{m_\sigma}{m_\sigma-\alpha\kappa}
    \right) [m_\sigma]^N
    \\
    &\quad
    + nC_1 \sums \frac{d_\sigma m_\sigma}{\alpha(\kappa-\hat{\kappa})} [m_\sigma]^N
      \left(\frac{\alpha(\kappa-\hat{\kappa})}{m_\sigma}\right)^{  i}
    + C_1 \sums \frac{c_\sigma \alpha^{N+1}\kappa}{m_\sigma} [\kappa]^N
      \left(\frac{m_\sigma}{\alpha\kappa}\right)^{  i}. \nonumber
    \end{align*}
    now 
    \begin{align*}
        (1-\alpha) \sums c_\sigma d_\sigma
    \left(
      \frac{\alpha(\kappa-\hat{\kappa})}{m_\sigma-\alpha(\kappa-\hat{\kappa})}
      + \frac{m_\sigma}{m_\sigma-\alpha\kappa}
    \right) [m_\sigma]^N = \frac{\rho T}{N}\frac{1}{n+1} + o\left(\frac{1}{N}\right)
    \end{align*}
    for the second term, we reason like this:
    Since here $\theta>0$, then $\kappa>\abs{\kappa-\ka}$, hence there exists $\varepsilon\in (0, \kappa -\abs{\kappa -\ka})$. Then fix an $\varepsilon\in (0, \kappa -\abs{\kappa -\ka})$, then let $\tilde{N}$ such that for every $N>\tilde{N}$, $\abs{\kappa-m_+}<\varepsilon$. In particular $m_+>\kappa-\varepsilon$, thus $\frac{\abs{\kappa-\ka}}{m_+}<\frac{\abs{\kappa-\ka}}{\kappa-\varepsilon}$, also since $\varepsilon<\kappa -\abs{\kappa -\ka}$, then $\abs{\kappa -\ka}<\kappa - \varepsilon$, hence $\frac{\abs{\kappa-\ka}}{\kappa-\varepsilon}=:\beta<1$. Now obviously we have that definitely $\tonde{\frac{\abs{\kappa-\ka}}{m_+}}^i\le\beta^{\lceil cN \rceil}=o\tonde{1/N}$ for any $ i \in \{ \lceil cN \rceil\ , \dots , N\}$, so the first term is $o\tonde{\frac1N}$. Then for the second addendum is easy to see that the second term is also $o\tonde{\frac{1}{N}}$. Hence   
    \begin{align*}
        nC_1 \sums \frac{d_\sigma m_\sigma}{\alpha(\kappa-\hat{\kappa})} [m_\sigma]^N
      \left(\frac{\alpha(\kappa-\hat{\kappa})}{m_\sigma}\right)^{  i} = o\tonde{\frac1N}
    \end{align*}
    \begin{align*}
        C_1 \sums \frac{c_\sigma \alpha^{N+1}\kappa}{m_\sigma} [\kappa]^N
      \left(\frac{m_\sigma}{\alpha\kappa}\right)^{  i}\le \frac{\rho T}{N}\frac{2}{(n+1)(n-1)}e^{\rho T\frac{n+1}{n-1}} + o\tonde{\frac{1}{N}}
    \end{align*} }
Together with the limit of $\mathbf{1}^\top\bm\nu$ (from \eqref{sumnuitotal} or \eqref{sum nui kappa>1/2, kappa neq 1 eq}), this implies
\[
  \sum_{k=\lceil N/2\rceil+1}^{N+1} v_k^2  =  \mathcal{O}  \left(\frac{1}{N}\right),
  \qquad \text{for any }\kappa>\frac{n-1}{2}.
\]
By Cauchy--Schwarz,
\[
  \Bigl|\sum_{k=\lceil N/2\rceil+1}^{N+1} v_k w_k\Bigr|
   \le  \Bigl(\sum_{k=\lceil N/2\rceil+1}^{N+1} v_k^2\Bigr)^{1/2}
           \Bigl(\sum_{k=\lceil N/2\rceil+1}^{N+1} w_k^2\Bigr)^{1/2}
   \xrightarrow[N\to\infty]{} 0.
\]
Hence the limit over the back half equals $\mathscr{B}_T$. 

\noindent\emph{Step 2: Front window \([0,\dots,\lceil{N}/{2}\rceil-1]\), recovery of \(\mathscr{B}_0\).}

Near \(t=0\) the \(\bm v\)-contribution dominates, so
\begin{align*}
    \theta \sum_{k=0}^{\lceil N/2\rceil-1} \bigl(\xi_{i,k}\bigr)^2
    &= \theta \bar{x}^2\sum_{k=1}^{\lceil N/2\rceil}v_k^2
        +  2\theta \bar{x}\bigl(x_i-\bar{x}\bigr)\sum_{k=1}^{\lceil N/2\rceil}v_k w_k
        +  \theta\bigl(x_i-\bar{x}\bigr)^2 \sum_{k=1}^{\lceil N/2\rceil} w_k^2\\
    &= \theta \bar{x}^2\sum_{k=1}^{\lceil N/2\rceil}v_k^2  +  o(1).
\end{align*}
Using \eqref{nu_one_expl_form}-\eqref{expl_form_nu_i} for $\kappa\neq n-1$ and \eqref{expl_form_nu_i_kappa_n-1} for $\kappa=n-1$ (in the latter case $\theta=\tfrac{n-1}{4}$ and only the first trade contributes, meaning $\sum_{k=2}^{\lceil N/2\rceil}\nu_k^2\to0$),
\[
  \sum_{k=1}^{\lceil N/2\rceil}\nu_k^2
  \ \longrightarrow\
  \frac{(n-1) e^{-2\rho T\frac{n+1}{n-1}}\Bigl(n e^{\rho T\frac{n+1}{n-1}}+1\Bigr)^2}{(n+1)^2 n^2 4\theta}.
\]
\hide{
from \eqref{nu_one_expl_form} we have
\begin{align*}
    \nu_1 = \sums\frac{d_\sigma\left(m_\sigma-\alpha^2\kappa\right)}{m_\sigma-\alpha\kappa}  [m_\sigma]^N+C_1 \alpha^N   [\kappa]^N
\end{align*}
and from \eqref{expl_form_nu_i}, we have, for $k\in\{2,\dots,\lceil\frac{N}{2}\rceil \}$
\begin{align*}
    \nu_k &= \left(1-\alpha\right)\sums c_\sigma d_\sigma\left(\frac{\alpha(\kappa-\hat{\kappa})}{m_\sigma-\alpha(\kappa-\hat{\kappa})}+\frac{m_\sigma}{m_\sigma-\alpha\kappa}\right)  [m_\sigma]^N\\
		  &\hspace{.5cm}{}+nC_1\sums \frac{d_\sigma m_\sigma   [m_\sigma]^N}{\alpha(\kappa-\hat{\kappa})}\left(\frac{\alpha(\kappa-\hat{\kappa})}{m_\sigma}\right)^k+C_1\sums \frac{c_\sigma \alpha^{N+1}\kappa   [\kappa]^N}{m_\sigma}\left(\frac{m_\sigma}{\alpha\kappa}\right)^k,\nonumber
\end{align*}
Let's start with:
\blue{\begin{align*}
    \nu_1^2 = \tonde{\sums\frac{d_\sigma\left(m_\sigma-\alpha^2\kappa\right)}{m_\sigma-\alpha\kappa}  [m_\sigma]^N}^2 + C_1^2(\alpha^N\quadre{\kappa}^N)^2 + 2C_1\alpha^N\quadre{\kappa}^N \sums\frac{d_\sigma\left(m_\sigma-\alpha^2\kappa\right)}{m_\sigma-\alpha\kappa}  [m_\sigma]^N
\end{align*}
now
\begin{align*}
    \tonde{\sums\frac{d_\sigma\left(m_\sigma-\alpha^2\kappa\right)}{m_\sigma-\alpha\kappa}  [m_\sigma]^N}^2\to \frac{\tonde{n-1}^2}{(n+1)^2\kappa^2}
\end{align*}
also
\begin{align*}
    C_1^2(\alpha^N\quadre{\kappa}^N)^2\to\frac{\tonde{n-1}^2e^{-2\rho T\frac{n+1}{n-1}}}{(n+1)^2n^2\kappa^2}
\end{align*}
finally
\begin{align*}
    2C_1\alpha^N\quadre{\kappa}^N \sums\frac{d_\sigma\left(m_\sigma-\alpha^2\kappa\right)}{m_\sigma-\alpha\kappa}  [m_\sigma]^N\to\frac{2\tonde{n-1}^2e^{-\rho T\frac{n+1}{n-1}}}{n(n+1)^2\kappa^2}
\end{align*}
so we conclude}
\begin{align*}
    \nu_1^2\to \frac{\tonde{n-1}^2e^{-2\rho T\frac{n+1}{n-1}}}{(n+1)^2\kappa^2n^2}\tonde{1+ne^{\rho T\frac{n+1}{n-1}}}^2
\end{align*}
then, define: 
\begin{align*}
    a_1 &:= \sums c_\sigma d_\sigma\left(\frac{\alpha(\kappa-\hat{\kappa})}{m_\sigma-\alpha(\kappa-\hat{\kappa})}+\frac{m_\sigma}{m_\sigma-\alpha\kappa}\right)  [m_\sigma]^N\\
    a_{2,k} &:= \sums \frac{d_\sigma m_\sigma   [m_\sigma]^N}{\alpha(\kappa-\hat{\kappa})}\left(\frac{\alpha(\kappa-\hat{\kappa})}{m_\sigma}\right)^k\\ 
    a_{3,k} &:=\sums \frac{c_\sigma \alpha^{N+1}\kappa   [\kappa]^N}{m_\sigma}\left(\frac{m_\sigma}{\alpha\kappa}\right)^k
\end{align*}
then
\begin{align*}
    \nu_k^2 &= \tonde{\tonde{1-\alpha}a_1 + nC_1a_{2,k} + C_1a_{3,k}  }^2 \\
    &= \tonde{1-\alpha}^2a_1^2 + n^2C_1^2a_{2,k}^2 + C_1^2a_{3,k}^2 + 2\tonde{1-\alpha}a_1nC_1a_{2,k} + 2\tonde{1-\alpha}a_1C_1a_{3,k} + 2nC_1a_{2,k}C_1a_{3,k}\\
    &= \tonde{1-\alpha}^2a_1^2 + n^2C_1^2a_{2,k}^2 + C_1^2a_{3,k}^2 + 2nC_1\tonde{1-\alpha}a_1a_{2,k} + 2C_1\tonde{1-\alpha}a_1a_{3,k}+2nC_1^2a_{2,k}a_{3,k}
\end{align*}
\begin{align*}
    &a_1\to\frac{1}{n+1}\\
    &\sum_{k=2}^{\lceil\frac{N}{2}\rceil} a_{2,k} \to \frac{\kappa-\ka}{2n\kappa}   \\
    &\sum_{k=2}^{\lceil\frac{N}{2}\rceil} a_{2,k}^2\to\frac{\tonde{n-1}(\kappa-\ka)^2}{4n^2\kappa^2\tonde{2\kappa-\ka}}\\
    &\sum_{k=2}^{\lceil\frac{N}{2}\rceil} a_{3,k} \to e^{-\rho T\frac{n+1}{n-1}}\tonde{\frac{e^{\frac{1}{2}\rho T\frac{n+1}{n-1}}-1}{n+1} + \frac{\kappa-\ka}{2n\kappa}}   \\
    &\sum_{k=2}^{\lceil\frac{N}{2}\rceil} a_{3,k}^2\to e^{-2\rho T\frac{n+1}{n-1}}\frac{\tonde{n-1}(\kappa-\ka)^2}{4n^2\kappa^2\tonde{2\kappa-\ka}}\\
    &\sum_{k=2}^{\lceil\frac{N}{2}\rceil} a_{2,k}a_{3,k}\to \frac{\tonde{n-1}(\kappa-\ka)^2e^{-\rho T\frac{n+1}{n-1}}}{4n^2\kappa^2\tonde{2\kappa-\ka}}
\end{align*}
hence
\begin{align*}
    \sum_{k=2}^{\lceil\frac{N}{2}\rceil}\tonde{1-\alpha}^2a_1^2 = \frac{\rho^2T^2a_1^2}{2N} +o\tonde{1}\to 0
\end{align*}
then
\begin{align*}
    \sum_{k=2}^{\lceil\frac{N}{2}\rceil}\nu_k^2 \to \frac{\tonde{n-1}(\kappa-\ka)^2e^{-2\rho T\frac{n+1}{n-1}}\tonde{ne^{\rho T\frac{n+1}{n-1}} + 1}^2}{(n+1)^2n^2\kappa^2\tonde{2\kappa-\ka}}
\end{align*}
thus we conclude, independently of the specific form of $m_N$ i.e. $\lceil cN\rceil$ with $c=1/2$
\begin{align*}
    \sum_{k=1}^{\lceil\frac{N}{2}\rceil}\nu_k^2 \to \frac{\tonde{n-1}e^{-2\rho T\frac{n+1}{n-1}}\tonde{ne^{\rho T\frac{n+1}{n-1}} + 1}^2}{(n+1)^2n^24\theta}
\end{align*}
observe that the limit 
\begin{align*}
    \sum_{k=2}^{\lceil\frac{N}{2}\rceil} a_{3,k} \to e^{-\rho T\frac{n+1}{n-1}}\tonde{\frac{e^{\frac{1}{2}\rho T\frac{n+1}{n-1}}-1}{n+1} + \frac{\kappa-\ka}{2n\kappa}} 
\end{align*}
for example depends on $c$ but then it goes to $0$ in the final form because multiplied by $(1-\alpha)$
}
Therefore, combining with the limit in \eqref{sum nui kappa>1/2, kappa neq 1 eq} (that does not depend on $\theta$), we get, for any $\theta>0$,
\[
  \sum_{k=1}^{\lceil N/2\rceil} v_k^2
  \ \longrightarrow\
  \frac{(n-1)(n+1)^2\Bigl(1+n e^{\rho \frac{n+1}{n-1} T}\Bigr)^2}
       {4\theta\Bigl(n\bigl((\rho T+1)(n+1)+2\bigr)e^{\rho \frac{n+1}{n-1} T}-(n-1)\Bigr)^2}.
\]
To show that the $\bm w$-part and the cross term vanish, note from \eqref{omi formula} that, for $i\in\{1,\dots,\lceil N/2\rceil\}$,
\[
  \omega_i^2  \le  4\rho^2T^2 \frac{1}{N^2}  +  o  \left(\frac{1}{N^2}\right).
\]
By \eqref{1top omega convergence eq}, we conclude as in Step~1. 
\end{proof}

\subsection{Proof of Theorem~\ref{strat_osc_thm}~\ref{V_oscillations}}

\begin{proof}[Proof of Theorem~\ref{strat_osc_thm}~\ref{V_oscillations}]
For $\kappa=\frac{n-1}{2}$, the limits in \eqref{ae} and \eqref{af} follow from Lemma~\ref{lemma_asympt_rate_various_elements}. 
We evaluate \eqref{ae} with $m=n_t$ term-by-term as $N\uparrow\infty$.

\begin{enumerate}[label=\arabic*.]
    \item 
    \[
    \sum_{\sigma\in\{+,-\}}
    \frac{d_\sigma\left(m_\sigma-\alpha^2\kappa\right)}{m_\sigma-\alpha\kappa} [m_\sigma]^N
     \longrightarrow 
    \begin{cases}
    \displaystyle \frac{2n}{\left(e^{-2\frac{n+1}{n-1}\rho T}+n\right)(n+1)}, & N=2k,\\[6pt]
    \displaystyle \frac{2n}{\left(-e^{-2\frac{n+1}{n-1}\rho T}+n\right)(n+1)}, & N=2k+1.
    \end{cases}
    \]
    \item 
    \[
    \left(1-\alpha\right)\left(n_t-1\right)  
    \sum_{\sigma\in\{+,-\}} c_\sigma d_\sigma 
    \left(\frac{\alpha(\kappa-\hat{\kappa})}{m_\sigma-\alpha(\kappa-\hat{\kappa})}
    +\frac{m_\sigma}{m_\sigma-\alpha\kappa}\right)[m_\sigma]^N
     \longrightarrow 
    \begin{cases}
    \displaystyle \frac{\rho t}{n+1}, & N=2k,\\[6pt]
    \displaystyle \frac{\rho t}{n+1}, & N=2k+1.
    \end{cases}
    \]
    \item 
    \hide{Let's split this into the single addends
            \begin{enumerate}[label=\roman*.]
                \item 
                    \begin{align*}
                    C_1\alpha^N [\kappa]^N\lra\begin{cases} 
                    &\frac{2e^{-\rho\frac{n+1}{n-1}T}}{\left(e^{-2\frac{n+1}{n-1}\rho T} +n\right)(n+1)}\qquad\text{if }N=2k\\
                    &\frac{2e^{-\rho\frac{n+1}{n-1}T}}{\left(-e^{-2\frac{n+1}{n-1}\rho T} +n\right)(n+1)}\qquad\text{if }N=2k+1
                    \end{cases}
                    \end{align*}
                \item 
                    \begin{align*}
                        C_1\frac{c_+ m_+\left(\left(\frac{m_+}{\alpha\kappa}\right)^{n_t-1}-1\right)}{m_+-\alpha\kappa}\alpha^N [\kappa]^N\lra\begin{cases} 
                    &\frac{2ne^{-\rho\frac{n+1}{n-1}T}\left( e^{\rho\frac{n+1}{n-1}t}-1 \right)}{\left(e^{-2\frac{n+1}{n-1}\rho T} +n\right)(n+1)^2}\qquad\text{if }N=2k\\
                    &\frac{2ne^{-\rho\frac{n+1}{n-1}T}\left( e^{\rho\frac{n+1}{n-1}t}-1 \right)}{\left(-e^{-2\frac{n+1}{n-1}\rho T} +n\right)(n+1)^2}\qquad\text{if }N=2k+1
                    \end{cases}
                    \end{align*}
                \item 
                    \begin{align*}
                        C_1\frac{c_- m_-\left(\left(\frac{m_-}{\alpha\kappa}\right)^{n_t-1}-1\right)}{m_--\alpha\kappa}\alpha^N [\kappa]^N\lra\begin{cases} 
                        &\frac{-e^{-\rho\frac{n+1}{n-1}T}\left( \pm e^{-\rho\frac{n+1}{n-1}t}+1 \right)}{\left(e^{-2\frac{n+1}{n-1}\rho T} +n\right)(n+1)}\qquad\text{if }N=2k\\
                        &\frac{-e^{-\rho\frac{n+1}{n-1}T}\left( \pm e^{-\rho\frac{n+1}{n-1}t}+1 \right)}{\left(-e^{-2\frac{n+1}{n-1}\rho T} +n\right)(n+1)}\qquad\text{if }N=2k+1
                    \end{cases}
                    \end{align*}
            \end{enumerate}
            so adding up the three terms we get}
    \[
    \begin{aligned}
    & C_1  \left(1+\sum_{\sigma\in\{+,-\}}
    \frac{c_\sigma m_\sigma\Big(\big(\frac{m_\sigma}{\alpha\kappa}\big)^{n_t-1}-1\Big)}
    {m_\sigma-\alpha\kappa}\right)\alpha^N [\kappa]^N\\
    &\qquad\longrightarrow
    \begin{cases}
    \displaystyle
    \frac{e^{-\rho\frac{n+1}{n-1}T}}
    {\big(e^{-2\frac{n+1}{n-1}\rho T}+n\big)(n+1)^2}
    \left(2n e^{\rho\frac{n+1}{n-1}t}
    -(n+1)\big(\pm e^{-\rho\frac{n+1}{n-1}t}\big)-(n-1)\right), & N=2k,\\[8pt]
    \displaystyle
    \frac{e^{-\rho\frac{n+1}{n-1}T}}
    {\big(-e^{-2\frac{n+1}{n-1}\rho T}+n\big)(n+1)^2}
    \left(2n e^{\rho\frac{n+1}{n-1}t}
    -(n+1)\big(\pm e^{-\rho\frac{n+1}{n-1}t}\big)-(n-1)\right), & N=2k+1.
    \end{cases}
    \end{aligned}
    \]
    \item \hide{Let's split this into the single addends
            \begin{enumerate}[label=\roman*.]
                \item 
                    \begin{align*}
                        nC_1\frac{d_+ m_+\left(\frac{\alpha(\kappa-\hat{\kappa})}{m_+}-\left(\frac{\alpha(\kappa-\hat{\kappa})}{m_+}\right)^{n_t}\right)}{m_+-\alpha(\kappa-\hat{\kappa})}  \left[m_+\right]^N\lra
                         \begin{cases} 
                            &\frac{-n\left(1\pm e^{-\rho\frac{n+1}{n-1}t}\right)}{\left(e^{-2\frac{n+1}{n-1}\rho T} +n\right)(n+1)}\qquad\text{if }N=2k\\
                            &\frac{-n\left(1\pm e^{-\rho\frac{n+1}{n-1}t}\right)}{\left(-e^{-2\frac{n+1}{n-1}\rho T} +n\right)(n+1)}\qquad\text{if }N=2k+1
                        \end{cases}
                    \end{align*}
                \item 
                \begin{align*}
                        nC_1\frac{d_- m_-\left(\frac{\alpha(\kappa-\hat{\kappa})}{m_-}-\left(\frac{\alpha(\kappa-\hat{\kappa})}{m_-}\right)^{n_t}\right)}{m_--\alpha(\kappa-\hat{\kappa})}  \left[m_-\right]^N\lra
                         \begin{cases} 
                            &\frac{2n e^{-2\frac{n+1}{n-1}\rho T}\left(1- e^{\rho\frac{n+1}{n-1}t}\right)}{\left(e^{-2\frac{n+1}{n-1}\rho T} +n\right)(n+1)^2}\qquad\text{if }N=2k\\
                            &\frac{-2n e^{-2\frac{n+1}{n-1}\rho T}\left(1- e^{\rho\frac{n+1}{n-1}t}\right)}{\left(-e^{-2\frac{n+1}{n-1}\rho T} +n\right)(n+1)^2}\qquad\text{if }N=2k+1
                        \end{cases}
                    \end{align*}
            \end{enumerate}
            now letting } Define
    \[
        D_+:=\left(ne^{2\frac{n+1}{n-1}\rho T}+1 \right)(n+1)^2,
        \qquad 
        D_-:=\left(ne^{2\frac{n+1}{n-1}\rho T}-1 \right)(n+1)^2.
    \]
    Then
    \[
    \begin{aligned}
    &nC_1 \sum_{\sigma\in\{+,-\}} 
    \frac{d_\sigma m_\sigma\left(\frac{\alpha(\kappa-\hat{\kappa})}{m_\sigma}
    -\left(\frac{\alpha(\kappa-\hat{\kappa})}{m_\sigma}\right)^{n_t}\right)}
    {m_\sigma-\alpha(\kappa-\hat{\kappa})} [m_\sigma]^N\\
    &\qquad\longrightarrow
    \begin{cases}
    \displaystyle \frac{2n-2ne^{\rho\frac{n+1}{n-1}t}-n(n+1)e^{2\frac{n+1}{n-1}\rho T}\left(1\pm e^{-\rho\frac{n+1}{n-1}t}\right)}{D_+},
    & N=2k,\\[10pt]
    \displaystyle \frac{-2n+2ne^{\rho\frac{n+1}{n-1}t}-n(n+1)e^{2\frac{n+1}{n-1}\rho T}\left(1\pm e^{-\rho\frac{n+1}{n-1}t}\right)}{D_-},
    & N=2k+1.
    \end{cases}
    \end{aligned}
    \]
\end{enumerate}
\hide{Summing now the limits of $\mathrm{4}.$ and $\mathrm{1}.$, we get
\begin{align*}
    \begin{cases} 
        &\frac{2n-2ne^{\rho\frac{n+1}{n-1}t} - n(n+1)e^{2\frac{n+1}{n-1}\rho T}\left( -1\pm e^{-\rho\frac{n+1}{n-1}t} \right)}{D_+}\qquad\text{if }N=2k\\
        &\frac{-2n+2ne^{\rho\frac{n+1}{n-1}t} - n(n+1)e^{2\frac{n+1}{n-1}\rho T}\left( -1\pm e^{-\rho\frac{n+1}{n-1}t} \right)}{D_-}\qquad\text{if }N=2k+1\\
    \end{cases}
\end{align*}}
Summing the four contributions yields the limit
\hide{\begin{small}
    \begin{align*}
    \begin{cases} 
        \frac{  2n-2ne^{\rho\frac{n+1}{n-1}t} - n(n+1)e^{2\frac{n+1}{n-1}\rho T}\left( -1\pm e^{-\rho\frac{n+1}{n-1}t} \right)  + (n+1)\left( ne^{2\frac{n+1}{n-1}\rho T}+1 \right)\rho t +e^{\rho\frac{n+1}{n-1}T} \left( 2n e^{\rho\frac{n+1}{n-1}t} -(n+1)\left( \pm e^{-\rho\frac{n+1}{n-1}t}\right)-(n-1) \right)}{D_+}\\\text{ if } N=2k\\
        \frac{  -2n+2ne^{\rho\frac{n+1}{n-1}t} - n(n+1)e^{2\frac{n+1}{n-1}\rho T}\left( -1\pm e^{-\rho\frac{n+1}{n-1}t} \right)  + (n+1)\left( ne^{2\frac{n+1}{n-1}\rho T}-1 \right)\rho t + e^{\rho\frac{n+1}{n-1}T}\left( 2n e^{\rho\frac{n+1}{n-1}t} -(n+1)\left( \pm e^{-\rho\frac{n+1}{n-1}t}\right)-(n-1) \right)}{D_-}\\\text{ if } N=2k+1\\
    \end{cases}
\end{align*}
\end{small}
that rearranging the terms becomes}
\[
    \begin{aligned}
    &\sum_{i=1}^{n_t}\nu_i\\
    &\qquad\longrightarrow
    \begin{cases}
    \displaystyle
    \frac{2n+(n+1)\rho t-2n e^{\rho\frac{n+1}{n-1}t}
    +e^{2\frac{n+1}{n-1}\rho T}\left(n(n+1)+n(n+1)\rho t\right)
    -e^{\frac{n+1}{n-1}\rho(2T-t)}\left(\pm n(n+1)\right)}{D_+}\\
    \displaystyle\qquad
    +\frac{2n e^{\frac{n+1}{n-1}\rho (T+t)}
    -e^{\frac{n+1}{n-1}\rho (T-t)}\left(\pm (n+1)\right)
    -(n-1)e^{\frac{n+1}{n-1}\rho T}}{D_+},
    \quad N=2k,\\[12pt]
    \displaystyle
    \frac{-2n-(n+1)\rho t+2n e^{\rho\frac{n+1}{n-1}t}
    +e^{2\frac{n+1}{n-1}\rho T}\left(n(n+1)+n(n+1)\rho t\right)
    -e^{\frac{n+1}{n-1}\rho(2T-t)}\left(\pm n(n+1)\right)}{D_-}\\
    \displaystyle\qquad
    +\frac{2n e^{\frac{n+1}{n-1}\rho (T+t)}
    -e^{\frac{n+1}{n-1}\rho (T-t)}\left(\pm (n+1)\right)
    -(n-1)e^{\frac{n+1}{n-1}\rho T}}{D_-},
    \quad N=2k+1.
    \end{cases}
    \end{aligned}
\]
Setting $t=T$ in the preceding display gives the limit
\[
    \begin{aligned}
    &\sum_{i=1}^{N}\nu_i\\
    &\qquad\longrightarrow
    \begin{cases}
    \displaystyle
    \frac{n e^{2\frac{n+1}{n-1}\rho T}\left((n+3)+(n+1)\rho T\right)
    +e^{\frac{n+1}{n-1}\rho T}(1-4n-n^2)+(n+1)\rho T+(n-1)}{D_+},
    & N=2k,\\[10pt]
    \displaystyle
    \frac{n e^{2\frac{n+1}{n-1}\rho T}\left((n+3)+(n+1)\rho T\right)
    +e^{\frac{n+1}{n-1}\rho T}(1+2n+n^2)-(n+1)\rho T-(n-1)}{D_-},
    & N=2k+1.
    \end{cases}
    \end{aligned}
\]
Turning to $\nu_{N+1}$, \eqref{af} yields
\[
    \nu_{N+1} \longrightarrow 
    \begin{cases}
    \displaystyle \frac{2(n+1)+2n(n+1)e^{\frac{n+1}{n-1}\rho T}}{D_+}, & N=2k,\\[8pt]
    \displaystyle \frac{-2(n+1)-2n(n+1)e^{\frac{n+1}{n-1}\rho T}}{D_-}, & N=2k+1.
    \end{cases}
\]
\hide{\begin{align*}
    \sums\frac{c_\sigma\left(m_\sigma-\alpha^2(\kappa-\hat{\kappa})\right)}{m_\sigma-\alpha(\kappa-\hat{\kappa})}  [m_\sigma]^N\lra
    \begin{cases}
        \frac{2(n+1)}{D_+}\qquad&\text{if}\qquad N=2k\\[10pt]
        \frac{-2(n+1)}{D_-}\qquad&\text{if}\qquad N=2k+1
    \end{cases}
\end{align*}
and 
\begin{align*}
    nC_1\alpha^N  \left[\kappa-\hat{\kappa}\right]^N\lra
    \begin{cases}
        \frac{2n(n+1)e^{\frac{n+1}{n-1}\rho T}}{D_+}\qquad&\text{if}\qquad N=2k\\[10pt]
        \frac{-2n(n+1)e^{\frac{n+1}{n-1}\rho T}}{D_-}\qquad&\text{if}\qquad N=2k+1.
    \end{cases}
\end{align*}}

Combining the preceding two displays yields the limits of $\mathbf{1}^\top{\bm\nu}$:
\begin{equation}\label{limits sum nui kappa=n-1/2 eq}
\begin{split}
\lim_{\substack{N\uparrow\infty\\ N \mathrm{even}}}\mathbf{1}^\top{\bm\nu}
&=\frac{n e^{2\frac{n+1}{n-1}\rho T}\left((n+1)\rho T+(n+3)\right)
+(n-1)^2 e^{\frac{n+1}{n-1}\rho T}+(n+1)\rho T+(3n+1)}{D_+},\\
\lim_{\substack{N\uparrow\infty\\ N \mathrm{odd}}}\mathbf{1}^\top{\bm\nu}
&=\frac{n e^{2\frac{n+1}{n-1}\rho T}\left((n+1)\rho T+(n+3)\right)
+(1-n^2)e^{\frac{n+1}{n-1}\rho T}-(n+1)\rho T-(3n+1)}{D_-}.
\end{split}
\end{equation}

Finally, substituting these limits into the definition of $V_t^{(N)}$ completes the proof of the oscillation statement in part~\ref{V_oscillations}.
\hide{\begin{footnotesize}
    \begin{align*}
        \begin{cases} 
            1-\frac{2n+(n+1)\rho t -2ne^{\rho\frac{n+1}{n-1}t}+e^{2\frac{n+1}{n-1}\rho T}\left( n(n+1)+n(n+1)\rho t \right)-e^{\frac{n+1}{n-1}\rho(2T-t)}(\pm n(n+1)) + 2ne^{\frac{n+1}{n-1}\rho (T+t)}-e^{\frac{n+1}{n-1}\rho (T-t)}(\pm (n+1))-(n-1)e^{\frac{n+1}{n-1}\rho T}}{ne^{2\frac{n+1}{n-1}\rho T}((n+1)\rho T +(n+3)) + (n-1)^2e^{\frac{n+1}{n-1}\rho T}+(n+1)\rho T + (3n+1)}\\\text{ if } N=2k\\
            1-\frac{-2n -(n+1)\rho t +2ne^{\rho\frac{n+1}{n-1}t}+e^{2\frac{n+1}{n-1}\rho T}\left( n(n+1)+n(n+1)\rho t \right)-e^{\frac{n+1}{n-1}\rho(2T-t)}(\pm n(n+1)) + 2ne^{\frac{n+1}{n-1}\rho (T+t)}-e^{\frac{n+1}{n-1}\rho (T-t)}(\pm (n+1))-(n-1)e^{\frac{n+1}{n-1}\rho T}}{ne^{2\frac{n+1}{n-1}\rho T}((n+1)\rho T +(n+3))+ (1-n^2)e^{\frac{n+1}{n-1}\rho T}-(n+1)\rho T-(3n+1)}\\\text{ if } N=2k+1.
        \end{cases}
    \end{align*}
\end{footnotesize}
or also}
\hide{\begin{footnotesize}
    \begin{align*}
        \begin{cases} 
            \frac{\pm (n+1)e^{\frac{n+1}{n-1}\rho (T-t)}\pm n(n+1)e^{\frac{n+1}{n-1}\rho(2T-t)}+e^{2\frac{n+1}{n-1}\rho T}(n(n+1)\rho(T-t)+2n)+(n+1)\rho(T-t)+n(n-1)e^{\frac{n+1}{n-1}\rho T}+2ne^{\rho\frac{n+1}{n-1}t}-2ne^{\frac{n+1}{n-1}\rho(T+t)}+(n+1)}{ne^{2\frac{n+1}{n-1}\rho T}((n+1)\rho T +(n+3)) + (n-1)^2e^{\frac{n+1}{n-1}\rho T}+(n+1)\rho T + (3n+1)}\\[10pt] \text{ if } N=2k\\
            \frac{\pm (n+1)e^{\frac{n+1}{n-1}\rho (T-t)}\pm n(n+1)e^{\frac{n+1}{n-1}\rho(2T-t)}+e^{2\frac{n+1}{n-1}\rho T}(n(n+1)\rho(T-t)+2n)-(n+1)\rho(T-t)-n(n-1)e^{\frac{n+1}{n-1}\rho T}-2ne^{\rho\frac{n+1}{n-1}t}-2ne^{\frac{n+1}{n-1}\rho(T+t)}-(n+1)}{ne^{2\frac{n+1}{n-1}\rho T}((n+1)\rho T +(n+3))+ (1-n^2)e^{\frac{n+1}{n-1}\rho T}-(n+1)\rho T-(3n+1)}\\[10pt] \text{ if } N=2k+1.
        \end{cases}
    \end{align*}
\end{footnotesize}}
\end{proof}

\subsection{Proof of Theorem~\ref{strat_osc_thm}~\ref{W_oscillations}}
\begin{proof}[Proof of Theorem~\ref{strat_osc_thm}~\ref{W_oscillations}]
By Remark~\ref{w_indep_n}, $W^{\tonde{N}}$ is independent of $n$. Hence the argument of \cite[Theorem~3.1(d)]{SchiedStrehleZhang.17}, established for $n=2$, applies analogously in our setting for any $t\in(0,T)$. For $t=0$ and $t=T$ a straightforward limit computation yields the result.
\end{proof}

\subsection{Proof of Theorem~\ref{cost_asympt_thm_theta_zero}}

\begin{lemma}\label{cost functional ausiliral lemma k=n-1/2}
Let $\kappa=\frac{n-1}{2}$. Then, as $N\uparrow\infty$,
\begin{align*}
\lim_{\substack{N\uparrow\infty\\ N \mathrm{even}}}
{\bm\nu}^\top \tilde{\Gamma} {\bm\nu}
&=\frac{
n e^{2\rho\frac{n+1}{n-1}T}\tonde{(n+1)\rho T+n+3}
+(n-1)^2 e^{\rho\frac{n+1}{n-1}T}
+(n+1)\rho T+3n+1
}{(n+1)D_+},\\[6pt]
\lim_{\substack{N\uparrow\infty\\ N \mathrm{even}}}
{\bm\omega}^\top  \tonde{\hat{\kappa}\tilde{\Gamma}-\tilde{\Gamma}^\top}{\bm\nu}
&=\frac{
n^2 e^{2\rho\frac{n+1}{n-1}T}
- n(n+1)e^{\rho\frac{n+3}{n-1}T}
+ (2n^2-3n-1)e^{\rho\frac{n+1}{n-1}T}
- (n+1)e^{-\rho T}
+ 3n-2
}{\mathscr{D}_+}\\
&\quad+\frac{\rho T\tonde{n-2}\tonde{n e^{2\rho\frac{n+1}{n-1}T}+1}}{\mathscr{D}_+}
 + \frac{2n(n-2)\tonde{e^{\rho\frac{n+1}{n-1}T}-1}^2}{D_+},\\[6pt]
\lim_{\substack{N\uparrow\infty\\ N \mathrm{even}}}
{\bm\omega}^\top \tilde{\Gamma} {\bm\omega}
&=e^{-\rho T}+\rho T+1.
\end{align*}
Moreover,
\begin{align*}
\lim_{\substack{N\uparrow\infty\\ N \mathrm{odd}}}
{\bm\nu}^\top \tilde{\Gamma} {\bm\nu}
&=\frac{
n e^{2\rho\frac{n+1}{n-1}T}\tonde{(n+1)\rho T+n+3}
- (n^2-1)e^{\rho\frac{n+1}{n-1}T}
- (n+1)\rho T-(3n+1)
}{(n+1)D_-},\\[6pt]
\lim_{\substack{N\uparrow\infty\\ N \mathrm{odd}}}
{\bm\omega}^\top  \tonde{\hat{\kappa}\tilde{\Gamma}-\tilde{\Gamma}^\top}{\bm\nu}
&=\frac{
n^2 e^{2\rho\frac{n+1}{n-1}T}
+ n(n+1)e^{\rho\frac{n+3}{n-1}T}
- (2n^2-3n+1)e^{\rho\frac{n+1}{n-1}T}
- (n+1)e^{-\rho T}
- 3n+2
}{\mathscr{D}_-}\\
&\quad+\frac{\rho T\tonde{n-2}\tonde{n e^{2\rho\frac{n+1}{n-1}T}-1}}{\mathscr{D}_-}
 + \frac{2n(n-2)\tonde{e^{2\rho\frac{n+1}{n-1}T}-1}}{D_-},\\[6pt]
\lim_{\substack{N\uparrow\infty\\ N \mathrm{odd}}}
{\bm\omega}^\top \tilde{\Gamma} {\bm\omega}
&=-e^{-\rho T}+\rho T+1.
\end{align*}
Here
\[
D_\pm := \tonde{n e^{2\frac{n+1}{n-1}\rho T}\pm 1}(n+1)^2,
\qquad
\mathscr{D}_\pm := \frac{D_\pm}{(n+1)}.
\]
\end{lemma}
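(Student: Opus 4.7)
The plan is to recycle the decomposition developed in the proof of Lemma~\ref{cost functional ausiliral lemma}, feeding it the boundary-value asymptotics \ref{f'}--\ref{h'} of Lemma~\ref{lemma_asympt_rate_various_elements} in place of \ref{lemma_asympt_rate_various_elements:label_f}--\ref{lemma_asympt_rate_various_elements:label_h}. The starting point is that all the algebraic identities derived for the Step~2 (``general $\kappa$'') case of Lemma~\ref{cost functional ausiliral lemma} are valid for any $\kappa\ge\hat\kappa/2$, $\kappa\ne\hat\kappa$, because they are formal manipulations of $\bm\nu,\bm\omega,m_\pm,c_\pm,d_\pm,C_1,C_2,C_3,C_4,C_5,C_6$, and the matrices $\Gamma^\theta,\tilde\Gamma$. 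The only place where regularity was used was in taking limits of the closed-form sums $\sum_i D_k^i$ and $\sum_i G_k^i$ via Lemma~\ref{lemma_asympt_rate_various_elements}. Thus the task reduces to re-evaluating those limits at $\kappa=\hat\kappa/2$ (equivalently $\theta=0$).

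For $\bm\omega^\top\tilde\Gamma\bm\omega$ the computation is immediate: by Remark~\ref{w_indep_n}, $\bm\omega$ does not depend on $n$, so the quantity coincides with its $n=2$ counterpart treated in \cite[Lemma~A.5]{SchiedStrehleZhang.17}, whose even/odd split at $\theta=0$ gives $\rho T+1\pm e^{-\rho T}$. For $\bm\nu^\top\tilde\Gamma\bm\nu$ and $\bm\omega^\top(\hat\kappa\tilde\Gamma-\tilde\Gamma^\top)\bm\nu$, I will substitute the bracket limits of \ref{g'} into the already-established identities $\nu_1(\tilde\Gamma\bm\nu)_1+\nu_2(\tilde\Gamma\bm\nu)_2+\sum_{k=1}^{4}\sum_i D_k^i+\nu_{N+1}(\tilde\Gamma\bm\nu)_{N+1}$ and $(\bm\omega^\top(\hat\kappa\tilde\Gamma-\tilde\Gamma^\top))_1\nu_1+\sum_{k=1}^{3}\sum_i G_k^i+(\bm\omega^\top(\hat\kappa\tilde\Gamma-\tilde\Gamma^\top))_{N+1}\nu_{N+1}$. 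Because $m_-<0$ at $\kappa=\hat\kappa/2$, the quantities $[m_\pm]^N,[\kappa]^N,[\kappa-\hat\kappa]^N$ acquire $\pm$ sign prefactors that now depend on the parity of $N$, and this parity dependence propagates to all the sums through \ref{f'}--\ref{g'}; the even/odd denominators $D_\pm$ and $\mathscr D_\pm$ appearing in the statement will come from the two distinct cluster values of $[m_+]^N[\kappa]^N$ and $[m_-]^N[\kappa-\hat\kappa]^N$ given by \ref{g'}.

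The bookkeeping proceeds term-by-term. For each $D_k^i$ and $G_k^i$ I will extract the products of bracket quantities that survive summation, substitute their even/odd limits from \ref{g'}, and collect. Several terms drop out: any summand containing a factor of the form $\bigl(\frac{\kappa-\hat\kappa}{\kappa}\bigr)^{n_t}$ or $\bigl(\frac{m_-}{\kappa}\bigr)^{n_t}$ summed against a decaying weight now yields a finite geometric series with absolute ratio less than $1$, because at $\kappa=\hat\kappa/2$ we have $|\kappa-\hat\kappa|/\kappa=1$ and $|m_-|/\kappa<1$ only through an $\alpha$-dependent correction; this requires using \ref{h'} to obtain the correct limit of the ratios $\frac{m_++\kappa-\hat\kappa}{m_++\alpha(\kappa-\hat\kappa)}$, $\frac{m_-+\alpha^2\kappa}{m_-+\alpha\kappa}$, and $\frac{\kappa+\alpha(\kappa-\hat\kappa)}{1-\alpha^2}$, none of which vanish in this regime. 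The cross-products $[m_+]^N[m_-]^N$, $\alpha^N[m_\sigma]^N[\kappa]^N$, and $\alpha^N[m_\sigma]^N[\kappa-\hat\kappa]^N$ are handled the same way, each yielding a $\pm$-dependent closed form.

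The main obstacle is purely bookkeeping: several dozen rational combinations of exponentials in $\rho T$ have to be collected, and the cancellations that produce the compact expressions in the statement are not obvious a priori. To manage this, I would organize the computation by grouping together all contributions that carry the same $e^{k\rho T (n+1)/(n-1)}$-factor (for $k\in\{0,1,2\}$) and verifying cancellation within each group, exactly as was done in the proof of Lemma~\ref{cost functional ausiliral lemma} for the regular case. The parity split of the denominators $D_\pm=(ne^{2\rho(n+1)T/(n-1)}\pm 1)(n+1)^2$ naturally reflects the parity split in \ref{g'} and should emerge once the sums are combined over a common denominator. Once the three limits are assembled in this way, the result plugs directly into \eqref{am} to deliver Theorem~\ref{cost_asympt_thm_theta_zero}.
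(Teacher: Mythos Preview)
Your plan is correct and matches the paper's approach: reuse the algebraic decompositions from the general-$\kappa$ step of Lemma~\ref{cost functional ausiliral lemma} (which are valid verbatim at $\kappa=\hat\kappa/2$), then substitute the parity-dependent bracket limits \ref{f'}--\ref{h'} of Lemma~\ref{lemma_asympt_rate_various_elements} in place of \ref{lemma_asympt_rate_various_elements:label_f}--\ref{lemma_asympt_rate_various_elements:label_h}, with the $\bm\omega^\top\tilde\Gamma\bm\omega$ limit imported from the $n=2$ case via the $n$-independence of $\bm\omega$. One small correction: the relevant reference for the even/odd limit of $\bm\omega^\top\tilde\Gamma\bm\omega$ at $\theta=0$ is \cite[Lemma~A.6]{SchiedStrehleZhang.17}, not A.5 (the latter handles $\theta>0$); and your description of which summands ``drop out'' is slightly off, since at $\kappa=\hat\kappa/2$ the ratios $(\kappa-\hat\kappa)/\kappa$ and $m_-/\kappa$ no longer have modulus strictly below $1$---these terms do not vanish but contribute nontrivial oscillatory pieces, which is precisely why the parity split arises.
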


\begin{proof}
    Since ${\bm\omega}$ is independent of $n$, the third limits coincide with the $2$--player case and are given by \cite[Lemma A.6]{SchiedStrehleZhang.17}. Hence it suffices to establish, for $\kappa=\frac{n-1}{2}$, the first two limits for $N$ even and odd. Moreover, as explained in the proof of Lemma~\ref{cost functional ausiliral lemma}, the representations of
    \[
    {\bm\nu}^\top \tilde{\Gamma} {\bm\nu}
    \quad\text{and}\quad
    {\bm\omega}^\top  \tonde{\hat{\kappa}\tilde{\Gamma}-\tilde{\Gamma}^\top}{\bm\nu}
    \]
    obtained there for $\kappa\neq n-1$ also hold for $\kappa=\frac{n-1}{2}$. Plugging in the limits from Lemma~\ref{lemma_asympt_rate_various_elements} yields the claim. For completeness, we record the decomposition and limiting contributions used in the argument.
    
    \noindent\emph{Quadratic form in ${\bm\nu}$}:
    \begin{align*}
    {\bm\nu}^\top \tilde{\Gamma} {\bm\nu}
    &= \nu_1\bigl(\tilde{\Gamma}{\bm\nu}\bigr)_1
      + \nu_2\bigl(\tilde{\Gamma}{\bm\nu}\bigr)_2
      + \sum_{k=1}^4 \sum_{i=3}^{N} D_k^i
      + \nu_{N+1}\bigl(\tilde{\Gamma}{\bm\nu}\bigr)_{N+1}.
    \end{align*}
    The boundary terms cancel asymptotically,
    \[
    \nu_1\bigl(\tilde{\Gamma}{\bm\nu}\bigr)_1
     + 
    \nu_2\bigl(\tilde{\Gamma}{\bm\nu}\bigr)_2  \longrightarrow  0.
    \]
    For the interior contributions,
    \begin{align*}
    \sum_{k=1}^3 \sum_{i=3}^{N} D_k^i  \longrightarrow 
    \begin{cases}
    \displaystyle
    \frac{1}{(n+1)D_+}
    \Bigl[
    \tonde{n e^{2\rho\frac{n+1}{n-1}T}+1}(n+1)\rho T
    \\[-2pt]\displaystyle\qquad\qquad
    \quad+\tonde{(n^2+4n-1)e^{2\rho\frac{n+1}{n-1}T}
    -(n^2+6n-3)e^{\rho\frac{n+1}{n-1}T}+2(n-1)}
    \Bigr], & N=2k,\\[10pt]
    \displaystyle
    \frac{1}{(n+1)D_-}
    \Bigl[
    \tonde{n e^{2\rho\frac{n+1}{n-1}T}-1}(n+1)\rho T
    \\[-2pt]\displaystyle\qquad\qquad
    \quad+\tonde{(n^2+4n-1)e^{2\rho\frac{n+1}{n-1}T}
    +(n+1)^2 e^{\rho\frac{n+1}{n-1}T}-2(n-1)}
    \Bigr], & N=2k+1,
    \end{cases}
    \end{align*}
    and
    \begin{align*}
    \sum_{i=3}^{N}   D_4^i  +  \nu_{N+1}\bigl(\tilde{\Gamma}{\bm\nu}\bigr)_{N+1}
     \longrightarrow 
    \begin{cases}
    \displaystyle
    \frac{1}{(n+1)D_+}
    \Bigl[
    2(n^2-1+2n)e^{\rho\frac{n+1}{n-1}T}
    \\[-2pt]\displaystyle\qquad\qquad
    \quad\ -(n-1)e^{2\rho\frac{n+1}{n-1}T}+n+3
    \Bigr], & N=2k,\\[10pt]
    \displaystyle
    \frac{1}{(n+1)D_-}
    \Bigl[
    -2n(n+1)e^{\rho\frac{n+1}{n-1}T}
    \\[-2pt]\displaystyle\qquad\qquad
    \quad\ -(n-1)e^{2\rho\frac{n+1}{n-1}T}-(n+3)
    \Bigr], & N=2k+1.
    \end{cases}
    \end{align*}
    Adding these two displays gives the limit for ${\bm\nu}^\top \tilde{\Gamma} {\bm\nu}$.
    \hide{\\
    Note first that we can easily calculate the following limits (even though the limit are the same as in the case when $\kappa >(n-1)/2$ we have to check the two cases $N$ odd and even, so it's not obvious that these limits are the same): 
    \begin{enumerate}[label=\arabic*.]
        \item $C_1\to\frac{2}{n+1}$
        \item $C_2\to\frac{1}{n+1}$
        \item $C_3\to 0$
    \end{enumerate}
    So first of all let's evaluate, in order, the limits of $\sum_{k=1}^4 \sum_{i=3}^N D_k^i$: 
    \begin{enumerate}[label=\arabic*.]
        \item 
            \begin{align*}
                \sum_{i=3}^N D^i_1\to\begin{cases}
                        \frac{1}{(n+1)^2}\rho T+\frac{1}{(n+1)D_+}\tonde{(n^2+3n)e^{2\rho\frac{n+1}{n-1}T}-(n^2+4n-1)e^{\rho\frac{n+1}{n-1}T}+n-1}\  &\mathrm{if}\ N=2k\\
                        \frac{1}{(n+1)^2}\rho T+\frac{1}{(n+1)D_-}\tonde{(n^2+3n)e^{2\rho\frac{n+1}{n-1}T}+(n+1)^2e^{\rho\frac{n+1}{n-1}T}-(n-1)}\  &\mathrm{if}\ N=2k+1
                    \end{cases}
            \end{align*}
            Indeed, in order, we have
            \begin{enumerate}[label=\alph*.]
                \item 
                \begin{align*}
                    \frac{C_2^2(1+\alpha)(1-\alpha)(N-2)}{2}\to\frac{1}{(n+1)^2}\rho T
                \end{align*}
                \item 
                \begin{align*}
                    \frac{C_2C_1(1+\alpha)}{2}\sums\frac{m_\sigma c_\sigma\left( \frac{\alpha\kappa}{m_\sigma}[m_\sigma]^N - \frac{m_\sigma}{\alpha\kappa}\alpha^N[\kappa]^N \right)}{m_\sigma -\alpha\kappa} \\ \to
                    \begin{cases}
                        \frac{1}{(n+1)D_+}\tonde{2ne^{2\rho\frac{n+1}{n-1}T}-(n-1)e^{\rho\frac{n+1}{n-1}T}-(n+1)}\qquad \mathrm{if}\ N=2k\\
                        \frac{1}{(n+1)D_-}\tonde{2ne^{2\rho\frac{n+1}{n-1}T}-(n-1)e^{\rho\frac{n+1}{n-1}T}+(n+1)}\qquad \mathrm{if}\ N=2k+1
                    \end{cases}
                \end{align*}
                \item 
                \begin{align*}
                    \frac{nC_1C_2(1+\alpha)}{2}\sums\frac{d_\sigma m_\sigma\left(  \left( \frac{\alpha(\kappa-\hat{\kappa})}{m_\sigma} \right)^2[m_\sigma]^N - \alpha^N\left[\kappa-\hat{\kappa}\right]^N \right)}{m_\sigma - \alpha(\kappa - \hat{\kappa})}\\ \to 
                    \begin{cases}
                        \frac{1}{(n+1)D_+}\tonde{(n^2+n)e^{2\rho\frac{n+1}{n-1}T}-(n^2+3n)e^{\rho\frac{n+1}{n-1}T}+2n}\qquad \mathrm{if}\ N=2k\\
                        \frac{1}{(n+1)D_-}\tonde{(n^2+n)e^{2\rho\frac{n+1}{n-1}T}+(n^2+3n)e^{\rho\frac{n+1}{n-1}T}-2n}\qquad \mathrm{if}\ N=2k+1
                    \end{cases}
                \end{align*}
            \end{enumerate}
        \item 
            \begin{align*}
                \sum_{i=3}^N D^i_2\to \begin{cases}
                            \frac{(n-1)\tonde{e^{\rho\frac{n+1}{n-1}T}-1}^2}{(n+1)D_+} \  &\mathrm{if}\ N=2k\\ 
                            \frac{(n-1)\tonde{e^{2\rho\frac{n+1}{n-1}T}-1}}{(n+1)D_-}\  &\mathrm{if}\ N=2k+1
                        \end{cases} 
            \end{align*}
            Indeed, in order, we have
            \begin{enumerate}[label=\alph*.]
                \item
                    \begin{align*}
                        C_2C_3(\alpha^2-\alpha^N)\to 0
                    \end{align*}
                \item 
                    \begin{align*}
                        &\frac{C_2C_1(1-\alpha^2)n}{2(1+\alpha)}\sums\frac{(d_\sigma)^2(\hat{\kappa} - \kappa - m_\sigma)\left( (m_\sigma)^2\alpha^N\left[ \kappa-\hat{\kappa} \right]^N - (\alpha(\kappa-\hat{\kappa}))^2\left[ m_\sigma \right]^N\right)}{d_\sigma m_\sigma(m_\sigma - (\kappa-\hat{\kappa}))((\kappa-\hat{\kappa})\alpha-m_\sigma)}\\
                        &\hspace{4cm}\to \begin{cases}
                            \frac{(n-1)\tonde{1-e^{\rho\frac{n+1}{n-1}T}}}{(n+1)D_+} \  &\mathrm{if}\ N=2k\\ 
                            \frac{-(n-1)\tonde{1-e^{\rho\frac{n+1}{n-1}T}}}{(n+1)D_-}\  &\mathrm{if}\ N=2k+1
                        \end{cases} 
                    \end{align*}
                \item 
                    \begin{align*}
                        &\frac{C_2C_1(1-\alpha^2)}{2(1+\alpha)}\sums\frac{(c_\sigma)^2(m_\sigma + \alpha^2\kappa)((\alpha\kappa)^2\left[ m_\sigma \right]^N - (m_\sigma)^2\alpha^N\left[ \kappa \right]^N)}{c_\sigma\alpha\kappa(m_\sigma-\alpha\kappa)(m_\sigma-\alpha^2\kappa)}\\
                        &\hspace{.5cm}\to \begin{cases}
                            \frac{(n-1)\tonde{e^{2\rho\frac{n+1}{n-1}T}-e^{\rho\frac{n+1}{n-1}T}}}{(n+1)D_+}&\mathrm{if}\ N=2k\\ 
                            \frac{(n-1)\tonde{e^{2\rho\frac{n+1}{n-1}T}-e^{\rho\frac{n+1}{n-1}T}}}{(n+1)D_-}&\mathrm{if}\ N=2k+1
                        \end{cases}
                    \end{align*}
                    \hide{Indeed
                    \begin{align*}
                        \frac{(1-\alpha^2)}{2(n+1)^2}\frac{(c_+)^2(m_+ + \alpha^2\kappa)((\alpha\kappa)^2\left[ m_+ \right]^N - (m_+)^2\alpha^N\left[ \kappa \right]^N)}{c_+\alpha\kappa(m_+-\alpha\kappa)(m_+-\alpha^2\kappa)}\to \begin{cases}
                            \frac{(n-1)\tonde{e^{2\rho\frac{n+1}{n-1}T}-e^{\rho\frac{n+1}{n-1}T}}}{(n+1)D_+}&\mathrm{if}\ N=2k\\ 
                            \frac{(n-1)\tonde{e^{2\rho\frac{n+1}{n-1}T}-e^{\rho\frac{n+1}{n-1}T}}}{(n+1)D_-}&\mathrm{if}\ N=2k+1
                        \end{cases}
                    \end{align*}
                    \begin{align*}
                        \frac{(1-\alpha^2)}{2(n+1)^2}\frac{(c_-)^2(m_- + \alpha^2\kappa)((\alpha\kappa)^2\left[ m_- \right]^N - (m_-)^2\alpha^N\left[ \kappa \right]^N)}{c_-\alpha\kappa(m_--\alpha\kappa)(m_--\alpha^2\kappa)}\to 0
                    \end{align*}}
            \end{enumerate}
        \item
            \begin{align*}
                \sum_{i=3}^N D^i_3\to 0 
            \end{align*}
             Indeed, it is easy to see that the denominators of the two sums 
                \begin{align*}
                     \sums \frac{nd_\sigma \left( (m_\sigma\alpha^N)^2\left[\kappa-\hat{\kappa}\right]^N -\left({\alpha^2}(\kappa-\hat{\kappa})\right)^2[m_\sigma]^N\right)}{m_\sigma(\alpha^2(\kappa-\hat{\kappa})-m_\sigma)}
                \end{align*}
                \begin{align*}
                    \sums\frac{c_\sigma\left( \kappa^2\alpha^N\left[ m_\sigma \right]^N - (m_\sigma)^2\alpha^N[\kappa]^N \right)}{\kappa(m_\sigma-\kappa)}
                \end{align*}
            don't vanish, so then multiplying those terms for $C_1C_3$ which goes to $0$ gives the result. 
        \item
            \begin{align*}
                &\sum_{i=3}^N D^i_4\\ 
                &\to \begin{cases}
                           \frac{e^{2\rho\frac{n+1}{n-1}T}(n+1)^2\tonde{n-1} + (1-n^2)\tonde{ne^{4\rho\frac{n+1}{n-1}T}+1} + 4ne^{\rho\frac{n+1}{n-1}T}(n+1)\tonde{1-e^{2\rho\frac{n+1}{n-1}T}}}{D_+^2}   \ &\text{if}\ N=2k\\ 
                           \frac{e^{2\rho\frac{n+1}{n-1}T}(n+1)^2\tonde{3n+1}+ (1-n^2)\tonde{ne^{4\rho\frac{n+1}{n-1}T}+1}+ e^{\rho\frac{n+1}{n-1}T}(n+1)\tonde{2n+2}\tonde{e^{2\rho\frac{n+1}{n-1}T}+1}}{D_-^2}  \ &\text{if}\ N=2k+1
                        \end{cases} 
            \end{align*}
            The even case is
            \begin{align*}
                \frac{e^{2\rho\frac{n+1}{n-1}T}(n+1)^2\tonde{n-1} + (1-n^2)\tonde{ne^{4\rho\frac{n+1}{n-1}T}+1} + 4ne^{\rho\frac{n+1}{n-1}T}(n+1)\tonde{1-e^{2\rho\frac{n+1}{n-1}T}}}{D_+^2} 
            \end{align*}
            so I have to check
            \begin{align*}
                (n+1)(n^2+n)\frac{e^{2\rho\frac{n+1}{n-1}T}}{D_+^2} - \frac{(n+1)^2e^{2\rho\frac{n+1}{n-1}T}}{D_+^2} +\frac{(1-n^2)\tonde{ne^{4\rho\frac{n+1}{n-1}T}+1}}{D_+^2} \\
                + \frac{e^{\rho\frac{n+1}{n-1}T}(n+1)}{D_+^2}\tonde{-3n\tonde{e^{2\rho\frac{n+1}{n-1}T}-1}-\tonde{n^2e^{2\rho\frac{n+1}{n-1}T}-1}} + \frac{(n^2-1)e^{\rho\frac{n+1}{n-1}T}}{D_+^2}\tonde{ne^{2\rho\frac{n+1}{n-1}T}+1} \\
                +\frac{(n+1)^3e^{2\rho\frac{n+1}{n-1}T}}{D_+^2}  -\frac{e^{2\rho\frac{n+1}{n-1}T}(n+1)^3}{D_+^2} = \\
                \frac{e^{2\rho\frac{n+1}{n-1}T}(n+1)^2\tonde{n-1} + (1-n^2)\tonde{ne^{4\rho\frac{n+1}{n-1}T}+1} + 4ne^{\rho\frac{n+1}{n-1}T}(n+1)\tonde{1-e^{2\rho\frac{n+1}{n-1}T}}}{D_+^2} 
            \end{align*}
            The odd case case is
            \begin{align*}
                \frac{e^{2\rho\frac{n+1}{n-1}T}(n+1)^2\tonde{3n+1}+ (1-n^2)\tonde{ne^{4\rho\frac{n+1}{n-1}T}+1}+ e^{\rho\frac{n+1}{n-1}T}(n+1)\tonde{2n+2}\tonde{e^{2\rho\frac{n+1}{n-1}T}+1}}{D_-^2}
            \end{align*}
            and 
            \begin{align*}
                (n+1)(n^2+n)\frac{e^{2\rho\frac{n+1}{n-1}T}}{D_-^2}- \frac{(n+1)^2e^{2\rho\frac{n+1}{n-1}T}}{D_-^2} + \frac{(1-n^2)\tonde{ne^{4\rho\frac{n+1}{n-1}T}+1}}{D_-^2}  \\
                            +\frac{e^{\rho\frac{n+1}{n-1}T}(n+1)}{D_-^2}\tonde{2n\tonde{e^{2\rho\frac{n+1}{n-1}T}+1} + (n+1)\tonde{ne^{2\rho\frac{n+1}{n-1}T} +1}}  + \frac{(n^2-1)e^{\rho\frac{n+1}{n-1}T}}{D_-^2}\tonde{ne^{2\rho\frac{n+1}{n-1}T}-1}   \\
                            +\frac{(n+1)^3e^{2\rho\frac{n+1}{n-1}T}}{D_-^2}  + \frac{e^{2\rho\frac{n+1}{n-1}T}(n+1)^3}{D_-^2}
                = \\
                \frac{e^{2\rho\frac{n+1}{n-1}T}(n+1)^2\tonde{3n+1}+ (1-n^2)\tonde{ne^{4\rho\frac{n+1}{n-1}T}+1}+ e^{\rho\frac{n+1}{n-1}T}(n+1)\tonde{2n+2}\tonde{ne^{2\rho\frac{n+1}{n-1}T}+1}}{D_-^2}
            \end{align*}
            Indeed, in order, we have
            \begin{enumerate}[label=\alph*.]
                \item 
                    \begin{align*}
                        &\tonde{C_1}^2\frac{n^2}{2}\sums\frac{\tonde{\ds\ms}^2\tonde{\hat{\kappa}-\kappa-\ms}}{\tonde{\ms -(\kappa-\hat{\kappa})}\tonde{\alpha(\kappa-\hat{\kappa}) - \ms}\tonde{\ms +\alpha(\kappa-\hat{\kappa})}}\tonde{\alpha^N\quadre{\kappa-\hat{\kappa}}^N}^2\\
                        &\to \begin{cases}
                            (n+1)(n^2+n)\frac{e^{2\rho\frac{n+1}{n-1}T}}{D_+^2}\ &\text{if}\ N=2k\\ 
                            (n+1)(n^2+n)\frac{e^{2\rho\frac{n+1}{n-1}T}}{D_-^2}\ &\text{if}\ N=2k+1
                        \end{cases}
                    \end{align*}
                    {Indeed
                    \begin{align*}
                        &\frac{4}{(n+1)^2}\frac{n^2}{2}\frac{\tonde{d_+m_+}^2\tonde{\hat{\kappa}-\kappa-m_+}}{\tonde{m_+ -(\kappa-\hat{\kappa})}\tonde{\alpha(\kappa-\hat{\kappa}) - m_+}\tonde{m_+ +\alpha(\kappa-\hat{\kappa})}}\tonde{\alpha^N\quadre{\kappa-\hat{\kappa}}^N}^2\\
                        &\to \begin{cases}
                            n^2(n+1)\frac{e^{2\rho\frac{n+1}{n-1}T}}{D_+^2}\ &\text{if}\ N=2k\\ 
                            n^2(n+1)\frac{e^{2\rho\frac{n+1}{n-1}T}}{D_-^2}\ &\text{if}\ N=2k+1
                        \end{cases}
                    \end{align*}
                    \begin{align*}
                        &\frac{4}{(n+1)^2}\frac{n^2}{2}\frac{\tonde{d_-m_-}^2\tonde{\hat{\kappa}-\kappa-m_-}}{\tonde{m_- -(\kappa-\hat{\kappa})}\tonde{\alpha(\kappa-\hat{\kappa}) - m_-}\tonde{m_- +\alpha(\kappa-\hat{\kappa})}}\tonde{\alpha^N\quadre{\kappa-\hat{\kappa}}^N}^2\\
                        &\to \begin{cases}
                            n(n+1)\frac{e^{2\rho\frac{n+1}{n-1}T}}{D_+^2}\ &\text{if}\ N=2k\\ 
                            n(n+1)\frac{e^{2\rho\frac{n+1}{n-1}T}}{D_-^2}\ &\text{if}\ N=2k+1
                        \end{cases}
                    \end{align*}}
                \item 
                    \begin{align*}
                        &-\frac{\tonde{C_1}^2}{2(\alpha\kappa)^2}\sums\frac{(\cs)^2(\ms)^4(\ms + \alpha^2\kappa)}{(\ms-\alpha\kappa)(\ms -\alpha^2\kappa)\tonde{\ms +\alpha\kappa}}\tonde{\alpha^N\quadre{\kappa}^N}^2\\
                        &\hspace{.5cm}\to \begin{cases}
                            \frac{-(n+1)^2e^{2\rho\frac{n+1}{n-1}T}}{D_+^2}\ &\text{if }N=2k\\
                            \frac{-(n+1)^2e^{2\rho\frac{n+1}{n-1}T}}{D_-^2}\ &\text{if }N=2k+1
                        \end{cases}
                    \end{align*}
                    {Indeed
                    \begin{align*}
                        -\frac{4}{(n+1)^2}\frac{1}{2(\alpha\kappa)^2}\frac{\tonde{c_+}^2\tonde{m_+}^4\tonde{m_+ + \alpha^2\kappa}}{\tonde{m_+-\alpha\kappa}\tonde{m_+ -\alpha^2\kappa}\tonde{m_+ +\alpha\kappa}}\tonde{\alpha^N\quadre{\kappa}^N}^2\to\begin{cases}
                            \frac{-n(n+1)e^{2\rho\frac{n+1}{n-1}T}}{D_+^2}\ &\text{if }N=2k\\
                            \frac{-n(n+1)e^{2\rho\frac{n+1}{n-1}T}}{D_-^2}\ &\text{if }N=2k+1
                        \end{cases}
                    \end{align*}
                    \begin{align*}
                        -\frac{4}{(n+1)^2}\frac{1}{2(\alpha\kappa)^2}\frac{\tonde{c_-}^2\tonde{m_-}^4\tonde{m_- + \alpha^2\kappa}}{\tonde{m_--\alpha\kappa}\tonde{m_- -\alpha^2\kappa}\tonde{m_- +\alpha\kappa}}\tonde{\alpha^N\quadre{\kappa}^N}^2\to\begin{cases}
                            \frac{-(n+1)e^{2\rho\frac{n+1}{n-1}T}}{D_+^2}\ &\text{if }N=2k\\
                            \frac{-(n+1)e^{2\rho\frac{n+1}{n-1}T}}{D_-^2}\ &\text{if }N=2k+1
                        \end{cases}
                    \end{align*}}
                \item 
                    \begin{align*}
                    \tonde{C_1}^2&\sums\Bigg(\frac{(\cs)^2\tonde{\ms+\alpha^2\kappa}(\alpha\kappa)^2}{2(\ms-\alpha\kappa)\tonde{\ms-\alpha^2\kappa}\tonde{\ms+\alpha\kappa}}\\
                        &\hspace{1.7cm}-\frac{n^2\tonde{\ds}^2\tonde{\alpha(\kappa-\hat{\kappa})}^4\tonde{\hat{\kappa}-\kappa-\ms}}{2(\ms)^2\tonde{\ms -(\kappa-\hat{\kappa})}\tonde{\alpha(\kappa-\hat{\kappa}) - \ms}\tonde{\ms +\alpha(\kappa-\hat{\kappa})}}\Bigg)\tonde{\quadre{\ms}^N}^2\\
                        &\hspace{.5cm}\to \begin{cases}
                            \frac{(1-n^2)\tonde{ne^{4\rho\frac{n+1}{n-1}T}+1}}{D_+^2}\ &\text{if } N=2k \\
                            \frac{(1-n^2)\tonde{ne^{4\rho\frac{n+1}{n-1}T}+1}}{D_-^2}\ &\text{if } N=2k+1
                        \end{cases}
                    \end{align*}
                    {Indeed
                    \begin{align*}
                        n(n+1)\frac{e^{4\rho\frac{n+1}{n-1}T}}{D_+^2} + \frac{(n+1)}{D_+^2}-n^2(n+1)\frac{e^{4\rho\frac{n+1}{n-1}T}}{D_+^2} -\frac{n(n+1)}{D_+^2} = \frac{(1-n^2)\tonde{ne^{4\rho\frac{n+1}{n-1}T}+1}}{D_+^2}
                    \end{align*}
                    \begin{align*}
                        n(n+1)\frac{e^{4\rho\frac{n+1}{n-1}T}}{D_-^2} + \frac{(n+1)}{D_-^2} -n^2(n+1)\frac{e^{4\rho\frac{n+1}{n-1}T}}{D_-^2} -\frac{n(n+1)}{D_-^2} = \frac{(1-n^2)\tonde{ne^{4\rho\frac{n+1}{n-1}T}+1}}{D_-^2}
                    \end{align*}
                    Indeed
                    \begin{align*}
                        \frac{4}{(n+1)^2}\frac{\tonde{c_+}^2\tonde{m_++\alpha^2\kappa}(\alpha\kappa)^2}{2\tonde{m_+-\alpha\kappa}\tonde{m_+-\alpha^2\kappa}\tonde{m_++\alpha\kappa}}\tonde{\quadre{m_+}^N}^2\to\begin{cases}
                            n(n+1)\frac{e^{4\rho\frac{n+1}{n-1}T}}{D_+^2}\ &\text{if }N=2k\\
                            n(n+1)\frac{e^{4\rho\frac{n+1}{n-1}T}}{D_-^2}\ &\text{if }N=2k+1
                        \end{cases}
                    \end{align*}
                    \begin{align*}
                        \frac{4}{(n+1)^2}\frac{\tonde{c_-}^2\tonde{m_-+\alpha^2\kappa}(\alpha\kappa)^2}{2\tonde{m_--\alpha\kappa}\tonde{m_--\alpha^2\kappa}\tonde{m_-+\alpha\kappa}}\tonde{[m_-]^N}^2\to\begin{cases}
                            \frac{(n+1)}{D_+^2}\ &\text{if }N=2k\\
                            \frac{(n+1)}{D_-^2}\ &\text{if }N=2k+1
                        \end{cases}
                    \end{align*}
                    \begin{align*}
                        &-\frac{4}{(n+1)^2}\frac{n^2\tonde{d_+}^2\tonde{\alpha(\kappa-\hat{\kappa})}^4\tonde{\hat{\kappa}-\kappa-m_+}}{2\tonde{m_+}^2\tonde{m_+ -(\kappa-\hat{\kappa})}\tonde{\alpha(\kappa-\hat{\kappa}) - m_+}\tonde{m_+ +\alpha(\kappa-\hat{\kappa})}}\tonde{\quadre{m_+}^N}^2\\
                        &\to\begin{cases}
                            -n^2(n+1)\frac{e^{4\rho\frac{n+1}{n-1}T}}{D_+^2}\ &\text{if }N=2k\\
                            -n^2(n+1)\frac{e^{4\rho\frac{n+1}{n-1}T}}{D_-^2}\ &\text{if }N=2k+1
                        \end{cases}
                    \end{align*}
                    \begin{align*}
                        &-\frac{4}{(n+1)^2}\frac{n^2\tonde{d_-}^2\tonde{\alpha(\kappa-\hat{\kappa})}^4\tonde{\hat{\kappa}-\kappa-m_-}}{2\tonde{m_-}^2\tonde{m_- -(\kappa-\hat{\kappa})}\tonde{\alpha(\kappa-\hat{\kappa}) - m_-}\tonde{m_- +\alpha(\kappa-\hat{\kappa})}}\tonde{[m_-]^N}^2\\
                        &\to\begin{cases}
                            \frac{-n(n+1)}{D_+^2}\ &\text{if }N=2k\\
                            \frac{-n(n+1)}{D_-^2}\ &\text{if }N=2k+1
                        \end{cases}
                    \end{align*}}
                \item 
                    \begin{align*}
                        &\tonde{C_1}^2\sums\alpha^N\kappa\tonde{ \frac{n\cs\dsb\msb\tonde{\frac{\hat{\kappa}-\kappa-\msb}{\hat{\kappa}-\kappa+\msb} +\frac{\ms+\alpha^2\kappa}{\ms-\alpha^2\kappa}}}{2\tonde{\ms(\kappa-\hat{\kappa})-\msb\kappa}} - \frac{\cs\ds\tonde{\hat{\kappa}-\tonde{1-\alpha^2}\kappa}}{\tonde{1-\alpha^2}\hat{\kappa}}}\quadre{\ms}^N\quadre{\kappa-\hat{\kappa}}^N\\
                        &\to\begin{cases}
                            \frac{e^{\rho\frac{n+1}{n-1}T}(n+1)}{D_+^2}\tonde{-3n\tonde{e^{2\rho\frac{n+1}{n-1}T}-1}-\tonde{n^2e^{2\rho\frac{n+1}{n-1}T}-1}}\ &\text{if }N=2k\\
                            \frac{e^{\rho\frac{n+1}{n-1}T}(n+1)}{D_-^2}\tonde{2n\tonde{e^{2\rho\frac{n+1}{n-1}T}+1} + (n+1)\tonde{ne^{2\rho\frac{n+1}{n-1}T} +1}}           \ &\text{if }N=2k+1
                        \end{cases}
                    \end{align*}
                    {Indeed
                    \begin{align*}
                        -2n(n+1)\frac{e^{3\rho\frac{n+1}{n-1} T}}{D_+^2} + 2n(n+1)\frac{ e^{\rho\frac{n+1}{n-1} T}}{D_+^2} + -n(n+1)^2\frac{ e^{3\rho\frac{n+1}{n-1} T}}{D_+^2} + (n+1)^2\frac{ e^{\rho\frac{n+1}{n-1} T}}{D_+^2} \\= \frac{e^{\rho\frac{n+1}{n-1}T}(n+1)}{D_+^2}\tonde{-3n\tonde{e^{2\rho\frac{n+1}{n-1}T}-1}-\tonde{n^2e^{2\rho\frac{n+1}{n-1}T}-1}}
                    \end{align*}
                    \begin{align*}
                        2n(n+1)\frac{e^{3\rho\frac{n+1}{n-1} T}}{D_-^2}+2n(n+1)\frac{ e^{\rho\frac{n+1}{n-1} T}}{D_-^2}+n(n+1)^2\frac{ e^{3\rho\frac{n+1}{n-1} T}}{D_-^2}+(n+1)^2\frac{ e^{\rho\frac{n+1}{n-1} T}}{D_-^2}\\
                        =\frac{e^{\rho\frac{n+1}{n-1}T}(n+1)}{D_-^2}\tonde{2n\tonde{e^{2\rho\frac{n+1}{n-1}T}+1} + (n+1)\tonde{ne^{2\rho\frac{n+1}{n-1}T} +1}}
                    \end{align*}
                    \begin{align*}
                        &\frac{4}{(n+1)^2}e^{-\rho T}\frac{n-1}{2}\frac{nc_+d_-m_-\tonde{\frac{\hat{\kappa}-\kappa-m_-}{\hat{\kappa}-\kappa+m_-} +\frac{m_++\alpha^2\kappa}{m_+-\alpha^2\kappa}}}{2(m_+(\kappa-\hat{\kappa}) - m_-\kappa )}\quadre{m_+}^N\quadre{\kappa-\hat{\kappa}}^N\\
                        \to&\begin{cases}
                            -2n(n+1)\frac{e^{3\rho\frac{n+1}{n-1} T}}{D_+^2}\ &\text{if }N=2k\\ 
                            2n(n+1)\frac{e^{3\rho\frac{n+1}{n-1} T}}{D_-^2}\ &\text{if }N=2k+1 
                        \end{cases}
                    \end{align*}
                    \begin{align*}
                        &\frac{4}{(n+1)^2}e^{-\rho T}\frac{n-1}{2}\frac{nc_-d_+m_+\tonde{\frac{\hat{\kappa}-\kappa-m_+}{\hat{\kappa}-\kappa+m_+} +\frac{m_-+\alpha^2\kappa}{m_--\alpha^2\kappa}}}{2(m_-(\kappa-\hat{\kappa}) - m_+\kappa )}[m_-]^N\quadre{\kappa-\hat{\kappa}}^N\\
                        &\to\begin{cases}
                            2n(n+1)\frac{ e^{\rho\frac{n+1}{n-1} T}}{D_+^2}\ &\text{if }N=2k\\ 
                            2n(n+1)\frac{ e^{\rho\frac{n+1}{n-1} T}}{D_-^2}\ &\text{if }N=2k+1 
                        \end{cases}
                    \end{align*}
                    \begin{align*}
                        \frac{-4}{(n+1)^2}e^{-\rho T}\frac{n-1}{2}\frac{c_+d_+\tonde{\hat{\kappa}-\tonde{1-\alpha^2}\kappa}}{\tonde{1-\alpha^2}\hat{\kappa}}\quadre{m_+}^N\quadre{\kappa-\hat{\kappa}}^N
                        \to\begin{cases}
                            -n(n+1)^2\frac{ e^{3\rho\frac{n+1}{n-1} T}}{D_+^2}\ &\text{if }N=2k\\ 
                            n(n+1)^2\frac{ e^{3\rho\frac{n+1}{n-1} T}}{D_-^2}\ &\text{if }N=2k+1 
                        \end{cases}
                    \end{align*}
                    \begin{align*}
                        \frac{-4}{(n+1)^2}e^{-\rho T}\frac{n-1}{2}\frac{c_-d_-\tonde{\hat{\kappa}-\tonde{1-\alpha^2}\kappa}}{\tonde{1-\alpha^2}\hat{\kappa}}[m_-]^N\quadre{\kappa-\hat{\kappa}}^N
                        \to\begin{cases}
                            (n+1)^2\frac{ e^{\rho\frac{n+1}{n-1} T}}{D_+^2}\ &\text{if }N=2k\\ 
                            (n+1)^2\frac{ e^{\rho\frac{n+1}{n-1} T}}{D_-^2}\ &\text{if }N=2k+1 
                        \end{cases}
                    \end{align*}}
                \item 
                    \begin{align*}
                        &\tonde{C_1}^2\sums\frac{\alpha^N(\kappa-\hat{\kappa})^2}{\kappa}\tonde{\frac{\cs\ds\tonde{\hat{\kappa}-\tonde{1-\alpha^2}\kappa}}{\hat{\kappa}\tonde{1-\alpha^2}}-\frac{n\csb\ds\tonde{\msb}^2\tonde{\frac{\hat{\kappa}-\kappa-\ms}{\hat{\kappa}-\kappa+\ms} + \frac{\msb + \alpha^2\kappa}{\msb - \alpha^2\kappa}}}{2\ms\tonde{\msb(\kappa-\hat{\kappa})-\ms\kappa}}}\quadre{\ms}^N\quadre{\kappa}^N\\
                        &\hspace{3cm}\to \begin{cases}
                            \frac{(n^2-1)e^{\rho\frac{n+1}{n-1}T}}{D_+^2}\tonde{ne^{2\rho\frac{n+1}{n-1}T}+1} &\text{if }N=2k\\
                            \frac{(n^2-1)e^{\rho\frac{n+1}{n-1}T}}{D_-^2}\tonde{ne^{2\rho\frac{n+1}{n-1}T}-1} &\text{if }N=2k+1
                        \end{cases}
                    \end{align*}
                    {Indeed
                    \begin{align*}
                        (n+1)^2n \frac{e^{3\rho\frac{n+1}{n-1}T}}{D_+^2} -(n+1)^2 \frac{e^{\rho\frac{n+1}{n-1}T}}{D_+^2} -2n(n+1)\frac{e^{3\rho\frac{n+1}{n-1}T}}{D_+^2} +2n(n+1)\frac{e^{\rho\frac{n+1}{n-1}T}}{D_+^2} = \frac{(n^2-1)e^{\rho\frac{n+1}{n-1}T}}{D_+^2}\tonde{ne^{2\rho\frac{n+1}{n-1}T}+1}
                    \end{align*}
                    \begin{align*}
                        (n+1)^2n \frac{e^{3\rho\frac{n+1}{n-1}T}}{D_-^2}+(n+1)^2 \frac{e^{\rho\frac{n+1}{n-1}T}}{D_-^2}-2n(n+1)\frac{e^{3\rho\frac{n+1}{n-1}T}}{D_-^2}-2n(n+1)\frac{e^{\rho\frac{n+1}{n-1}T}}{D_-^2} = \frac{(n^2-1)e^{\rho\frac{n+1}{n-1}T}}{D_-^2}\tonde{ne^{2\rho\frac{n+1}{n-1}T}-1}
                    \end{align*}
                    \begin{align*}
                        &\frac{4}{(n+1)^2}e^{-\rho T}\kappa \frac{c_+d_+\tonde{\hat{\kappa}-\tonde{1-\alpha^2}\kappa}}{\hat{\kappa}\tonde{1-\alpha^2}}\quadre{m_+}^N\quadre{\kappa}^N\\
                        &\to\begin{cases}
                            (n+1)^2n \frac{e^{3\rho\frac{n+1}{n-1}T}}{D_+^2}\ &\text{if }N=2k\\
                            (n+1)^2n \frac{e^{3\rho\frac{n+1}{n-1}T}}{D_-^2}\ &\text{if }N=2k+1
                        \end{cases}
                    \end{align*}
                    \begin{align*}
                        &\frac{4}{(n+1)^2}e^{-\rho T}\kappa \frac{c_-d_-\tonde{\hat{\kappa}-\tonde{1-\alpha^2}\kappa}}{\hat{\kappa}\tonde{1-\alpha^2}}[m_-]^N\quadre{\kappa}^N\\
                        &\to\begin{cases}
                            -(n+1)^2 \frac{e^{\rho\frac{n+1}{n-1}T}}{D_+^2}\ &\text{if }N=2k\\
                            (n+1)^2 \frac{e^{\rho\frac{n+1}{n-1}T}}{D_-^2}\ &\text{if }N=2k+1
                        \end{cases}
                    \end{align*}
                    \begin{align*}
                        &\frac{-4}{(n+1)^2}e^{-\rho T}\kappa  \frac{nc_-d_+\tonde{m_-}^2\tonde{\frac{\hat{\kappa}-\kappa-m_+}{\hat{\kappa}-\kappa+m_+} + \frac{m_- + \alpha^2\kappa}{m_- - \alpha^2\kappa}}}{2m_+\tonde{m_-(\kappa-\hat{\kappa})-m_+\kappa}}\quadre{m_+}^N\quadre{\kappa}^N \\
                        &\to\begin{cases}
                            -2n(n+1)\frac{e^{3\rho\frac{n+1}{n-1}T}}{D_+^2}\ &\text{if }N=2k\\
                            -2n(n+1)\frac{e^{3\rho\frac{n+1}{n-1}T}}{D_-^2}\ &\text{if }N=2k+1
                        \end{cases}
                    \end{align*}
                    \begin{align*}
                        &\frac{-4}{(n+1)^2}e^{-\rho T}\kappa   \frac{nc_+d_-\tonde{m_+}^2\tonde{\frac{\hat{\kappa}-\kappa-m_-}{\hat{\kappa}-\kappa+m_-} + \frac{m_+ + \alpha^2\kappa}{m_+ - \alpha^2\kappa}}}{2m_-\tonde{m_+(\kappa-\hat{\kappa})-m_-\kappa}}[m_-]^N\quadre{\kappa}^N\\
                        &\to\begin{cases}
                            2n(n+1)\frac{e^{\rho\frac{n+1}{n-1}T}}{D_+^2}\ &\text{if }N=2k\\
                            -2n(n+1)\frac{e^{\rho\frac{n+1}{n-1}T}}{D_-^2}\ &\text{if }N=2k+1
                        \end{cases}
                    \end{align*}}
                \item 
                    \begin{align*}
                        &\tonde{C_1}^2\kappa\tonde{\tonde{\alpha^2-1}\kappa+\hat{\kappa}}\tonde{\alpha^N}^2\tonde{ \frac{c_+c_-\tonde{\alpha(\kappa-\hat{\kappa})}^2}{n(\alpha\kappa)^2\tonde{1-\alpha^2}\hat{\kappa}}\tonde{\quadre{\kappa}^N}^2 -\frac{nd_+d_-}{\hat{\kappa}\tonde{1-\alpha^2}}\tonde{\quadre{\kappa-\hat{\kappa}}^N}^2}\\
                        &\hspace{.5cm}\to \begin{cases}
                            \frac{(n+1)^3e^{2\rho\frac{n+1}{n-1}T}}{D_+^2}\ &\text{if }N=2k\\
                            \frac{(n+1)^3e^{2\rho\frac{n+1}{n-1}T}}{D_-^2}\ &\text{if }N=2k+1
                        \end{cases}
                    \end{align*}
                    {Indeed
                    \begin{align*}
                        &\frac{4}{(n+1)^2}\kappa\tonde{\tonde{\alpha^2-1}\kappa+\hat{\kappa}}\tonde{\alpha^N}^2\frac{c_+c_-\tonde{\alpha(\kappa-\hat{\kappa})}^2}{n(\alpha\kappa)^2\tonde{1-\alpha^2}\hat{\kappa}}\tonde{\quadre{\kappa}^N}^2\\
                        &\to\begin{cases}
                            \frac{(n+1)^2e^{2\rho\frac{n+1}{n-1}T}}{D_+^2}\ &\text{if }N=2k\\
                            \frac{(n+1)^2e^{2\rho\frac{n+1}{n-1}T}}{D_-^2}\ &\text{if }N=2k+1
                        \end{cases}
                    \end{align*}
                    \begin{align*}
                        &\frac{4}{(n+1)^2}\kappa\tonde{\tonde{\alpha^2-1}\kappa+\hat{\kappa}}\tonde{\alpha^N}^2\frac{-nd_+d_-}{\hat{\kappa}\tonde{1-\alpha^2}}\tonde{\quadre{\kappa-\hat{\kappa}}^N}^2\\
                        &\to\begin{cases}
                            \frac{n(n+1)^2e^{2\rho\frac{n+1}{n-1}T}}{D_+^2}\ &\text{if }N=2k\\
                            \frac{n(n+1)^2e^{2\rho\frac{n+1}{n-1}T}}{D_-^2}\ &\text{if }N=2k+1
                        \end{cases}
                    \end{align*}}
                \item
                    \begin{align*}
                        &\tonde{C_1}^2\frac{\tonde{1-\alpha^2}\kappa-\hat{\kappa}}{n\kappa}\tonde{\frac{c_+c_-\kappa^2}{\hat{\kappa}\tonde{1-\alpha^2}}-\frac{n^2d_+d_-(\kappa-\hat{\kappa})^2}{\hat{\kappa}\tonde{1-\alpha^2}} }\quadre{m_+}^N[m_-]^N\\
                        &\hspace{.5cm}\to\begin{cases}
                            -\frac{e^{2\rho\frac{n+1}{n-1}T}(n+1)^3}{D_+^2}&\text{ if }N=2k\\
                            \frac{e^{2\rho\frac{n+1}{n-1}T}(n+1)^3}{D_-^2}&\text{ if }N=2k+1
                        \end{cases}
                    \end{align*}
            \end{enumerate}
    \end{enumerate}
    Now
    \begin{align*}
        &\sum_{k=1}^3 \sum_{i=3}^N D_k^i\\
        &\to \begin{cases}
            \frac{\tonde{ne^{2\rho\frac{n+1}{n-1}T}+1}(n+1)\rho T+\tonde{(n^2+4n-1)e^{2\rho\frac{n+1}{n-1}T}-(n^2+6n-3)e^{\rho\frac{n+1}{n-1}T}+2(n-1)}}{(n+1)D_+} \ &\text{if } N=2k \\ 
            \frac{\tonde{ne^{2\rho\frac{n+1}{n-1}T}-1}(n+1)\rho T+\tonde{(n^2+4n-1)e^{2\rho\frac{n+1}{n-1}T}+(n+1)^2e^{\rho\frac{n+1}{n-1}T}-2(n-1)}}{(n+1)D_-}&\text{if } N=2k+1
        \end{cases}
    \end{align*}
    and also
    \begin{align*}
    \sum_{i=3}^ND^i_4\to\begin{cases}
                           \frac{e^{2\rho\frac{n+1}{n-1}T}(n+1)^2\tonde{n-1} + (1-n^2)\tonde{ne^{4\rho\frac{n+1}{n-1}T}+1} + 4ne^{\rho\frac{n+1}{n-1}T}(n+1)\tonde{1-e^{2\rho\frac{n+1}{n-1}T}}}{D_+^2}   \ &\text{if}\ N=2k\\ 
                           \frac{e^{2\rho\frac{n+1}{n-1}T}(n+1)^2\tonde{3n+1}+ (1-n^2)\tonde{ne^{4\rho\frac{n+1}{n-1}T}+1}+ e^{\rho\frac{n+1}{n-1}T}(n+1)\tonde{2n+2}\tonde{ne^{2\rho\frac{n+1}{n-1}T}+1}}{D_-^2}  \ &\text{if}\ N=2k+1
                        \end{cases} 
    \end{align*}
    We miss $\nu_1(\wt{\Gamma}\bm\nu)_1$, $\nu_2(\wt{\Gamma}\bm\nu)_2$ and $\nu_{N+1}(\wt{\Gamma}\bm\nu)_{N+1}$.\\
    Notice that
    \begin{align*}
        \nu_1\to \begin{cases}
            2(n+1)\frac{ne^{2\rho\frac{n+1}{n-1}T} + e^{\rho\frac{n+1}{n-1}T}}{D_+}\ &\text{if }N=2k\\
            2(n+1)\frac{ne^{2\rho\frac{n+1}{n-1}T} + e^{\rho\frac{n+1}{n-1}T}}{D_-}\ &\text{if }N=2k+1
        \end{cases}
    \end{align*}
    and that 
    \begin{align*}
        (\wt{\Gamma}\bm\nu)_1\to \begin{cases}
            (n+1)\frac{ne^{2\rho\frac{n+1}{n-1}T} + e^{\rho\frac{n+1}{n-1}T}}{D_+}\ &\text{if }N=2k\\
            (n+1)\frac{ne^{2\rho\frac{n+1}{n-1}T} + e^{\rho\frac{n+1}{n-1}T}}{D_-}\ &\text{if }N=2k+1
        \end{cases}
    \end{align*}
    so, 
    \begin{align*}
        \nu_1(\wt{\Gamma}\bm\nu)_1\to \begin{cases}
            2(n+1)^2\frac{\tonde{ne^{2\rho\frac{n+1}{n-1}T} + e^{\rho\frac{n+1}{n-1}T}}^2}{D_+^2}\ &\text{if }N=2k\\
            2(n+1)^2\frac{\tonde{ne^{2\rho\frac{n+1}{n-1}T} + e^{\rho\frac{n+1}{n-1}T}}^2}{D_-^2}\ &\text{if }N=2k+1
        \end{cases}
    \end{align*}
    Also
    \begin{align*}
		\nu_2\to\begin{cases}
            -2(n+1)\frac{ne^{2\rho\frac{n+1}{n-1}T} +e^{\rho\frac{n+1}{n-1}T} }{D_+}\ &\text{if }N=2k\\
            -2(n+1)\frac{ne^{2\rho\frac{n+1}{n-1}T} +e^{\rho\frac{n+1}{n-1}T} }{D_-}\ &\text{if }N=2k+1
        \end{cases}
	\end{align*}
    and that
    \begin{align*}
        (\wt{\Gamma}{\bm\nu})_2\to\begin{cases}
            (n+1)\frac{ne^{2\rho\frac{n+1}{n-1}T} +e^{\rho\frac{n+1}{n-1}T} }{D_+}\ &\text{if }N=2k\\
            (n+1)\frac{ne^{2\rho\frac{n+1}{n-1}T} +e^{\rho\frac{n+1}{n-1}T} }{D_-}\ &\text{if }N=2k+1
        \end{cases}
    \end{align*}
    so,
    \begin{align*}
        \nu_2(\wt{\Gamma}{\bm\nu})_2\to\begin{cases}
            -2(n+1)^2\frac{\tonde{ne^{2\rho\frac{n+1}{n-1}T} +e^{\rho\frac{n+1}{n-1}T}}^2 }{D_+^2}\ &\text{if }N=2k\\
            -2(n+1)^2\frac{\tonde{ne^{2\rho\frac{n+1}{n-1}T} +e^{\rho\frac{n+1}{n-1}T}}^2 }{D_-^2}\ &\text{if }N=2k+1
        \end{cases}
    \end{align*}
    so they cancel out. Furthermore
    \begin{align*}
        \nu_{N+1}\to\begin{cases}
            2(n+1)\frac{1+ne^{\rho\frac{n+1}{n-1}T}}{D_+}\ &\text{if }N=2k\\
            -2(n+1)\frac{1+ne^{\rho\frac{n+1}{n-1}T}}{D_-}\ &\text{if }N=2k+1
        \end{cases}
    \end{align*}
    and
    \begin{align*}
        (\left.\tilde{\Gamma}{\bm\nu}\right.)_{N+1} \to \begin{cases}
            (n+1)\frac{e^{2\rho\frac{n+1}{n-1}T}(n+1) +1-e^{\rho\frac{n+1}{n-1}T}}{D_+}\ &\text{if }N=2k\\
            (n+1)\frac{e^{2\rho\frac{n+1}{n-1}T}(n+1) -1+e^{\rho\frac{n+1}{n-1}T}}{D_-}\ &\text{if }N=2k+1
        \end{cases}
    \end{align*}
    \begin{align*}
        &\sums c_\sigma\left(\frac{d_\sigma m_\sigma \alpha }{m_\sigma-\alpha\kappa}+\frac{m_\sigma+\left(2d_\sigma-1\right)\alpha^2(\kappa-\hat{\kappa})}{2\left(m_\sigma-\alpha(\kappa-\hat{\kappa})\right)}+\frac{2\alpha^2\kappa}{(n+1)(m_\sigma-\alpha^2\kappa)}\right)  [m_\sigma]^N\\
        &\hspace{.5cm} {} -\frac{(2\kappa-\hat{\kappa}+1)}{n+1}\alpha^N\left[\kappa-\hat{\kappa}\right]^N.\nonumber\\
        &\to\begin{cases}
            (n+1)\frac{e^{2\rho\frac{n+1}{n-1}T}(n+1) +1-e^{\rho\frac{n+1}{n-1}T}}{D_+}\ &\text{if }N=2k\\
            (n+1)\frac{e^{2\rho\frac{n+1}{n-1}T}(n+1) -1+e^{\rho\frac{n+1}{n-1}T}}{D_-}\ &\text{if }N=2k+1
        \end{cases}
    \end{align*}
    \begin{align*}
        &c_+\left(\frac{d_+ m_+ \alpha }{m_+-\alpha\kappa}+\frac{m_++\left(2d_+-1\right)\alpha^2(\kappa-\hat{\kappa})}{2\left(m_+-\alpha(\kappa-\hat{\kappa})\right)}+\frac{2\alpha^2\kappa}{(n+1)(m_+-\alpha^2\kappa)}\right) \left[m_+\right]^N\to\\
        &\begin{cases}
            \frac{e^{2\rho\frac{n+1}{n-1}T}(n+1)^2 }{D_+}\ &\text{if }N=2k\\
            \frac{e^{2\rho\frac{n+1}{n-1}T}(n+1)^2 }{D_-}\ &\text{if }N=2k+1
        \end{cases}
    \end{align*}
    \begin{align*}
        c_-\left(\frac{d_- m_- \alpha }{m_--\alpha\kappa}+\frac{m_-+\left(2d_--1\right)\alpha^2(\kappa-\hat{\kappa})}{2\left(m_--\alpha(\kappa-\hat{\kappa})\right)}+\frac{2\alpha^2\kappa}{(n+1)(m_--\alpha^2\kappa)}\right)\left[m_-\right]^N\to &\begin{cases}
            \frac{(n+1) }{D_+}\ &\text{if }N=2k\\
            \frac{-(n+1) }{D_-}\ &\text{if }N=2k+1
        \end{cases}
    \end{align*}
    so now $\sum_{i=3}^ND^i_4 + \nu_{N+1}(\wt{\Gamma}\bm\nu)_{N+1}$ converges to
    \begin{align*}
        \begin{cases}
            \frac{2(n^2-1+2n)e^{\rho\frac{n+1}{n-1}T}-(n-1)e^{2\rho\frac{n+1}{n-1}T}+n+3}{D_+(n+1)}\ &\text{if }N=2k \\ 
            \frac{-2n(n+1)e^{\rho\frac{n+1}{n-1}T}- (n-1)e^{2\rho\frac{n+1}{n-1}T} -(n+3) }{D_-(n+1)} \ &\text{if }N=2k+1
        \end{cases}
    \end{align*}
    Indeed
    \begin{align*}
        &\frac{e^{2\rho\frac{n+1}{n-1}T}(n+1)^2\tonde{n-1} + (1-n^2)\tonde{ne^{4\rho\frac{n+1}{n-1}T}+1} + 4ne^{\rho\frac{n+1}{n-1}T}(n+1)\tonde{1-e^{2\rho\frac{n+1}{n-1}T}}}{D_+^2}  \\
        &+\frac{2(n+1)^2\tonde{1+ne^{\rho\frac{n+1}{n-1}T}}\tonde{(n+1)e^{2\rho\frac{n+1}{n-1}T}+1-e^{\rho\frac{n+1}{n-1}T}}}{D_+^2}\\
        &= \frac{2(n^2-1+2n)e^{\rho\frac{n+1}{n-1}T}-(n-1)e^{2\rho\frac{n+1}{n-1}T}+n+3}{D_+(n+1)}
    \end{align*}
    and the odd case
    \begin{align*}
        &\frac{e^{2\rho\frac{n+1}{n-1}T}(n+1)^2\tonde{3n+1}+ (1-n^2)\tonde{ne^{4\rho\frac{n+1}{n-1}T}+1}+ e^{\rho\frac{n+1}{n-1}T}(n+1)\tonde{2n+2}\tonde{ne^{2\rho\frac{n+1}{n-1}T}+1}}{D_-^2}+ \\
        &-2(n+1)^2\frac{\tonde{1+ne^{\rho\frac{n+1}{n-1}T}}\tonde{e^{2\rho\frac{n+1}{n-1}T}(n+1)-1+e^{\rho\frac{n+1}{n-1}T}}}{D_-^2}\\
        &=-\frac{(n+3) + (n-1)e^{2\rho\frac{n+1}{n-1}T}+2n(n+1)e^{\rho\frac{n+1}{n-1}T}}{(n+1)D_-}
    \end{align*}
    So now summing up all together we get
    \begin{align*}
    \bm\nu^\top\wt{\Gamma}\bm\nu &= \nu_1 (\left.\tilde{\Gamma} {\bm\nu}\right.)_1 + \nu_2 (\left.\tilde{\Gamma} {\bm\nu}\right.)_2 + \sum_{k=1}^4 \sum_{i=3}^N D_k^i + \nu_{N+1} (\left.\tilde{\Gamma} {\bm\nu}\right.)_{N+1}\\
    &\to \begin{cases}
        \frac{ne^{2\rho\frac{n+1}{n-1}T}\tonde{(n+1)\rho T + n+ 3} + e^{\rho\frac{n+1}{n-1}T}(n-1)^2 + (n+1)\rho T + 3n+1}{(n+1)D_+}\ &\text{if }N=2k \\ 
        \frac{ne^{2\rho\frac{n+1}{n-1}T}\tonde{(n+1)\rho T + n+ 3} - (n^2-1)e^{\rho\frac{n+1}{n-1}T} - (n+1)\rho T - (3n+1)}{(n+1)D_-}\ &\text{if }N=2k+1
        \end{cases}
    \end{align*}
    }
    
    \noindent\emph{Mixed form ${\bm\omega}^\top  \tonde{\hat{\kappa}\tilde{\Gamma}-\tilde{\Gamma}^\top}{\bm\nu}$}:
    \begin{align*}
    {\bm\omega}^\top  \tonde{\hat{\kappa}\tilde{\Gamma}-\tilde{\Gamma}^\top}{\bm\nu}
    &=
    \bigl({\bm\omega}^\top  \tonde{\hat{\kappa}\tilde{\Gamma}-\tilde{\Gamma}^\top}\bigr)_1 \nu_1
     + \sum_{k=1}^{3}\sum_{i=2}^{N} G_k^i
     + \bigl({\bm\omega}^\top  \tonde{\hat{\kappa}\tilde{\Gamma}-\tilde{\Gamma}^\top}\bigr)_{N+1}\nu_{N+1}.
    \end{align*}
    The boundary terms satisfy
    \begin{align*}
    \bigl({\bm\omega}^\top  \tonde{\hat{\kappa}\tilde{\Gamma}-\tilde{\Gamma}^\top}\bigr)_1 \nu_1
    &\longrightarrow
    \begin{cases}
    \displaystyle
    \frac{2\tonde{-e^{-\rho T}+n-1}\tonde{n e^{2\rho\frac{n+1}{n-1}T}
    + e^{\rho\frac{n+1}{n-1}T}}}{\mathscr{D}_+}, & N=2k,\\[8pt]
    \displaystyle
    \frac{2\tonde{ e^{-\rho T}+n-1}\tonde{n e^{2\rho\frac{n+1}{n-1}T}
    + e^{\rho\frac{n+1}{n-1}T}}}{\mathscr{D}_-}, & N=2k+1,
    \end{cases}\\[6pt]
    \bigl({\bm\omega}^\top  \tonde{\hat{\kappa}\tilde{\Gamma}-\tilde{\Gamma}^\top}\bigr)_{N+1}\nu_{N+1}
    &\longrightarrow
    \begin{cases}
    \displaystyle
    \frac{2\tonde{e^{-\rho T}-e^{-2\rho T}+n-2}\tonde{1+n e^{\rho\frac{n+1}{n-1}T}}}{\mathscr{D}_+}, & N=2k,\\[8pt]
    \displaystyle
    \frac{-2\tonde{e^{-\rho T}+e^{-2\rho T}+n-2}\tonde{1+n e^{\rho\frac{n+1}{n-1}T}}}{\mathscr{D}_-}, & N=2k+1.
    \end{cases}
    \end{align*}
    For the sums, we have
    \begin{align*}
    \sum_{i=2}^{N} G_1^i
    &\longrightarrow
    \begin{cases}
    \displaystyle \frac{\tonde{1-e^{-\rho T}}^{2}+(n-2)\rho T}{n+1}, & N=2k,\\[6pt]
    \displaystyle \frac{1-e^{-2\rho T}+(n-2)\rho T}{n+1}, & N=2k+1,
    \end{cases}\\[6pt]
    \sum_{i=2}^{N} G_2^i
    &\longrightarrow
    \begin{cases}
    \displaystyle
    \frac{n\tonde{-2e^{-\rho T}+2e^{-2\rho T}+1}e^{\rho\frac{n+1}{n-1}T}
    + n\tonde{1-n+(2-n)e^{-\rho T}}e^{2\rho\frac{n+1}{n-1}T}
    - n\tonde{e^{-\rho T}-1}}{\mathscr{D}_+}
    \\[-2pt]\displaystyle\qquad
    -  \frac{2n(n-2)}{D_+} \tonde{e^{\rho\frac{n+1}{n-1}T}-1},
    & \hspace{-3cm}N=2k,\\[10pt]
    \displaystyle
    \frac{n\tonde{2e^{-\rho T}+2e^{-2\rho T}-1}e^{\rho\frac{n+1}{n-1}T}
    + n\tonde{1-n+(n-2)e^{-\rho T}}e^{2\rho\frac{n+1}{n-1}T}
    - n\tonde{1+e^{-\rho T}}}{\mathscr{D}_-}
    \\[-2pt]\displaystyle\qquad
    +  \frac{2n(n-2)}{D_-} \tonde{e^{\rho\frac{n+1}{n-1}T}-1},
    & \hspace{-3cm} N=2k+1,
    \end{cases}\\[6pt]
    \sum_{i=2}^{N} G_3^i
    &\longrightarrow
    \begin{cases}
    \displaystyle
    \frac{\tonde{e^{-\rho T}-e^{-2\rho T}} n e^{2\rho\frac{n+1}{n-1}T}
    -\tonde{e^{-\rho T}-e^{-2\rho T}-1}
    +2e^{-\rho T}e^{\rho\frac{n+1}{n-1}T}
    -(2n-1)e^{\rho\frac{n+1}{n-1}T}}{\mathscr{D}_+}
    \\[-2pt]\displaystyle\qquad
    +  \frac{2n(n-2)}{D_+} e^{\rho\frac{n+1}{n-1}T}\tonde{e^{\rho\frac{n+1}{n-1}T}-1},
    &\hspace{-2cm} N=2k,\\[10pt]
    \displaystyle
    \frac{\tonde{e^{-\rho T}+e^{-2\rho T}} n e^{2\rho\frac{n+1}{n-1}T}
    +\tonde{e^{-\rho T}+e^{-2\rho T}-1}
    -(2n-1)e^{\rho\frac{n+1}{n-1}T}
    -2e^{-\rho T}e^{\rho\frac{n+1}{n-1}T}}{\mathscr{D}_-}
    \\[-2pt]\displaystyle\qquad
    +  \frac{2n(n-2)}{D_-} e^{\rho\frac{n+1}{n-1}T}\tonde{e^{\rho\frac{n+1}{n-1}T}-1},
    & \hspace{-2cm}N=2k+1.
    \end{cases}
    \end{align*}
    Summing the boundary contributions with $\sum_{i=2}^{N}G_k^i$ for $k=1,2,3$ yields the claimed limits for 
    \[
  {\bm\omega}^\top  \tonde{\hat{\kappa}\tilde{\Gamma}-\tilde{\Gamma}^\top} \bm\nu. \qedhere
\]
    \hide{
        Note first that we can easily calculate the following limits:
        \begin{enumerate}[label=\arabic*.]
            \item $C_4\to\begin{cases}
                1-e^{-\rho T}\ &\text{if }N=2k\\
                1+e^{-\rho T}\ &\text{if }N=2k+1
            \end{cases}$
            \item $C_5\to n-2$
            \item $C_6\to {n-2}$
        \end{enumerate}
        Indeed we have
        \begin{align*}
            C_4 = \frac{\left(\alpha^2\left(\Tilde{\kappa}-1\right)-\Tilde{\kappa}\right)-\alpha\left(\frac{\alpha\left(\Tilde{\kappa}-1\right)}{\Tilde{\kappa}}\right)^{N+1}}{\left(\Tilde{\kappa}-\alpha\left(\Tilde{\kappa}-1\right)\right)\left(\alpha^2\left(\Tilde{\kappa}-1\right)-\Tilde{\kappa}\right)}\to\begin{cases}
                1-e^{-\rho T}\ &\text{if }N=2k\\
                1+e^{-\rho T}\ &\text{if }N=2k+1
            \end{cases}
        \end{align*}
        \begin{align*}
            C_5=\left(\Tilde{\kappa}+\alpha\left(\Tilde{\kappa}-1\right) - \frac{\tonde{n-2}\tilde{\kappa}\tonde{\alpha^2\tonde{\tilde{\kappa}-1}-\tilde{\kappa}}}{\tilde{\kappa}-\alpha\tonde{\tilde{\kappa}-1}}\right)\frac{\alpha^2\tonde{\tilde{\kappa}-1}}{\Tilde{\kappa}^2\left(\alpha^2\left(\Tilde{\kappa}-1\right)-\Tilde{\kappa}\right)}\to n-2
        \end{align*}
        \begin{align*}
            C_6=\frac{n-2}{2{\tilde{\kappa}\tonde{\tilde{\kappa}-\alpha\tonde{\tilde{\kappa}-1}}}}\to n-2
        \end{align*}
        Now we calculate the limits of $({\bm\omega}^\top (\left.\hat{\kappa}\tilde{\Gamma}-\tilde{\Gamma}^\top \right.))_1{\nu_1}$ and  $({\bm\omega}^\top (\left.\hat{\kappa}\tilde{\Gamma}-\tilde{\Gamma}^\top \right.))_{N+1}{\nu_{N+1}}$, we have
        \begin{align*}
            \nu_1\to \begin{cases}
                2(n+1)\frac{ne^{2\rho\frac{n+1}{n-1}T} + e^{\rho\frac{n+1}{n-1}T}}{D_+}\ &\text{if }N=2k\\
                2(n+1)\frac{ne^{2\rho\frac{n+1}{n-1}T} + e^{\rho\frac{n+1}{n-1}T}}{D_-}\ &\text{if }N=2k+1
            \end{cases}
        \end{align*}
        and also
        \begin{align*}
            (\left.{\bm\omega}^\top  (\left.\hat{\kappa}\tilde{\Gamma}-\tilde{\Gamma}^\top \right.)\right.)_1 \to\begin{cases}
                -e^{-\rho T} + n-1\ &\text{if }N=2k\\
                e^{-\rho T} + n-1\ &\text{if }N=2k+1
            \end{cases}
        \end{align*}
        Indeed
        \begin{align*}
            \frac{n-2}{2}\frac{\left(1-\alpha\right)\tilde{\kappa}+\alpha\left(\frac{\alpha\left(\tilde{\kappa}-1\right)}{\tilde{\kappa}}\right)^{N}}{\tilde{\kappa}\left(\tilde{\kappa}-\alpha\left(\tilde{\kappa}-1\right)\right)} + \frac{\tonde{n-1}\alpha}{\tilde{\kappa}-\alpha\tonde{\tilde{\kappa}-1}}\tonde{1-\tonde{\frac{\alpha\tonde{\tilde{\kappa}-1}}{\tilde{\kappa}}}^N}\to\begin{cases}
                -e^{-\rho T} + n-1\ &\text{if }N=2k\\
                e^{-\rho T} + n-1\ &\text{if }N=2k+1
            \end{cases}
        \end{align*}
        so
        \begin{align*}
            (\left.{\bm\omega}^\top  (\left.\hat{\kappa}\tilde{\Gamma}-\tilde{\Gamma}^\top \right.)\right.)_1 \nu_1\to\begin{cases}
                2(n+1)(-e^{-\rho T} + n-1)\frac{ne^{2\rho\frac{n+1}{n-1}T} + e^{\rho\frac{n+1}{n-1}T}}{D_+}\ &\text{if }N=2k\\
                2(n+1)(e^{-\rho T} + n-1)\frac{ne^{2\rho\frac{n+1}{n-1}T} + e^{\rho\frac{n+1}{n-1}T}}{D_-}\ &\text{if }N=2k+1
            \end{cases}.
        \end{align*}
        Now, 
        \begin{align*}
            \nu_{N+1}\to\begin{cases}
                2(n+1)\frac{1+ne^{\rho\frac{n+1}{n-1}T}}{D_+}\ &\text{if }N=2k\\
                -2(n+1)\frac{1+ne^{\rho\frac{n+1}{n-1}T}}{D_-}\ &\text{if }N=2k+1
            \end{cases}
        \end{align*}
        and
        \begin{align*}
            (\left.{\bm\omega}^\top  (\left.\hat{\kappa}\tilde{\Gamma}-\tilde{\Gamma}^\top \right.)\right.)_{N+1} \to \begin{cases}
                e^{-\rho T}  - e^{-2\rho T} + n-2\ &\text{if }N=2k\\
                e^{-\rho T}  + e^{-2\rho T} + n-2\ &\text{if }N=2k+1
            \end{cases},
        \end{align*}
        indeed
        \begin{align*}
            \frac{\alpha\Big(\alpha^N\left(\tilde{\kappa}-\alpha^2\left(\tilde{\kappa}-1\right)\right)\tilde{\kappa}+\alpha^2\left(\tilde{\kappa}-1\right)\left(\tilde{\kappa}-1+\left(\frac{\alpha^2\left(\tilde{\kappa}-1\right)}{\tilde{\kappa}}\right)^N\right)-\tilde{\kappa}^2\Big)}{\tilde{\kappa}\left(\tilde{\kappa}-\alpha\left(\tilde{\kappa}-1\right)\right)\left(\tilde{\kappa}-\alpha^2\left(\tilde{\kappa}-1\right)\right)} + \frac{n-2}{2}\frac{1}{\tilde{\kappa}}\\
            \to \begin{cases}
                e^{-\rho T}  - e^{-2\rho T} + n-2\ &\text{if }N=2k\\
                e^{-\rho T}  + e^{-2\rho T} + n-2\ &\text{if }N=2k+1
            \end{cases}
        \end{align*}
        so 
        \begin{align*}
            (\left.{\bm\omega}^\top  (\left.\hat{\kappa}\tilde{\Gamma}-\tilde{\Gamma}^\top \right.)\right.)_{N+1}\nu_{N+1}\to\begin{cases}
                2(n+1)(e^{-\rho T}  - e^{-2\rho T} + n-2)\frac{1+ne^{\rho\frac{n+1}{n-1}T}}{D_+}\ &\text{if }N=2k\\
                -2(n+1)(e^{-\rho T}  + e^{-2\rho T} + n-2)\frac{1+ne^{\rho\frac{n+1}{n-1}T}}{D_-}\ &\text{if }N=2k+1
            \end{cases}
        \end{align*}
        Let's now evaluate, in order, the limits of $\sum_{k=1}^3\sum_{i=2}^NG^i_k $:
        \begin{enumerate}[label=\arabic*.]
            \item 
                \begin{align*}
                    {\sum_{i=2}^NG^i_1}\to \begin{cases}
                            \frac{\tonde{1-e^{-\rho T}}^2+(n-2)\rho T}{n+1}\ &\text{if }N=2k\\
                            \frac{1-e^{-2\rho T}+(n-2)\rho T}{n+1}\ &\text{if }N=2k+1
                        \end{cases}
                \end{align*}
                indeed in order, we have
                \begin{enumerate}[label=\alph*.]
                    \item 
                        \begin{align*}
                            C_2\tonde{1-\alpha} C_6{\tonde{\tonde{1-\alpha}\tonde{N-1}\tilde{\kappa}+\frac{\alpha^2\tonde{\tilde{\kappa}-1}}{\tilde{\kappa}-\alpha\tonde{\tilde{\kappa}-1}}\tonde{1-\tonde{\frac{\alpha\tonde{\tilde{\kappa}-1}}{\tilde{\kappa}}}^{N-1}}}}\to 0
                        \end{align*}
                    \item
                        \begin{align*}
                            C_2\tonde{1-\alpha} C_4\frac{\alpha^{N+1}-\alpha^2}{\alpha-1}\to \begin{cases}
                                \frac{\tonde{1-e^{-\rho T}}^2}{n+1}\ &\text{if }N=2k\\
                                \frac{1-e^{-2\rho T}}{n+1}\ &\text{if }N=2k+1
                            \end{cases}
                        \end{align*}
                    \item
                        \begin{align*}
                            C_2\tonde{1-\alpha}\frac{C_5\tilde{\kappa}}{\tilde{\kappa}-\alpha\tonde{\tilde{\kappa}-1}}\tonde{1-\tonde{\frac{\alpha\tonde{\tilde{\kappa}-1}}{\tilde{\kappa}}}^{N-1}}\to 0
                        \end{align*}
                    \item
                        \begin{align*}
                            C_2\tonde{1-\alpha}\frac{\tonde{n-2}\alpha}{\tilde{\kappa} - \alpha\tonde{\tilde{\kappa}-1}}\tonde{N-1}\to \frac{n-2}{n+1}\rho T
                        \end{align*}
                \end{enumerate}
        \item 
            \begin{align*}
                &{\sum_{i=2}^NG^i_2}\\
                &\to \begin{cases}
                    \frac{n\tonde{-2e^{-\rho T}+2e^{-2\rho T}+1}e^{\rho\frac{n+1}{n-1}T} + n(1-n+e^{-\rho T}(2-n))e^{2\rho\frac{n+1}{n-1}T}-n(e^{-\rho T}-1)}{\mathcal{D}_+} -\frac{2n(n-2)\tonde{e^{\rho\frac{n+1}{n-1}T}-1}}{D_+}\ &\text{if }N=2k\\
                      \frac{n(2e^{-\rho T}+2e^{-2\rho T}-1)e^{\rho\frac{n+1}{n-1}T} + n(1-n+(n-2)e^{-\rho T})e^{2\rho\frac{n+1}{n-1}T}  -n(1 + e^{-\rho T})}{\mathcal{D}_-} +\frac{2n(n-2)\tonde{e^{\rho\frac{n+1}{n-1}T}-1}}{D_-} \ &\text{if }N=2k+1
                \end{cases}
            \end{align*}
            indeed in order, we have
            \begin{enumerate}[label=\alph*.] 
                \item 
                    \begin{align*}
                        &\sums \frac{nC_1d_\sigma m_\sigma   }{\alpha(\kappa-\hat{\kappa})}C_6\frac{\alpha\tonde{1-\alpha}\tilde{\kappa}(\kappa-\hat{\kappa})}{\alpha(\kappa-\hat{\kappa})-\ms}\tonde{\alpha^N\left[\kappa-\hat{\kappa}\right]^N - \frac{\alpha(\kappa-\hat{\kappa})}{\ms}\left[ \ms \right]^{N}}\\
                        &\hspace{1cm}\to 0
                    \end{align*}
                \item
                    \begin{align*}
                        &\sums \frac{nC_1d_\sigma m_\sigma   }{\alpha(\kappa-\hat{\kappa})}C_6\frac{\alpha\tilde{\kappa}(\kappa-\hat{\kappa})}{\tilde{\kappa}(\kappa-\hat{\kappa})-\ms\tonde{\tilde{\kappa}-1}}\tonde{\frac{\tilde{\kappa}-1}{\tilde{\kappa}}\alpha^{N+1}\left[\kappa-\hat{\kappa}\right]^N-\frac{\kappa-\hat{\kappa}}{\ms}\alpha^{N+1}\tonde{\frac{\tilde{\kappa}-1}{\tilde{\kappa}}}^N\left[\ms\right]^{N}}\\
                            &\hspace{1cm}\to \mathrm{tbd}
                    \end{align*}
                    \begin{align*}
                        &\frac{nC_1d_+ m_+   }{\alpha(\kappa-\hat{\kappa})}C_6\frac{\alpha\tilde{\kappa}(\kappa-\hat{\kappa})}{\tilde{\kappa}(\kappa-\hat{\kappa})-m_+\tonde{\tilde{\kappa}-1}}\tonde{\frac{\tilde{\kappa}-1}{\tilde{\kappa}}\alpha^{N+1}\left[\kappa-\hat{\kappa}\right]^N-\frac{\kappa-\hat{\kappa}}{m_+}\alpha^{N+1}\tonde{\frac{\tilde{\kappa}-1}{\tilde{\kappa}}}^N\left[m_+\right]^{N}}\\
                        &\to\text{this is a combination with the other addendum in d.}
                    \end{align*}
                    \begin{align*}
                        \frac{nC_1d_- m_-   }{\alpha(\kappa-\hat{\kappa})}C_6\frac{\alpha\tilde{\kappa}(\kappa-\hat{\kappa})}{\tilde{\kappa}(\kappa-\hat{\kappa})-m_-\tonde{\tilde{\kappa}-1}}\tonde{\frac{\tilde{\kappa}-1}{\tilde{\kappa}}\alpha^{N+1}\left[\kappa-\hat{\kappa}\right]^N-\frac{\kappa-\hat{\kappa}}{m_-}\alpha^{N+1}\tonde{\frac{\tilde{\kappa}-1}{\tilde{\kappa}}}^N\left[m_-\right]^{N}}\to 0
                    \end{align*}
                \item
                    \begin{align*}
                        &\sums \frac{nC_1d_\sigma m_\sigma   }{\alpha(\kappa-\hat{\kappa})}\frac{C_4\alpha^2(\kappa-\hat{\kappa})}{\alpha^2(\kappa-\hat{\kappa})-\ms}\tonde{\alpha^{2N}\left[\kappa-\hat{\kappa}\right]^N - \frac{\alpha^2(\kappa-\hat{\kappa})}{\ms}\left[\ms\right]^{N}}\\
                        &\hspace{1cm}\to  \begin{cases}
                            {-n}(n+1)\tonde{2(e^{-\rho T}-e^{-2\rho T}) \frac{e^{\rho\frac{n+1}{n-1}T}}{D_+} +\tonde{1-e^{-\rho T} }\frac{e^{2\rho\frac{n+1}{n-1}T}-1}{D_+}}\ &\text{if }N=2k\\
                            {-n}(n+1)\tonde{2(-e^{-\rho T}-e^{-2\rho T})  \frac{e^{\rho\frac{n+1}{n-1}T}}{D_-} + (1 + e^{-\rho T})\frac{e^{2\rho\frac{n+1}{n-1}T}+1}{D_-}}\ &\text{if }N=2k+1
                        \end{cases}
                    \end{align*}
                \item 
                    \begin{align*}
                        &\sums \frac{nC_1d_\sigma m_\sigma   }{\alpha(\kappa-\hat{\kappa})}\frac{C_5\tilde{\kappa}(\kappa-\hat{\kappa})}{\tilde{\kappa}(\kappa-\hat{\kappa})-\ms\tonde{\tilde{\kappa}-1}}\tonde{\alpha^N\left[\kappa-\hat{\kappa}\right]^N-\frac{\kappa-\hat{\kappa}}{\ms}\alpha^N\tonde{\frac{\tilde{\kappa}-1}{\tilde{\kappa}}}^{N-1}\left[ \ms \right]^{N}}\\
                            &\hspace{1cm}\to 
                    \end{align*}
                    \begin{align*}
                        \frac{nC_1d_- m_-   }{\alpha(\kappa-\hat{\kappa})}\frac{C_5\tilde{\kappa}(\kappa-\hat{\kappa})}{\tilde{\kappa}(\kappa-\hat{\kappa})-m_-\tonde{\tilde{\kappa}-1}}\tonde{\alpha^N\left[\kappa-\hat{\kappa}\right]^N-\frac{\kappa-\hat{\kappa}}{m_-}\alpha^N\tonde{\frac{\tilde{\kappa}-1}{\tilde{\kappa}}}^{N-1}\left[ m_- \right]^{N}}\to 0
                    \end{align*}
                    \begin{align*}
                        &\frac{nC_1d_+ m_+   }{\alpha(\kappa-\hat{\kappa})}\frac{C_5\tilde{\kappa}(\kappa-\hat{\kappa})}{\tilde{\kappa}(\kappa-\hat{\kappa})-m_+\tonde{\tilde{\kappa}-1}}\tonde{\alpha^N\left[\kappa-\hat{\kappa}\right]^N-\frac{\kappa-\hat{\kappa}}{m_+}\alpha^N\tonde{\frac{\tilde{\kappa}-1}{\tilde{\kappa}}}^{N-1}\left[ m_+ \right]^{N}} \\
                        &+ \frac{nC_1d_+ m_+   }{\alpha(\kappa-\hat{\kappa})}C_6\frac{\alpha\tilde{\kappa}(\kappa-\hat{\kappa})}{\tilde{\kappa}(\kappa-\hat{\kappa})-m_+\tonde{\tilde{\kappa}-1}}\tonde{\frac{\tilde{\kappa}-1}{\tilde{\kappa}}\alpha^{N+1}\left[\kappa-\hat{\kappa}\right]^N-\frac{\kappa-\hat{\kappa}}{m_+}\alpha^{N+1}\tonde{\frac{\tilde{\kappa}-1}{\tilde{\kappa}}}^N\left[m_+\right]^{N}}\\
                        &=\frac{nC_1d_+ m_+\tilde{\kappa}(\kappa-\hat{\kappa})}{\alpha(\kappa-\hat{\kappa})}\frac{C_5-\alpha^2C_6}{\tilde{\kappa}(\kappa-\hat{\kappa})-m_+\tonde{\tilde{\kappa}-1}} \tonde{\alpha^N\left[\kappa-\hat{\kappa}\right]^N+\frac{\kappa-\hat{\kappa}}{m_+}\alpha^N\tonde{-1}^{N}\left[ m_+ \right]^{N}}
                    \end{align*}
                    So I have to study
                    \begin{align*}
                        \frac{C_5-\alpha^2C_6}{\tilde{\kappa}(\kappa-\hat{\kappa})-m_+\tonde{\tilde{\kappa}-1}} = \frac{2\alpha^2}{\alpha^2+1}\frac{1-\alpha}{\tilde{\kappa}(\kappa-\hat{\kappa})-m_+\tonde{\tilde{\kappa}-1}}\to 2
                    \end{align*}
                    \begin{align*}
                        &\frac{nC_1d_+ m_+\tilde{\kappa}(\kappa-\hat{\kappa})}{\alpha(\kappa-\hat{\kappa})}2 \tonde{\alpha^N\left[\kappa-\hat{\kappa}\right]^N+\frac{\kappa-\hat{\kappa}}{m_+}\alpha^N\tonde{-1}^{N}\left[ m_+ \right]^{N}}\\
                        &\to\begin{cases}
                            n(n-1)\tonde{\frac{e^{\rho\frac{n+1}{n-1}T}(n+1)}{D_+}-e^{-\rho T}\frac{e^{2\rho\frac{n+1}{n-1}T}(n+1)}{D_+}}\ &\text{if }N=2k\\
                            n(n-1)\tonde{\frac{-e^{\rho\frac{n+1}{n-1}T}(n+1)}{D_-}+e^{-\rho T}\frac{e^{2\rho\frac{n+1}{n-1}T}(n+1)}{D_-}}\ &\text{if }N=2k+1\\
                        \end{cases}
                    \end{align*}
                    Thus we have 
                    \begin{align*}
                        \text{b.} + \text{d.} \to\begin{cases}
                            n(n-1)\tonde{\frac{e^{\rho\frac{n+1}{n-1}T}(n+1)}{D_+}-e^{-\rho T}\frac{e^{2\rho\frac{n+1}{n-1}T}(n+1)}{D_+}}\ &\text{if }N=2k\\
                            n(n-1)\tonde{\frac{-e^{\rho\frac{n+1}{n-1}T}(n+1)}{D_-}+e^{-\rho T}\frac{e^{2\rho\frac{n+1}{n-1}T}(n+1)}{D_-}}\ &\text{if }N=2k+1\\
                        \end{cases}
                    \end{align*}
                \item
                    \begin{align*}
                        &\sums \frac{nC_1d_\sigma m_\sigma   }{\alpha(\kappa-\hat{\kappa})}\frac{\tonde{n-2}\alpha^2(\kappa-\hat{\kappa})}{\tonde{\tilde{\kappa}-\alpha\tonde{\tilde{\kappa}-1}}\tonde{\alpha(\kappa-\hat{\kappa})-\ms}}\tonde{\alpha^N\left[\kappa-\hat{\kappa}\right]^N - \frac{\alpha(\kappa-\hat{\kappa})}{\ms}\left[\ms\right]^{N}}\\
                        &\hspace{1cm}\to \begin{cases}
                            n(n-2)\tonde{-\frac{(n+3)e^{\rho\frac{n+1}{n-1}T}}{D_+}-\frac{(n+1)e^{2\rho\frac{n+1}{n-1}T}-2}{D_+}}\ &\text{if }N=2k\\
                            n(n-2)\tonde{\frac{(n+3)e^{\rho\frac{n+1}{n-1}T}}{D_-}-\frac{(n+1)e^{2\rho\frac{n+1}{n-1}T}+2}{D_-}}\ &\text{if }N=2k+1
                        \end{cases}
                    \end{align*}
            \end{enumerate}
            Now letting 
            \begin{align*}
                \mathcal{D}_+:=D_+/(n+1)\text{ and } \mathcal{D}_-:=D_-/(n+1)
            \end{align*}
            we have
            \begin{align*}
                \text{b.}+\text{c.}+\text{d.}\to\begin{cases}
                    n\tonde{\tonde{-2e^{-\rho T}+2e^{-2\rho T}+(n-1)}\frac{e^{\rho\frac{n+1}{n-1}T}}{\mathcal{D}_+} + \frac{(-1+e^{-\rho T}(2-n))e^{2\rho\frac{n+1}{n-1}T} - (e^{-\rho T}-1)}{\mathcal{D}_+}}\ &\text{if }N=2k\\
                    n\tonde{(2e^{-\rho T}+2e^{-2\rho T}-(n-1))  \frac{e^{\rho\frac{n+1}{n-1}T}}{\mathcal{D}_-} + \frac{(-1+(n-2)e^{-\rho T})e^{2\rho\frac{n+1}{n-1}T}- (1 + e^{-\rho T})}{\mathcal{D}_-}}\ &\text{if }N=2k+1
                \end{cases}
            \end{align*}
            Indeed
            \begin{align*}
                &{-n}(n+1)\tonde{2(e^{-\rho T}-e^{-2\rho T}) \frac{e^{\rho\frac{n+1}{n-1}T}}{D_+} +\tonde{1-e^{-\rho T} }\frac{e^{2\rho\frac{n+1}{n-1}T}-1}{D_+}} \\
                &+ n(n-1)\tonde{\frac{e^{\rho\frac{n+1}{n-1}T}(n+1)}{D_+}-e^{-\rho T}\frac{e^{2\rho\frac{n+1}{n-1}T}(n+1)}{D_+}}\\
                &= n(n+1)\tonde{\tonde{-2e^{-\rho T}+2e^{-2\rho T}+(n-1)}\frac{e^{\rho\frac{n+1}{n-1}T}}{D_+} + \frac{(-1+e^{-\rho T}(2-n))e^{2\rho\frac{n+1}{n-1}T} - (e^{-\rho T}-1)}{D_+}}
            \end{align*}
            and 
            \begin{align*}
                &-n(n+1)\tonde{2(-e^{-\rho T}-e^{-2\rho T})  \frac{e^{\rho\frac{n+1}{n-1}T}}{D_-} + (1 + e^{-\rho T})\frac{e^{2\rho\frac{n+1}{n-1}T}+1}{D_-}}\\
                &+n(n-1)\tonde{\frac{-e^{\rho\frac{n+1}{n-1}T}(n+1)}{D_-}+e^{-\rho T}\frac{e^{2\rho\frac{n+1}{n-1}T}(n+1)}{D_-}} \\
                &=n(n+1)\tonde{(2e^{-\rho T}+2e^{-2\rho T}-(n-1))  \frac{e^{\rho\frac{n+1}{n-1}T}}{D_-} + \frac{(-1+(n-2)e^{-\rho T})e^{2\rho\frac{n+1}{n-1}T}- (1 + e^{-\rho T})}{D_-}}
            \end{align*}
            so then
            \begin{align*}
                &\text{b.}+\text{c.}+\text{d.}+ \text{e.}
                \\&\to\begin{cases}
                    n\tonde{\tonde{-2e^{-\rho T}+2e^{-2\rho T}+1}\frac{e^{\rho\frac{n+1}{n-1}T}}{\mathcal{D}_+} + \frac{(1-n+e^{-\rho T}(2-n))e^{2\rho\frac{n+1}{n-1}T} - (e^{-\rho T}-1)}{\mathcal{D}_+} -\frac{2(n-2)\tonde{e^{\rho\frac{n+1}{n-1}T}-1}}{D_+}}\ &\text{if }N=2k\\
                    n\tonde{(2e^{-\rho T}+2e^{-2\rho T}-1)  \frac{e^{\rho\frac{n+1}{n-1}T}}{\mathcal{D}_-} + \frac{(1-n+(n-2)e^{-\rho T})e^{2\rho\frac{n+1}{n-1}T}- (1 + e^{-\rho T})}{\mathcal{D}_-}+\frac{2(n-2)\tonde{e^{\rho\frac{n+1}{n-1}T}-1}}{D_-}} \ &\text{if }N=2k+1
                \end{cases}
            \end{align*}
            Indeed
            \begin{align*}
                &n\tonde{\tonde{-2e^{-\rho T}+2e^{-2\rho T}+(n-1)}\frac{e^{\rho\frac{n+1}{n-1}T}}{\mathcal{D}_+} + \frac{(-1+e^{-\rho T}(2-n))e^{2\rho\frac{n+1}{n-1}T} - (e^{-\rho T}-1)}{\mathcal{D}_+}}\\
                &+n(n-2)\tonde{-\frac{(n+3)e^{\rho\frac{n+1}{n-1}T}}{D_+}-\frac{(n+1)e^{2\rho\frac{n+1}{n-1}T}-2}{D_+}}\\
                &=n\tonde{\tonde{-2e^{-\rho T}+2e^{-2\rho T}+1}\frac{e^{\rho\frac{n+1}{n-1}T}}{\mathcal{D}_+} + \frac{(1-n+e^{-\rho T}(2-n))e^{2\rho\frac{n+1}{n-1}T} - (e^{-\rho T}-1)}{\mathcal{D}_+} -\frac{2(n-2)\tonde{e^{\rho\frac{n+1}{n-1}T}-1}}{D_+}}
            \end{align*}
            and also 
            \begin{align*}
                &n\tonde{(2e^{-\rho T}+2e^{-2\rho T}-(n-1))  \frac{e^{\rho\frac{n+1}{n-1}T}}{\mathcal{D}_-} + \frac{(-1+(n-2)e^{-\rho T})e^{2\rho\frac{n+1}{n-1}T}- (1 + e^{-\rho T})}{\mathcal{D}_-}}\\
                &+n(n-2)\tonde{\frac{(n+3)e^{\rho\frac{n+1}{n-1}T}}{D_-}-\frac{(n+1)e^{2\rho\frac{n+1}{n-1}T}+2}{D_-}}\\
                &=n\tonde{(2e^{-\rho T}+2e^{-2\rho T}-1)  \frac{e^{\rho\frac{n+1}{n-1}T}}{\mathcal{D}_-} + \frac{(1-n+(n-2)e^{-\rho T})e^{2\rho\frac{n+1}{n-1}T}- (1 + e^{-\rho T})}{\mathcal{D}_-}+\frac{2(n-2)\tonde{e^{\rho\frac{n+1}{n-1}T}-1}}{D_-}}
            \end{align*}
            \item 
                \begin{align*}
                    &{\sum_{i=2}^NG^i_3}\\
                    &\to \begin{cases}
                        \frac{(e^{-\rho T}-e^{-2\rho T})ne^{2\rho\frac{n+1}{n-1}T}-(e^{-\rho T}-e^{-2\rho T}-1)+ 2e^{-\rho T}e^{\rho\frac{n+1}{n-1}T}-e^{\rho\frac{n+1}{n-1}T}(2n-1)}{\mathcal{D}_+}+2n(n-2)e^{\rho\frac{n+1}{n-1}T}\frac{e^{\rho\frac{n+1}{n-1}T}-1}{D_+}\ &\text{if }N=2k\\
                        \frac{(e^{-\rho T}+e^{-2\rho T})ne^{2\rho\frac{n+1}{n-1}T}+(e^{-\rho T}+e^{-2\rho T}-1)-e^{\rho\frac{n+1}{n-1}T}(2n-1)-2e^{-\rho T}e^{\rho\frac{n+1}{n-1}T}}{\mathcal{D}_-}+2n(n-2)e^{\rho\frac{n+1}{n-1}T}\frac{e^{\rho\frac{n+1}{n-1}T}-1}{D_-}\ &\text{if }N=2k+1
                    \end{cases}
                \end{align*}
                indeed in order, we have
                \begin{enumerate}[label=\alph*.]
                    \item 
                        \begin{align*}
                            C_1\sums \frac{c_\sigma\kappa   }{m_\sigma}C_6\frac{\tonde{1-\alpha}\tilde{\kappa}\ms}{\ms-\alpha\kappa} \tonde{ \alpha\left[\ms \right]^N -\frac{\alpha^N\ms}{\kappa}[\kappa]^N }\to 0
                        \end{align*}
                    \item 
                        \begin{align*}
                            C_1\sums \frac{c_\sigma\kappa   }{m_\sigma}C_6\frac{\alpha\ms\tilde{\kappa}}{\ms\tilde{\kappa}-\kappa\alpha^2\tonde{\tilde{\kappa}-1}}\tonde{\frac{\alpha^2\tonde{\tilde{\kappa}-1}}{\tilde{\kappa}} \left[\ms\right]^N -\tonde{\frac{\alpha^2\tonde{\tilde{\kappa}-1}}{\tilde{\kappa}}}^N\frac{\ms}{\kappa}[\kappa]^N }
                        \end{align*}
                        \begin{align*}
                            C_1 \frac{c_+\kappa   }{m_+}C_6\frac{\alpha m_+\tilde{\kappa}}{m_+\tilde{\kappa}-\kappa\alpha^2\tonde{\tilde{\kappa}-1}}\tonde{\frac{\alpha^2\tonde{\tilde{\kappa}-1}}{\tilde{\kappa}} \left[m_+\right]^N -\tonde{\frac{\alpha^2\tonde{\tilde{\kappa}-1}}{\tilde{\kappa}}}^N\frac{m_+}{\kappa}[\kappa]^N }\to0
                        \end{align*}
                        the second addendum I'll study it later because I need to sum it to the second addendum in d.
                    \item 
                        \begin{align*}
                            &C_1\sums \frac{c_\sigma\kappa   }{m_\sigma}\frac{C_4\ms}{\ms-\kappa}\tonde{\alpha^{N+1}\left[\ms\right]^N - \frac{\ms}{\kappa}\alpha^{N+1}[\kappa]^N}\\
                            &\hspace{1cm}\to\begin{cases}
                            (e^{-\rho T}-e^{-2\rho T})\frac{ne^{2\rho\frac{n+1}{n-1}T}-1}{\mathcal{D}_+}-(1-e^{-\rho T})\frac{e^{\rho\frac{n+1}{n-1}T}(n+1)}{\mathcal{D}_+}\ &\text{if }N=2k\\
                            (e^{-\rho T}+e^{-2\rho T})\frac{ne^{2\rho\frac{n+1}{n-1}T}+1}{\mathcal{D}_-}-(1+e^{-\rho T})\frac{e^{\rho\frac{n+1}{n-1}T}(n+1)}{\mathcal{D}_-}\ &\text{if }N=2k+1
                            \end{cases}
                        \end{align*}
                    \item 
                        \begin{align*}
                            C_1\sums \frac{c_\sigma\kappa   }{m_\sigma}C_5\frac{\alpha\tilde{\kappa}\ms}{\tilde{\kappa}\ms-\kappa\alpha^2\tonde{\tilde{\kappa}-1}}\tonde{\left[\ms\right]^N - \tonde{\frac{\alpha^2\tonde{\tilde{\kappa}-1}}{\tilde{\kappa}}}^{N-1}\frac{\ms}{\kappa}[\kappa]^N}               
                        \end{align*}
                        \begin{align*}
                            C_1\ \frac{c_+\kappa   }{m_+}C_5\frac{\alpha\tilde{\kappa}m_+}{\tilde{\kappa}m_+-\kappa\alpha^2\tonde{\tilde{\kappa}-1}}\tonde{\left[m_+\right]^N - \tonde{\frac{\alpha^2\tonde{\tilde{\kappa}-1}}{\tilde{\kappa}}}^{N-1}\frac{m_+}{\kappa}[\kappa]^N}{\to 0}
                        \end{align*}
                        now summing up the two second addends in b. and d. we get 
                        \begin{align*}
                            &+C_1 \frac{c_-\kappa   }{m_-}C_6\frac{\alpha m_-\tilde{\kappa}}{m_-\tilde{\kappa}-\kappa\alpha^2\tonde{\tilde{\kappa}-1}}\tonde{\frac{\alpha^2\tonde{\tilde{\kappa}-1}}{\tilde{\kappa}} \left[m_-\right]^N -\tonde{\frac{\alpha^2\tonde{\tilde{\kappa}-1}}{\tilde{\kappa}}}^N\frac{m_-}{\kappa}[\kappa]^N }\\
                            &+C_1\ \frac{c_-\kappa   }{m_-}C_5\frac{\alpha\tilde{\kappa}m_-}{\tilde{\kappa}m_--\kappa\alpha^2\tonde{\tilde{\kappa}-1}}\tonde{\left[m_-\right]^N - \tonde{\frac{\alpha^2\tonde{\tilde{\kappa}-1}}{\tilde{\kappa}}}^{N-1}\frac{m_-}{\kappa}[\kappa]^N}\\
                            &=C_1\ {c_-\kappa\alpha\tilde{\kappa}   }\frac{C_5 - \alpha^2C_6}{m_-\tilde{\kappa}-\kappa\alpha^2\tonde{\tilde{\kappa}-1}}\tonde{\left[m_-\right]^N - \tonde{\frac{\alpha^2\tonde{\tilde{\kappa}-1}}{\tilde{\kappa}}}^{N-1}\frac{m_-}{\kappa}[\kappa]^N}
                        \end{align*}
                        now
                        \begin{align*}
                            \frac{C_5 - \alpha^2C_6}{m_-\tilde{\kappa}-\kappa\alpha^2\tonde{\tilde{\kappa}-1}}\to 2
                        \end{align*}
                        \begin{align*}
                            &2C_1\ {c_-\kappa\alpha\tilde{\kappa}   }\tonde{\left[m_-\right]^N - \tonde{\frac{\alpha^2\tonde{\tilde{\kappa}-1}}{\tilde{\kappa}}}^{N-1}\frac{m_-}{\kappa}[\kappa]^N}\\
                            &\to\begin{cases}
                                (n-1)\tonde{\frac{1}{\mathcal{D}_+}- e^{-\rho T}\frac{e^{\rho\frac{n+1}{n-1}T}}{\mathcal{D_+}}}\ &\text{if }N=2k\\
                                (n-1)\tonde{\frac{-1}{\mathcal{D}_-}+ e^{-\rho T}\frac{e^{\rho\frac{n+1}{n-1}T}}{\mathcal{D_-}}}\ &\text{if }N=2k+1
                            \end{cases}
                        \end{align*}
                    \item 
                        \begin{align*}
                            &C_1\sums \frac{c_\sigma\kappa   }{m_\sigma}\frac{\tonde{n-2}\alpha}{\tilde{\kappa}-\alpha\tonde{\tilde{\kappa}-1}}\frac{\ms}{\ms-\alpha\kappa}\tonde{\alpha\left[\ms\right]^N - \frac{\ms}{\kappa}\alpha^N[\kappa]^N}\\
                            &\hspace{1cm}\to\begin{cases}
                                (n-2)\tonde{\frac{2ne^{2\rho\frac{n+1}{n-1}T}-(n+1)-e^{\rho\frac{n+1}{n-1}T}(3n+1)}{D_+}}\ &\text{if }N=2k\\
                                (n-2)\tonde{\frac{2ne^{2\rho\frac{n+1}{n-1}T}+(n+1)-e^{\rho\frac{n+1}{n-1}T}(3n+1)}{D_-}}\ &\text{if }N=2k+1
                            \end{cases}
                        \end{align*}
                \end{enumerate}    
            So now 
            \begin{align*}
                &\text{b.} + \text{c.}+ \text{d.}\\
                &\to\begin{cases}
                    \frac{(e^{-\rho T}-e^{-2\rho T})ne^{2\rho\frac{n+1}{n-1}T}-(e^{-\rho T}-e^{-2\rho T}+1-n)}{\mathcal{D}_+}+\frac{2e^{-\rho T}e^{\rho\frac{n+1}{n-1}T}-e^{\rho\frac{n+1}{n-1}T}(n+1)}{\mathcal{D}_+}\ &\text{if }N=2k\\
                    \frac{(e^{-\rho T}+e^{-2\rho T})ne^{2\rho\frac{n+1}{n-1}T}+(e^{-\rho T}+e^{-2\rho T}+1-n)}{\mathcal{D}_-}+\frac{-e^{\rho\frac{n+1}{n-1}T}(n+1)-2e^{-\rho T}e^{\rho\frac{n+1}{n-1}T}}{\mathcal{D}_-}\ &\text{if }N=2k+1
                \end{cases}
            \end{align*}
            Indeed
            \begin{align*}
                &(n-1)\tonde{\frac{1}{\mathcal{D}_+}- e^{-\rho T}\frac{e^{\rho\frac{n+1}{n-1}T}}{\mathcal{D_+}}}\\
                &+(e^{-\rho T}-e^{-2\rho T})\frac{ne^{2\rho\frac{n+1}{n-1}T}-1}{\mathcal{D}_+}-(1-e^{-\rho T})\frac{e^{\rho\frac{n+1}{n-1}T}(n+1)}{\mathcal{D}_+}\\
                &=\frac{(e^{-\rho T}-e^{-2\rho T})ne^{2\rho\frac{n+1}{n-1}T}-(e^{-\rho T}-e^{-2\rho T}+1-n)}{\mathcal{D}_+}+\frac{2e^{-\rho T}e^{\rho\frac{n+1}{n-1}T}-e^{\rho\frac{n+1}{n-1}T}(n+1)}{\mathcal{D}_+}
            \end{align*}
            \begin{align*}
                &(n-1)\tonde{\frac{-1}{\mathcal{D}_-}+ e^{-\rho T}\frac{e^{\rho\frac{n+1}{n-1}T}}{\mathcal{D_-}}}\\
                &(e^{-\rho T}+e^{-2\rho T})\frac{ne^{2\rho\frac{n+1}{n-1}T}+1}{\mathcal{D}_-}-(1+e^{-\rho T})\frac{e^{\rho\frac{n+1}{n-1}T}(n+1)}{\mathcal{D}_-}\\
                &=\frac{(e^{-\rho T}+e^{-2\rho T})ne^{2\rho\frac{n+1}{n-1}T}+(e^{-\rho T}+e^{-2\rho T}+1-n)}{\mathcal{D}_-}+\frac{-e^{\rho\frac{n+1}{n-1}T}(n+1)-2e^{-\rho T}e^{\rho\frac{n+1}{n-1}T}}{\mathcal{D}_-}
            \end{align*}
            So now 
            \begin{align*}
                &\text{b.} + \text{c.}+ \text{d.}+ \text{e.}\\
                &\to\begin{cases}
                    \frac{(e^{-\rho T}-e^{-2\rho T})ne^{2\rho\frac{n+1}{n-1}T}-(e^{-\rho T}-e^{-2\rho T}-1)+ 2e^{-\rho T}e^{\rho\frac{n+1}{n-1}T}-e^{\rho\frac{n+1}{n-1}T}(2n-1)}{\mathcal{D}_+}+2n(n-2)e^{\rho\frac{n+1}{n-1}T}\frac{e^{\rho\frac{n+1}{n-1}T}-1}{D_+}\ &\text{if }N=2k\\
                    \frac{(e^{-\rho T}+e^{-2\rho T})ne^{2\rho\frac{n+1}{n-1}T}+(e^{-\rho T}+e^{-2\rho T}-1)-e^{\rho\frac{n+1}{n-1}T}(2n-1)-2e^{-\rho T}e^{\rho\frac{n+1}{n-1}T}}{\mathcal{D}_-}+2n(n-2)e^{\rho\frac{n+1}{n-1}T}\frac{e^{\rho\frac{n+1}{n-1}T}-1}{D_-}\ &\text{if }N=2k+1
                \end{cases}
            \end{align*}
            Indeed
            \begin{align*}
                &\frac{(e^{-\rho T}-e^{-2\rho T})ne^{2\rho\frac{n+1}{n-1}T}-(e^{-\rho T}-e^{-2\rho T}+1-n)}{\mathcal{D}_+}+\frac{2e^{-\rho T}e^{\rho\frac{n+1}{n-1}T}-e^{\rho\frac{n+1}{n-1}T}(n+1)}{\mathcal{D}_+}\\
                &+(n-2)\tonde{\frac{2ne^{2\rho\frac{n+1}{n-1}T}-(n+1)-e^{\rho\frac{n+1}{n-1}T}(3n+1)}{D_+}}\\
                &=\frac{(e^{-\rho T}-e^{-2\rho T})ne^{2\rho\frac{n+1}{n-1}T}-(e^{-\rho T}-e^{-2\rho T}-1)+ 2e^{-\rho T}e^{\rho\frac{n+1}{n-1}T}-e^{\rho\frac{n+1}{n-1}T}(2n-1)}{\mathcal{D}_+}+2n(n-2)e^{\rho\frac{n+1}{n-1}T}\frac{e^{\rho\frac{n+1}{n-1}T}-1}{D_+}\
            \end{align*}
            \begin{align*}
                &\frac{(e^{-\rho T}+e^{-2\rho T})ne^{2\rho\frac{n+1}{n-1}T}+(e^{-\rho T}+e^{-2\rho T}+1-n)}{\mathcal{D}_-}+\frac{-e^{\rho\frac{n+1}{n-1}T}(n+1)-2e^{-\rho T}e^{\rho\frac{n+1}{n-1}T}}{\mathcal{D}_-}\\
                &+(n-2)\tonde{\frac{2ne^{2\rho\frac{n+1}{n-1}T}+(n+1)-e^{\rho\frac{n+1}{n-1}T}(3n+1)}{D_-}}\\
                &=\frac{(e^{-\rho T}+e^{-2\rho T})ne^{2\rho\frac{n+1}{n-1}T}+(e^{-\rho T}+e^{-2\rho T}-1)-e^{\rho\frac{n+1}{n-1}T}(2n-1)-2e^{-\rho T}e^{\rho\frac{n+1}{n-1}T}}{\mathcal{D}_-}+2n(n-2)e^{\rho\frac{n+1}{n-1}T}\frac{e^{\rho\frac{n+1}{n-1}T}-1}{D_-}
            \end{align*}
        \end{enumerate}
        Now summing up all together. we get
        \begin{align*}
            &\frac{2(-e^{-\rho T} + n-1)(ne^{2\rho\frac{n+1}{n-1}T} + e^{\rho\frac{n+1}{n-1}T})}{\mathcal{D}_+}\\
            &+\frac{2(e^{-\rho T}  - e^{-2\rho T} + n-2)(1+ne^{\rho\frac{n+1}{n-1}T})}{\mathcal{D}_+}\\
            &+\frac{\tonde{1-e^{-\rho T}}^2\tonde{ne^{2\rho\frac{n+1}{n-1}T}+1}+(n-2)\rho T\tonde{ne^{2\rho\frac{n+1}{n-1}T}+1}}{\mathcal{D}_+}\\
            &+\frac{n\tonde{-2e^{-\rho T}+2e^{-2\rho T}+1}e^{\rho\frac{n+1}{n-1}T} + n(1-n+e^{-\rho T}(2-n))e^{2\rho\frac{n+1}{n-1}T}-n(e^{-\rho T}-1)}{\mathcal{D}_+} -2n(n-2)\frac{\tonde{e^{\rho\frac{n+1}{n-1}T}-1}}{D_+}\\
            &+\frac{(e^{-\rho T}-e^{-2\rho T})ne^{2\rho\frac{n+1}{n-1}T}-(e^{-\rho T}-e^{-2\rho T}-1)+ 2e^{-\rho T}e^{\rho\frac{n+1}{n-1}T}-e^{\rho\frac{n+1}{n-1}T}(2n-1)}{\mathcal{D}_+}+2n(n-2)e^{\rho\frac{n+1}{n-1}T}\frac{e^{\rho\frac{n+1}{n-1}T}-1}{D_+}\\
            &=\frac{n^2e^{2\rho\frac{n+1}{n-1}T} -n(n+1)e^{\rho\frac{n+3}{n-1}T}+(2n^2-3n-1)e^{\rho\frac{n+1}{n-1}T}-(n+1)e^{-\rho T}+3n-2+\rho T(n-2)\tonde{ne^{2\rho\frac{n+1}{n-1}T}+1} }{\mathcal{D}_+}\\
            &+\frac{2n(n-2)\tonde{e^{\rho\frac{n+1}{n-1}T}-1}^2}{D_+}
        \end{align*}
        for the odd $N$, we have
        \begin{align*}
            &2(e^{-\rho T} + n-1)(ne^{2\rho\frac{n+1}{n-1}T} + e^{\rho\frac{n+1}{n-1}T}) \\
            &-2(e^{-\rho T}  + e^{-2\rho T} + n-2)(1+ne^{\rho\frac{n+1}{n-1}T}) \\
            &+(1-e^{-2\rho T})(ne^{2\rho \frac{n+1}{n-1}T}-1) + (n-2)\rho T (ne^{2\rho \frac{n+1}{n-1}T}-1)\\
            &+ n(2e^{-\rho T}+2e^{-2\rho T}-1)e^{\rho\frac{n+1}{n-1}T} + n(1-n+(n-2)e^{-\rho T})e^{2\rho\frac{n+1}{n-1}T}  -n(1 + e^{-\rho T})\\
            &+(e^{-\rho T}+e^{-2\rho T})ne^{2\rho\frac{n+1}{n-1}T}+(e^{-\rho T}+e^{-2\rho T}-1)-e^{\rho\frac{n+1}{n-1}T}(2n-1)-2e^{-\rho T}e^{\rho\frac{n+1}{n-1}T} \\
            &=n^2e^{2\rho\frac{n+1}{n-1}T} +n(n+1)e^{\rho\frac{n+3}{n-1}T}-(2n^2-3n+1)e^{\rho\frac{n+1}{n-1}T}-(n+1)e^{-\rho T}-3n+2+\rho T(n-2)\tonde{ne^{2\rho\frac{n+1}{n-1}T}-1}
        \end{align*}
}
\end{proof}

\begin{proof}[Proof of Theorem~\ref{cost_asympt_thm_theta_zero}]
We proceed as in the proof of Theorem~\ref{costs asymptotics thm}, now using the limits
\begin{equation}\label{sum omi limit formula kappa=1/2}
\lim_{\substack{N\uparrow\infty\\ N \mathrm{even}}}\bm\omega^\top\bm1
= e^{-\rho T}+\rho T+1,
\qquad
\lim_{\substack{N\uparrow\infty\\ N \mathrm{odd}}}\bm\omega^\top\bm1
= -e^{-\rho T}+\rho T+1.
\end{equation}
These limits are taken from \cite[eq.~(25), Proof of Theorem~3.1(d)]{SchiedStrehleZhang.17}, derived for the $2$-player case; since $\bm\omega$ is independent of $n$ (Remark~\ref{w_indep_n}), the same limits apply here. In addition, we invoke \eqref{limits sum nui kappa=n-1/2 eq} together with Lemma~\ref{cost functional ausiliral lemma k=n-1/2}. Substituting these limits into the cost representation \eqref{am} yields the claim.
\hide{
    Let's do the even case first. We know that
    \begin{align*}
        \mathbb{E}\left[\mathscr{C}_\mathbb{T}\left(\bm\xi_i\mid\bm\xi_{-i}\right)\right] &= \frac{1}{2}\Bigg(\frac{\left(\bar{X}\right)^2}{\mathbf{1}^\top {\bm\nu}}+\frac{\bar{X}\left(X_i-\bar{X}\right)\left(\mathbf{1}^\top {\bm\nu}+\mathbf{1}^\top {\bm\omega}\right)}{\left(\mathbf{1}^\top {\bm\nu}\right)\left(\mathbf{1}^\top {\bm\omega}\right)}+\frac{\left(X_i-\bar{X}\right)^2}{\mathbf{1}^\top {\bm\omega}}
        \\
        &\hspace{1cm}{}+\hat{\kappa}\left(\frac{\bar{X}}{\mathbf{1}^\top {\bm\nu}}\right)^2{\bm\nu}^\top \tilde{\Gamma} {\bm\nu} + \frac{\bar{X}\left(X_i-\bar{X}\right)}{\left(\mathbf{1}^\top {\bm\nu}\right)\left(\mathbf{1}^\top {\bm\omega}\right)}{\bm\omega}^\top \left(\hat{\kappa}\tilde{\Gamma}-\tilde{\Gamma}^\top \right){\bm\nu}-\left(\frac{\left(X_i-\bar{X}\right)}{\mathbf{1}^\top {\bm\omega}}\right)^2 {\bm\omega}^\top \tilde{\Gamma} {\bm\omega}\nonumber\Bigg).
    \end{align*}
    and 
    \begin{enumerate}[label=\arabic*.]
        \item 
            \begin{align*}
                \frac{1}{\mathbf{1}^\top {\bm\nu}} \longrightarrow \frac{D_+}{ne^{2\frac{n+1}{n-1}\rho T}((n+1)\rho T +(n+3)) + (n-1)^2e^{\frac{n+1}{n-1}\rho T}+(n+1)\rho T + (3n+1)},
            \end{align*}
        \item 
            \begin{align*}
                \frac{1}{\mathbf{1}^\top {\bm\omega}}\longrightarrow\frac{1}{e^{-\rho T} +\rho T+1},
            \end{align*}
        \item
            \begin{align*}
                {\bm\nu}^\top \tilde{\Gamma} {\bm\nu} &\lra \frac{ne^{2\rho\frac{n+1}{n-1}T}\tonde{(n+1)\rho T + n+ 3} + (n-1)^2e^{\rho\frac{n+1}{n-1}T} + (n+1)\rho T + 3n+1}{(n+1)D_+},
            \end{align*}
        \item
            \begin{align*}
                {\bm\omega}^\top (\left.\hat{\kappa}\tilde{\Gamma}-\tilde{\Gamma}^\top \right.){\bm\nu} &\lra \frac{n^2e^{2\rho\frac{n+1}{n-1}T} -n(n+1)e^{\rho\frac{n+3}{n-1}T}+(2n^2-3n-1)e^{\rho\frac{n+1}{n-1}T}-(n+1)e^{-\rho T}+3n-2 }{\mathcal{D}_+}\\
                &\hspace{0.75cm}+\frac{\rho T(n-2)\tonde{ne^{2\rho\frac{n+1}{n-1}T}+1}}{\mathcal{D}_+}+\frac{2n(n-2)\tonde{e^{\rho\frac{n+1}{n-1}T}-1}^2}{D_+},\text{ and}
            \end{align*}
        \item
            \begin{align*}
                {\bm\omega}^\top \tilde{\Gamma} {\bm\omega} &\lra e^{-\rho T}+\rho T+1.
            \end{align*}
    \end{enumerate}
    So substituting all together, we get
    \begin{align*}
        &\mathbb{E}\left[\mathscr{C}_\mathbb{T}\left(\bm\xi_i\mid\bm\xi_{-i}\right)\right]\\
        &\hspace{-0.6cm}\to \frac{1}{2}\Bigg(  {{\bar{X}^2}\frac{D_+}{ne^{2\frac{n+1}{n-1}\rho T}((n+1)\rho T +(n+3)) + (n-1)^2e^{\frac{n+1}{n-1}\rho T}+(n+1)\rho T + (3n+1)}} \\
        &\hspace{0.5cm}+\frac{\bar{X}\left(X_i-{\bar{X}}\right)}{e^{-\rho T} +\rho T+1} +\bar{X}\left(X_i{-\bar{X}}\right)\frac{D_+}{ne^{2\frac{n+1}{n-1}\rho T}((n+1)\rho T +(n+3)) + (n-1)^2e^{\frac{n+1}{n-1}\rho T}+(n+1)\rho T + (3n+1)} \\
        &\hspace{0.5cm}\cancel{+\frac{(X_i-\bar{X})^2}{e^{-\rho T} +\rho T+1}}\\
        &\hspace{0.5cm}+  {\frac{\bar{X}^2(n-1)D_+}{(n+1)\tonde{ne^{2\frac{n+1}{n-1}\rho T}((n+1)\rho T +(n+3)) + (n-1)^2e^{\frac{n+1}{n-1}\rho T}+(n+1)\rho T + (3n+1)}}} \\
        &\hspace{0.5cm}+ \frac{(n+1)\bar{X}\left(X_i-\bar{X}\right)\tonde{n^2e^{2\rho\frac{n+1}{n-1}T} -n(n+1)e^{\rho\frac{n+3}{n-1}T}+(2n^2-3n-1)e^{\rho\frac{n+1}{n-1}T}-(n+1)e^{-\rho T}+3n-2 }}{\tonde{ne^{2\frac{n+1}{n-1}\rho T}((n+1)\rho T +(n+3)) + (n-1)^2e^{\frac{n+1}{n-1}\rho T}+(n+1)\rho T + (3n+1)}\tonde{e^{-\rho T}+\rho T+1}} \\
        &\hspace{0.5cm}+\frac{(n-2)\bar{X}\left(X_i-\bar{X}\right)\tonde{(n+1)\rho T\tonde{ne^{2\rho\frac{n+1}{n-1}T}+1}+2n\tonde{e^{\rho\frac{n+1}{n-1}T}-1}^2}}{\tonde{ne^{2\frac{n+1}{n-1}\rho T}((n+1)\rho T +(n+3)) + (n-1)^2e^{\frac{n+1}{n-1}\rho T}+(n+1)\rho T + (3n+1)}\tonde{e^{-\rho T}+\rho T+1}}\\
        &\hspace{0.5cm}\cancel{- \frac{\left(X_i-\bar{X}\right)^2}{e^{-\rho T} +\rho T+1}}\Bigg).
    \end{align*}
    Now we have that the previous display is equal to
    \begin{align*}
        &n\bar{X}^2\frac{(n+1)\tonde{ne^{2\rho\frac{n+1}{n-1}T}+1}}{ne^{2\frac{n+1}{n-1}\rho T}((n+1)\rho T +(n+3)) + (n-1)^2e^{\frac{n+1}{n-1}\rho T}+(n+1)\rho T + (3n+1)} \\
        &+\frac{\bar{X}\left(X_i-{\bar{X}}\right)}{2}\Bigg(\frac{1}{e^{-\rho T} +\rho T+1} +\frac{(n+1)^2\tonde{ne^{2\rho\frac{n+1}{n-1}T}+1}}{ne^{2\frac{n+1}{n-1}\rho T}((n+1)\rho T +(n+3)) + (n-1)^2e^{\frac{n+1}{n-1}\rho T}+(n+1)\rho T + (3n+1)}\\
        &\hspace{1.17cm}+\frac{(n+1)\tonde{n^2e^{2\rho\frac{n+1}{n-1}T} -n(n+1)e^{\rho\frac{n+3}{n-1}T}+(2n^2-3n-1)e^{\rho\frac{n+1}{n-1}T}-(n+1)e^{-\rho T}+3n-2 }}{\tonde{ne^{2\frac{n+1}{n-1}\rho T}((n+1)\rho T +(n+3)) + (n-1)^2e^{\frac{n+1}{n-1}\rho T}+(n+1)\rho T + (3n+1)}\tonde{e^{-\rho T}+\rho T+1}} \\
        &\hspace{1.17cm}+\frac{(n-2)\tonde{(n+1)\rho T\tonde{ne^{2\rho\frac{n+1}{n-1}T}+1}+2n\tonde{e^{\rho\frac{n+1}{n-1}T}-1}^2}}{\tonde{ne^{2\frac{n+1}{n-1}\rho T}((n+1)\rho T +(n+3)) + (n-1)^2e^{\frac{n+1}{n-1}\rho T}+(n+1)\rho T + (3n+1)}\tonde{e^{-\rho T}+\rho T+1}}\Bigg)\\
    \end{align*}
    which is equal to
    \begin{align*}
        &\frac{n\bar{X}^2\tonde{(n+1)ne^{2\rho\frac{n+1}{n-1}T}+(n+1)}}{ne^{2\frac{n+1}{n-1}\rho T}((n+1)\rho T +(n+3)) + (n-1)^2e^{\frac{n+1}{n-1}\rho T}+(n+1)\rho T + (3n+1)} +\frac{n\bar{X}\left(X_i-{\bar{X}}\right)}{e^{-\rho T}+\rho T+1}.
    \end{align*}
    Now we do the odd case. We know that
    \begin{align*}
        \mathbb{E}\left[\mathscr{C}_\mathbb{T}\left(\bm\xi_i\mid\bm\xi_{-i}\right)\right] &= \frac{1}{2}\Bigg(\frac{\left(\bar{X}\right)^2}{\mathbf{1}^\top {\bm\nu}}+\frac{\bar{X}\left(X_i-\bar{X}\right)\left(\mathbf{1}^\top {\bm\nu}+\mathbf{1}^\top {\bm\omega}\right)}{\left(\mathbf{1}^\top {\bm\nu}\right)\left(\mathbf{1}^\top {\bm\omega}\right)}+\frac{\left(X_i-\bar{X}\right)^2}{\mathbf{1}^\top {\bm\omega}}
        \\
        &\hspace{1cm}{}+\hat{\kappa}\left(\frac{\bar{X}}{\mathbf{1}^\top {\bm\nu}}\right)^2{\bm\nu}^\top \tilde{\Gamma} {\bm\nu} + \frac{\bar{X}\left(X_i-\bar{X}\right)}{\left(\mathbf{1}^\top {\bm\nu}\right)\left(\mathbf{1}^\top {\bm\omega}\right)}{\bm\omega}^\top \left(\hat{\kappa}\tilde{\Gamma}-\tilde{\Gamma}^\top \right){\bm\nu}-\left(\frac{\left(X_i-\bar{X}\right)}{\mathbf{1}^\top {\bm\omega}}\right)^2 {\bm\omega}^\top \tilde{\Gamma} {\bm\omega}\nonumber\Bigg).
    \end{align*}
    and 
    \begin{enumerate}[label=\arabic*.]
        \item 
            \begin{align*}
                \frac{1}{\mathbf{1}^\top {\bm\nu}} \longrightarrow \frac{D_-}{ne^{2\frac{n+1}{n-1}\rho T}((n+1)\rho T +(n+3))+ (1-n^2)e^{\frac{n+1}{n-1}\rho T}-(n+1)\rho T-(3n+1)},
            \end{align*}
        \item 
            \begin{align*}
                \frac{1}{\mathbf{1}^\top {\bm\omega}}\longrightarrow\frac{1}{-e^{-\rho T} +\rho T+1},
            \end{align*}
        \item
            \begin{align*}
                {\bm\nu}^\top \tilde{\Gamma} {\bm\nu} &\lra \frac{ne^{2\rho\frac{n+1}{n-1}T}\tonde{(n+1)\rho T + n+ 3} + (1-n^2)e^{\rho\frac{n+1}{n-1}T} - (n+1)\rho T - (3n+1)}{(n+1)D_-},
            \end{align*}
        \item
            \begin{align*}
                {\bm\omega}^\top (\left.\hat{\kappa}\tilde{\Gamma}-\tilde{\Gamma}^\top \right.){\bm\nu} &\lra \frac{n^2e^{2\rho\frac{n+1}{n-1}T} +n(n+1)e^{\rho\frac{n+3}{n-1}T}-(2n^2-3n+1)e^{\rho\frac{n+1}{n-1}T}-(n+1)e^{-\rho T}-3n+2}{\mathcal{D}_-}\\
                &\hspace{0.75cm}+\frac{\rho T(n-2)\tonde{ne^{2\rho\frac{n+1}{n-1}T}-1} }{\mathcal{D}_-}+\frac{2n(n-2)\tonde{e^{2\rho\frac{n+1}{n-1}T}-1}}{D_-},\text{ and}
            \end{align*}
        \item
            \begin{align*}
                {\bm\omega}^\top \tilde{\Gamma} {\bm\omega} &\lra -e^{-\rho T}+\rho T+1.
            \end{align*}
    \end{enumerate}
    So substituting all together, we get
    \begin{align*}
        &\mathbb{E}\left[\mathscr{C}_\mathbb{T}\left(\bm\xi_i\mid\bm\xi_{-i}\right)\right]\\
        &\hspace{-0.6cm}\to \frac{1}{2}\Bigg(  {\bar{X}^2\frac{D_-}{ne^{2\frac{n+1}{n-1}\rho T}((n+1)\rho T +(n+3))+ (1-n^2)e^{\frac{n+1}{n-1}\rho T}-(n+1)\rho T-(3n+1)}} \\
        &\hspace{0.5cm}+\frac{\bar{X}\left(X_i-{\bar{X}}\right)}{-e^{-\rho T} +\rho T+1} +\bar{X}\left(X_i{-\bar{X}}\right)\frac{D_-}{ne^{2\frac{n+1}{n-1}\rho T}((n+1)\rho T +(n+3))+ (1-n^2)e^{\frac{n+1}{n-1}\rho T}-(n+1)\rho T-(3n+1)} \\
        &\hspace{0.5cm}\cancel{+\frac{(X_i-\bar{X})^2}{-e^{-\rho T} +\rho T+1}}\\
        &\hspace{0.5cm}+  {\frac{\bar{X}^2(n-1)D_-}{(n+1)\tonde{ne^{2\frac{n+1}{n-1}\rho T}((n+1)\rho T +(n+3))+ (1-n^2)e^{\frac{n+1}{n-1}\rho T}-(n+1)\rho T-(3n+1)}}} \\
        &\hspace{0.5cm}+ \frac{(n+1)\bar{X}\left(X_i-\bar{X}\right)\tonde{n^2e^{2\rho\frac{n+1}{n-1}T} +n(n+1)e^{\rho\frac{n+3}{n-1}T}-(2n^2-3n+1)e^{\rho\frac{n+1}{n-1}T}-(n+1)e^{-\rho T}-3n+2 }}{\tonde{ne^{2\frac{n+1}{n-1}\rho T}((n+1)\rho T +(n+3))+ (1-n^2)e^{\frac{n+1}{n-1}\rho T}-(n+1)\rho T-(3n+1)}\tonde{-e^{-\rho T}+\rho T+1}} \\
        &\hspace{0.5cm}+\frac{(n-2)\bar{X}\left(X_i-\bar{X}\right)\tonde{(n+1)\rho T\tonde{ne^{2\rho\frac{n+1}{n-1}T}-1}+2n\tonde{e^{2\rho\frac{n+1}{n-1}T}-1}}}{\tonde{ne^{2\frac{n+1}{n-1}\rho T}((n+1)\rho T +(n+3))+ (1-n^2)e^{\frac{n+1}{n-1}\rho T}-(n+1)\rho T-(3n+1)}\tonde{-e^{-\rho T}+\rho T+1}}\\
        &\hspace{0.5cm}\cancel{- \frac{\left(X_i-\bar{X}\right)^2}{-e^{-\rho T} +\rho T+1}}\Bigg).
    \end{align*}
    Now we have that the previous display is equal to
    \begin{align*}
        &n\bar{X}^2\frac{\tonde{(n+1)ne^{2\rho\frac{n+1}{n-1}T}-(n+1)}}{ne^{2\frac{n+1}{n-1}\rho T}((n+1)\rho T +(n+3))+ (1-n^2)e^{\frac{n+1}{n-1}\rho T}-(n+1)\rho T-(3n+1)} \\
        &+\frac{\bar{X}\left(X_i-{\bar{X}}\right)}{2}\Bigg(\frac{1}{-e^{-\rho T} +\rho T+1} +\frac{(n+1)^2\tonde{ne^{2\rho\frac{n+1}{n-1}T}-1}}{ne^{2\frac{n+1}{n-1}\rho T}((n+1)\rho T +(n+3))+ (1-n^2)e^{\frac{n+1}{n-1}\rho T}-(n+1)\rho T-(3n+1)}\\
        &\hspace{1.17cm}+\frac{(n+1)\tonde{n^2e^{2\rho\frac{n+1}{n-1}T} +n(n+1)e^{\rho\frac{n+3}{n-1}T}-(2n^2-3n+1)e^{\rho\frac{n+1}{n-1}T}-(n+1)e^{-\rho T}-3n+2 }}{\tonde{ne^{2\frac{n+1}{n-1}\rho T}((n+1)\rho T +(n+3))+ (1-n^2)e^{\frac{n+1}{n-1}\rho T}-(n+1)\rho T-(3n+1)}\tonde{-e^{-\rho T}+\rho T+1}} \\
        &\hspace{1.17cm}+\frac{(n-2)\tonde{(n+1)\rho T\tonde{ne^{2\rho\frac{n+1}{n-1}T}-1}+2n\tonde{e^{2\rho\frac{n+1}{n-1}T}-1}}}{\tonde{ne^{2\frac{n+1}{n-1}\rho T}((n+1)\rho T +(n+3))+ (1-n^2)e^{\frac{n+1}{n-1}\rho T}-(n+1)\rho T-(3n+1)}\tonde{-e^{-\rho T}+\rho T+1}}\Bigg)\\
    \end{align*}
    which is equal to
    \begin{align*}
        &\frac{n\bar{X}^2\tonde{(n+1)ne^{2\rho\frac{n+1}{n-1}T}-(n+1)}}{ne^{2\frac{n+1}{n-1}\rho T}((n+1)\rho T +(n+3))+ (1-n^2)e^{\frac{n+1}{n-1}\rho T}-(n+1)\rho T-(3n+1)} + \frac{n\bar{X}\left(X_i-{\bar{X}}\right)}{-e^{-\rho T}+\rho T+1}.
    \end{align*}    
}
\end{proof}

\section{Time-Varying Instantaneous Costs}\label{half-grid-block-costs}

In this appendix we present a numerical analysis of how the equilibrium strategies and their asymptotics change when we charge instantaneous costs only on the first or second half of the time interval. This construction is motivated by the continuous-time game, where we can specify the \say{correct} block cost at $0$ but the \say{wrong} one at $T$ (or vice versa), and then an equilibrium exists only in the case of zero-net supply (or symmetric initial inventories, respectively); see Remark~\ref{non_existence_for_diff_theta_values}.
In the discrete-time model, a unique equilibrium still exists in these half-grid instantaneous-cost configurations. However, the qualitative behavior of the time-$t$ inventories changes substantially: when there is no instantaneous cost on one half of the grid, exactly one of the two processes $V^{(N)}$ and $W^{(N)}$ develops oscillations on that half of the interval, and the cluster points of the oscillating inventory are no longer the four cluster points from Theorem~\ref{strat_osc_thm}. In both of the configurations described below, the inventories $X^{(N),i}$ converge to the corresponding continuous-time equilibrium in precisely the cases singled out in
Remark~\ref{non_existence_for_diff_theta_values}.

\subsection*{Set-up}

We modify the matrix $\Gamma^\theta$ by turning the instantaneous-cost term on
or off separately on the first and second halves of the grid. Define
\begin{align*}
    H^\theta := \Gamma^0 + 2\theta\widetilde{I},
    \qquad
    J^\theta := \Gamma^0 + 2\theta\overline{I},
\end{align*}
where
\begin{align*}
    \overline{I} := I - \widetilde{I}, \qquad
    \widetilde{I}_{ij}
    :=
    \begin{cases}
        0, & i \neq j,\\[0.2em]
        0, & i = j,\ i\in\{1,\dots,\lceil (N+1)/2 \rceil\},\\[0.2em]
        1, & i = j,\ i\in\{\lceil (N+1)/2 \rceil+1,\dots, N+1\}.
    \end{cases}
\end{align*}
Thus $H^\theta$ corresponds to charging instantaneous costs only on the second
half of the time grid, while $J^\theta$ corresponds to charging instantaneous
costs only on the first half.

It can be shown that, if we replace $\Gamma^\theta$ by $H^\theta$ or $J^\theta$,
the proof of Theorem~\ref{exuniqnasheqdiscrete} carries over. Hence
the equilibrium strategies are still of the form~\eqref{NashEqStratDisc} with
\begin{equation}\label{part_case_sol}
    \bm{v}
    :=
    \frac{(H^{\theta} + (n-1)\widetilde{\Gamma})^{-1}\bm{1}}
         {\bm{1}^\top(H^{\theta} + (n-1)\widetilde{\Gamma})^{-1}\bm{1}},
    \qquad
    \bm{w}
    :=
    \frac{(H^{\theta} - \widetilde{\Gamma})^{-1}\bm{1}}
         {\bm{1}^\top(H^{\theta} - \widetilde{\Gamma})^{-1}\bm{1}},
\end{equation}
and analogously with $H^\theta$ replaced by $J^\theta$.
We then define the time-$t$ inventories $V^{(N)}$ and $W^{(N)}$ from $\bm v$ and $\bm w$ in each case, as in  \eqref{fund_strat_rescaled}. 

\subsection*{Second-half instantaneous cost}

We first charge instantaneous costs only on
the second half of the grid, that is, we use the objective with $H^\theta$
replacing $\Gamma^\theta$. Numerically we observe that
\begin{align*}
    \abs{W^{(N)}_t-\mathbbm{f}(t)} \longrightarrow 0.
\end{align*}
By contrast, $V^{(N)}_t$ does not converge to $\mathbbm{g}(t)$ on the whole
interval $[0,T]$, but it does converge to $\mathbbm{g}(t)$ on $[T/2,T]$. On
$[0,T/2]$, the process $V^{(N)}$ exhibits oscillations and does not have a
limit; see Figure~\ref{second_half_bc_}. In light of
Remark~\ref{non_existence_for_diff_theta_values}, this reflects the
continuous-time situation with $\vartheta_0\neq (n-1)/2$ and
$\vartheta_T=1/2$, where an equilibrium exists only in the zero-net-supply case
$\bar{x}=0$ and is given by $x_i\mathbbm{f}(t)$. If we assume
$\bar{x}=0$, we recover the convergence of the
discrete-time inventories $X^{(N),i}_t$ to the continuous-time equilibrium $x_i\mathbbm{f}(t)$.

\begin{figure}[htbp]
    \centering
    \includegraphics[width=0.6\linewidth]{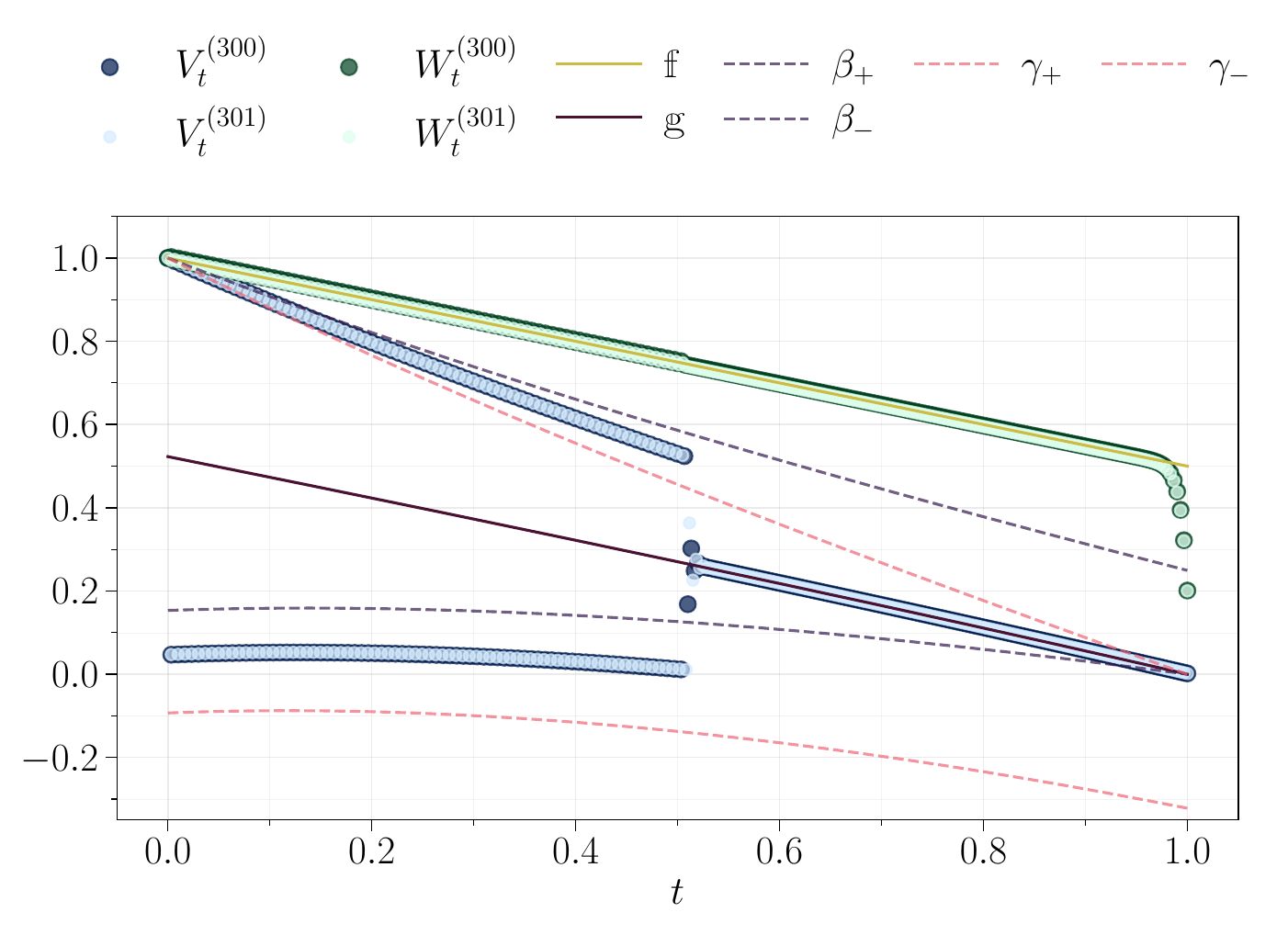}
    \caption{Convergence of inventories for even/odd values of $N$
    in the modified game with cost functional $H^1$ (instantaneous cost charged in second half). We also plot the cluster points~$\beta_\pm$ and~$\gamma_\pm$ from Theorem~\ref{strat_osc_thm} and note how they differ from the envelope of $V^{(N)}_t$.}
    \label{second_half_bc_}
\end{figure}

\subsection*{First-half instantaneous cost}

Next, we charge instantaneous costs only on
the first half of the grid, that is, we use the objective with $J^\theta$
replacing $\Gamma^\theta$. In this case we observe numerically that
\begin{align*}
    \abs{V^{(N)}_t-\mathbbm{g}(t)} \longrightarrow 0.
\end{align*}
By contrast, $W^{(N)}_t$ converges $\mathbbm{f}(t)$ only on $[0,T/2]$. On $[T/2,T]$, the process
$W^{(N)}$ oscillates and fails to converge; see Figure~\ref{first_half_bc_}.
This behavior is consistent with
Remark~\ref{non_existence_for_diff_theta_values}, which states that in the
continuous-time game with $\vartheta_0= (n-1)/2$ and $\vartheta_T\neq 1/2$, an
equilibrium exists only in the symmetric case $x_1=\cdots=x_n$ and is given by
$x_i\mathbbm{g}(t)$. If we assume $x_1=\cdots=x_n$,  we recover the convergence of the discrete-time inventories $X^{(N),i}_t$ to the continuous-time equilibrium $x_i\mathbbm{g}(t)$.

\begin{figure}[htbp]
    \centering
    \includegraphics[width=0.6\linewidth]{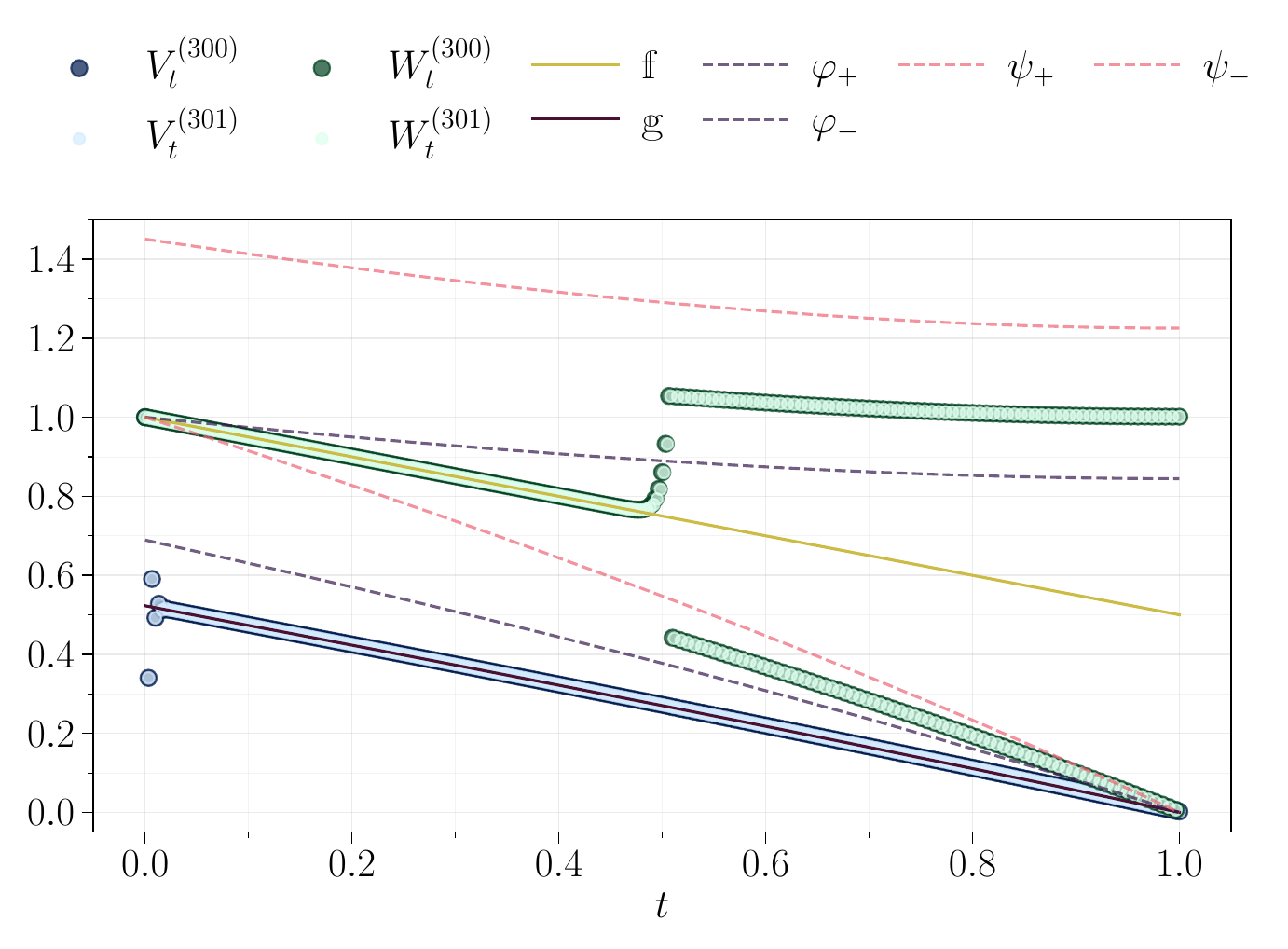}
    \caption{Convergence of inventories for even/odd values of $N$
    in the modified game with cost functional~$J^1$ (instantaneous cost charged in first half). We also plot the cluster points~$\varphi_\pm$ and~$\psi_\pm$ from Theorem~\ref{strat_osc_thm}.}
    \label{first_half_bc_}
\end{figure}

\subsection*{Comparison with the cluster points from Theorem~\ref{strat_osc_thm}}

Finally, we note that the oscillatory envelopes observed in
Figures~\ref{second_half_bc_} and~\ref{first_half_bc_} differ from the cluster points from Theorem~\ref{strat_osc_thm}, which are driven by $N$ being even or odd. In the
half-grid instantaneous-cost setting, the oscillations are localized to the
half of the grid where cost is absent, and the associated cluster points are no
longer determined by the even/odd parity of~$N$ seen in the specification with no instantaneous costs.

\bibliographystyle{abbrv}
\bibliography{stochfin.bib}

\end{document}